\documentclass[11pt]{article}

\usepackage[margin=1in]{geometry}
\usepackage{graphicx} 
\usepackage{array} 
\usepackage{times}
\usepackage{cite}
\usepackage{mdframed}
\usepackage{amsmath,amsthm,amssymb,xspace,verbatim,multirow,bm,bbm,paralist,enumitem}
\usepackage{array,multicol,booktabs}
\usepackage{hyphenat,epsfig,subcaption}
\usepackage{nicefrac}
\usepackage[T1]{fontenc}
\usepackage[font=small, labelfont=bf]{caption}

\usepackage{thm-restate}

\usepackage[usenames,dvipsnames]{xcolor}
\usepackage[ruled]{algorithm2e}

\DeclareFontFamily{U}{mathx}{\hyphenchar\font45}
\DeclareFontShape{U}{mathx}{m}{n}{
      <5> <6> <7> <8> <9> <10>
      <10.95> <12> <14.4> <17.28> <20.74> <24.88>
      mathx10
      }{}
\DeclareSymbolFont{mathx}{U}{mathx}{m}{n}
\DeclareMathSymbol{\bigtimes}{1}{mathx}{"91}

\usepackage{tcolorbox}
\tcbuselibrary{skins,breakable}
\tcbset{enhanced jigsaw}

\usepackage[normalem]{ulem}
\usepackage[compact]{titlesec}

\definecolor{DarkRed}{rgb}{0.5,0.1,0.1}
\definecolor{DarkBlue}{rgb}{0.1,0.1,0.5}

\usepackage{nameref}
\definecolor{ForestGreen}{rgb}{0.1333,0.5451,0.1333}
\definecolor{Red}{rgb}{0.9,0,0}
\usepackage[linktocpage=true,
	pagebackref=true,colorlinks,
		urlcolor=black,
	linkcolor=DarkRed,citecolor=blue,
	bookmarks,bookmarksopen,bookmarksnumbered]
	{hyperref}
\usepackage[noabbrev,nameinlink]{cleveref}
\crefname{property}{property}{Property}
\creflabelformat{property}{(#1)#2#3}
\crefname{equation}{eq}{Eq}
\creflabelformat{equation}{(#1)#2#3}

\usepackage{bm}
\usepackage{url}
\usepackage{xspace}
\usepackage[mathscr]{euscript}

\usepackage{tikz}
\usetikzlibrary{arrows}
\usetikzlibrary{arrows.meta}
\usetikzlibrary{shapes}
\usetikzlibrary{backgrounds}
\usetikzlibrary{positioning}
\usetikzlibrary{decorations.markings}
\usetikzlibrary{patterns}
\usetikzlibrary{calc}
\usetikzlibrary{fit}
\usetikzlibrary{decorations.pathmorphing}

\usepackage[noend]{algpseudocode}
\makeatletter
\def\BState{\State\hskip-\ALG@thistlm}
\makeatother



\newcommand{\Ot}{\ensuremath{\widetilde{O}}}
\newcommand{\eps}{\ensuremath{\varepsilon}}

\newcommand{\bracket}[1]{\left[#1\right]}
\newcommand{\paren}[1]{\ensuremath{\left(#1\right)}\xspace}
\newcommand{\card}[1]{\left\vert{#1}\right\vert}

\newcommand{\tO}{\ensuremath{\widetilde{O}}}
\newcommand{\tTheta}{\tilde{\Theta}}
\newcommand{\tOmega}{\tilde{\Omega}}
\newcommand{\tomega}{\tilde{\omega}}

\newcommand{\prob}[1]{\Pr\left[#1\right]}
\newcommand{\expect}[1]{\Exp\bracket{#1}}

\newcommand{\set}[1]{\ensuremath{\left\{ #1 \right\}}}
\newcommand{\poly}{\rm poly}
\newcommand{\polylog}{\rm  polylog}
\newcommand{\ceil}[1]{\lceil{#1}\rceil}


\newcommand{\spidx}{\textsc{sp-idx}\xspace}
\newcommand{\eqidx}{\textsc{eq-idx}\xspace}
\newcommand{\idx}{\textsc{idx}\xspace}
\newcommand{\eq}{\textsc{eq}\xspace}
\newcommand{\xorconn}{\textsc{xor-conn}\xspace}
\newcommand{\xorbip}{\textsc{xor-bip}\xspace}
\newcommand{\distit}{\textsc{dist-item}\xspace}

\newcommand{\Rone}{R^{\to}}

\newcommand{\AS}{\textnormal{AS}}

\newcommand{\OMA}{\textnormal{MA}^{\to}}
\newcommand{\OMAnt}{\widehat{\textnormal{MA}}^{\to}}
\newcommand{\Rpub}{R^{\text{pub}}}
\newcommand{\Ronepub}{R^{\to \text{(pub)}}}
\newcommand{\hc}{\textsf{hcost}\xspace}
\newcommand{\vc}{\textsf{vcost}\xspace}
\newcommand{\tc}{\textsf{tcost}\xspace}
\newcommand{\cost}{\text{CC}\xspace}

\newcommand{\cS}{\mathcal{S}}
\newcommand{\Sp}{\ensuremath{\textnormal{Space}}\xspace}

\newcommand{\alg}{\ensuremath{\mathcal{A}}\xspace}
\newcommand{\out}{\ensuremath{\textsf{out}}}

\DeclareMathOperator*{\Exp}{\ensuremath{{\mathbb{E}}}}
\DeclareMathOperator*{\Prob}{\ensuremath{\textnormal{Pr}}}
\renewcommand{\Pr}{\Prob}

\newenvironment{tbox}{\begin{tcolorbox}[
		enlarge top by=5pt,
		enlarge bottom by=5pt,
		 breakable,
		 boxsep=0pt,
                  left=4pt,
                  right=4pt,
                  top=10pt,
                  arc=0pt,
                  boxrule=1pt,toprule=1pt,
                  colback=white
                  ]
	}
{\end{tcolorbox}}


\newcommand{\II}{\ensuremath{\mathbb{I}}}

\newcommand{\mireal}[1][]{
  \ifx\relax#1\relax%
    \II(\mione \,; \mitwo)%
  \else%
    \II(\mione \,; \mitwo\mid #1)%
  \fi
}


\newcommand{\cP}{\mathcal{P}}
\newcommand{\cH}{\mathcal{H}}

\newcommand{\cR}{\mathcal{R}}

\newcommand{\cX}{\mathcal{X}}
\newcommand{\cY}{\mathcal{Y}}

\newcommand{\freq}{\ensuremath{\textnormal{freq}}}

\newcommand{\msg}{\ensuremath{\textsf{msg}}}





\newcommand{\val}{\alpha}

\newcommand{\mypar}[1]{\medskip\noindent{\bfseries #1.}~}



\newtheorem{theorem}{Theorem}
\newtheorem{lemma}{Lemma}[section]

\newtheorem{proposition}[lemma]{Proposition}
\newtheorem{corollary}[lemma]{Corollary}
\newtheorem{claim}[lemma]{Claim}

\newtheorem{fact}[lemma]{Fact}

\newtheorem{problem}{Problem}

\newtheorem*{claim*}{Claim}
\newtheorem*{proposition*}{Proposition}
\newtheorem*{lemma*}{Lemma}
\newtheorem*{problem*}{Problem}

\crefname{lemma}{Lemma}{Lemmas}
\crefname{claim}{Claim}{Claims}

\newtheorem{mdresult}{Result}

\newtheorem*{mdresult*}{Main Result}

\newtheorem{definition}[lemma]{Definition}
\theoremstyle{definition}

\newtheorem{observation}[lemma]{Observation}

\newtheorem{mdinvariant}[lemma]{Lemma}

\theoremstyle{definition}
\newtheorem{mdalg}{Algorithm}
\newenvironment{Algorithm}{\begin{tbox}\begin{mdalg}}{\end{mdalg}\end{tbox}}

\allowdisplaybreaks

\renewcommand{\qed}{\nobreak \ifvmode \relax \else
      \ifdim\lastskip<1.5em \hskip-\lastskip
      \hskip1.5em plus0em minus0.5em \fi \nobreak
      \vrule height0.75em width0.5em depth0.25em\fi}

\setlength{\parskip}{3pt}

\title{New Lower Bounds in Merlin-Arthur Communication and Graph Streaming Verification}

\author{
Prantar Ghosh\footnote{(prantar.ghosh@gmail.com) Department of Computer Science, Georgetown University. Supported in part by NSF under award 1918989. Part of this work was done while the author was at DIMACS, Rutgers University, supported in part by a grant (820931) to DIMACS from the Simons Foundation.}
\and
Vihan Shah\footnote{(\href{mailto:vihan.shah@uwaterloo.ca}{\text{vihan.shah@uwaterloo.ca}}) 
Department of Computer Science, University of Waterloo. Part of this work was done while the author was at Rutgers University and was supported in part by an NSF CAREER Grant CCF-2047061. 
}
}

\date{}

\begin{document}
\maketitle

\pagenumbering{roman}

\begin{abstract}
We show new lower bounds in the \emph{Merlin-Arthur} (MA) communication model and the 
related \emph{annotated streaming} or stream verification model. The MA communication model 
is an enhancement of the classical communication model, where in addition to the usual 
players Alice and Bob, there is an all-powerful but untrusted player Merlin who knows their 
inputs and tries to convince them about the output. Most functions have MA protocols with total
communication significantly smaller than what 
would be needed without Merlin. We focus on the online MA (OMA) model, which is the MA 
analogue of one-way communication, and introduce the notion of \emph{non-trivial-OMA}
complexity of a function. This is the minimum total communication needed by any non-trivial 
OMA protocol computing that function, where a trivial OMA protocol is one where Alice sends 
Bob roughly as many bits as she would have sent without Merlin. We prove a lower bound on the non-trivial-OMA 
complexity of a natural function \emph{Equals-Index} (basically the well-known Index problem
on large domains) and identify it as a canonical problem for proving strong lower bounds on 
this complexity: reductions from it (i) reproduce and/or improve 
upon the lower bounds for all functions that were previously known to have large non-trivial-OMA complexity, (ii) exhibit the first explicit functions whose non-trivial-OMA complexity is superlinear, and even exponential, in their classical one-way 
complexity, and (iii) show functions on input size $n$ for which this complexity is as large as $n/\log n$. While exhibiting a function with $\omega(\sqrt{n})$ (standard) OMA complexity is a 
longstanding open problem, we did not even know of any function with $\omega(\sqrt{n})$ non-trivial-OMA complexity.  

Next, we turn to the annotated streaming model, the Prover-Verifier analogue of single-pass data streaming. We reduce from Equals-Index to establish strong lower bounds on the non-trivial complexity (for the analogous notion in this setting) of the fundamental streaming problem of counting distinct items, as well as of graph problems such as $k$-connectivity (both vertex and edge versions) in a certain edge update model that we call the \emph{support graph turnstile} (SGT) model. To set the benchmark space under which non-trivial annotated streaming schemes should solve these problems, we design classical streaming (sans Prover) algorithms for them under SGT streams by building \emph{strong} $\ell_0$-samplers that are robust to such streams and might be of independent interest. Finally, we exploit graph theoretic properties to design efficient schemes for $k$-connectivity on dynamic graph streams. This establishes a conceptual separation between classical streaming and annotated streaming: the former can handle certain turnstile and SGT streams almost as efficiently as dynamic streams, but the latter cannot.

\end{abstract}

\clearpage

%
\setcounter{tocdepth}{2}

\tableofcontents
\clearpage

\pagenumbering{arabic}
\setcounter{page}{1}

\section{Introduction}\label{sec:intro}

In the classical two-player communication model, the players Alice and Bob
each receive an input unknown to the other player and exchange as few bits
as possible to compute a function of the inputs. In their seminal paper on 
communication complexity classes, Babai, Frankl, and Simon~\cite{BabaiFS86} defined  the 
Merlin-Arthur (MA) communication model, where there is also a ``super-player'' Merlin, who 
knows the inputs of both Alice and Bob (collectively ``Arthur'') and 
hence, also knows the output. However, Merlin is not trusted, and so he 
provides Arthur a ``proof'' that should be thought of as a help message. After receiving this message, Alice and Bob communicate between themselves to verify its correctness.
It turns out that for most well-studied functions, the total communication needed between the players can be significantly reduced with the help of Merlin. This holds even in the most restrictive version of the MA model, where both Merlin and Alice send just a single message to Bob, from which he must verify the solution. This is known as the \emph{online Merlin-Arthur} (OMA) model. We focus on this setting and prove new lower bounds that also extend to a related stream-verification model called \emph{annotated streaming} \cite{ChakrabartiCM09, ChakrabartiCMT14}. We discuss the motivation and context of our results below.     

\subsection{Motivation and Context}
\mypar{Online Merlin-Arthur communication} The OMA complexity of a function is defined as the total communication between the players, i.e., the sum of the lengths of Merlin's and Alice's messages, in the optimal protocol that minimizes this sum. In a protocol, we denote Merlin's message length by \hc (shorthand for {\em help cost}) and Alice's message length by \vc (shorthand for {\em verification cost}), while the sum of their message lengths is termed \tc (short for {\em total cost}). Thus, the OMA complexity of a function is the minimum \textsf{tcost} over all possible OMA protocols computing the function. Naturally, a function with high OMA complexity, possibly close to the input size $N$, is deemed ``hard'' in the OMA model. Over the years, researchers have wondered what problems are hard in this model \cite{BabaiFS86, AaronsonW08, ChakrabartiCMT14, Gavinsky21}. Surprisingly, Aaronson and Wigderson \cite{AaronsonW08} showed that even the ``hardest'' functions in classical communication complexity, namely the disjointness and inner product functions, have OMA complexity only $O(\sqrt{N}\log N)$. Although the same authors showed (via a simple counting argument) the existence of functions with OMA complexity $\Omega(N)$, exhibiting an {\em explicit} function even with OMA complexity $\omega(\sqrt{N})$ remains a longstanding open problem. 

Another notion of ``hard functions'' in this setting might be ones that need large \hc in order to reduce the \vc from what Alice would have already needed to send without Merlin (i.e., in the classical one-way communication model). This implies high OMA complexity of these functions \emph{relative} to their classical one-way complexity (which might be much smaller than the input size, rendering them \emph{not} hard in the earlier sense).  Again, disjointness or inner product are not hard in this sense either: Aaronson and Wigderson's protocol shows that for these problems, \hc and \vc can be smoothly traded off while keeping the product $\tTheta(N)$. This means even a proof of size slightly larger than $\log N$ suffices to reduce Alice's message size to $o(N)$, whereas she would need to communicate $\Omega(N)$ in the classical setting. Further, prior work has shown that for many other hard functions in classical communication complexity, we can achieve the tradeoff $\hc\cdot \vc = O(N)$ \cite{ChakrabartiCMT14, ChakrabartiCGT14, ChakrabartiGT20}, ruling them out from being hard in the OMA model. Interestingly, there is a complementary general lower bound that says that an OMA protocol solving \emph{any} function $f$ must have $\hc\cdot \vc = \Omega(\Rone(f))$ \cite{BabaiFS86}, where $\Rone(f)$ is the one-way communication complexity of $f$. Observe that this lower bound implies that the OMA complexity $\OMA(f)$ of the function $f$ must be at least $\Omega(\sqrt{\Rone(f)})$. Stronger lower bounds are hardly known, with a couple of exceptions.\footnote{Indeed, these exceptions have $\Rone(f) = o(N)$ and they do not break the $\sqrt{N}$-barrier for OMA complexity.}     

Prior work has shown stronger lower bounds for the \emph{sparse} version of the fundamental index problem \cite{ChakrabartiCGT14}, and testing connectivity and bipartiteness of the \emph{XOR} of two graphs \cite{Thaler16}. All these functions have OMA complexity linear (or nearly linear) in their classical one-way complexity. This means the \emph{trivial} OMA protocol, where Merlin sends an empty string (or a junk message) while Alice and Bob (optimally) solve the problem on their own, is (nearly) as good as any other protocol for the problem. Further, since the lower bounds imply that an OMA protocol for any of these functions $f$ must have \tc $\tOmega(R(f))$, it follows that reducing the \vc to $o(\Rone(f))$ necessitates \hc to be $\tOmega(\Rone(f))$. Hence, these problems are hard in the second sense described above. Given this ray of hope on identifying hard OMA problems, we pursue this thread. In this work, we define the notion of \emph{non-trivial} OMA complexity of a function~$f$ as follows. Call an OMA protocol computing $f$ as {\em non-trivial} if its $\vc$ is $o(\Rone(f))$. Then, the non-trivial OMA complexity $\OMAnt(f)$ is the minimum \tc over all non-trivial protocols for $f$. 

Observe that for functions $f$ with standard OMA complexity $\OMA(f)=o(\Rone(f))$, we have $\OMAnt(f)=\OMA(f)$ (since the protocol minimizing \tc must be non-trivial); whereas if $\OMA(f)=\Omega(\Rone(f))$, then \hc dominates \vc in any non-trivial protocol for $f$, and $\OMAnt(f)$ essentially measures the minimum \hc required to achieve \vc = $o(\Rone(f))$. Thus, non-trivial OMA complexity formally captures the ``relative'' notion of hardness discussed above. Notably, in terms of the input size $N$, the best-known lower bound on $\OMAnt(f)$ is $\tOmega(\sqrt{N})$ (since a bound better than $\tOmega(\OMA(f))$ is unknown, and $\OMA(f)$ has a longstanding $\sqrt{N}$-barrier).

It is now natural to ask the following questions. 
\begin{itemize}
    \item {\it Can we find a single problem with high non-trivial OMA complexity that can be used to prove strong $\OMAnt$ lower bounds for multiple problems?} It would then serve as a \emph{canonical} hard function for OMA in this sense. 
    \item {\it Can we exhibit an explicit function $f$ with $\OMAnt(f)$ (strongly) superlinear in $\Rone(f)$?} This would mean that the trivial OMA protocol 
    is \emph{significantly} better than any non-trivial protocol in terms of total communication.
    \item {\it If the answer to the above question is ``yes'', how large can this gap be---can it be exponential?} 
    \item {\it Can we show an explicit function $f$ with $\OMAnt(f)$ (strongly) superlinear in $\sqrt{N}$ for input size $N$?} This would mean that while there is a strong $\sqrt{N}$-barrier for $\OMA$ complexity \cite{Gavinsky21}, the situation is not at all similar for $\OMAnt$.  
\end{itemize}

In this work, we answer all these questions in the affirmative. 
We discuss these results in detail in \Cref{sec:results}.

\mypar{Annotated Streaming} Next, we consider the analogous stream verification or \emph{annotated streaming} model \cite{ChakrabartiCM09, ChakrabartiCMT14}, where a space-restricted Verifier and an all-powerful Prover with unlimited space simultaneously receive a huge data stream. Following the input stream, the Prover (with knowledge of the entire stream) and Verifier (who could only store a summary) invoke a \emph{scheme} where the Prover tries to convince the Verifier about the answer to an underlying problem, similar to the online MA model. The total cost of a scheme is defined as the sum of the number of bits communicated by the Prover (\hc) and the number of bits of space used by the Verifier (\vc). The {\em non-trivial} annotated-streaming complexity of a function $f$ is analogously defined as the minimum total cost over all non-trivial schemes computing $f$, where a trivial scheme is one that uses as much space (up to polylogarithmic factors) as is needed in classical streaming (without Prover). 

Since all known lower bounds in annotated streaming are proven via reduction from problems in OMA communication, our knowledge of lower bounds in the two models are similar.
We use our OMA results to prove strong lower bounds on the non-trivial complexity in this model as well. We show that fundamental data streaming problems such as {\em counting distinct items} have high non-trivial annotated streaming complexity when frequencies can be huge. Further, we show that graph problems such as connectivity and more generally $k$-connectivity have high non-trivial complexity under certain graph streams that we call \emph{support graph turnstile} (SGT) streams. It might be intuitive that these problems might be hard in this model, even with a Prover. Perhaps surprisingly, we show that in the classical (sans Prover) model, we can solve these problems under SGT streams---featuring as large as exponential edge weights---almost as efficiently as under standard (unweighted) graph streams. We do this by building strong $\ell_0$-samplers that can handle large frequencies and might be of independent interest. These results set the stage for our lower bounds on non-trivial complexity: they provide the benchmark space for any non-trivial scheme solving these problems.  

    


Our final set of results give efficient annotated streaming schemes for $k$-connectivity on (standard) dynamic graph streams. We exploit graph theoretic properties on $k$-connected graphs to come up with short certificates for proving or disproving $k$-connectedness. This might be of independent interest in the graph-theoretic literature. Furthermore, these results establish a conceptual separation between classical streaming and annotated streaming: in the former, graph connectivity problems have roughly the same complexity in dynamic and SGT streams, whereas in the latter, they are much harder on SGT streams than on dynamic. We discuss our annotated streaming results in detail in \Cref{sec:results}.





%
\mypar{Basic Terminology} For the remainder of Section 1, it helps to define some basic terminology for ease of presentation. Later, in \Cref{sec:models}, we describe all notation and terminology in detail. An OMA protocol with \hc $O(h)$ (resp. $\Ot(h)$) and \vc $O(v)$ (resp. $\tO(v)$) is called an $(h,v)$-OMA-protocol (resp. $[h,v]$-OMA-protocol). Analogously, an annotated streaming protocol is called an $(h,v)$-scheme or an $[h,v]$-scheme. For computing a function $f$, a \emph{trivial OMA protocol} is one that has \vc $\Omega(\Rone(f))$, and a \emph{trivial scheme} is one that uses verification space $\tOmega(S(f))$, where $S(f)$ is the classical streaming complexity of $f$ (the asymptotically optimal space for computing $f$ in classical streaming). The ``Index'' communication problem and its variants come up frequently in our discussions. In the standard version, Alice has a string $x\in\{0,1\}^N$ and Bob has an index $j\in [N]$, where his goal is to output $x[j]$, the $j$th bit of $x$. 
\subsection{Our Results and Contributions}\label{sec:results}

First, we define the Equals-Index (henceforth, \eqidx) problem, which is the basis of our lower bounds. 

For arbitrary natural numbers $p$ and $q$, in the $\eqidx_{p,q}$ communication problem, Alice holds strings $x_1,\ldots,x_p$ where each $x_i\in \{0,1\}^q$. Bob holds a string $y\in\{0,1\}^q$ and an index $j\in [p]$. The goal is for Bob to output whether $x_j=y$.  This problem can be interpreted as (a boolean version of) the Index problem on large domains: Alice has a $p$-length string over the domain $\{0,\ldots,2^q -1\}$ (instead of just $\{0,1\}$). Bob needs to verify whether the $j$th index of Alice's string equals his value $y$. 

Our main result on the non-trivial-OMA complexity of $\eqidx$ is as follows. 

\begin{restatable}{theorem}{mainthmeqidx}\label{thm:eqidx-omant}
    For any $p,q$ with $p=\Omega(\log q)$, we have $\OMAnt(\eqidx_{p,q})=\omega(q)$
\end{restatable}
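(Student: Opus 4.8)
The plan is to establish a \emph{universal} tradeoff — that every OMA protocol for $\eqidx_{p,q}$ with help cost $h$ and verification cost $v$ satisfies, up to lower-order factors, $v\cdot(h+q+\log p)=\Omega(pq)$ — and then to observe that for a non-trivial protocol, i.e.\ one with $v=o(\Rone(\eqidx_{p,q}))$, this tradeoff already forces $\tc=h+v=\omega(q)$. As a preliminary I would pin down $\Rone(\eqidx_{p,q})=\Theta(p)$: the upper bound is the obvious fingerprinting scheme (Alice sends a short hash of each $x_i$ and Bob compares the $j$-th hash with $y$), and the lower bound follows by embedding the $p$-bit Index problem (given $b\in\{0,1\}^p$, set $x_i=0^q$ if $b_i=0$ and $x_i=1^q$ if $b_i=1$, and $y=1^q$, so that $[x_j=y]=b_j$). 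This is exactly where the hypothesis $p=\Omega(\log q)$ enters: it makes $q$ a lower-order term in $\Rone$, so that ``non-trivial'' becomes equivalent to $v=o(p)$.

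For the tradeoff I would first amplify the given protocol, using the fact that an OMA protocol can be boosted \emph{without} lengthening Merlin's message: keep his proof a single $h$-bit string and repeat only Alice's message and Bob's test $k$ times with fresh private coins, accepting by majority. Taking $k=\Theta(h+q+\log p)$ drives the completeness error below $p^{-2}$ and the soundness error below $2^{-(h+q+2\log p+10)}$, at the cost $v\mapsto v^{*}=O\big(v\cdot(h+q+\log p)\big)$ with $h$ unchanged; write $a^{*}(x)$ for Alice's message in the amplified protocol. Now fix any Alice-input $x=(x_1,\dots,x_p)$ and any $j\in[p]$. Completeness on the \textsf{Yes}-instance $(x,j,x_j)$ says some proof makes Bob accept the query value $y=x_j$, while soundness on each \textsf{No}-instance $(x,j,y')$ with $y'\neq x_j$ says that \emph{no} proof makes Bob accept $y'$ except with probability $2^{-(h+q+2\log p+10)}$. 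Union-bounding the latter over the $\le 2^{q}$ values $y'$ \emph{and} the $2^{h}$ proofs, and then over the $p$ choices of $j$, shows that for every $x$, with probability at least $1/2$ over the coins the set $\{\,y:\exists\,m,\ \text{Bob accepts }(a^{*}(x),m,y,j)\,\}$ equals exactly $\{x_j\}$ for all $j$ simultaneously. Since each $x$ is good with probability $\ge 1/2$, some coin string makes at least $2^{pq}/2$ inputs good, and on those inputs $a^{*}$ is injective (decode each $x_j$ from $a^{*}(x)$ and $j$), forcing $2^{v^{*}}\ge 2^{pq-1}$, i.e.\ $v\cdot(h+q+\log p)=\Omega(pq)$.

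To conclude, substitute $v=o(\Rone(\eqidx_{p,q}))=o(p)$ into the tradeoff: $h+q+\log p\ge\Omega(pq/v)=\omega(q)$, since $p/v=\omega(1)$. A short case analysis then finishes it. If $h$ accounts for a constant fraction of $h+q+\log p$, then $h=\omega(q)$. Otherwise $\log p$ must dominate, so $\log p=\omega(q)$ (and hence $h,q=O(\log p)$, else we are back in the previous case), and then $v\ge\Omega\big(pq/\log p\big)=\omega(q)$ because $p/\log p\to\infty$. Either way $\tc=h+v=\omega(q)$.

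The main obstacle is calibrating the amplification and the union bounds tightly enough that the factor $q$ actually lands inside the tradeoff. Two points are essential: first, one must boost the MA protocol without inflating the proof length (the single-proof/repeated-verification trick), since otherwise the blow-up in $v$ swamps the gain; second — and this is the crux — the decoding step must union-bound not merely over the $2^{h}$ proofs, which alone would contribute only $h+\log p$ to $v^{*}$, but also over the $2^{q}$ candidate values of $x_j$, which is what forces the soundness error below $2^{-(h+q+\cdots)}$ and produces the $(h+q+\log p)$ factor. That extra $q$ is precisely what lets $\OMAnt$ climb past the $\sqrt{\text{input size}}$ barrier that pins down standard $\OMA$. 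A secondary technical point is selecting a single coin string that is good simultaneously for all $p$ values of $j$, which is why completeness, too, must be boosted to error $o(1/p)$.
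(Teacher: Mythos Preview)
Your proof is correct but takes a genuinely different route from the paper. The paper obtains the tradeoff $(h+q)\cdot v=\Omega(pq)$ by a short \emph{reduction}: given an $(h,v)$-OMA protocol for $\eqidx_{p,q}$, it builds an $(h+q,v)$-OMA protocol for $\idx_{pq}$ by having Merlin additionally send the claimed value $y=x_j$ (cost $q$) and then invoking the $\eqidx$ protocol to verify that claim; the generic bound $h'\cdot v\ge\Omega(\Rone(\idx_{pq}))=\Omega(pq)$ finishes it. From $(h+q)v=\Omega(pq)$ and $v=o(p)$ one gets $h=\omega(q)$ in one line, with no case analysis.

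Your argument is instead a direct amplification-plus-counting proof: boost soundness below $2^{-(h+q+O(\log p))}$, then decode each $x_j$ from Alice's message by enumerating all $(y,m)$ pairs, forcing $v^{*}=\Omega(pq)$ and hence $v\cdot(h+q+\log p)=\Omega(pq)$. This is self-contained (it does not black-box the $\idx$ OMA bound) and makes transparent \emph{why} the additive $q$ appears --- it is the cost of union-bounding over the $2^{q}$ candidate values of $x_j$, exactly the same $q$ that Merlin pays in the paper's reduction to name $x_j$. The price you pay is the extraneous $\log p$ term, which is what necessitates your final case split (when $\log p=\omega(q)$ you must argue that $v$ itself is $\omega(q)$); the paper's tighter tradeoff avoids this entirely.
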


Observe that $q$ can be as large as $\exp(\Omega(p))$. We also show that $\Rone(\eqidx_{p,q})$ is only $\Theta(p+\log q)$ (\Cref{lem:onewayeqidx}). This immediately implies that the gap between $\OMAnt(f)$ and $\Rone(f)$ (or $\OMA(f))$ can be exponential.   

\begin{restatable}{corollary}{expgapcor}\label{cor:expgap}
    There is an explicit function $f$ with $$\OMAnt(f) = \exp(\Omega(\OMA(f))) \text{  and  }\OMAnt(f) = \exp(\Omega(\Rone(f))).$$
\end{restatable}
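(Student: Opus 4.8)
The plan is to derive Corollary~\ref{cor:expgap} as an essentially immediate consequence of Theorem~\ref{thm:eqidx-omant} together with the (easy) bound $\Rone(\eqidx_{p,q}) = \Theta(p + \log q)$ from Lemma~\ref{lem:onewayeqidx}. First I would fix the parameters so that the two quantities are maximally spread apart: take $p = \Theta(\log q)$, which is the boundary case permitted by the hypothesis $p = \Omega(\log q)$ of Theorem~\ref{thm:eqidx-omant}. With this choice, $\Rone(\eqidx_{p,q}) = \Theta(p + \log q) = \Theta(\log q)$, while Theorem~\ref{thm:eqidx-omant} gives $\OMAnt(\eqidx_{p,q}) = \omega(q)$. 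Since $q = 2^{\Theta(p)} = \exp(\Omega(\Rone(\eqidx_{p,q})))$, we get $\OMAnt(\eqidx_{p,q}) = \omega(q) = \exp(\Omega(\Rone(\eqidx_{p,q})))$, which is the second claimed relation.

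For the first relation, I would invoke the general upper bound on standard OMA complexity: every function $f$ admits an OMA protocol with $\tc = O(\Rone(f))$ (the trivial protocol in which Merlin stays silent and Alice simply runs the optimal one-way protocol), so $\OMA(f) \le O(\Rone(f))$. Applying this to $f = \eqidx_{p,q}$ with the above parameter choice gives $\OMA(\eqidx_{p,q}) = O(\log q)$, hence $q = \exp(\Omega(\OMA(\eqidx_{p,q})))$ as well, and therefore $\OMAnt(\eqidx_{p,q}) = \omega(q) = \exp(\Omega(\OMA(\eqidx_{p,q})))$. Taking $f$ to be $\eqidx_{p,q}$ with $p = \Theta(\log q)$ (say $p = \lceil \log q \rceil$, letting $q \to \infty$) then yields an explicit function witnessing both statements simultaneously.

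A small subtlety worth spelling out is the exact meaning of the asymptotic notation: the statement $\OMAnt(f) = \exp(\Omega(\OMA(f)))$ should be read as ``there is a constant $c>0$ and an infinite family of such functions $f$ with $\OMAnt(f) \ge 2^{c\cdot\OMA(f)}$,'' and likewise with $\Rone$ in place of $\OMA$. I would make this precise by noting that $\OMAnt(\eqidx_{p,q})=\omega(q)$ means $\OMAnt \ge q$ for all large $q$, while $\OMA(\eqidx_{p,q}) \le C\log q$ and $\Rone(\eqidx_{p,q}) \le C\log q$ for an absolute constant $C$, so $\OMAnt \ge q = 2^{\log q} \ge 2^{\OMA/C} = 2^{\Omega(\OMA)}$ and similarly for $\Rone$. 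The only genuine content here is Theorem~\ref{thm:eqidx-omant} and Lemma~\ref{lem:onewayeqidx}; there is no real obstacle in the corollary itself beyond stating the parameter choice cleanly and making sure the one-way upper bound $\OMA \le O(\Rone)$ is available (it follows trivially from the definition of the OMA model, since Merlin can be ignored). I would present the argument in two short paragraphs matching the two displayed relations.
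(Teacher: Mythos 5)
Your proposal is correct and follows essentially the same route as the paper: you instantiate $\eqidx_{p,q}$ with $p = \Theta(\log q)$, invoke Theorem~\ref{thm:eqidx-omant} to get $\OMAnt = \omega(q)$, use Lemma~\ref{lem:onewayeqidx} to get $\Rone = \Theta(p) = \Theta(\log q)$, and note $\OMA \le \Rone$ via the trivial (Merlin-silent) protocol. The paper parameterizes as $p = \log n$, $q = n/\log n$ rather than $p = \lceil\log q\rceil$, but this is the same asymptotic regime and the argument is identical.
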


Conceptually, this means that the trivial protocol where Merlin sends nothing and Alice and Bob solve the problem on their own, is exponentially better (in terms of total communication) than any other protocol\footnote{Here, we are discounting clearly-suboptimal protocols where Alice sends $\omega(\Rone(f))$ bits or where Merlin sends a non-empty message despite Alice sending $\Omega(\Rone(f))$ bits. Hence, ``any other protocol'' essentially means any non-trivial protocol.}.

Recall that previously we did not know of any function $f$ on input size $N$ with $\OMAnt(f)=\tomega(\sqrt{N})$. While exhibiting a function $f$ with (standard) OMA complexity $\OMA(f)=\omega(\sqrt{N})$ remains a longstanding open problem, our results show that $\OMAnt$ complexity does not have such a barrier. In fact, \Cref{thm:eqidx-omant} implies that it can be as large as $N/\log N$. 

\begin{restatable}{corollary}{omegartncor}\label{cor:omegartn}
    For any $C\in (\sqrt{N},N/\log N)$, there is an explicit function $f$ on input size $N$ with $$\OMAnt(f) = \omega(C).$$
\end{restatable}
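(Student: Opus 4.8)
The plan is to derive Corollary~\ref{cor:omegartn} directly from Theorem~\ref{thm:eqidx-omant} by instantiating the parameters $p,q$ of $\eqidx_{p,q}$ appropriately and padding the input to the desired size $N$. Recall that the input of $\eqidx_{p,q}$ consists of Alice's $p$ strings of length $q$ (so $pq$ bits) together with Bob's string of length $q$ and index in $[p]$ (so $q + \log p$ bits); thus the natural input size of $\eqidx_{p,q}$ is $\Theta(pq)$. Given a target $C\in(\sqrt N, N/\log N)$, I would choose $q \approx C$ and $p \approx N/C$, and then verify the two constraints: first that $p = \Omega(\log q)$, which is exactly the hypothesis needed to invoke Theorem~\ref{thm:eqidx-omant}; and second that the total input size $pq \approx N$ (so after harmless padding with dummy bits the input size is exactly $N$, and padding cannot decrease any communication complexity).

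Carrying this out: since $C < N/\log N$ we get $p = N/C > \log N \ge \log C \ge \log q$, so the hypothesis $p=\Omega(\log q)$ of Theorem~\ref{thm:eqidx-omant} holds (one must be slightly careful with the exact rounding of $p$ and $q$ to integers and with the constant, but this is routine). The theorem then gives $\OMAnt(\eqidx_{p,q}) = \omega(q) = \omega(C)$. Finally, let $f$ be $\eqidx_{p,q}$ padded out to input size exactly $N$ (this is possible since $pq = \Theta(N) \le N$ after choosing the hidden constants appropriately, and appending irrelevant bits to Alice's or Bob's input changes neither $\OMAnt$ nor $\Rone$ up to the padding's effect on the ``trivial'' threshold $\Rone(f)$, which only increases). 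Hence $f$ is an explicit function on input size $N$ with $\OMAnt(f) = \omega(C)$, as claimed.

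The one point that needs a little care, rather than a true obstacle, is the lower endpoint $C > \sqrt N$: with $q\approx C$ and $p\approx N/C$ we have $q > \sqrt N > p$ precisely in this regime, so $q$ is the larger of the two parameters and $\omega(q) = \omega(C)$ is indeed the bound we want; meanwhile for $C$ close to $\sqrt N$ we have $p\approx\sqrt N$, comfortably satisfying $p=\Omega(\log q)$, and for $C$ close to $N/\log N$ we have $p$ close to $\log N$, which is the tight end of the hypothesis $p=\Omega(\log q)$ — so the full stated range of $C$ is exactly the range for which this instantiation is legal. I would also remark that this simultaneously recovers the $N/\log N$ claim made in the prose right before the corollary (take $C$ as close to $N/\log N$ as the hypothesis permits). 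Overall the proof is a short parameter-balancing argument; the main thing to get right is checking that the boundary cases of $C$ line up with the boundary case $p=\Omega(\log q)$ of the theorem.
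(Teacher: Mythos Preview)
Your proposal is correct and takes essentially the same approach as the paper: set $q=C$, $p=N/C$, check that $C<N/\log N$ forces $p\ge\log N>\log q$, and invoke Theorem~\ref{thm:eqidx-omant}. The paper's proof is the same two-line parameter choice; your additional discussion of padding, integer rounding, and the boundary cases is more careful than the paper (which simply identifies the input size with $pq$), but not a different argument.
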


Next, we turn to the related \emph{Sparse Index} problem (henceforth \spidx). Chakrabarti et al.~\cite{ChakrabartiCGT14} defined the $\spidx_{m,N}$ problem as the version of Index where Alice's string is promised to have hamming weight (i.e., number of $1$'s) at most $m$. They identified $\spidx_{m,N}$ for $m=\log N$ as the first problem whose non-trivial OMA complexity is nearly as large as its one-way complexity.\footnote{It follows implicitly from their result establishing $\OMA(\spidx_{\log N,N})=\tOmega(\Rone(\spidx_{\log N,N}))$} Via reduction from $\eqidx$, we improve upon \cite{ChakrabartiCGT14}'s lower bound for $\spidx_{m,N}$ (for any $m$). In particular, our improved lower bound implies that $\spidx$ with sparsity $O(\log \log N)$ has non-trivial-OMA complexity \emph{exponential} in its classical one-way complexity. 

\begin{restatable}{corollary}{thmspidx}\label{cor:spidxomant}
For $m=\log \log N$, we have $\OMAnt(\spidx_{m, N})=\omega(\log N)$, whereas $\Rone(\spidx_{m, N})=\OMA(\spidx_{m, N})=\Theta(\log\log N)$.
\end{restatable}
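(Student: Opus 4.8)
The plan is to derive the corollary from \Cref{thm:eqidx-omant} via a deterministic, \emph{local} (hence $\hc$- and $\vc$-preserving) reduction $\eqidx\to\spidx$, and then to settle the ``whereas'' clause by pinning down $\Rone$ and $\OMA$ of $\spidx$ at sparsity $\log\log N$. The reduction is the natural encoding of ``Index on a large domain'' by a sparse bit-string: given an $\eqidx_{p,q}$ instance, identify each $x_i\in\{0,1\}^q$ with the integer $\mathrm{val}(x_i)\in\{0,\dots,2^q-1\}$ it represents and let Alice form the string $X\in\{0,1\}^{p\cdot 2^q}$ with a $1$ exactly in position $(i-1)2^q+\mathrm{val}(x_i)$ for each $i\in[p]$. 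Then $X$ has Hamming weight exactly $p$, and for Bob's input $(y,j)$ the single coordinate $J:=(j-1)2^q+\mathrm{val}(y)$ satisfies $X[J]=1\iff\mathrm{val}(x_j)=\mathrm{val}(y)\iff x_j=y$. Since Alice's map uses only $x_1,\dots,x_p$ and Bob's only $(y,j)$ (and Merlin, seeing all inputs, can compute both $X$ and $J$), every $[\hc,\vc]$-OMA-protocol for $\spidx_{p,\,p\cdot 2^q}$ yields an $[\hc,\vc]$-OMA-protocol for $\eqidx_{p,q}$ with the same $\hc$, $\vc$, and error.

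Next I fix parameters. Given the target $N$, set $p=m:=\lceil\log_2\log_2 N\rceil$ and $q:=\lfloor\log_2(N/p)\rfloor$, and pad $X$ with zeros up to length $N$; then $p\cdot 2^q\le N$ and the padded string is a legitimate $\spidx_{m,N}$ input (weight $p=m$, meeting the sparsity promise). Here $q=(1-o(1))\log_2 N=\Theta(\log N)$, so $\log q=\Theta(\log\log N)$ and $p=\Theta(\log q)$; in particular $p=\Omega(\log q)$, so \Cref{thm:eqidx-omant} applies and yields $\OMAnt(\eqidx_{p,q})=\omega(q)=\omega(\log N)$. It remains to check that non-triviality is carried across. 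By \Cref{lem:onewayeqidx}, $\Rone(\eqidx_{p,q})=\Theta(p+\log q)=\Theta(\log\log N)$, and (see below) $\Rone(\spidx_{m,N})=\Theta(m+\log\log N)=\Theta(\log\log N)$ as well, the two estimates agreeing up to constants for our parameters. Hence a non-trivial protocol for $\spidx_{m,N}$ --- one with $\vc=o(\Rone(\spidx_{m,N}))=o(\log\log N)$ --- induces one for $\eqidx_{p,q}$ with $\vc=o(\log\log N)=o(\Rone(\eqidx_{p,q}))$, again non-trivial; so every non-trivial OMA protocol for $\spidx_{m,N}$ has $\tc=\hc+\vc\ge\OMAnt(\eqidx_{p,q})=\omega(\log N)$, i.e.\ $\OMAnt(\spidx_{m,N})=\omega(\log N)$. (The same reduction, taking $p=m$, $q=\log(N/m)$, works for every $m$ with $\log\log N\le m\le N^{0.99}$, which is how one improves on the $\spidx$ bound of \cite{ChakrabartiCGT14}.)

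For the ``whereas'' clause I need $\Rone(\spidx_{m,N})=\OMA(\spidx_{m,N})=\Theta(\log\log N)$ at $m=\log\log N$. The upper bounds use a textbook hashing protocol: Alice applies an $\tfrac{1}{10m}$-almost-universal hash $h:[N]\to[R]$ with $R=\Theta(m\log N)$, sends $h$ and the $R$-bit indicator of $h(\mathrm{supp}(x))$, and Bob outputs $\mathbf{1}[h(j)\in h(\mathrm{supp}(x))]$; with public coins this is $O(m)$ bits with error $\le\tfrac{1}{10}$, Newman's theorem makes it a private-coin one-way protocol of cost $O(m+\log\log N)$, and an empty Merlin message gives $\OMA(\spidx_{m,N})\le\Rone(\spidx_{m,N})=O(m+\log\log N)$, which at $m=\log\log N$ is $O(\log\log N)$. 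For the matching lower bounds, $\spidx_{m,N}$ contains as sub-problems both the Index problem on $\Theta(m)$ bits (Alice plants two cells per bit) and the equality problem on $[N]$ (Alice's support a singleton), whose one-way complexities are $\Omega(m)$ and $\Omega(\log\log N)$ and whose OMA complexities are $\Omega(\sqrt m)$ (via the Babai--Frankl--Simon bound $\hc\cdot\vc=\Omega(\Rone)$) and $\Omega(\log\log N)$ (via the standard total-variation packing argument for equality, which survives Merlin's presence because his message precedes Alice's); at $m=\log\log N$ both lower bounds are $\Omega(\log\log N)$, closing the gap.

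The conceptual work is entirely inside \Cref{thm:eqidx-omant}. The only delicate point in the corollary itself is the simultaneous parameter balancing --- keeping $m=\Theta(\log\log N)$ (so the statement concerns sparsity $\log\log N$), $q=\Theta(\log N)$ (so ``$\omega(q)$'' reads as ``$\omega(\log N)$''), and $p=\Omega(\log q)$ (so \Cref{thm:eqidx-omant} is applicable) all at once --- together with the verification that the non-triviality threshold $o(\Rone)$ transfers, which is exactly why one needs the \emph{two-sided} estimate $\Rone(\spidx_{p,p\cdot 2^q})=\Theta(p+\log q)=\Theta(\Rone(\eqidx_{p,q}))$ rather than just the inequality handed over by the reduction. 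I do not expect any step beyond this bookkeeping to be hard.
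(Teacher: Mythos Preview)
Your approach matches the paper's: the same encoding reduction $\eqidx_{p,q}\to\spidx_{p,\,p\cdot 2^q}$, the same parameter choice $p=m=\log\log N$ and $q\approx\log N$, and the same determination of $\Rone$ via public-coin hashing plus Newman. Two points need repair, however.

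First, the hash range: with $R=\Theta(m\log N)$ the indicator vector of $h(\mathrm{supp}(x))$ already costs $\Theta(m\log N)$ bits, not $O(m)$; the paper takes $R=3m$, so that the characteristic vector is $O(m)$ bits and a pairwise-independent family still keeps the collision probability at $1/(3m)$, enough for a union bound over $m$ support elements.

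Second---and this is what the paper itself singles out as the delicate step---your Newman application silently assumes an additive $O(\log\log N)$, but Newman applied to a function on $\{0,1\}^N\times[N]$ adds $\Theta(\log N)$, not $\Theta(\log\log N)$. The fix (and the paper's main observation here) is to regard Alice's input as an element of a domain of size $\binom{N}{\le m}$ rather than $\{0,1\}^N$, so that Newman's additive term is $O\bigl(\log\log\binom{N}{m}\bigr)=O(\log m+\log\log N)$. Without this you only get $\Rone(\spidx_{m,N})=O(m+\log N)$, which at $m=\log\log N$ is $O(\log N)$: that kills the ``whereas'' clause outright and, since your transfer of non-triviality hinges on $\Rone(\spidx_{m,N})=\Theta(\Rone(\eqidx_{p,q}))$, it also breaks the main inequality. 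So this is not mere bookkeeping; it is exactly the step you should not expect to be easy.
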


We remark that en route to establishing the above result, we also improve upon \cite{ChakrabartiCGT14}'s upper bound on $\Rone(\spidx_{m,N})$ and settle its complexity in the classical one-way model (see \Cref{lem:onewayspidx}).   

The only other functions (to the best of our knowledge) that prior work has shown to have OMA complexity (nearly) linear in its one-way complexity are the $\xorconn_n$ and $\xorbip_n$ problems \cite{Thaler16}: in these problems, Alice and Bob have one graph each on the same vertex set $[n]$ (hence, the input size $N=\Theta(n^2)$), and they need to check connectivity and bipartiteness (respectively) of the graph obtained by XOR-ing their graphs, i.e., the graph induced by the symmetric difference of their edge sets. Thaler \cite{Thaler16} showed that each of these functions $f$ have $\OMA(f)=\Omega(n)=\tOmega(\Rone(f))$, which implies the same about its $\OMAnt$ complexity. We reduce from $\eqidx$ to reproduce this result, thus making a convincing case for \eqidx being a canonical problem for establishing high $\OMAnt$ complexity.  

\begin{restatable}{corollary}{xoromant}\label{cor:xoromant}
(\cite{Thaler16}, paraphrased)
    For $f\equiv \xorconn_n$ or $f\equiv \xorbip_n$, we have $$\OMAnt(f)=\Omega(\Rone(f))=\Omega(n).$$
\end{restatable}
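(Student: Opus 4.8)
The plan is to derive both lower bounds from \Cref{thm:eqidx-omant} by a reduction from $\eqidx_{p,q}$ with $p=q=\Theta(n)$. This choice meets the hypothesis $p=\Omega(\log q)$; it makes the input size $\Theta(n^2)$ agree with that of $\xorconn_n$ and $\xorbip_n$; and, by \Cref{lem:onewayeqidx}, it gives $\Rone(\eqidx_{p,q})=\Theta(p+\log q)=\Theta(n)$, the same order as $\Rone(\xorconn_n)=\Rone(\xorbip_n)=\Theta(n)$. Matching the one-way complexities is the crucial point: it guarantees that a $\xorconn_n$ (resp.\ $\xorbip_n$) protocol with $\vc=o(\Rone)=o(n)$ turns into an $\eqidx_{p,q}$ protocol with $\vc=o(n)=o(\Rone(\eqidx_{p,q}))$, so \emph{non-triviality is preserved}.

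For the reduction, reserve a vertex $\rho$ and split the remaining $\Theta(n)$ vertices into $L=\{\ell_1,\dots,\ell_p\}$ and $R=\{r_1,\dots,r_q\}$. Alice encodes her matrix as the bipartite graph $G_A$ with $(\ell_i,r_k)\in G_A\iff x_i[k]=1$, leaving $\rho$ isolated. Bob, holding $(j,y)$, forms $G_B$ by overlaying $y$ on row $j$ --- putting $(\ell_j,r_k)\in G_B\iff y[k]=1$, so $(\ell_j,r_k)\in G_A\triangle G_B$ exactly when $x_j[k]\ne y[k]$ --- and by adding the edges $\{(\rho,r_k):k\in[q]\}\cup\{(\rho,\ell_j)\}$ (plus, for $\xorconn_n$ only, $\{(\rho,\ell_i):i\ne j\}$ as well). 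For $\xorbip_n$: $G_A$ is bipartite via the $L$/$R$ parts, and in $G_A\triangle G_B$, if $x_j=y$ then $\ell_j$ is a pendant of $\rho$ and moving $\ell_j$ to the $R$-side gives a proper $2$-coloring (bipartite), while a single disagreement at $k$ makes $\rho,\ell_j,r_k$ a triangle (non-bipartite); hence $G_A\triangle G_B$ is bipartite iff $x_j=y$. For $\xorconn_n$: the extra edges pin all of $R$ and all rows $\ell_i,i\ne j$ (whose edges Bob cannot control) into $\rho$'s component, so $\ell_j$ is the only possibly-separated vertex, and $\ell_j$ reaches $\rho$ iff it has at least one edge into $R$, i.e.\ iff $x_j\ne y$; hence $G_A\triangle G_B$ is connected iff $x_j\ne y$ --- equivalently we reduce $\eqidx_{p,q}$ to checking disconnectivity, which is what drives Thaler's $\Omega(n)$ bound. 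In both cases the reduced instance lives on $\Theta(n)$ vertices and its answer equals that of $\eqidx_{p,q}$ (possibly up to the built-in negation for $\xorconn_n$).

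To finish: given any $[h,v]$-OMA-protocol $\Pi$ for $\xorconn_n$ (resp.\ $\xorbip_n$), Alice computes $G_A$ and Bob computes $G_B$ with no added communication (and Merlin, knowing everything, can simulate his part), so $\Pi$ becomes an $[h,v]$-OMA-protocol for $\eqidx_{p,q}$. If $\Pi$ is non-trivial then $v=o(\Rone(\xorconn_n))=o(n)=o(\Rone(\eqidx_{p,q}))$, so the derived protocol is non-trivial as well, and \Cref{thm:eqidx-omant} forces $h+v=\tc(\Pi)\ge\OMAnt(\eqidx_{p,q})=\omega(q)=\omega(n)$. As $\Pi$ was arbitrary, $\OMAnt(\xorconn_n)=\OMAnt(\xorbip_n)=\Omega(n)$, and since $\Rone(\xorconn_n)=\Rone(\xorbip_n)=\Theta(n)$ this is precisely $\OMAnt(f)=\Omega(\Rone(f))=\Omega(n)$.

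The only real work is the gadget. Preserving the non-triviality correspondence forces Alice's $\Theta(n^2)$-bit input into a graph on only $\Theta(n)$ vertices: a ``one inert path per row'' encoding would need $\Theta(n^2)$ vertices and would blow $\Rone$ up to $\Theta(n^2)$, which is fatal. Hence the encoding must be dense, and then the edges of the $p-1$ rows Bob does not query inevitably remain in $G_A\triangle G_B$, beyond his control. The trick is to make the target property oblivious to this residue --- automatic for bipartiteness (the residue is a bipartite subgraph between $L$ and $R$), and achieved for connectivity by collapsing the residue together with all of $R$ into $\rho$'s component, so that only the adjacency pattern of the single vertex $\ell_j$ --- which Bob has set to encode the comparison of $x_j$ with $y$ --- can flip the answer. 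Getting this right with the correct YES/NO polarity (which, for connectivity, means working with the (dis)connectivity predicate, since connectivity is monotone and only directly witnesses a disagreeing coordinate) is the delicate step, and is essentially the combinatorial heart of Thaler's original argument, here funneled through the single canonical problem $\eqidx$.
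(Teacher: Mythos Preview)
Your approach is essentially the paper's: reduce $\eqidx_{n,n}$ to $\xorconn$/$\xorbip$ via the same bipartite-plus-apex gadget, then invoke the $\eqidx$ lower bound. One slip in your $\xorconn$ edge set: you include $(\rho,\ell_j)$ among Bob's base edges, but your own analysis (``$\ell_j$ is the only possibly-separated vertex'') requires that edge to be \emph{absent}---otherwise $\ell_j$ is always adjacent to $\rho$ in $G_A\triangle G_B$ and the graph is always connected regardless of whether $x_j=y$. For $\xorconn$, Bob should attach $\rho$ to all of $R$ and to $\{\ell_i:i\ne j\}$ but \emph{not} to $\ell_j$; this is exactly the paper's construction, and your subsequent analysis already matches it.
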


We now turn to the annotated streaming model. Our OMA results can be used to prove lower bounds on the total cost of any non-trivial scheme for certain problems.  

First, consider the fundamental distinct items problem (henceforth, $\distit_{N,F}$) on turnstile streams, where frequencies of elements from universe $[N]$ get incremented and decremented, and we are promised that the absolute value of the max-frequency is bounded above by $F$. At the end of the stream, we need to output the number of elements with non-zero frequency. We show a separation between classical streaming and (non-trivial) annotated streaming complexities for this problem, given as follows.  

\begin{restatable}{theorem}{distitmain}\label{thm:distitgap}
    There is a setting of $F$ such that $\distit_{N, F}$ can be solved in $\tO(N)$ space in classical streaming, but any non-trivial annotated streaming scheme for the problem must have total cost $\Omega(N^{\polylog(N)})$.
\end{restatable}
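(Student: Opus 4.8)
The plan is to obtain the two halves of the statement from results already established in the paper, together with a reduction between the distinct-items problem on turnstile streams and the Equals-Index communication problem.

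For the upper bound, the paper states (in the part of the abstract and results section describing "strong $\ell_0$-samplers robust to SGT streams") that counting distinct items on turnstile streams with exponentially-large frequency bound $F$ can be solved classically in $\tO(N)$ space. I would invoke that classical algorithm directly: set $F$ to be the specific bound (roughly $\exp(N^{\polylog N})$, i.e., $\log F = N^{\polylog N}$) dictated by the lower-bound side, and note that the strong-$\ell_0$-sampler construction handles frequency bound $F$ with only a $\polylog(N,F)=\polylog(N)$ overhead, so the space stays $\tO(N)$.

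For the lower bound, I would reduce from $\eqidx_{p,q}$ with a suitable choice of $p,q$ satisfying the hypothesis $p=\Omega(\log q)$ of Theorem~\ref{thm:eqidx-omant}, and argue that any non-trivial annotated streaming scheme for $\distit_{N,F}$ yields a non-trivial OMA protocol for $\eqidx_{p,q}$ of comparable total cost, contradicting $\OMAnt(\eqidx_{p,q})=\omega(q)$. Concretely: Alice's $p$ strings $x_1,\dots,x_p\in\{0,1\}^q$ and Bob's pair $(y,j)$ should be encoded as a turnstile stream over a universe of size $N$ so that the number of nonzero-frequency elements at the end distinguishes $x_j=y$ from $x_j\neq y$. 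The natural encoding: reserve one universe element per (index $i\in[p]$, bit-position $t\in[q]$) pair, and have Alice insert $x_i[t]$ copies (or a huge multiplicity $2^{\rho(i,t)}$ with distinct exponents, to simulate "weights") of the element $(i,t)$; Bob then, using only index $j$ and $y$, issues cancelling decrements on the block $\{(j,t):t\in[q]\}$ according to $y[t]$. If $x_j=y$ every element in that block is zeroed out and the distinct-items count equals a fixed value $c_0$; if $x_j\neq y$ the count is $c_0+$ (positive number of mismatched positions). Here using exponentially large frequencies $F=2^{\Theta(q)}$ (hence $\log F$ polynomially large in $N$ under the right parameter balance, giving the $N^{\polylog N}$ bound) is what forces the stream to genuinely be turnstile-with-huge-frequencies rather than reducible to a bounded case, and it is what makes the $\eqidx$ (as opposed to plain Index) hardness bite. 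Choosing $N=\Theta(pq)$ and $p,q$ so that $q=N^{1-o(1)}$ while $p=\Theta(\log q)$, Theorem~\ref{thm:eqidx-omant} gives $\OMAnt(\eqidx_{p,q})=\omega(q)=\omega(N^{1-o(1)})$; but a non-trivial scheme of total cost $t$ would give a non-trivial OMA protocol of total cost $\tO(t)$, so $t=\omega(N^{1-o(1)})$. To get the stated $\Omega(N^{\polylog N})$ one amplifies: run many independent copies of the construction in parallel (or pick $q$ itself to be $N^{\polylog N}$ by letting $N$ be small relative to $q$ and re-reading which quantity plays the role of "input size"), using the fact that $\OMAnt(\eqidx_{p,q})=\omega(q)$ holds for $q$ up to $\exp(\Omega(p))$; the precise statement "there is a setting of $F$" gives the freedom to push $q$, and hence the lower bound, to $N^{\polylog N}$.

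The main obstacle I expect is the reduction bookkeeping: one must verify that (i) the stream-to-communication simulation respects the \emph{one-way}/online structure — Alice's portion of the stream is processed first and turned into her message, Merlin's proof plays the role of the Prover's annotation, and Bob finishes the stream and verifies — so that a non-trivial \emph{scheme} maps to a non-trivial \emph{OMA protocol} (in particular the verification-space-to-$\vc$ correspondence must preserve the "non-trivial" qualifier, i.e., $\vc=o(\Rone(\eqidx_{p,q}))=o(p+\log q)$), and (ii) the encoding of large frequencies does not itself blow up the universe size $N$ beyond $\Theta(pq)$, which is why frequencies must be encoded as multiplicities of a \emph{single} universe element rather than via fresh elements. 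A secondary technical point is making sure the "fixed value $c_0$" is genuinely independent of the inputs — e.g. by having Alice always touch every element in every block (inserting a baseline $+1$ everywhere and then $+1$ more where $x_i[t]=1$, with Bob subtracting according to $y$), so the count in the equal case is exactly $p\cdot q$ minus the size of block $j$, a quantity the Verifier knows a priori. Once these are pinned down, the contradiction with Theorem~\ref{thm:eqidx-omant} is immediate and the theorem follows.
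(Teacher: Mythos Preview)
Your reduction has a genuine gap: it encodes $\eqidx_{p,q}$ into a universe of size $N=\Theta(pq)$ with \emph{tiny} frequencies (at most~$2$ after your baseline fix), so the large parameter $F$ plays no role at all. The paper's reduction is the opposite: it takes $p=N$ and $q=\log F$, uses a universe of size exactly $N$, and encodes each string $x_i\in\{0,1\}^{\log F}$ as an \emph{integer frequency}: Alice inserts $x_i+1$ copies of element $i$, Bob deletes $y+1$ copies of element $j$, and the distinct-items count is $N$ or $N-1$ according as $x_j\neq y$ or $x_j=y$.

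This difference is exactly what makes the ``non-trivial'' qualifiers line up, and your proposal does not survive that check. The classical streaming complexity of $\distit_{N,F}$ is $\tTheta(N)$, so a non-trivial scheme has $\vc=o(N)$. In the paper's reduction this becomes $\vc=o(N)=o(p)$ for the induced $\eqidx_{N,\log F}$ protocol, which \emph{is} the non-trivial condition there (since $\Rone(\eqidx_{p,q})=\Theta(p)$ once $p\geq\log q$). In your reduction the universe has size $pq$, so a non-trivial scheme only guarantees $\vc=o(pq)$; this is vastly weaker than the $o(p+\log q)$ you need, and the resulting OMA protocol is not non-trivial for $\eqidx_{p,q}$. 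Your proposed fixes (``run many copies'' or ``let $q=N^{\polylog N}$ with $N$ small'') do not work: with $N=\Theta(pq)$ you cannot have $q\gg N$, and parallel repetition does not tighten the per-instance non-triviality threshold.

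A smaller point: for the upper bound the paper does not use $\ell_0$-samplers but rather a per-element \emph{non-zero detector} (a counter modulo a random prime) of size $O(\log\log F)$, giving total space $O(N\log\log F)=\tO(N)$ when $F=\exp(N^{\polylog N})$.
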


Next, we show a similar separation for graph streaming problems. We define a support graph turnstile (SGT) stream to be one where an $n$-node graph is induced by the support of the edge-frequency vector. The parameter $\val$ of the SGT stream is the maximum possible absolute frequency.
For the (undirected) graph connectivity problem, where we need to check if all pairs of nodes are reachable from each other, and the $k$-vertex-connectivity (resp. $k$-edge-connectivity) problem, where we need to check whether removal of some $k-1$ vertices (resp. edges) disconnects the graph, we show an $\tO(n)$ vs $\Omega(n^{\polylog(n)})$ separation between classical streaming and non-trivial annotated streaming complexities of these problems under SGT streams. 

\begin{theorem}\label{thm:conn-sgt}
     Under certain support graph turnstile streams on $n$-node graphs, connectivity can be solved in $\tO(n)$ space by a classical streaming algorithm, whereas any non-trivial annotated streaming scheme for the problem must have total cost $\Omega(n^{\polylog(n)})$. 
\end{theorem}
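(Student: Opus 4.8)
The plan is to prove the two halves of the statement separately — a classical $\widetilde{O}(n)$-space streaming algorithm for connectivity on SGT streams, and an $\Omega(n^{\polylog(n)})$ total-cost lower bound against every non-trivial scheme — and then check that one setting of the SGT parameter $\val$ witnesses both.

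For the upper bound I would adapt the Ahn--Guha--McGregor linear-sketch approach. Recall that if $\bv_u$ is the \emph{signed} incidence vector of a vertex $u$ (with $\{u,w\}$-coordinate $\pm\freq_{uw}$, sign fixed by an orientation), then for any vertex set $C$ the sum $\sum_{u\in C}\bv_u$ cancels every edge internal to $C$ and leaves the signed indicator of $\partial C$; so a linear $\ell_0$-sampler sketch of each $\bv_u$ supports $O(\log n)$ Bor\r{u}vka rounds — sum the sketches over each current component, draw a boundary edge, contract — and recovers a spanning forest, hence connectivity. The only obstacle is that an SGT stream lets an edge-frequency reach $\val$, possibly astronomically large, whereas the textbook $\ell_0$-sampler is analyzed only for polynomially bounded coordinates (both its ``exactly-one-survivor'' recovery via low-moment checks and its sub-sampling failure bounds break otherwise). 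I would isolate the fix as a \emph{strong $\ell_0$-sampler}: run all of the sampler's arithmetic over $\mathbb{F}_r$ for a uniformly random prime $r$ with $O(\log n+\log\log\val)$ bits, so that (i) after a union bound over the $\poly(n)$ relevant coordinates, levels, and samplers, no genuinely nonzero coordinate ever vanishes mod $r$, and (ii) each register is $O(\log n+\log\log\val)$ bits wide. Outputting the sampled edge costs $O(\log n)$ more bits, so each sampler uses $\poly(\log n,\log\log\val)$ space; with $O(n\log n)$ independent samplers the total is $n\cdot\poly(\log n,\log\log\val)$, which is $\widetilde{O}(n)$ exactly when $\log\log\val\le\polylog(n)$, i.e.\ $\val\le 2^{n^{\polylog(n)}}$.

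For the lower bound I would reduce from $\eqidx_{p,q}$ and apply \Cref{thm:eqidx-omant}. Fix $p=n-1$ and a value $q$ chosen below, and let $f:\{0,1\}^q\to\{1,\dots,2^q\}$ send a string to the integer it encodes, plus one. On the path $u_0-u_1-\cdots-u_{n-1}$ with $e_i=\{u_{i-1},u_i\}$, Alice streams the updates $(e_i,+f(x_i))$ for all $i\in[p]$, and Bob streams the single update $(e_j,-f(y))$. Then $e_i$ survives for every $i\neq j$, while $e_j$ survives iff $f(x_j)\ne f(y)$ iff $x_j\ne y$; so this SGT stream — on $n$ nodes, with parameter $\val=2^q$ — induces a connected graph iff $x_j\ne y$, i.e.\ connectivity here computes $\neg\eqidx_{p,q}$. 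Now take any non-trivial scheme for connectivity on SGT streams, with help cost $\hc$ and verification space $\vc$. By the first half $S(\text{conn-SGT})=\widetilde{\Theta}(n)$ (the $\Omega(n)$ lower bound being the standard one, already for insertion-only streams), so non-triviality forces $\vc=o(n)$. The standard passage from an (online) annotated-streaming scheme to an OMA protocol — Alice runs the Verifier on her updates and sends its $\vc$-bit state, Merlin simulates the Prover and sends its $\hc$-bit annotation, Bob resumes the Verifier on his single update — then yields an OMA protocol for $\neg\eqidx_{p,q}$, hence (flip the output) for $\eqidx_{p,q}$, with verification cost $\vc$ and total cost $\hc+\vc$.

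To finish I would pick $q=n^{\polylog(n)}$ (say $q=n^{\log n}$), so that $\log q=\polylog(n)$ and the hypothesis $p=n-1=\Omega(\log q)$ of \Cref{thm:eqidx-omant} holds. By \Cref{lem:onewayeqidx}, $\Rone(\eqidx_{p,q})=\Theta(p+\log q)=\Theta(n)$, so $\vc=o(n)=o(\Rone(\eqidx_{p,q}))$ and the derived OMA protocol is \emph{non-trivial}; therefore $\hc+\vc\ge\OMAnt(\eqidx_{p,q})=\omega(q)=\omega(n^{\polylog(n)})$, so the scheme's total cost is $\Omega(n^{\polylog(n)})$ (and since $\vc=o(n)$, in fact $\hc=\Omega(n^{\polylog(n)})$). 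Finally $\val=2^q=2^{n^{\polylog(n)}}$ has $\log\log\val=\polylog(n)$, which is precisely the range where the first half delivers a genuine $\widetilde{O}(n)$-space classical algorithm, so one $\val$ witnesses the separation. The main obstacle is the strong $\ell_0$-sampler: keeping $\ell_0$-sampling correct and $\poly(\log n,\log\log\val)$-space under super-exponentially large coordinates is where the textbook analysis fails and the real work lies; given that tool and \Cref{thm:eqidx-omant}, the lower bound is a routine path-gadget reduction whose only delicate point is arranging the parameters so that a non-trivial scheme maps to a non-trivial OMA protocol.
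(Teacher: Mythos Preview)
Your proposal is correct, and the upper-bound half matches the paper almost exactly: build a strong $\ell_0$-sampler that works modulo a random prime so that registers are $\poly(\log n,\log\log\val)$ bits, then plug it into the Ahn--Guha--McGregor spanning-forest sketch. The paper isolates this as \Cref{lem:L0} and \Cref{clm:connclasstream}.

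For the lower bound, however, you take a genuinely different route. You reduce from $\eqidx_{n-1,\,q}$ using a \emph{path} gadget: one edge per $x_i$, and Bob's single deletion targets the $j$-th edge, so connectivity is exactly $\neg\eqidx$. The paper instead reduces from $\eqidx_{n,\,nq}$ using the \emph{bipartite} gadget of the $\xorconn$ proof: each $x_i\in\{0,1\}^{nq}$ is split into $n$ blocks that specify the multiplicities of all $n$ edges incident to the $i$-th left vertex, and Bob's deletions wipe out the entire neighborhood of $u_j$ iff $y=x_j$. Your gadget is simpler and your invocation of \Cref{thm:eqidx-omant} (rather than the raw tradeoff of \Cref{lem:eqidxmain}) is cleaner. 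The price is quantitative: your reduction yields $(h+q)\cdot v=\Omega(nq)$, hence $h=\omega(q)$ for $v=o(n)$, whereas the paper's packs an extra factor of $n$ into the second $\eqidx$ parameter and obtains $(h+nq)\cdot v=\Omega(n^2q)$, hence $h=\omega(nq)$. For the theorem as stated both bounds are absorbed into $n^{\polylog(n)}$, so nothing is lost; but the paper's version would matter if one wanted a sharper dependence on $\val$.
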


\begin{theorem}\label{thm:kconn-sgt}
     Under certain support graph turnstile streams on $n$-node graphs, $k$-vertex-connectivity and $k$-edge-connectivity can be solved in $\tO(kn)$ space by a classical streaming algorithm, whereas any non-trivial annotated streaming scheme for the problem must have total cost $\Omega(n^{\polylog(n)})$. 
\end{theorem}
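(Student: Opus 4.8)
The plan is to prove the two halves of the theorem separately: an $\tO(kn)$-space classical streaming algorithm that works under SGT streams, and an $\Omega(n^{\polylog(n)})$ lower bound on the total cost of any non-trivial annotated scheme, the latter via a reduction from \eqidx.

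For the upper bound, I would start from the known sketch-based dynamic-streaming algorithms for $k$-edge-connectivity and $k$-vertex-connectivity, each of which keeps $k$ parallel linear sketches, where each sketch is an $\tO(n)$-size bundle of $\ell_0$-samplers run on the edge-incidence vectors of vertex subsets; iterating spanning-forest recovery (the Boruvka/peeling argument) $k$ times over the residual graph certifies $k$-connectivity, for $\tO(kn)$ space in total. The only step that relies on a polynomial frequency bound sits inside the $\ell_0$-samplers, so I would swap each of them for the \emph{strong} $\ell_0$-sampler built earlier in the paper, which returns a uniform element of the \emph{support} of the edge-frequency vector even when magnitudes are as large as the SGT parameter $\alpha$, using only $\polylog(n)$ space as long as $\log\log\alpha=\polylog(n)$ (intuitively, every linear measurement is taken modulo a random $\polylog(n)$-bit prime, which any nonzero integer of magnitude at most $\alpha$ survives with high probability). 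Since the strong $\ell_0$-sampler is itself a linear sketch, the parallel-sketch structure and the recovery argument carry over verbatim, now applied to the support graph of the SGT stream; correctness follows and the space is $k\cdot n\cdot\polylog(n)=\tO(kn)$, with a union bound over the $k\cdot\polylog(n)$ sampler calls controlling the overall failure probability. The $k=1$ case recovers \Cref{thm:conn-sgt}.

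For the lower bound, I would reduce from $\eqidx_{p,q}$ with $q=n^{\polylog(n)}$ and $p$ chosen large enough that $\Rone(\eqidx_{p,q})=\Theta(p+\log q)=\Omega(kn)$ (by \Cref{lem:onewayeqidx}; note $p=\Omega(\log q)$ then holds trivially), to $k$-vertex- and $k$-edge-connectivity on $n$-node SGT streams with parameter $\alpha=2^{q}$. Adapting the standard reductions that prove $\tOmega(kn)$-space dynamic-streaming lower bounds for $k$-connectivity, Alice (holding $x_1,\dots,x_p$) emits a stream prefix that builds a fixed $(k-1)$-connected scaffold on $\Theta(n)$ vertices together with, for each $i\in[p]$, a ``test'' object --- a single edge in the edge version, a small bundle of edges acting as one switchable vertex-cut in the vertex version --- whose net multiplicity she sets to $2^{q}+\langle x_i\rangle$, reading $x_i$ as an integer; Bob (holding $j\in[p]$ and $y\in\{0,1\}^q$) then decrements the test object of coordinate $j$ by $2^{q}+\langle y\rangle$, so that it leaves the support exactly when $x_j=y$, making the resulting support graph $k$-connected iff $x_j\ne y$. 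Since the stream is read in one pass with Alice's portion as the prefix, the standard simulation turns an $(h,v)$-scheme for the connectivity problem into an OMA protocol for $\eqidx_{p,q}$ with $\hc=h$ and $\vc=O(v)$. A non-trivial scheme has $v=o(kn)$, and since $\Rone(\eqidx_{p,q})=\Omega(kn)$ this gives $\vc=o(\Rone(\eqidx_{p,q}))$, so the induced protocol is non-trivial for $\eqidx_{p,q}$; hence \Cref{thm:eqidx-omant} forces its total cost to be $\omega(q)=\omega(n^{\polylog(n)})$, i.e.\ $h+v=\Omega(n^{\polylog(n)})$.

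The main obstacle I anticipate is the gadget in the \emph{vertex}-connectivity reduction: unlike the edge case, where one residual edge across the global min-cut cleanly raises edge-connectivity by one, here one must design a bundle of edges that behaves as a single switchable vertex-cut, stays robust to Bob's turnstile cancellations, and creates no shortcuts that alter vertex-connectivity elsewhere --- all while tuning $n$, $k$, $\alpha$, and the number of test objects so that (a) ``non-trivial scheme'' translates \emph{exactly} to ``non-trivial OMA protocol for $\eqidx_{p,q}$'' and (b) $\alpha$ stays small enough for the strong $\ell_0$-sampler of the upper bound to run in $\polylog(n)$ space yet large enough for the \eqidx lower bound to reach $n^{\polylog(n)}$. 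Verifying that spanning-forest recovery is exact (not just correct with high probability) over support graphs, and that the strong $\ell_0$-sampler's guarantees union-bound over all invocations, should be routine given the sampler's properties established elsewhere in the paper.
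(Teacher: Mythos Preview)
The upper-bound half has a genuine gap for $k$-edge-connectivity. The \cite{AhnGM12} algorithm does \emph{not} carry over verbatim when you swap in strong $\ell_0$-samplers: its certificate is $k$ edge-disjoint spanning forests, recovered by peeling --- after extracting $T_1$ you must delete the edges of $T_1$ from the sketch before recovering $T_2$. In a dynamic stream every edge of $T_1$ has frequency exactly $1$, so one deletion per edge zeroes it out; in an SGT stream the frequency is an unknown integer of magnitude up to $\alpha$, and the strong sampler returns only the identity of a support element, not its frequency, so you cannot remove that edge from the sketch. The paper therefore designs a new certificate (\Cref{alg:cert}, analyzed in \Cref{thm:k-con-cert}): sample each edge independently with probability $1/k$, take a spanning forest of the sampled subgraph, and repeat $O(k\log n)$ times independently; the union of the forests is a certificate of $k$-edge-connectivity. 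Because the iterations are independent (no residual graphs, no peeling), each one is just spanning-forest recovery on a fixed graph, which \emph{does} port to SGT with strong samplers. Your plan is correct for $k$-vertex-connectivity, where the \cite{assadi2023tight} algorithm already uses independent vertex sampling rather than peeling.

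For the lower bound your route could be made to work, but you are making it much harder than necessary, and the obstacle you anticipate is absent. The paper reduces from $\eqidx_{kn,q}$ via the complete bipartite graph $K_{n,k}$: Alice inserts the $i$-th of the $kn$ possible edges $x_i$ times; Bob deletes the $j$-th edge $y$ times and additionally inserts every other edge once. The support graph is then $K_{n,k}$ if $x_j\neq y$ and $K_{n,k}$ minus one edge otherwise; the former is both $k$-vertex- and $k$-edge-connected, while the latter is neither (one endpoint drops to degree $k-1$). So the \emph{same} construction handles both versions --- no scaffold, no switchable vertex-cut gadget. Applying \Cref{lem:eqidxmain} gives $(h+q)\cdot v \geq \Omega(knq)$, so $v=o(kn)$ forces $h=\omega(q)$, and setting $q=n^{\polylog(n)}$ finishes.
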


We remark that while the classical streaming space bounds of $\tO(n)$ for connectivity and $\tO(kn)$ for $k$-connectivity are known for standard dynamic graph streams \cite{AhnGM12, assadi2023tight} where edge multiplicities are $0$ or $1$ throughout the stream, it was not known whether the same can be achieved for the harder SGT streams. We establish these upper bounds by designing $\ell_0$-samplers that can handle frequency exponential in $n$ while incurring just polylogarithmic factors in space. These might be of independent interest (see \Cref{lem:L0}).

Finally, we design efficient schemes for $k$-vertex-connectivity and $k$-edge-connectivity under standard dynamic graphs streams. These schemes have total cost significantly smaller than the lower bound proven for schemes processing SGT streams. 

\begin{restatable}{theorem}{verconndyn}\label{thm:ver-conn-dyn}
    Under dynamic graph streams on $n$-node graphs, there exists a $[k \cdot (h+kn),v]$-scheme for $k$-vertex-connectivity for any $h,v$ such that $h\cdot v=n^2$. In particular, under such streams, the problem has non-trivial annotated streaming schemes with total cost $\tO(k^2n)$.  
\end{restatable}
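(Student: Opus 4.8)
\emph{Setup.} The plan is to maintain, throughout the dynamic stream, a single linear sketch of the edge‑indicator vector $\mathbf f\in\{0,1\}^{\binom{n}{2}}$ of $G$, and to have the Prover first announce one bit --- whether $G$ is $k$-vertex-connected --- and then send a certificate of the claimed answer. The workhorse is the Aaronson--Wigderson-style polynomial protocol underlying the known annotated-streaming bounds: for every $h,v$ with $hv=n^2$ there is an $[h,v]$-scheme in which the Verifier keeps an $\tO(v)$-space sketch of $\mathbf f$ and, after the stream, given a Prover-supplied (compactly described) vector $w$ and scalar $t$, verifies whether $\langle w,\mathbf f\rangle=t$ using $\hc=\tO(h+|w|)$ and $\vc=\tO(v)$, where $|w|$ is the bit-length of the Prover's description of $w$; moreover the \emph{same} sketch supports any number of such checks (union-bounding over soundness). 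Hence $O(k)$ invocations cost $\tO(kh)$ plus the descriptions, while keeping $\vc=\tO(v)$. (The benchmark is $S(f)=\tO(kn)$ by \Cref{thm:kconn-sgt}.)

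\emph{The ``No'' case.} If $G$ is not $k$-vertex-connected, the Prover sends a separator $S$ with $|S|\le k-1$ and a $2$-colouring $(A,B)$ of $V\setminus S$ that is constant on each component of $G-S$ and non-constant overall. The Verifier reads the $O(n)$-bit colouring as a stream of vertex labels, checking in $\tO(1)$ extra space that $(S,A,B)$ partitions $V$, $|S|\le k-1$, and $A,B\ne\emptyset$, and then invokes the primitive once with $w=\mathbf 1_{A\times B}$ (the indicator of edge-slots with one endpoint in $A$ and one in $B$, described by the $O(n)$-bit colouring) and $t=0$, i.e.\ it verifies $e_G(A,B)=0$. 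Completeness is clear; soundness holds since an empty $(A,B)$-cut makes $S$ a genuine vertex separator of size $<k$. Cost: $\hc=\tO(h+n)$, $\vc=\tO(v)$.

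\emph{The ``Yes'' case (the crux).} Two classical facts drive this. (i) In the spirit of Even's $k$-connectivity test: fixing \emph{any} $k+1$ vertices $r_1,\dots,r_{k+1}$ (assume $n\ge k+1$, else the answer is trivial and known to the Verifier), $G$ is $k$-vertex-connected iff $\kappa_G(v,r_i)\ge k$ for every vertex $v$ and every $i$ --- because a separator of size $<k$ must avoid some $r_i$ and then bounds $\kappa_G(\cdot,r_i)$ on the component not containing $r_i$. (ii) For a fixed root $r$, ``$\kappa_G(v,r)\ge k$ for all $v$'' is equivalent, via vertex-splitting and a rooted (target-set) variant of Edmonds' disjoint-branchings theorem, to the existence of $k$ arc-disjoint out-branchings in the split digraph $D_r$, each reaching every in-copy $v^-$ from $r$; such branchings use $O(n)$ arcs each, hence $O(kn)$ per root and $O(k^2n)$ over all roots. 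So, in the ``Yes'' case the Prover sends, for each $r_i$, these $k$ branchings --- the used arcs listed in sorted order, with depth labels. For each $i$ the Verifier checks: \textbf{(a)} each claimed branching is a legal out-branching of $D_{r_i}$ reaching all $v^-$ (a parent-pointer / acyclicity check, done on the fly with the depth labels in $\tO(1)$ space); \textbf{(b)} the $k$ branchings are arc-disjoint (the $O(kn)$ sorted arcs are strictly increasing and their multiset matches the arcs occurring in the branchings --- a fingerprint check in $\tO(1)$ space); and \textbf{(c)} every non-internal arc used corresponds to a real edge of $G$ --- encoded as one measurement $\langle c_i,\mathbf f\rangle=\|c_i\|_1$, with $c_i$ counting how many used arcs originate from each edge ($|c_i|=O(kn)$), checked by the primitive. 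Only \textbf{(c)} uses the heavy primitive (once per root, reusing the sketch), so this path costs $\hc=\tO(kh+k^2n)$, $\vc=\tO(v)$. Completeness and soundness follow from (i), (ii), and the primitive's soundness (union-bounded over the $O(k)$ checks).

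\emph{Wrap-up and main obstacle.} Combining the two cases (the Prover's one-bit announcement selects the branch) gives a $[k(h+kn),v]$-scheme for every $h,v$ with $hv=n^2$; taking $h=\Theta(n/\sqrt k)$, $v=\Theta(n\sqrt k)$ makes the total cost $\tO(k^2n)$ with $\vc=\tO(n\sqrt k)=o(kn)$ --- a non-trivial scheme. The main difficulty is the ``Yes'' direction: (1) pinning down an $O(kn)$-size \emph{streaming-verifiable} witness of rooted $k$-vertex-connectivity --- the obvious candidate, $k$ independent spanning trees, is only known to exist for $k\le 4$, so one must route through arc-disjoint branchings in the split digraph and the matching Edmonds-type packing statement --- and (2) performing the structural checks \textbf{(a)},\textbf{(b)} with only $\tO(v)$ Verifier space: since $v$ may be far below $n$, the Verifier cannot store even one branching, so these checks themselves must be offloaded to the Prover via sorting and fingerprinting while preserving soundness at only polylogarithmic overhead.
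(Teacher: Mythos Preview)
Your ``No'' case is essentially the paper's. The gap is in your ``Yes'' direction, specifically claim~(ii): the asserted $k$ arc-disjoint out-branchings in the split digraph $D_r$ reaching every in-copy $v^-$ need \emph{not} exist. Take $G=C_4$ (the $4$-cycle) with $k=2$ and any root $r$, and let $b$ be the vertex opposite $r$. In $D_r$ the root has out-degree~$2$, so two arc-disjoint branchings must use distinct out-arcs of $r$; tracing the only possible in-arcs to the two leaves then forces \emph{both} branchings to contain $b^+$, hence both use the unique internal arc $b^-\!\to b^+$, a contradiction. More conceptually, in any such family of $k$ arc-disjoint branchings each internal arc $v^-\!\to v^+$ lies in at most one of them, so each non-root vertex is internal in at most one of the induced spanning trees of $G$; this makes those trees $k$ \emph{independent spanning trees} rooted at $r$. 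Thus your packing statement is at least as strong as the very conjecture you meant to sidestep --- and the $C_4$ example shows it is strictly stronger and already false for $k=2$. There is no ``Edmonds-type'' theorem that rescues this: the condition $\lambda_{D_r}(r,v^-)\ge k$ for all $v$ (which does hold) is \emph{not} sufficient for $k$ arc-disjoint $r$-branchings each covering a prescribed target set.

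The paper takes a genuinely different route that avoids any global tree-packing. Its witness that ``every vertex has $k$ vertex-disjoint paths to $t$'' is the \emph{Layering Lemma}: choose random layers $T_0,T_1,\dots,T_\ell$ with $|T_i|\approx k\cdot 2^i$; for each $v\in T_{i+1}$ the Prover sends $k$ vertex-disjoint $v$-to-$T_i$ paths, each truncated at the first hit of (an independent copy of) $T_i$, so its expected length is $O(n/(k\cdot 2^i))$. Linearity of expectation gives total size $\tO(kn)$ per root, and the probabilistic method yields a fixed witness of that size. Verification is then a subset check that all sent edges lie in $G$ (one $[h,v]$-primitive per root, reusing the sketch), a local check that the edges form paths, and a duplicate-detection/fingerprint to certify vertex-disjointness --- all in $\tO(v)$ space. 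Repeating for $k$ roots (so any $(k{-}1)$-cut misses some root) gives the $[k(h+kn),v]$-scheme. The key idea you are missing is to describe the $k$ disjoint paths \emph{vertex by vertex}, with a layering that keeps the \emph{sum} of all path lengths at $\tO(kn)$, rather than seeking a single $O(kn)$-edge structure that simultaneously encodes all $n$ families of disjoint paths.
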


\begin{restatable}{theorem}{econndyn}\label{thm:e-conn-dyn}
    Under dynamic graph streams on $n$-node graphs, there exists a $[k^2n+h,v]$-scheme for $k$-edge-connectivity for any $h,v$ such that $h\cdot v=n^2$. In particular, under such streams, the problem has non-trivial annotated streaming schemes with total cost $\tO(k^2n)$.  
\end{restatable}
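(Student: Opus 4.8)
The plan is to have the Prover supply a small, locally checkable combinatorial certificate of $k$-edge-connectivity (or of its failure) for the streamed graph $G$, and to let the space-bounded Verifier check that certificate against $G$ using polynomial fingerprinting of the final edge set together with the classical ``point/subset/inner-product'' annotation primitive, whose inherent space--communication tradeoff is exactly what produces the condition $h\cdot v=n^2$. For the \textsc{Yes} direction I would use the bidirected form of Edmonds' arc-disjoint branchings theorem: fix a root $r$; since every cut of $G$ of size $c$ corresponds to exactly $c$ arcs crossing it in the bidirected graph $\vec G$, we have $\lambda_{\vec G}(r,v)=\lambda_G(r,v)$ for all $v$, so $G$ is $k$-edge-connected iff $\vec G$ contains $k$ arc-disjoint spanning out-arborescences $T_1,\dots,T_k$ rooted at $r$. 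The Prover is asked to exhibit exactly these $k$ arborescences. For the \textsc{No} direction the Prover instead names a cut $(S,\overline S)$ of size at most $k-1$ by sending the vertex set $S$ together with its (at most $k-1$) crossing edges.

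Throughout the dynamic stream the Verifier maintains only two objects: a low-degree-extension fingerprint of the characteristic vector of the final edge set $E(G)$, evaluated at a random point, in $\tO(1)$ space (legitimate because in a dynamic stream every potential edge ends with multiplicity $0$ or $1$), and the degree of every vertex in $\tO(n)$ space. The one nontrivial building block is the standard annotated-streaming primitive obtained by laying the $\binom n2$ potential edges on an $a\times b$ grid with $ab=n^2$ and keeping $\tO(v)$ fingerprints --- one per column --- during the stream: afterwards it lets the Verifier check claims of the form ``$E(G)\cap C=D$'' or ``$|E(G)\cap C|=t$'' for a structured family $C$ of potential edges (named either a priori or by Prover-sent data such as $S$) and a small set $D$, at help cost $\tO(|D|)+\tO(h)$ and space $\tO(v)$ whenever $h v=n^2$. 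Every ``this set of edges lies inside $G$'' check, and the ``$|E_G(S)|=t$'' count needed in the \textsc{No} case, is an instance of this primitive.

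Assembling the scheme, the Verifier first hears the claimed answer. In the \textsc{Yes} case she receives the $T_i$ one at a time, each of size $\tO(n)$, and for each one checks --- by storing it and running union-find on its arcs --- that it is a genuine spanning $r$-arborescence; she then confirms that the $T_i$ are pairwise arc-disjoint and that their underlying undirected edges lie in $E(G)$, by comparing $\sum_i\mathrm{fp}(T_i)$ against a single sorted re-transmission of $\bigcup_i T_i$ (sortedness rules out repeats, so fingerprint equality forces disjointness) and by feeding the $\tO(kn)$ underlying edges into the grid primitive. In the \textsc{No} case she sums the stored degrees over the received set $S$, obtains $|E_G(S)|$ from the grid primitive, and accepts iff $\sum_{u\in S}\deg_G(u)-2|E_G(S)|\le k-1$ and the exhibited crossing edges are consistent with this. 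Soundness follows from the fingerprint collision bound and the soundness of the grid primitive; completeness is exactly the two graph facts above (if $G$ is $k$-edge-connected such arborescences exist in $\vec G$ by Edmonds; if not, a small cut exists). A careful accounting of the help needed to transmit the $k$ arborescences and run their mutual and $G$-consistency checks yields the additive $\tO(k^2n)$ term in the help; the remaining help and the space obey $hv=n^2$, so the balanced choice $h=v=\tO(n)$ makes the total cost $\tO(k^2n)$ --- non-trivial since classical streaming for $k$-edge-connectivity requires $\tOmega(kn)$ space.

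The step I expect to be the real obstacle is reconciling the Verifier's budget --- $\tO(v)$ space, as small as $\tO(n)$ at the headline point of the tradeoff --- with the fact that the natural certificate has $\tO(kn)$ edges, far more than she can hold. This forces the whole scheme to touch that certificate only through $\tO(n)$-size sub-objects (individual arborescences, the degree vector, fingerprints) and through the grid primitive, and in turn forces the choice of a characterization of $k$-edge-connectivity --- arc-disjoint branchings in the bidirected graph --- whose certificate genuinely splits into such small, independently verifiable pieces; pushing below $v=\tO(n)$, as the extreme points of the $hv=n^2$ tradeoff demand, additionally requires offloading even the spanning-arborescence validity checks through the grid primitive, which needs a bit of care.
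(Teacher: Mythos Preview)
Your approach is genuinely different from the paper's, and the core idea --- certifying $k$-edge-connectivity via Edmonds' theorem, i.e., $k$ arc-disjoint spanning $r$-arborescences in the bidirected graph --- is sound. In fact your certificate is smaller than you claim: $k$ arborescences on $n$ vertices comprise $\tO(kn)$ arcs in total, not $\tO(k^2n)$, so your own accounting is oddly pessimistic. The paper instead proves this theorem via a \emph{layering} argument: fix a root $t$ and certify $k$ edge-disjoint $t$--$v$ paths for every $v$; a priori that is $\Theta(kn^2)$ edges, but by randomly sampling layers $T_0\subset T_1\subset\cdots$ at geometrically increasing densities and only certifying short paths from each $T_i$ to $T_{i-1}$, the total drops to $\tO(k^2 n)$ (the extra factor of $k$ comes from the base layer $T_0$, where each of the $\Theta(k)$ vertices may need $k$ paths of length $\Theta(n)$ to reach $t$). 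So on the help side your Edmonds route is, if anything, sharper.

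The real gap is on the space side. The theorem asserts a $[k^2n+h,v]$-scheme for \emph{every} $h,v$ with $hv=n^2$, in particular for $v=\tO(1)$ and $h=\tO(n^2)$. But your Verifier (i) stores all $n$ degrees for the \textsc{No} case and (ii) stores each arborescence to run union--find in the \textsc{Yes} case; both pin the verification cost at $\tOmega(n)$. You flag (ii) as ``needing a bit of care,'' but it is not a throwaway: certifying in $o(n)$ space that a streamed set of $n-1$ arcs is a spanning $r$-arborescence requires ruling out directed cycles, and the in-degree check alone does not do that (a disjoint union of directed cycles together with an isolated root also has in-degree $1$ everywhere except $r$). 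The paper avoids both issues cleanly. For \textsc{No} it never touches degrees: the Prover names $S$ and the at most $k-1$ crossing edges, and the Verifier runs a single subset check ``$E(G)\subseteq E(S)\cup E(\bar S)\cup\{\text{named edges}\}$,'' which is an $[h,v]$ primitive. For \textsc{Yes} the layering certificate is a collection of \emph{paths}; checking that a streamed sequence of edges is a path is a genuine $\tO(1)$-space test (remember the last endpoint), and edge-disjointness across all paths is handled by a sorted re-send plus a fingerprint equality. Every verification step is thus either $\tO(1)$ space or an instance of the $[h,v]$ primitive, which is exactly what the full tradeoff needs and what your write-up does not yet deliver.
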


\begin{restatable}{theorem}{edgeconndyn}\label{thm:edge-conn-dyn}
    Under dynamic graph streams on $n$-node graphs, there exists an $[n,n]$-scheme for $k$-edge-connectivity (for any $k$). 
    In particular, under such streams, the problem has non-trivial annotated streaming schemes with total cost $\tO(n)$.   
\end{restatable}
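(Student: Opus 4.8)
\mypar{Proof plan} Recall that $G$ is $k$-edge-connected iff its global minimum cut has at least $k$ edges, i.e.\ every nonempty $S\subsetneq V$ has $|\delta_G(S)|\ge k$; when $k\ge n$ the instance is trivial (the minimum-degree vertex is a singleton cut certifying a ``No''), so assume $k\le n-1$. An $[n,n]$-scheme means $\tO(n)$ help and $\tO(n)$ verification space, and we handle the two sides separately. Throughout, the Verifier interprets the dynamic stream as building the edge-indicator vector $x\in\{0,1\}^{\binom{n}{2}}$ and, over a prime field $\mathbb F$ with $|\mathbb F|=\poly(n)$, maintains the value $\widetilde x(\mathbf q)$ of the low-degree extension of $x$ at a point $\mathbf q$ drawn uniformly at random before the stream; each edge update changes $\widetilde x(\mathbf q)$ by a single easily-computed field element, so this costs $\tO(1)$ space. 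This is the standard state underlying sum-check schemes in annotated streaming.

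\mypar{The ``No'' side} To certify $\lambda(G)<k$ the Prover sends a set $S\subseteq V$ (at most $n$ bits) together with a claimed value $c$ for $|\delta_G(S)|$; the Verifier accepts only if $c<k$. The point is that
\[
|\delta_G(S)|=\tfrac12\sum_{u,v}x[\{u,v\}]\bigl(\mathbbm 1_S(u)+\mathbbm 1_S(v)-2\,\mathbbm 1_S(u)\,\mathbbm 1_S(v)\bigr)
\]
is a low-degree functional of both $x$ and the indicator of $S$. Since the Verifier can store $S$ in $\tO(n)$ space and hence evaluate the low-degree extension of $\mathbbm 1_S$ at any field point, it runs the usual sum-check protocol to verify that this sum equals $c$, using $\tO(1)$ extra space and $\tO(n)$ bits of help; soundness and completeness follow from those of sum-check and the Schwartz--Zippel bound over $\mathbb F$. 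This already gives an $[n,n]$-scheme on ``No'' instances. (Equivalently one could have the Prover name $\le k-1$ edges whose removal disconnects $G$; the cut formulation is cleaner to verify.)

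\mypar{The ``Yes'' side} Unlike the $k$-dependent schemes of \Cref{thm:ver-conn-dyn,thm:e-conn-dyn}, we cannot send a sparse $k$-edge-connectivity certificate, since it has $\Theta(kn)$ edges. Instead we use a $k$-independent witness: a Gomory--Hu tree $T$ of $G$, a tree on $V$ with $n-1$ edges for which $\lambda_G(a,b)=\min_{e\in P_T(a,b)}w_e$ for all $a,b$, where $w_e$ is the value of the fundamental cut $A_e$ of the tree edge $e$. The Prover sends $T$, the weights $(w_e)$, and (in $\tO(n)$ bits, since it is laminar) the family $\{A_e\}$. The Verifier checks $w_e\ge k$ for all $e$; it also checks the cut-value identities $|\delta_G(A_e)|=w_e$ \emph{simultaneously} by one sum-check, taking a uniformly random linear combination $\sum_e r_e|\delta_G(A_e)|=\sum_{u,v}x[\{u,v\}]\,\Psi(u,v)$ with $\Psi(u,v)=\sum_{e\in P_T(u,v)}r_e$ and comparing it with $\sum_e r_ew_e$; an Euler-tour indexing of $T$ writes $\Psi$ as a short sum of products of single-variable interval-indicators, so the Verifier can evaluate its low-degree extension from the stored tree in $\tO(n)$ space. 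The reason this suffices in principle is the ``ultrametric'' inequality $\lambda_G(s,t)\ge\min\bigl(\lambda_G(s,u),\lambda_G(u,t)\bigr)$: if $T$ is a genuine Gomory--Hu tree with all $w_e\ge k$, then walking along any tree path shows $\lambda_G(s,t)\ge k$ for every pair, i.e.\ $\lambda(G)\ge k$.

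\mypar{The main obstacle} The crux is that the checks above are \emph{not} yet sound: there exist trees on $V$ whose fundamental cuts all have $\ge k$ edges even though $\lambda(G)<k$ (for instance, a suitably relabeled path tree over a path graph), so the Verifier must additionally certify that each $A_e$ is genuinely a \emph{minimum} $a$--$b$ cut — equivalently, a lower bound $\lambda_G(a,b)\ge w_e$ for each of the $n-1$ tree edges. A max-flow certificate per edge would cost $\Omega(kn)$ bits, far over budget, so one must verify all $n-1$ lower bounds together. The plan is to exploit that these $n-1$ minimum cuts are \emph{not} independent: they arise from a single Gomory--Hu computation, whose lightest tree edge already pins down the global minimum cut, and contracting either side of it leaves a smaller instance whose Gomory--Hu tree is the induced restriction of $T$. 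Recursing on this structure, the Prover guides the Verifier through a hierarchy of cut-minimality sub-checks, each of which reuses the ``No''-side cut-counting sum-check on a contracted instance (whose indicator functions remain Euler-tour-structured and hence have Verifier-computable low-degree extensions), with the bookkeeping organized so that every graph edge participates in only $\polylog(n)$ sub-checks. Showing that this keeps both the total help and the Verifier space at $\tO(n)$ is the technical heart of the argument; combined with the ``No'' side it yields an $[n,n]$-scheme, hence a non-trivial annotated streaming scheme of total cost $\tO(n)$.
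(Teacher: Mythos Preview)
Your ``No'' side is fine and close in spirit to the paper's (the paper simply has the Prover name one side $S$ of a small cut and the $\le k-1$ crossing edges, then runs a subset check). The problem is entirely on the ``Yes'' side, and you already put your finger on it: verifying that the $n-1$ fundamental cuts of the claimed Gomory--Hu tree have the stated sizes does \emph{not} certify that they are minimum cuts, so a cheating Prover on a graph with $\lambda(G)<k$ can pass. Your proposed fix --- a recursive Gomory--Hu decomposition with ``cut-minimality sub-checks'' --- is not actually a fix. To start the recursion you would need to certify that the lightest tree edge realises the \emph{global} minimum cut, i.e.\ that every other cut has value at least $w_{e^*}$; but that is exactly the statement $\lambda(G)\ge w_{e^*}\ge k$ you are trying to establish, so the argument is circular. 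The ``sub-checks'' you describe are all upper-bound (cut-counting) checks, and no amount of those yields a lower bound on $\lambda$. You would still need, somewhere, a flow/connectivity witness, and you give no mechanism for packing $n-1$ such witnesses into $\tO(n)$ bits.

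The paper's route is completely different and avoids Gomory--Hu trees altogether. The Verifier, using \emph{no help}, maintains during the stream a $(1+\eps)$-cut sparsifier of $G$ (space $\tO(n)$ via \cite{AhnGM12}) together with all vertex degrees. From the sparsifier alone the Verifier can enumerate all $(1+\eps)$-approximate mincuts and contract vertices that are never separated by any of them into supernodes; by a result of Rubinstein--Schramm--Weinberg \cite{rsw18}, only $O(n)$ edges cross between supernodes. The Prover's only job on ``Yes'' instances is to stream those $O(n)$ inter-supernode edges; the Verifier checks they are genuine via a subset-check $[n,n]$-scheme, stores them, computes the exact minimum cut among the supernodes, takes the minimum with the minimum degree, and compares to $k$. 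Soundness is immediate because a Prover who omits edges can only make cuts look smaller, never larger. The key idea you are missing is that the Verifier can do almost all the work himself with a sparsifier, and the Prover merely supplies $O(n)$ exact edge data to sharpen an approximate answer to an exact one.
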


Contrast this with our results for annotated streaming under SGT streams, where the total cost for these problems can be as large as $\Omega(n^{\polylog(n)})$. Thus, we can see a conceptual separation between classical streaming and annotated streaming: the former can tolerate SGT streams incurring negligible factors in complexity over dynamic graph streams, whereas the latter incurs significantly large factors for SGT streams over dynamic. 



\subsection{Related Work}

In their seminal paper, Babai, Frankl, and Simon \cite{BabaiFS86} defined communication classes similar to classes in computational complexity; this included the Merlin-Arthur (MA) communication class, which essentially defined the Merlin-Arthur communication model. Klauck \cite{Klauck03} proved that disjointness and inner product have MA complexity $\Omega(\sqrt{N})$, which was surprisingly proven to be tight (up to logarithmic factors) by Aaronson and Wigderson \cite{AaronsonW08} who gave a protocol with total cost $O(\sqrt{N\log N})$. Chen\cite{Chen20} recently improved this bound to $O(\sqrt{N\log \log \log N})$. Chakrabarti et al.~\cite{ChakrabartiCMT14} were the first to consider the online version of the MA model. They used it to show lower bounds for annotated streaming schemes, as has been traditionally done by subsequent works. They proved that for any function $f$, an $[h,v]$-OMA-protocol that computes it must have $h\cdot v\geq \Rone(f)$. Further, for many problems including frequency moments and subset checks, they gave annotated streaming schemes (which imply OMA protocols with the same bounds), achieving this smooth tradeoff. \cite{ChakrabartiCGT14} were the first to exhibit a problem, namely sparse index, where such a tradeoff is not possible. In fact, they showed that for sparse index with sparsity logarithmic in the input size, the OMA complexity is as large as the one-way communication complexity. Later, Thaler \cite{Thaler16} exhibited two more problems with this property: XOR-connectivity and XOR-bipartiteness. The motivation was to show the existence of graph problems that provably need semi-streaming schemes (an $[n,n]$-scheme for $n$-node graphs) to solve and that they are equally hard in the annotated streaming model as in classical streaming. 

An ``augmented'' version of the Equals-Index was studied by Jayram and Woodruff \cite{JayramW13} in the classical communication model. They called it the ``Augmented Index problem on large domains'': here Bob also knows all the entries of Alice's vector before his input index. 

For the problem of computing distinct items, \cite{ChakrabartiCMT14} gave an $(n^{2/3}(\log n)^{4/3}, n^{2/3}(\log n)^{4/3})$-scheme, which was later simplified and improved to an $(n^{2/3}\log n, n^{2/3}\log n)$-scheme by \cite{Ghosh20}. Note that both of these works assume that the stream length $m$ is $O(N)$, where $N$ is the universe size. Our results are for streams that are exponentially longer. 

With the growing interest in graph streaming algorithms over the last couple of decades, much of the recent literature on stream verification has focused on graph problems \cite{CormodeMT13,ChakrabartiCGT14, AbdullahDRV16, Thaler16, ChakrabartiG19, ChakrabartiGT20}. Most of them design stream verification protocols for insert-only or insert-delete graph streams. For the graph connectivity problem on $n$-node graphs, \cite{ChakrabartiCMT14} gave an $[h,v]$-scheme for any $h,v$ with $h\geq n$ and $h\cdot v=n^2$. For sparse graphs with $m$ edges, \cite{ChakrabartiCGT14} designed an $[n+m/\sqrt{v}, v]$-scheme. As mentioned above, \cite{Thaler16} studied the problem in the XOR-edge-update model and proved that any $[h,v]$-scheme must have $(h+n)\cdot v\geq n^2$. A number of works \cite{CormodeMT13, ChakrabartiG19, ChakrabartiGT20} studied verification schemes for shortest-path and $s,t$-connectivity related problems. No prior work on stream verification, however, studied the $k$-connectivity problem.  

In the classical streaming model, \cite{AhnGM12} gave the first algorithm for $k$-edge-connectivity in dynamic streams using $\Ot(kn)$ space. 
\cite{guha2015vertex} gave the first algorithm for $k$-vertex-connectivity in dynamic graph streams, which was improved by a factor of $k$ and made nearly optimal by \cite{assadi2023tight} using $\Ot(kn)$ space.
\cite{sun2015tight} proved a lower bound of $\Omega(kn)$ bits for both problems, even for insertion-only streams (see also \cite{assadi2023tight} for extending the lower bound for $k$-vertex-connectivity to multiple passes).

 Other variants of stream verification include a {\em prescient} setting where Prover knows the entire stream upfront, i.e., before Verifies sees it, and can send help messages accordingly \cite{ChakrabartiCMT14, ChakrabartiCGT14}. Versions where Prover sends very large proofs have also been considered \cite{KlauckP13}. 
 Natural generalizations to allow multiple rounds of interaction between the Prover and Verifier have been investigated. These include {\em Arthur-Merlin streaming protocols} of Gur and Raz \cite{GurR13} and the {\em streaming interactive proofs} (SIP) of Cormode et al.~\cite{CormodeTY11}. The latter setting was further studied by multiple works \cite{AbdullahDRV16, ChakrabartiCMTV15, KlauckP14}. Very recently, the notion of \emph{streaming zero-knowledge proofs} has been explored \cite{CormodeDGH_streamzkp}. For a more detailed survey of this area, see \cite{Thaler-encyclopedia}. 

\subsection{Technical Overview}

\subsubsection{Communication Lower Bounds}

\mypar{The \eqidx lower bound} The basis of all our lower bounds is an
OMA lower bound on the $\eqidx$ problem. Recall the classical Index 
problem $\idx_N$ where Alice has a string $x$ of length $N$ and Bob has 
an index $j\in [N]$ such that he needs to know $x[j]$. Chakrabarti et 
al.~\cite{ChakrabartiCMT14} showed that for any $p,q$ with $p\cdot 
q=N$, we can get a $(q,p)$-OMA-protocol as follows. The string $x$ can 
be partitioned into $p$ chunks of length $q$ each. Merlin sends Bob the 
purported chunk where $j$ lies, thereby sending $q$ bits. Again, Alice 
sends Bob an $\Theta(1)$-size equality sketch for each chunk, thereby 
sending $\Theta(p)$ bits. Using the relevant sketch, Bob can figure out 
whether the chunk sent by Merlin is accurate and find the solution if 
it is. 

Note that from Alice's perspective, the problem actually boils down to 
the following subproblem: Alice has a string $x$ partitioned into 
chunks $x_1,\ldots,x_p$ and Bob has a string $y$; she needs to help him 
verify whether $y$ is identical to her $k^{th}$ chunk, where she 
doesn't know $k$. Our main observation is that in the above protocol, 
to solve this subproblem, she spends as many bits as she would have 
{\em without} Merlin: $\Theta(p)$ bits (we can show that this is tight 
for the problem under classical one-way communication). So now that 
Alice and Bob have Merlin, why not take his help and improve this 
communication to $o(p)$? If they could do this with at most $O(q)$ bits 
of help, then they would obtain an improved $(q,o(p))$-OMA
protocol for $\idx_{pq}$. But the known lower bound for $\idx_{pq}$ 
\cite{ChakrabartiCMT14} says that the product of \hc and \vc must be 
$\Omega(pq)$. Hence, Merlin cannot bring down the \vc to $o(p)$ without 
sending $\omega(q)$ bits. Since $\Theta(p)$, as noted, is the 
communication needed for the subproblem in the classical one-way model, we get that the non-trivial OMA complexity of the subproblem is $\omega(q)$. But $q$ can be  much larger than $p$---even 
exponential in $p$. Hence, we 
identify a problem whose non-trivial OMA complexity can be significantly larger 
than their one-way complexity. We essentially abstract out this 
subproblem as the $\eqidx_{p,q}$ problem and formalize the reduction.    

 \mypar{The sparse index lower bound} Although this lower bound follows by a reduction from $\eqidx$, we discuss it separately to point out the challenges in proving its non-trivial-OMA complexity. 
Recall that the $\spidx_{m,N}$ problem is the version of $\idx_N$ where Alice's string is promised to have at most $m$ 1's. We prove that $\OMAnt(\spidx_{\log\log N, N})$ is $\Omega(\log N)$, while $\Rone(\spidx_{\log\log N, N}) = \Theta(\log \log N)$, thus establishing yet another exponential gap. 

Let us discuss the classical one-way complexity first. Chakrabarti et al.~\cite{ChakrabartiCGT14} showed that for any~$m$, $\Rone(\spidx_{m,N})=O(m\log m+\log N)$, but it was not known to be tight. We improve it to a tight bound of $\Theta(m+\log\log N)$, and then setting $m=\log\log N$ gives the desired bound. This improvement is the most challenging part in this result. The protocol of \cite{ChakrabartiCGT14} works as follows. Alice picks a random function $h: [N]\to [m^3]$ from a pairwise independent hash family and sends Bob $h$ along with the set of at most $m$ values $h(i)$ for each index $i$ where her string has a 1. Bob checks if $h(j)$ is in this set, where $j$ is his input index, and if so, announces the answer to be $1$, and $0$ otherwise. The protocol can err only if Alice has a $0$ at index $j$ but $h(j)$ collides with one of the $m$ $h(i)$'s, which happens with low probability by pairwise independence and union bound over the $m$ elements. The set can be expressed using $O(m \log m)$ bits and the function $h$ takes $O(\log N)$ bits, giving their bound of $O(m\log m +\log N)$. 

It is not hard to see that the $\log m$ factor can be removed with a tighter analysis: we reduce the range of the hash family from $[m^3]$ to $[3m]$. By pairwise independence, $h(j)$ collides with a single $h(i)$ with probability $1/3m$. Since we need to take union bound over at most $m$ such $h(i)$'s and are happy with error probability at most $1/3$, this family suffices. Further, expressing the set of $m$ $h(i)$'s then takes only $O(m)$ bits since Alice can simply send the set's characteristic vector of size $3m$. It is, however, not clear that we can reduce the additive $O(\log N)$ any further as it appears because the smallest known size of a pairwise independent hash family with domain $[N]$ is $\Omega(N)$. We observe that $\spidx_{m,N}$ actually has input size $O(\binom{N}{m})$. Hence, if we draw the hash function using public randomness, thus obtaining an $O(m)$-cost public coin protocol, and then convert it to a private coin protocol using Newman's theorem, we incur an additive factor of $O(\log\log \binom{N}{m}) = O(\log m + \log\log N)$. Therefore, we prove that $\Rone(\spidx_{m,N})=O(m+\log\log N)$. The lower bounds of $\Omega(m)$ and $\Omega(\log\log N)$ easily follow from well-known lower bounds of the (standard) index and equality functions, establishing the tight bound of $\Theta(m+\log\log N)$.  

Using the above bound, we get that $\Rone(\spidx_{m,N})=\Theta(m)$ for any $m\geq \log\log N$, meaning that any non-trivial protocol for the problem must have $\vc = o(m)$.
To prove that the non-trivial complexity $\OMAnt(\spidx_{\log\log N, N}) = \Omega(\log N)$, we reduce $\spidx_{m,N}$ from $\eqidx_{p,q}$ with $p=m$ and $q =\log (N/m)$. Then, it follows that to obtain $\vc=o(m)$, we need $\hc$ of $\Omega(q)=\Omega(\log (N/m))$, which is $\Omega(\log N)$ for $m=\log\log N$. The reduction is very similar to the one in \cite{ChakrabartiCGT14}. They reduced from $\idx$, while we reduce from $\eqidx$: we essentially modify their construction to fit the structure of the $\eqidx$ problem. 

\mypar{Connectivity and Bipartite on XOR-graphs} Via reduction from $\eqidx$, we reproduce the lower bounds for the other two problems that we know have high non-trivial OMA complexity. These are the $\xorconn_n$ and $\xorbip_n$ problems that ask whether the XOR of the graphs that Alice and Bob hold on the vertex set $[n]$ is connected and bipartite respectively. Our reduction is very similar to those by Thaler \cite{Thaler16}. Although he reduced from $\idx_{n^2}$ and we reduce from $\eqidx_{n,n}$, our construction can be seen as adapting his construction to the structure of the $\eqidx$ problem. The reduction to $\xorconn$ works as follows. Given the input strings $x_1,\ldots,x_n$ where each $x_i\in \{0,1\}^n$, Alice constructs a bipartite graph with $n$ nodes on each partite set: for each $i\in [n]$, she sees $x_i$ as the characteristic vector of the neighborhood of $\ell_i$, the $i$th vertex on the left. To ensure that her graph is connected, she joins all vertices to a dummy vertex, say $v$. Bob treats his input $y$ as the characteristic vector of a set of edges incident to $\ell_j$. If $y$ is indeed identical to $x_j$, then XOR of these edges and the edge $\{(\ell_j,v)\}$ with the original graph must isolate $\ell_j$. Otherwise, $\ell_j$ must be adjacent to some vertex, say $w$ on the right, and $w$, in turn, is adjacent to $v$. Since all other vertices are adjacent to $v$, this ensures that the graph is connected in this case. Thus we can reduce $\xorconn_{n}$ from $\eqidx_{n,n}$ and obtain the desired result on its non-trivial OMA complexity. 

For $\xorbip$, after constructing the same base bipartite graph using the $x_i$'s, Alice joins only the vertices on the right to the dummy vertex $v$. Bob constructs the same edge set using $y$. The XOR of these edges and ${\ell_j,v}$ with Alice's graph creates a triangle if $y\neq x_j$: as before, $\ell_j$ must be adjacent to some $w$ on the right, and both of them are adjacent to $v$. Hence, in this case, the graph is not bipartite. If $y$ is indeed identical to $x_j$, then adding $v$ and $\ell_j$ respectively to the left and the right partite sets of the original graph gives a bipartition of the XOR graph. Thus, $\xorbip$ can be used to solve $\eqidx_{n,n}$, giving us the desired result.

\subsubsection{Annotated Streaming Lower Bounds}

We reduce from $\eqidx$ to prove lower bounds on the \emph{non-trivial
annotated streaming complexity} of several problems. It is defined as 
the minimum $\tc$ over all annotated streaming schemes that solve the 
problem with space (i.e., $\vc$) sublinear in its classical streaming 
complexity (ignoring polylogarithmic factors).

\mypar{Distinct Items} First, we consider the fundamental {\em distinct items} problem
$\distit_{N,F}$, where we need to count the number of items with 
non-zero frequency as elements from the universe $[N]$ get inserted
and deleted in a stream, and the absolute value of each frequency is
bounded above by $F$. Let us first see how we can solve it in classical 
streaming. For each element, we keep a {\em non-zero detector}, which, 
at the end of the stream, reports whether or not the element has 
non-zero frequency. In this work, we build such detectors, each of which takes $\tO(\log \log F)$ space.
 Thus, the problem can be solved using $\tO(N\log\log F)$ space in classical streaming. Therefore, even
 if we set $F=\exp(N^{\polylog(N)})$, we get that the classical streaming complexity is $\tO(N)$. 
A lower bound of $\Omega(N)$ is easy to show. Hence, any non-trivial annotated streaming scheme for the 
problem must have $\vc=o(N)$. Then we reduce $\distit_{N,F}$ from $\eqidx_{p,q}$ for $p=N$ and $q=\log F$, 
implying that any non-trivial scheme must have $\hc = \Omega(q) = \Omega(\log F)$. For $F=\exp(N^{\polylog(N)})$, 
this gives a bound of $\Omega(N^{\polylog(N)})$ on the $\hc$, and hence the $\tc$. Thus, we see that for a certain
 setting of $F$, the non-trivial annotated streaming complexity of $\distit_{N,F}$ can be as large as $\Omega(N^{\polylog(N)})$, 
while the classical streaming complexity is only $\tO(N)$. 

The reduction is straightforward: given the inputs $x_1,\ldots,x_N\in \{0,1\}^{\log F}$ in the $\eqidx_{N,\log F}$ problem, Alice treats the $x_i$s 
as binary representations0 of numbers in $[0,F-1]$ and creates the stream-prefix with $x_i+1$ insertions of the element $i$, for each $i\in [N]$. 
Bob appends to the stream $y+1$ deletions of the element $j$. It follows that this stream has $N$ distinct items iff $x_j\neq y$.    

\mypar{Connectivity-related problems on SGT streams} We introduce the support graph turnstile model where the input $n$-node graph is defined by the 
support of the $\binom{n}{2}$-length frequency vector whose entries are incremented or decremented by the stream tokens. We reduce global connectivity 
and $k$-connecitivity (both vertex and edge versions) under such streams from $\eqidx$ to establish high non-trivial annotated streaming complexity of those
 problems. To set the stage, we first prove that in the classical streaming model, the known upper bounds for these problems under insert-only or dynamic streams 
can be matched under SGT streams at the cost of just polylogarithmic factors. This means global connectivity under SGT streams can be solved using $\tO(n)$ space
and both $k$-vertex-connectivity and $k$-edge-connectivity can be solved in $\tO(kn)$ space. The crucial tool we use here are the strong $\ell_0$ samplers that 
we build in this paper. We give an overview of this in the next subsection. Here, we discuss our reductions proving the non-trivial complexity of these problems.

For connectivity, we reduce from $\eqidx_{n, n \log F}$. Given her inputs $x_1,\ldots, x_{n} \in \{0,1\}^{n\log F}$, Alice constructs a bipartite graph with $n$ nodes 
on each partite set. She treats each $x_i$ as $n$ blocks of size $\log F$ each, with each block being the binary representation of a number in $[0,F-1]$. 
Denote the $b$th block $x_i$ by $x_{i,b}$. Then for each $i,b$, she makes $x_{i,b}+1$ insertions of the edge between $i$th vertex on the left and the $b$th vertex on the right. 
Thus, we get a complete bipartite graph with each edge having a weight in $[F]$. Bob then breaks $y$ into $n$ blocks $y_1,\ldots,y_n$ of $\log F$ bits each. For each $b\in [n]$, 
she deletes the edge between the $j$th vertex on the left and $b$th vertex on the right $y_b+1$ times. Therefore, if $y=x_j$, then the $j$th vertex on the left gets isolated and the 
graph is not connected. Otherwise, it must be connected to some node on the right, and the rest of the graph is a complete bipartite graph. Hence, the graph is connected. 
It follows that any non-trivial scheme, i.e., a scheme with $\vc=o(n)$ necessiates $\hc = \Omega(n\log F)$. As before setting $F=\exp(n^{\polylog{n}})$, we get that 
the non-trivial streaming complexity of connectivity can be as large as $\Omega(n^{\polylog{n}})$ while the classical streaming complexity is $\tO(n)$.

For $k$-connectivity, we reduce from $\eqidx_{kn^2, \log F}$. Given her inputs $x_1,\ldots, x_{kn^2} \in \{0,1\}^{\log F}$, Alice constructs a bipartite graph with $kn$ nodes on the left partite set 
and $n$ nodes on the right partite set. She treats each $x_i\in {0,1}^{\log F}$ as the binary representation of a number in $[0,F-1]$. Following some canonical ordering of the 
$kn^2$ possible edges, she inserts the $i$th edge $x_i+1$ times for each $i\in[kn^2]$. Bob then deletes the $j$th edge $y+1$ times. Thus, the resultant support graph is a complete bipartite 
if $x_j\neq y$. Otherwise it's a complete graph minus only the $j$th edge. It is easy to prove that the graph in the first case is $k$-vertex-connected as well as $k$-edge-connected, whereas 
the graph in the second case is not.

\subsubsection{Classical Streaming Algorithms}

Some of our algorithms for SGT streams follow by replacing the $\ell_0$-samplers in existing algorithms by the new $\ell_0$-samplers that we design in this work. However, an algorithm for $k$-edge-connectivity does not immediately follow. So we design a new algorithm here, and this is one of our significant technical contributions. Another key technical ingredient is our ``layering lemma'' that we use to design efficient schemes for vertex connectivity on dynamic graphs. We discuss these tools and techniques in detail below. 

\mypar{Strong $\ell_0$-samplers} We design new $\ell_0$-samplers that can handle very large frequencies (hence called strong) and still take $\polylog(n)$ space. This is helpful because SGT streams can have very large frequencies. Standard $\ell_0$-samplers usually assume that the frequencies are $\poly(n)$ where $n$ is the universe size.

The main tool we use to design our sampler is our \emph{decision counter}. These counters, given a stream of insertions and deletions, can detect if the number of insertions is exactly equal to the number of deletions or not, using very little space. This even works when the difference between the number of insertions and deletions is very large which is challenging to do in small space deterministically.
The idea is to maintain a standard counter modulo a large random prime.

We then use ideas from the sparse recovery and $\ell_0$-sampling literature to build the strong $\ell_0$-sampler.
We first solve the problem when the non-zero support is $1$ using non-adaptive binary search.

The problem here is that we are given a stream of insertion and deletions over a universe of $n$ elements and promised that at the end of the stream, there is exactly one element with a non-zero frequency. The goal is to find that element.
The usual idea is to maintain one counter that keeps a track of the number of elements (i.e. number of insertions minus deletions).
Another counter is maintained which is used to find the value of the non-zero frequency element. When an insertion/deletion for element $i$ arrives it is scaled by $i$ and then added/subtracted from the counter. At the end of the stream, this counter contains $i$ times frequency of element $i$ and the other counter has just the frequency of element $i$ thus recovering the non-zero frequency element $i$.
This does not work when the frequencies are large so we have to use non-adaptive binary search to solve the problem which uses $O(\log n)$ counters to find $i$.

In the case with no promise, we want to reduce to the support $1$ case. We do this by guessing the support size $s$ in powers of $2$ and sampling the elements with probability $1/s$.
This means that for the correct guess of $s$ we reduce to the support $1$ case with constant probability. We can repeat $O(\log n)$ times for high probability.

Finally, we need to distinguish the support $1$ case from the cases where the support is not $1$. 
This is because we have many problems and only a few of them are the support $1$ case and in others the support can be $0$ or larger than $1$.
We design a sketch for this by randomly partitioning the universe into many parts, summing those parts up and checking how many parts are non-zero.
The hope is that if there are at least $2$ elements then multiple parts will become non-zero.indicating that the support is more than $1$. If the non-zero support is $1$ then exactly one part will be non-zero and if the non-zero support is $0$ then all parts will be $0$.
This happens with constant probability so we repeat $O(\log n)$ times for high probability.
Putting everything together gives us the strong $\ell_0$-sampler sketch.

\mypar{Edge Connectivity} We give a randomized algorithm for getting a certificate (a spanning subgraph with the same answer to $k$-edge-connectivity) of $k$-edge-connectivity. The certificate is of size $\Ot(kn)$, which is optimal up to polylog factors. The certificate needs to be of size $\Omega(kn)$ because every vertex needs to have degree at least $k$.
This algorithm can be easily converted into a dynamic streaming algorithm using the dynamic streaming spanning forest implementation of \cite{AhnGM12}.
We also show that using our strong $\ell_0$-samplers, the spanning forest streaming implementation of \cite{AhnGM12} can be extended to SGT streams in $\Ot(n)$ space. Using our certificate along with the spanning forest algorithm in SGT streams, we can solve $k$-edge-connectivity in SGT streams in $\Ot(kn)$ space.

\cite{AhnGM12} also give an algorithm for $k$-edge-connectivity in dynamic streams using $\Ot(kn)$ space. However, this algorithm cannot be extended to SGT streams by simply replacing their $\ell_0$-samplers with strong $\ell_0$-samplers.
Their certificate is $k$ edge-disjoint spanning forests. On a high level, this does not work in SGT streams because the spanning forests depend on each other (since they are edge-disjoint).
In their algorithm, after recovering one spanning forest $T_1$ of $G$ we essentially need to delete the edges of $T_1$ from $G$ and then recover a new spanning forest $T_2$ (to get a disjoint spanning forest). This is easy to do in dynamic streams because the frequencies of the edges of $T_1$ are exactly $1$. So we can generate a new stream which is the old stream appended with deletions of the edges of $T_1$. Then $T_2$ can be recovered from this new stream.
However, in SGT streams, the frequency of the edges of $T_1$ could be arbitrary, so we cannot easily generate another stream that represents the graph $G - T_1$ (to do this we need to know the exact frequencies of all edges in $T_1$).
This dependency between the spanning forests makes extending this algorithm difficult. Our algorithm, on the other hand, gets rid of this dependency and thus is extendable to SGT streams.

The randomized algorithm for getting a certificate of $k$-edge-connectivity is heavily inspired by the randomized algorithm for getting a certificate of $k$-vertex-connectivity \cite{assadi2023tight}.
In the randomized algorithm, we independently sample every edge with probability $1/k$ and find a spanning forest of the sampled subgraph.
The certificate then is a union of the spanning forests in $\Ot(k)$ such independent iterations.
The analysis then shows that edges whose endpoints are not very well connected in the original graph exist in the certificate, and pairs of vertices that are very well connected in the original graph are well connected in the certificate. This is enough to prove that the certificate preserves the answer to $k$-edge connectivity.

We also give an $[n,n]$-scheme for $k$-edge-connectivity in the annotated dynamic streaming model. We achieve this by simulating the two-pass streaming algorithm for minimum-cut implied by \cite{rsw18}. During the stream, the verifier computes a cut sparsifier of the graph. After the stream, the verifier compresses the vertices into supernodes (by compressing all large cuts) such that only the small cuts remain. The prover then sends all the edges of this supernode graph (\cite{rsw18} showed that there are only $O(n)$ such edges). The verifier then can compute the exact mincut of the graph, thus solving $k$-edge-connectivity for all values of $k$.

\mypar{Vertex connectivity}
We also get an algorithm for $k$-vertex-connectivity in SGT streams in $\Ot(kn)$ space using the techniques used for edge connectivity. We use strong $\ell_0$-samplers in the algorithm of \cite{assadi2023tight}.

We also give an $[k^2 n,n/k]$-scheme for $k$-vertex-connectivity in the annotated dynamic streaming model.
The heart of the algorithm is what we call the layering lemma, which says that there exists a short proof (size $\Ot(kn)$) to show that any arbitrary fixed vertex has $k$ vertex-disjoint paths to all other vertices. This proof can also be verified in $\Ot(n/k)$ space. Also, when proving this for $r$ vertices, we can reuse space and thus get a $[r \cdot k n,n/k]$-scheme instead of a $[r k n,r n/k]$-scheme. If we show the layering lemma for $k$ vertices, then using properties of $k$-vertex-connectivity, we can show that the graph is $k$-vertex-connected, giving the desired bound. We also extend these ideas to $k$-edge-connectivity getting a $[k^2 n,n/k^2]$-scheme.

We show that a short proof exists for the layering lemma using the probabilistic method.
The idea is to cleverly partition the vertices into $\log n$ layers and show that vertices in the first layer have $k$ vertex-disjoint paths to the special vertex. Then we inductively show that vertices in a layer have $k$ vertex-disjoint path to the previous layer. Using the properties of $k$-vertex-connectivity, this is enough to show that the special vertex has $k$ vertex-disjoint paths to all other vertices. The proof between any two layers is of size $\Ot(kn)$, giving us the desired bound.
The verification is more involved, but the verifier essentially checks just two things using some tools. He first checks whether the edges sent by the prover belong to the input graph (using a subset check). He then checks if what the prover sent are indeed vertex-disjoint paths (using a duplicate-detection scheme).

We also give an $[n^2 \log \val + k^2 n,1]$-scheme for $k$-vertex-connectivity and $k$-edge-connectivity in the annotated SGT streaming model with parameter $\val$.
The proof idea is the same as in dynamic streams, but the verification is more complicated because the frequencies for the edges could be large or even negative. The auxiliary information used for verification adds an overhead of $\Ot(n^2 \log \val)$ bits in the proof. The verification follows the same steps as in the dynamic streaming case, except the subset check is not easy to do. So we came up with a different way to do the subset check, which needs a large amount of auxiliary information. One of the main ideas for this is to separate out the elements with positive and negative frequencies and do a subset check for them separately.

\subsection{Paper Organization}
We start by defining the models and notation in \Cref{sec:models}. Then in \Cref{sec:prelim}, we list all the facts we use throughout the paper. In \Cref{sec:eqinx-LB}, we introduce our central problem called Equals-Index and use it to prove lower bounds for different problems in MA communication complexity and annotated streaming. \Cref{sec:SGT-streaming-algs} is our toolkit section where we start with our algorithm for the strong $\ell_0$-samplers which can be of independent interest. We then provide an algorithm for $k$-edge-connectivity in the classical setting as well as in dynamic and SGT streaming which can be of independent interest. The techniques in \Cref{sec:SGT-streaming-algs} can be used to get SGT streaming algorithms for connectivity and $k$-veretex-connectivity.
In \Cref{sec:Annotated-dynamic}, we show annotated dynamic streaming schemes for $k$-vertex-connectivity and $k$-edge-connectivity. At the heart of the proof for $k$-vertex-connectivity is the layering lemma, which can be of independent interest. Finally, in \Cref{sec:Annotated-SGT}, we show an annotated SGT streaming scheme for $k$-vertex-connectivity.
\section{Models, Notation, and Terminology}\label{sec:models}

\subsection{Models}

We formally describe the various communication and streaming models that we focus on in this paper and define the notation and terminology in each model.

\subsubsection{Communication Models}

\mypar{Classical Two-Party Communication} In the (randomized) two-party communication model introduced by Yao \cite{Yao79}, two players Alice and Bob possess inputs $x\in \cX$ and $y\in \cY$ respectively. Their goal is to compute $f(x,y)$, where $f:\cX\times \cY\to \{0,1\}$. To this end, they use a communication protocol $\Pi$ where they toss some random coins and accordingly send messages to each other back and forth in rounds. The protocol $\Pi$ terminates when one of the players announces an output. In this paper, we only focus on one-way communication, i.e., when $\Pi$ has just a single round: Alice sends a single message to Bob, from which he must declare the output. We now describe this variant of the model in more detail.  

\noindent
{\it One-way randomized communication.} In a \emph{private coin} protocol, Alice's coin tosses are private, i.e., she draws a random string $\cR$ that is unknown to Bob, and sends him a message $\msg$ as a function of $(x,\cR)$. In a {\em public coin} protocol, the string $\cR$ is also known to Bob (without any communication). After receiving Alice's message, Bob outputs a bit $\out(\Pi)$ as a function of $(y,\msg)$ (resp. $(y,\msg, \cR)$) for private (resp. public) coin protocols. We say that a protocol $\Pi$ for a function $f$ has error $\delta$ if $\Pr_{\cR} [\out(\Pi) \neq f(x,y)] \leq \delta$. We say that a protocol $\Pi$ \emph{solves} a function $f$ if it has error $\delta\leq 1/3$. 

The communication cost $\cost(\Pi)$ of protocol $\Pi$ is the maximum length of Alice's message over all possible $(x,\cR)$. The randomized one-way communication complexity of a function is defined as 

$$\Rone(f) := \min\{\cost(\Pi) : \Pi \text{ is a private coin protocol that solves } f\} $$

When public randomness is allowed, we denote this complexity as

$$\Ronepub(f) := \min\{\cost(\Pi) : \Pi \text{ is a public coin protocol that solves } f\} $$

\mypar{Merlin-Arthur Communication} In the Merlin-Arthur (MA) communication model \cite{BabaiFS86}, we have the usual two players Alice and Bob (collectively called ``Arthur'') with their inputs $x\in \cX$ and $y\in \cY$ respectively. They want to compute $f(x,y)$, where $f:\cX\times \cY\to \{0,1\}$. In addition, there is an all-powerful player Merlin who knows the inputs $x$ and $y$. Merlin is untrusted; he sends Alice and Bob a ``proof'' of the solution, following which these two players interact between themselves to verify the proof. In this paper, we focus on the \emph{online Merlin-Arthur} (OMA) model which is an MA analog of the one-way randomized communication model defined above. We describe this model in more detail.  

\noindent
{\it Online Merlin-Arthur communication.} An OMA protocol $\Pi$ works as follows. Merlin sends Bob a help message $\cH$. Then Alice generates a random string $\cR$, based on which she sends Bob a message $\msg$. Bob then outputs $\out(\Pi)\in \{0,1,\bot\}$ as a function of $(y,\cH, \msg,\cR)$. An OMA protocol $\Pi$ is said to have completeness error $\delta_c$ and soundness error $\delta_s$ if the following conditions are satisfied.

\begin{itemize}
    \item (Completeness) If $f(x,y) = 1$, then there exists a function $\cH$ such that $\Pr_{\cR}[\out(\Pi) \neq 1] \leq \delta_c$
    \item (Soundness) If $f(x,y) = 0$, then $\forall \cH': \Pr_{\cR}[\out(\Pi) = 1] \leq \delta_s$
\end{itemize}

We say that a protocol $\Pi$ \emph{solves} a function $f$ if $\Pi$ has $\delta_s, \delta_c\leq 1/3$ for computing $f$.

For an OMA protocol $\Pi$, the
\emph{help cost} $\hc(\Pi)$ is the maximum length of $\cH$ over all possible $(x,y)$, and the \emph{verification cost} $\vc(\Pi)$ is the
maximum number of bits sent by Alice over all possible $(x,\cR)$. The \emph{total cost} $\tc(\Pi)$ is then defined as $\hc(\Pi) +\vc(\Pi)$. 
The OMA-complexity of $f$ is defined as

$$\OMA(f) := \min\{\tc(\Pi) : \text{$\Pi$ \text{solves} $f$}\}.$$ 

\noindent
{\it Trivial OMA protocol.} We say that an OMA protocol $\Pi$ that solves a function $f$ is \emph{trivial} if $\vc(\Pi)=\Omega(\Rone(f))$ and is \emph{non-trivial} if $\vc(\Pi)=o(\Rone(f))$.

\begin{definition}[Non-trivial OMA complexity]\label{def:ntoma}
The \emph{non-trivial} OMA complexity of a function $f$ is 

$$\OMAnt(f) := \min\{\tc(\Pi) : \text{$\Pi$ \text{is a non-trivial protocol that solves} $f$}\}.$$    
\end{definition}

Note that, by definition, $\OMAnt(f)\geq \OMA(f)$. We also make the following observation. 

\begin{observation}\label{obs:ntoma-exceeds-oma}
If $\OMAnt(f)= \omega(\OMA(f))$, then it must be that $\OMA(f)=\Omega(\Rone(f))$.
\end{observation}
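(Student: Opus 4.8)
\textbf{Proof proposal for \Cref{obs:ntoma-exceeds-oma}.}

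The plan is to prove the contrapositive: assuming $\OMA(f) = o(\Rone(f))$, I will show that $\OMAnt(f) = \OMA(f)$, which in particular gives $\OMAnt(f) = O(\OMA(f))$ and hence not $\omega(\OMA(f))$. The key observation is the one already noted in the paragraph preceding \Cref{def:ntoma}: if the optimal OMA protocol (the one achieving $\tc = \OMA(f)$) happens to be non-trivial, then it is a valid candidate in the minimization defining $\OMAnt(f)$, so $\OMAnt(f) \le \OMA(f)$; combined with the always-true inequality $\OMAnt(f) \ge \OMA(f)$ (which holds because the non-trivial protocols form a subclass of all protocols), we get equality.

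So the crux is to argue that when $\OMA(f) = o(\Rone(f))$, some optimal (or near-optimal) OMA protocol is non-trivial. First I would fix a protocol $\Pi$ with $\tc(\Pi) = \OMA(f)$. Then $\vc(\Pi) \le \tc(\Pi) = \OMA(f) = o(\Rone(f))$, which is precisely the condition $\vc(\Pi) = o(\Rone(f))$ defining a non-trivial protocol. Hence $\Pi$ itself is non-trivial, so it is admissible in the definition of $\OMAnt(f)$, giving $\OMAnt(f) \le \tc(\Pi) = \OMA(f)$. Together with $\OMAnt(f) \ge \OMA(f)$ we conclude $\OMAnt(f) = \OMA(f)$, so $\OMAnt(f)$ is certainly not $\omega(\OMA(f))$. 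This establishes the contrapositive and hence the observation.

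The one subtlety to be careful about — and the only place that could be called an "obstacle," though it is mild — is the precise meaning of the asymptotic notation: the statement $\OMA(f) = o(\Rone(f))$ should be read as a statement about a family of functions $f$ indexed by input size, and "optimal protocol" should be understood as a family of protocols attaining the $\OMA$ bound for each member of the family. With that reading, the inequality $\vc(\Pi) \le \tc(\Pi)$ holds pointwise for each member, so $\vc(\Pi) = o(\Rone(f))$ follows directly from $\tc(\Pi) = \OMA(f) = o(\Rone(f))$, and there is nothing further to check. I would state this reading explicitly at the start of the proof to avoid ambiguity, then the rest is the two-line sandwich argument above.
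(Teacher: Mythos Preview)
Your proposal is correct and is essentially the contrapositive of the paper's own argument: the paper observes directly that if $\OMAnt(f) = \omega(\OMA(f))$ then the optimal protocol achieving $\OMA(f)$ must be trivial, whence $\OMA(f) = \tc(\Pi) \ge \vc(\Pi) = \Omega(\Rone(f))$. Both arguments hinge on the same two ingredients---$\vc(\Pi) \le \tc(\Pi)$ and the trivial/non-trivial dichotomy---so there is no substantive difference.
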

 
This is because if $\OMAnt(f)$ is larger than $\OMA(f)$, then the ``optimal'' protocol $\Pi$ for which $\OMA(f)=\tc(\Pi)$ must be a trivial protocol. Hence, $\tc(\Pi)=\Omega(\vc(\Pi))=\Omega(\Rone(f))$ by definition.   

\subsubsection{Streaming Models} 

First, we define the different types of data streams that we consider in this paper. 

\mypar{Data Streams} A data stream of length $m$ is defined as a sequence $\sigma:= \langle (a_i, \Delta_i): i\in [m]\rangle$, where $a_i\in [N]$ and $\Delta_i$ is an integer. The data stream defines a $\emph{frequency vector}$ $\textbf{f}(\sigma):=\langle\freq(1),\ldots,\freq(N)\rangle$, where $$\freq(j):=\sum_{i\in [m]:~a_i = j}\Delta_i$$
is the frequency of element $j\in [N]$. In the streaming model, we need to compute some function of $\textbf{f}(\sigma)$ for an input stream $\sigma$ using space sublinear in the input size, i.e., $o(m)$ space.

\noindent
{\it Turnstile streams.} If a stream is as defined above, with no restriction on $\Delta_i$ and $\textbf{f}(\sigma)$, then we call it a turnstile stream. Conceptually, turnstile streams allow frequencies of elements to be incremented and decremented arbitrarily with each update, and the frequency vector can have negative entries.

\noindent
{\it Insert-only streams.} A data stream is called insert-only if $\Delta_i>0$ for all $i\in [m]$. A well-studied special case is the \emph{unit update} model, where $\Delta_i=1$ for all $i\in [m]$.

\noindent
{\it Insert-delete streams.} We call a data stream an insert-delete stream if $\Delta_i\in \{-1,1\}$ for all $i\in [m]$ and $\freq(j)\in \{0,1\}$ for all $j\in [N]$ at all points during the stream. Conceptually, elements are either inserted or deleted, frequency can be at most $1$, and an element can be deleted only if its frequency is $1$ during the update.   

\mypar{Graph Streams} A graph stream is a data stream whose frequency vector defines a graph. Here, $N=\binom{n}{2}$ where $[n]$ is the set of nodes of the graph. The frequency vector is indexed by pairs $(u,v)\in \binom{[n]}{2}$ (using a canonical bijection from $[N]$ to $\binom{[n]}{2}$), and typically, $\freq(u,v)$ represents the weight of the edge $\{u,v\}$.  

\noindent
{\it Dynamic graph streams.} We call an insert-delete graph stream as a dynamic graph stream. Conceptually, the input graph is unweighted, and an edge can be deleted only if it is present in the graph during the update.

We now introduce support graph turnstile (SGT) streams. Conceptually, the graph is induced by the support of the frequency vector of the input stream, which may be a turnstile stream. 

\begin{definition}[Support graph turnstile streams]
A turnstile stream $\sigma$ is called a support graph turnstile stream if it is of the form $\langle (u,v)_i, \Delta_i): i\in [m]\rangle$, where $(u,v)\in \binom{[n]}{2}$ and the graph that it defines is given by $G=([n], \{(u,v) : \freq(u,v)\neq 0\})$.  
\end{definition}

Now we describe the streaming models that we study in this paper. 

\mypar{Classical Streaming} In the classical streaming model, a (randomized) data streaming algorithm $\alg$ reads an input stream $\sigma$ and needs to compute some function $f(\sigma)$ of the stream. To this end, it maintains a summary $\alg_{\cR}(\sigma)$ based on a random string $\cR$ and at the end of the stream, outputs $\out(\alg)$. The goal is to optimize the space usage, i.e., the size of $\alg_{\cR}(\sigma)$. In this paper, we only consider algorithms that make a single pass over the input stream. We say that a streaming algorithm $\alg$ has error $\delta$ if $\Pr_{\cR}[\out(\alg)\neq f(\sigma)]\leq \delta$. Further, we say that an algorithm $\alg$ \emph{solves} a function $f$ if it has error $\leq 1/3$. The maximum number of bits stored by $\alg$ over all possible inputs $\sigma$ is denoted by $\Sp(\alg)$. 

We define the \emph{classical streaming complexity} of a function $f$ as 
$$\cS(f):= \min\{\Sp(\alg): \alg \text{ solves } f\}$$

\mypar{Annotated Streaming} In the annotated streaming model of \cite{ChakrabartiCM09}, we have a space-bounded Verifier and an all-powerful Prover with unlimited space. 
  Given an input stream $\sigma$, a {\em scheme} for computing a function $f(\sigma)$ is a triple $\cP=(\cH, \alg, \out)$, where $\cH$ is a function that Prover uses to generate the help message or proof-stream $\cH(\sigma)$ that she sends Verifier \emph{after} the input stream, $\alg$ is a data streaming algorithm that Verifier runs on $\sigma$ using a random string $\cR$ to produce a summary $\alg_{\cR}(\sigma)$, and $\out$ is an algorithm that Verifier uses to process the proof $\cH(\sigma)$ and generate an output $\out(\cH(\sigma), \alg_{\cR}(\sigma), \cR)\in \text{range}(f)\cup \{\bot\}$, where $\bot$ denotes rejection of the proof. Note that if the proof length $|\cH(\sigma)|$ is larger than the memory of the Verifier, then $\out$ is a streaming algorithm that processes $\cH(\sigma)$ as a stream and stores a summary subject to its memory. 

A scheme $\cP=(\cH, \alg, \out)$ has completeness error $\delta_c$ and soundness error $\delta_s$ if it satisfies 
\begin{itemize}[topsep=4pt,itemsep=0pt]
  \item (completeness) $\forall \sigma: \Pr_{\cR}[\out(\cH(\sigma),\alg_R(\sigma),\cR) \neq f(\sigma)] \le \delta_c$; 
  \item (soundness) $\forall \sigma, \cH': \Pr_{\cR}[\out(\cH',\alg_{\cR}(\sigma),\cR) \notin \{f(\sigma), \bot\}] \le \delta_s$.
\end{itemize}

We say that a scheme $\cP$ \emph{solves} a function $f$ if $\cP$ has $\delta_s, \delta_c\leq 1/3$ for computing $f$.

The {\em hcost} (short for ``help cost'') of a scheme $\cP=(\cH, \alg, \out)$ is defined as $\max_{\sigma} |\cH(\sigma)|$, i.e., the maximum number of bits required to express a proof over all possible inputs $\sigma$. The {\em vcost} (short for ``verification cost'') is the maximum bits of space used by the algorithms $\alg$ and $\out$ respectively, where the maximum is taken over all possible $(\sigma,\cR)$. The total cost $\tc(\cP)$ is defined as the sum $\hc(\cP)+\vc(\cP)$.  

The \emph{annotated streaming complexity} of a function $f$ is defined as

$$\AS(f) := \min\{\tc(\cP) : \text{$\cP$ \text{solves} $f$}\}.$$ 

\noindent
{\it Trivial Scheme.} We say that a scheme $\cP$ that solves a function $f$ is \emph{trivial} if $\vc(\cP)=\tOmega(\cS(f))$.

\begin{definition}[Non-trivial Annotated Streaming complexity]\label{def:ntas}
The \emph{non-trivial} annotated streaming complexity of a function $f$ is 

$$\widehat{\AS}(f) := \min\{\tc(\cP) : \text{$\cP$ \text{is a non-trivial scheme that solves} $f$}\}.$$    
\end{definition}

A scheme $\cP$ with $\hc(\cP)= O(h)$ and $\vc(\cP)=O(v)$ is called an $(h,v)$-scheme. Again, if $
\hc(\cP)= \tO(h)$ and $\vc(\cP)=\tO(v)$, we call $\cP$ an $[h,v]$-scheme.

\subsection{Notation and Terminology}

\mypar{Basic Notation} We define some notation and terminology that we use throughout the paper. All logarithms are base $2$. The $\tO(.), \tOmega(.), \tomega(.)$ notation hides factors polylogarithmic in the input size. The notation $[k]$ for a natural number $k$ denotes the set $\{1,\ldots,k\}$. We use $[a,b]$ for integers $a<b$ to denote the set $\{a,\ldots,b\}$. For a string $z\in [\alpha]^d$, we use $z[k]$ to denote the element at the $k$th index of $z$. We use the term ``with high probability" to mean with probability at least $1-1/\poly(N)$, where $N$ is the input size. Although the standard notation to denote the input size in a communication problem is $n$, we use $N$ instead to avoid confusion with the number of nodes (for which we use $n$). 

\mypar{Graph Notation}
All graphs in this paper are simple and undirected. Given a graph $G=(V,E)$, we use $n$ for its number of nodes  $|V|$ and $m$ for its number of edges $|E|$ unless specified otherwise. The degree of a vertex $v \in V$ is denoted by $\deg(v)$, and $N(v)$ denotes its neighborhood. For a subset $F$ of edges in $E$, we use $V(F)$ 
to denote 
the vertices incident on $F$; similarly, for a set $U$ of vertices, $E(U)$ denotes the edges 
incident on $U$. We further use $G[U]$ for any set $U$ of vertices to denote the induced 
subgraph of $G$ on $U$. 
For any two vertices $s,t \in V$, we say that a collection of $s$-$t$ paths are 
vertex-disjoint if they do not share any vertices other than $s$ and $t$.

\section{Preliminaries}\label{sec:prelim}


We use the following standard forms of Chernoff bounds.

\begin{fact}[Chernoff bound; 
c.f.~\cite{dubhashi2009concentration}]\label{prop:chernoff}
	Suppose $X_1,\ldots,X_m$ are $m$ independent random variables with range $[0,b]$ 
	each for some $b \geq 1$. Let 
	$X := \sum_{i=1}^m X_i$ and $\mu_L \leq \expect{X} \leq \mu_H$. Then, for any $\eps > 
	0$, 
	\[
	\Pr\paren{X >  (1+\eps) \cdot \mu_H} \leq \exp\paren{-\frac{\eps^2 \cdot 
	\mu_H}{(3+\eps) \cdot b}} \quad 
	\textnormal{and} \quad \Pr\paren{X <  (1-\eps) \cdot \mu_L} \leq \exp\paren{-\frac{\eps^2 
	\cdot 
			\mu_L}{(2+\eps) \cdot b}}.
	\]
\end{fact}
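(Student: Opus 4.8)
The plan is to prove the bound by the standard Chernoff--Bernstein exponential-moment method, reducing first to the unit-range case by rescaling. Set $Y_i := X_i/b \in [0,1]$ and $Y := X/b = \sum_{i=1}^m Y_i$, so that $\mu_L/b \le \E[Y] \le \mu_H/b$. Multiplying the exponents that arise below by $1/b$ at the end, it then suffices to establish, for independent $Y_i \in [0,1]$ with $\nu_L \le \E[Y] \le \nu_H$, that $\prob{Y > (1+\eps)\nu_H} \le \exp(-\eps^2 \nu_H/(3+\eps))$ and $\prob{Y < (1-\eps)\nu_L} \le \exp(-\eps^2 \nu_L/(2+\eps))$, and then substitute $\nu_H = \mu_H/b$, $\nu_L = \mu_L/b$.

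For the upper tail I would fix $t>0$ and apply Markov's inequality to $e^{tY}$, giving $\prob{Y > (1+\eps)\nu_H} \le e^{-t(1+\eps)\nu_H}\prod_{i} \E[e^{tY_i}]$ by independence. The per-variable bound uses convexity of $z \mapsto e^{tz}$ on $[0,1]$: $e^{tY_i} \le 1 + (e^t-1)Y_i$, hence $\E[e^{tY_i}] \le 1 + (e^t-1)\E[Y_i] \le \exp\paren{(e^t-1)\E[Y_i]}$ since $e^t-1>0$. Taking the product and using $\sum_i \E[Y_i] = \E[Y] \le \nu_H$ yields $\prob{Y > (1+\eps)\nu_H} \le \exp\paren{\nu_H\paren{e^t - 1 - (1+\eps)t}}$; choosing the minimizing $t = \ln(1+\eps)$ leaves the exponent $-\nu_H\paren{(1+\eps)\ln(1+\eps) - \eps}$.

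The lower tail is symmetric: for $t>0$, $\prob{Y < (1-\eps)\nu_L} \le e^{t(1-\eps)\nu_L}\prod_i \E[e^{-tY_i}]$, and the same convexity argument gives $\E[e^{-tY_i}] \le \exp\paren{(e^{-t}-1)\E[Y_i]}$. Now $e^{-t}-1<0$ and $\E[Y] \ge \nu_L$, so $\prod_i \E[e^{-tY_i}] \le \exp\paren{(e^{-t}-1)\nu_L}$, whence $\prob{Y < (1-\eps)\nu_L} \le \exp\paren{\nu_L\paren{e^{-t} - 1 + (1-\eps)t}}$; optimizing at $t = \ln\frac{1}{1-\eps}$ gives the exponent $-\nu_L\paren{(1-\eps)\ln(1-\eps) + \eps}$. (When $\eps \ge 1$ the lower-tail claim is vacuous, since $Y \ge 0$ forces $\prob{Y < (1-\eps)\nu_L} = 0$.)

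It then remains to compare these exponents with the asserted ones, which is the only genuine computation: I would invoke the elementary inequalities $(1+\eps)\ln(1+\eps) - \eps \ge \frac{\eps^2}{3+\eps}$ for all $\eps>0$, and $(1-\eps)\ln(1-\eps) + \eps \ge \frac{\eps^2}{2} \ge \frac{\eps^2}{2+\eps}$ for $\eps \in (0,1)$. The latter is immediate from the power-series identity $(1-\eps)\ln(1-\eps) + \eps = \sum_{k \ge 2} \frac{\eps^k}{k(k-1)}$; the former can be derived by a similar (alternating-series) argument or by a short calculus estimate. The main obstacle, such as it is, is exactly the verification of these two scalar inequalities — everything else is the routine Chernoff computation — so in the write-up I would either prove them in a couple of lines or simply cite them as standard (e.g.\ from~\cite{dubhashi2009concentration}).
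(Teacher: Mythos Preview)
Your proof is correct and follows the standard exponential-moment (Chernoff--Bernstein) approach. However, note that the paper does not prove this statement at all: it is stated as a \emph{Fact} with a citation to~\cite{dubhashi2009concentration}, so there is no ``paper's own proof'' to compare against. Your write-up is a faithful reconstruction of the textbook argument the citation points to.
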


In certain places, we also use limited independence hash functions in our algorithms to reduce their space complexity. 
 
\begin{definition}[Limited-independence hash functions]\label{def:pihash}
 For integers $n,m,k \geq 1$, a family $\mathcal{H}$ of hash functions from $[n]$ to $[m]$ is called a $k$-wise independent hash function iff for any two $k$-subsets $a_1,\ldots,a_k \subseteq [n]$ and $b_1,\ldots,b_k \subseteq [m]$,  
 \[
 \Pr_{h \sim \mathcal{H}}\paren{h(a_1)=b_1 \wedge \cdots \wedge h(a_k) = b_k} = \frac1{m^k}. 
 \]
\end{definition}
 
Roughly speaking, a $k$-wise independent hash function behaves like a totally random function when considering at most $k$ elements. When $k=2$, we call it a pairwise independent hash family.
We use the following standard result for $k$-wise independent hash functions.
\begin{fact}[\cite{MotwaniR95}]\label{fact:hashfam}
 For integers $n,m,k \geq 2$, there is a $k$-wise independent hash function $\mathcal{H} = \set{h: [n] \rightarrow [m]}$ so that sampling and storing a function $h \in \mathcal{H}$ takes $O(k \cdot (\log n + \log m))$ bits of space.
\end{fact}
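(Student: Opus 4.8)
The plan is to present the classical finite-field polynomial construction and to verify both its $k$-wise independence and its $O(k(\log n + \log m))$ storage bound. First I would fix a prime $p$ with $\max\{n,m\} \le p \le 2\max\{n,m\}$ (Bertrand's postulate), so that $\log p = O(\log n + \log m)$; for now I work over $\mathbb{F}_p$, identifying $[n]$ with an $n$-element subset of it and $[m]$ with the image of a fixed ``balancing'' map $\phi:\mathbb{F}_p \to [m]$ (I return below to making $\phi$ exactly $(p/m)$-to-one for arbitrary $m$).

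Next I would define $\mathcal H = \set{ h_{\mathbf a} : \mathbf a \in \mathbb{F}_p^{k} }$ by $h_{\mathbf a} = \phi\circ g_{\mathbf a}$, where $g_{\mathbf a}(x) = \sum_{i=0}^{k-1} a_i x^i$. Sampling $h\in\mathcal H$ is just drawing $\mathbf a=(a_0,\dots,a_{k-1})$ uniformly from $\mathbb{F}_p^k$, which costs $k\lceil\log p\rceil = O(k(\log n + \log m))$ bits to store, and evaluation is $O(k)$ field operations. For $k$-wise independence, fix distinct domain points $a_1,\dots,a_k$ and any targets $b_1,\dots,b_k\in[m]$. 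The crux is that the linear evaluation map $\mathbf a \mapsto \big(g_{\mathbf a}(a_1),\dots,g_{\mathbf a}(a_k)\big)$ on $\mathbb{F}_p^k$ has the Vandermonde matrix of the distinct nodes $a_1,\dots,a_k$, hence is invertible --- equivalently, Lagrange interpolation says a degree-$<k$ polynomial is pinned down by its values at any $k$ points. So when $\mathbf a$ is uniform the tuple $\big(g_{\mathbf a}(a_1),\dots,g_{\mathbf a}(a_k)\big)$ is uniform on $\mathbb{F}_p^k$, and pushing it forward coordinatewise through the $(p/m)$-to-one map $\phi$ yields $\Pr_h[h(a_1)=b_1 \wedge \cdots \wedge h(a_k)=b_k] = ((p/m)/p)^k = m^{-k}$, exactly as required.

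The only point needing care --- and the only real ``obstacle'' --- is realizing a map that is exactly $(\text{field size})/m$-to-one when $m$ is arbitrary: reducing a uniform element of $\mathbb{F}_p$ modulo $m$ is uniform only up to a $1 \pm O(m/p)$ factor, so strict equality fails. To get it on the nose I would instead round $m$ up to the nearest prime power $m_0\le 2m$, take $q$ to be the least power of $m_0$ with $q\ge n$ (so $q\le 2mn$ and $\log q=O(\log n+\log m)$), run the same construction over $\mathbb{F}_q$, and let $\phi$ be a coordinate projection of $\mathbb{F}_q$ viewed as an $\mathbb{F}_{m_0}$-vector space, which is precisely $(q/m_0)$-to-one; this gives a $k$-wise independent family with range $[m_0]\supseteq[m]$, and since $m_0=\Theta(m)$ neither the stated size bound nor any of our applications is affected (and in applications that tolerate near-uniformity, the plain mod-$m$ version already works). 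Beyond this range bookkeeping, the Vandermonde/interpolation argument and the bit count are entirely routine.
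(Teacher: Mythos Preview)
The paper does not prove this fact at all --- it is stated as a cited result from \cite{MotwaniR95} with no argument given. Your construction (random degree-$<k$ polynomials over a finite field, with $k$-wise independence following from Vandermonde invertibility / Lagrange interpolation) is exactly the standard textbook proof and is correct, including your handling of the range-size technicality; there is nothing further to compare.
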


We use the following basic communication complexity facts.

\begin{fact}[\cite{KushilevitzNisan-book}]\label{fact:eqsk}
There is an $O(1)$-cost public random protocol for $\eq_N$ for any $N$ with error probability at most $1/3$. 
\end{fact}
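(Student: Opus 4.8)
The statement is the standard fact that $\eq_N$ has an $O(1)$-cost public-coin randomized protocol with error at most $1/3$; here $\eq_N$ is the function where Alice holds $x \in \{0,1\}^N$, Bob holds $y \in \{0,1\}^N$, and the players must decide whether $x = y$.

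The plan is to exhibit the classical fingerprinting protocol and bound its error. First I would have the players interpret their $N$-bit strings as vectors over a suitable finite field; the cleanest choice is $\mathbb{F}_2$, so $x, y \in \mathbb{F}_2^N$. Using the shared public randomness, both players sample the \emph{same} uniformly random vector $r \in \mathbb{F}_2^N$ (this costs $N$ public coins, which is free since public randomness is not charged). Alice computes the inner product $\langle x, r\rangle \in \mathbb{F}_2$ and sends this single bit to Bob. Bob computes $\langle y, r\rangle$ and outputs ``equal'' iff the received bit matches his own value, and ``not equal'' otherwise. The message is one bit, so the cost is $O(1)$.

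Next I would argue correctness. If $x = y$ the two inner products always agree, so the protocol is always correct on equal inputs (one-sided error). If $x \neq y$, then $z := x - y$ is a nonzero vector in $\mathbb{F}_2^N$, and the protocol errs exactly when $\langle z, r\rangle = 0$. Since $z \neq 0$, the map $r \mapsto \langle z, r\rangle$ is a surjective linear functional $\mathbb{F}_2^N \to \mathbb{F}_2$, so its kernel has size exactly $2^{N-1}$; hence $\Pr_r[\langle z, r\rangle = 0] = 1/2$. This gives error $1/2$, which is not yet below $1/3$, so I would finish by independent amplification: repeat the sampling with two fresh independent random vectors $r_1, r_2$, have Alice send both bits $\langle x, r_1\rangle, \langle x, r_2\rangle$ (still $O(1)$ communication), and have Bob declare ``equal'' only if both match. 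On unequal inputs the failure probability drops to $(1/2)^2 = 1/4 \le 1/3$, while equal inputs are still always accepted.

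This argument is entirely routine and there is no real obstacle; the only point requiring a line of care is the kernel-size computation for the linear functional over $\mathbb{F}_2$, which is immediate from rank–nullity. I would remark that the protocol also works over any larger field $\mathbb{F}_p$ (treating the strings as coefficient vectors of polynomials and evaluating at a random point), which is the more commonly cited variant, but the $\mathbb{F}_2$ version above already suffices and keeps the message length at a single bit per repetition. Since the fact is quoted from \cite{KushilevitzNisan-book}, one may alternatively just cite it; the above is the self-contained proof.
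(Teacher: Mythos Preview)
Your proof is correct and is the standard fingerprinting argument. The paper does not give its own proof of this fact; it simply states it as a preliminary with a citation to the Kushilevitz--Nisan textbook, so there is nothing to compare against beyond noting that your self-contained argument matches the well-known protocol the citation points to.
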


\begin{fact}[Equality\cite{KushilevitzNisan-book}]\label{fact:eqlb}
    The one-way randomized communication complexity $\Rone(\eq_N) = \Omega(\log N)$.
\end{fact}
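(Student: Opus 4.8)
The plan is to establish $\Rone(\eq_N)=\Omega(\log N)$ directly, by the classical argument (cf.\ \cite{KushilevitzNisan-book}) that a short private-coin one-way message cannot produce $2^N$ pairwise-distinguishable message distributions. First I would fix an arbitrary private-coin one-way protocol $\Pi$ that solves $\eq_N$ with $\cost(\Pi)=c$, and pad so that every message Alice could send is a string of exactly $c$ bits; thus there are $M:=2^c$ possible messages. For each $x\in\{0,1\}^N$ let $\mu_x$ denote the distribution on $\{0,1\}^c$ of Alice's message on input $x$, and for each $y\in\{0,1\}^N$ and message $m$ let $q_y(m)\in[0,1]$ denote the probability that Bob outputs $1$ on message $m$ and input $y$ (working with the number $q_y(m)$ rather than a deterministic decision rule is how I would cleanly handle Bob's private coins).

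The first key step is a separation claim: $d_{\mathrm{TV}}(\mu_x,\mu_{x'})\ge 1/3$ for all distinct $x,x'\in\{0,1\}^N$. To prove it I would run Bob on input $y:=x$: since $\Pi$ has error at most $1/3$, the pair $(x,x)$ (answer $1$) gives $\sum_m\mu_x(m)q_x(m)\ge 2/3$ and the pair $(x',x)$ (answer $0$) gives $\sum_m\mu_{x'}(m)q_x(m)\le 1/3$; subtracting, $\sum_m\bigl(\mu_x(m)-\mu_{x'}(m)\bigr)q_x(m)\ge 1/3$, and since $q_x(\cdot)\in[0,1]$ the left-hand side is at most $\sum_{m:\,\mu_x(m)\ge\mu_{x'}(m)}\bigl(\mu_x(m)-\mu_{x'}(m)\bigr)=d_{\mathrm{TV}}(\mu_x,\mu_{x'})$. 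Hence $\{\mu_x:x\in\{0,1\}^N\}$ is a family of $2^N$ distributions on $M$ points that is pairwise $\tfrac13$-separated in total variation.

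The second key step is a counting bound on how many such distributions can coexist. I would take a finite net of the probability simplex on $M$ points at total-variation resolution $1/12$: round coordinates $2,\dots,M$ of a distribution down to the nearest multiple of $\delta:=1/(12M)$ and let the first coordinate absorb the leftover mass; this moves the distribution by less than $M\delta=1/12$ in total variation, and the number of vectors so produced is at most $(1/\delta+1)^{M}\le(13M)^{M}$. Since the $\mu_x$ are pairwise $\tfrac13$-separated and $\tfrac13>2\cdot\tfrac1{12}$, distinct $x$ round to distinct vectors, so $2^N\le(13M)^{M}$, i.e.\ $N\le M\log_2(13M)=2^c\bigl(c+O(1)\bigr)$. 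If $c\le\tfrac13\log_2 N$ then for large $N$ the right-hand side is at most $N^{1/3}\log_2 N<N$, a contradiction; hence $c=\Omega(\log N)$.

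I expect the \emph{only} genuinely nontrivial ingredient to be the simplex-net estimate in the last step; the setup and the separation claim are routine bookkeeping, the sole mild subtlety being the handling of Bob's randomness via the quantities $q_y(\cdot)$. A self-contained alternative would replace the net bound by Yao's minimax principle applied to the distribution that with probability $1/2$ samples $(x,x)$ for uniform $x$ and otherwise samples independent uniform $(x,y)$, but the counting argument is shorter. Finally, this bound is tight: combining the public-coin $O(1)$ protocol of \Cref{fact:eqsk} with Newman's theorem yields a private-coin one-way protocol of cost $O(\log N)$, so $\Rone(\eq_N)=\Theta(\log N)$.
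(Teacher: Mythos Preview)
The paper does not prove this statement; it is recorded as a cited fact from \cite{KushilevitzNisan-book} and used as a black box. Your argument is correct and self-contained: the total-variation separation claim is a clean way to absorb Bob's private randomness (the bound $\sum_m(\mu_x(m)-\mu_{x'}(m))q_x(m)\le d_{\mathrm{TV}}(\mu_x,\mu_{x'})$ is exactly right since $q_x\in[0,1]$), and the simplex-net count is valid --- rounding $M-1$ coordinates down by at most $\delta$ and dumping the excess into the first coordinate indeed moves the distribution by less than $M\delta=1/12$ in TV, and $1/3>2\cdot\tfrac{1}{12}$ guarantees injectivity of the rounding on the $\mu_x$. One cosmetic remark: from $N\le 2^c(c+\log_2 13)$ you can conclude $c=\Omega(\log N)$ directly (if $c\le\tfrac12\log_2 N$ the right side is $O(\sqrt{N}\log N)=o(N)$), so the specific threshold $\tfrac13\log_2 N$ is not essential, but your version is fine.
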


\begin{fact}[Index \cite{Ablayev96}]\label{fact:idxlb}
     The one-way randomized communication complexity $\Rone(\idx_N) = \Omega(N)$.
\end{fact}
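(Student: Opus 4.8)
The plan is to give the standard information-theoretic proof of this well-known bound (the argument of Kremer--Nisan--Ron, Ablayev, and Miltersen--Nisan--Safra--Wigderson). Fix any private-coin one-way protocol $\Pi$ for $\idx_N$ with error at most $1/3$ on every input, and let $c$ be its communication cost; the goal is to show $c=\Omega(N)$, where $N$ is Alice's string length. Let $X$ be uniform over $\{0,1\}^N$, let $R$ be Alice's private random string (independent of $X$), and let $M=M(X,R)$ be the random message Alice sends. Since, per the model, Bob's output is a deterministic function of his index and the received message, write $g_i(M)\in\{0,1\}$ for Bob's output when his input index is $i$.

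The first step is to pass from the worst-case correctness guarantee to an average-case one: for each fixed index $i\in[N]$, averaging the pointwise bound $\Pr_R[g_i(M)\ne x_i]\le 1/3$ over $x$ chosen uniformly gives $\Pr_{X,R}[g_i(M)\ne X_i]\le 1/3$. The second step is Fano's inequality: since $g_i(M)$ estimates the single bit $X_i$ from $M$ with error probability at most $1/3$, we get $H(X_i\mid M)\le H_b(1/3)$, where $H_b(\cdot)$ denotes the binary entropy function and $H_b(1/3)=\log_2 3-2/3<1$. The third step combines the coordinates: since the bits of $X$ are i.i.d.\ uniform, $H(X)=N$, and by the chain rule together with the fact that conditioning does not increase entropy,
\[
H(X\mid M)\;=\;\sum_{i=1}^{N} H\big(X_i\mid X_1,\dots,X_{i-1},M\big)\;\le\;\sum_{i=1}^{N} H(X_i\mid M)\;\le\;N\cdot H_b(1/3).
\]
Hence $I(X;M)=H(X)-H(X\mid M)\ge N\big(1-H_b(1/3)\big)=\Omega(N)$. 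Finally, the number of distinct messages Alice can send is at most $2^{c+1}$, so $H(M)\le c+1$, and therefore $c\ge H(M)-1\ge I(X;M)-1=\Omega(N)$, as desired.

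This is a textbook result, so there is no genuine obstacle; the points needing a little care are (i) correctly folding Alice's private randomness $R$ into the joint distribution and converting the worst-case error into the average-case bound under the uniform input distribution, (ii) the standard invocation of Fano's inequality for a binary unknown, and (iii) the elementary inequality $c\ge H(M)-O(1)$ relating worst-case message length to message entropy. An alternative, Fano-free route goes through Yao's minimax principle to reduce to the distributional complexity under the uniform distribution on the pair (string, queried index), followed by a combinatorial ``rectangle''/counting argument combined with subadditivity of entropy; but the entropy/Fano version above is cleanest, and that is the one I would present.
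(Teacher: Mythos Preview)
Your proof is correct and is the standard information-theoretic argument for this classical lower bound. Note, however, that the paper does not give its own proof of this statement: it is listed among the preliminaries as a known fact with a citation to \cite{Ablayev96}, so there is no paper proof to compare against.
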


\begin{fact}[Newman's Theorem\cite{KushilevitzNisan-book}]\label{fact:Newman}
For any function $f:\{0,1\}^N\times \{0,1\}^M\to \{0,1\}$, a public coin protocol with communication cost $C$ and error $\eps$ can be simulated by a private coin protocol with communication cost at most $O(C+\log (N+M) + \log (1/\delta))$ and error at most $\eps+\delta$.
\end{fact}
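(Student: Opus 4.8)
The plan is to prove this via the standard probabilistic-method (sampling) argument. Let $\Pi$ be a public-coin protocol computing $f$ with communication cost $C$ and error at most $\eps$ on \emph{every} input, using a public random string $\cR$. The key feature to exploit is that the error guarantee is pointwise: for each fixed $(x,y)$, $\Pr_{\cR}[\out(\Pi)\neq f(x,y)]\leq \eps$. The goal is to replace the (possibly very long) string $\cR$ by one drawn uniformly from a small fixed set, so that Alice can afford to name which element of the set is used with only private coins.

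First I would fix a parameter $t$ (chosen below) and consider $t$ independent public strings $\cR_1,\ldots,\cR_t$. For a fixed input $(x,y)$, let $Z_i$ indicate that running $\Pi$ with $\cR_i$ produces the wrong answer; the $Z_i$ are i.i.d.\ Bernoulli with mean $\leq \eps$. By \Cref{prop:chernoff}, $\Pr[\frac1t\sum_i Z_i > \eps+\delta]\leq \exp(-\Omega(\delta^2 t))$. Union-bounding over all at most $2^{N}\cdot 2^{M}$ input pairs, as long as $2^{N+M}\cdot\exp(-\Omega(\delta^2 t))<1$---which holds for $t=\Theta((N+M)/\delta^2)$---there exists a single collection $\{\cR_1,\ldots,\cR_t\}$ such that, on \emph{every} input, at most an $(\eps+\delta)$-fraction of the $\cR_i$ lead to an incorrect output.

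Then I would hardwire such a collection into a private-coin protocol: Alice draws $i\in[t]$ uniformly using her private coins and augments her message with the index $i$ (an extra $\log t = O(\log(N+M)+\log(1/\delta))$ bits) together with the message she would send under $\Pi$ with public string $\cR_i$; Bob, who also knows the hardwired collection, reconstructs $\cR_i$ and simulates $\out(\Pi)$. The total communication is $C+\log t = O\big(C+\log(N+M)+\log(1/\delta)\big)$ and the error on every input is at most $\eps+\delta$, as claimed.

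The main thing to get right is the choice of $t$: it must be large enough ($\Theta((N+M)/\delta^2)$) for the Chernoff bound to beat the $2^{N+M}$-term union bound, yet $\log t$ must simplify to $O(\log(N+M)+\log(1/\delta))$---which it does, since $\log((N+M)/\delta^2)=O(\log(N+M)+\log(1/\delta))$. Two minor points to flag: the argument is non-constructive (it only shows the small private-coin protocol exists, which is exactly what the statement asserts), and it relies on the original error bound holding for \emph{every} input rather than merely on average, which is the usual convention for this fact; the same reasoning applies verbatim in the one-way setting on which the paper focuses.
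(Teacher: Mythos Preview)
Your argument is the standard and correct proof of Newman's theorem. Note, however, that the paper does not give its own proof of this statement: it is listed as a preliminary \emph{Fact} with a citation to~\cite{KushilevitzNisan-book}, so there is no in-paper proof to compare against. Your write-up matches the textbook derivation that the citation points to.
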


We use a pseudorandom generator to reduce the space used by our random bits in certain algorithms. The fact that we exploit is given below. 

\begin{fact}[PRG \cite{Nisan90}]\label{fact:NisanPRG}
    Any randomized algorithm that runs in $S$ space and uses one-way access to $R$ random bits may be converted to an algorithm that uses $O(S \log R)$ random bits and runs in $O(S \log R)$ space using a pseudorandom generator. For dynamic streaming algorithms, the algorithm should be able to access the random bits in a read-once manner for some permutation of the stream.
\end{fact}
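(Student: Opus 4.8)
The plan is to follow Nisan's construction \cite{Nisan90} in three stages: recast the algorithm as a read-once branching program (ROBP) over large blocks, build the generator by recursive hashing, and prove it ``fools'' such ROBPs by an induction over the $\log R$ recursion levels whose engine is a mixing lemma for pairwise-independent hash families. For the model, I would first group the $R$ random bits into $\ell$ consecutive blocks of $b := \Theta(S+\log R)$ bits each (so $\ell = O(R/b)$; pad $\ell$ up to a power of two $\ell = 2^k$ with $k = O(\log R)$). Between block reads an $S$-space algorithm occupies one of $W = 2^{O(S)}$ configurations, so its behavior on the block stream is exactly an ROBP over alphabet $\Sigma = \{0,1\}^b$ of width $W$ and length $\ell$, with its output a function of the final layer; for a dynamic-streaming algorithm one first fixes the stream permutation under which the random-bit accesses are one-way and works in that order.

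For the generator itself, take a pairwise-independent family $\mathcal{H}$ of hash functions $\Sigma\to\Sigma$, each describable in $O(b)$ bits by \Cref{fact:hashfam}, set $G_0 := \mathrm{id}_\Sigma$, and recursively put $G_j(x;h_1,\dots,h_j) := \big(G_{j-1}(x;h_1,\dots,h_{j-1}),\ G_{j-1}(h_j(x);h_1,\dots,h_{j-1})\big) \in \Sigma^{2^j}$. The generator is $G_k$: its seed $(x,h_1,\dots,h_k)$ has length $O(kb) = O(S\log R)$, its output has length $2^k b \ge R$, and the $i$-th output block is $h_k^{\,i_{k-1}}\!\circ\cdots\circ h_1^{\,i_0}(x)$ (writing $i$ in binary, $h^0 = \mathrm{id}$), hence can be produced on demand from the stored seed in $O(kb) = O(S\log R)$ space --- which is exactly what lets one feed it, block by block, to the one-way algorithm.

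For the analysis I would induct on $j$: for every width-$W$ length-$2^j$ ROBP $P$ and every start state, the final-state distribution of $P$ driven by $G_j$ over a uniform seed lies within statistical distance $\eps_j$ of the one it reaches on $2^j$ independent uniform blocks, with $\eps_0 = 0$ and $\eps_j \le O(\eps_{j-1}) + W^{O(1)}\,2^{-\Omega(b)}$. The inductive step splits $P = P_1 \circ P_2$ at its midpoint; the subtle point is that the two recursive calls inside $G_j$ --- $G_{j-1}(\cdot\,;h_{<j})$ feeding $P_1$ and $G_{j-1}(\cdot\,;h_{<j})$ feeding $P_2$ --- share the lower-level hash tuple $h_{<j}$, so the string seen by $P_1$ is \emph{not} independent of the string seen by $P_2$, and one cannot simply invoke the induction hypothesis on each half. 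The work is to decouple these two uses of $h_{<j}$: replacing the second copy by a fresh independent one costs $O(\eps_{j-1})$ (the induction hypothesis for $P_1$) plus an error $W^{O(1)}\,2^{-\Omega(b)}$ from the mixing lemma (the loss coming from collision events of probability $\approx 2^{-b}$), after which both halves receive genuinely independent $G_{j-1}$-strings and the induction hypothesis on $P_1$ and $P_2$ finishes. Unrolling gives $\eps_k \le 2^k\,W^{O(1)}\,2^{-\Omega(b)} = 2^{O(S+\log R)}2^{-\Omega(b)}$, a small constant once $b = \Theta(S+\log R)$; that $\eps_k$ is the extra error of the simulation, and its seed and space are $O(kb) = O(S\log R)$.

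The step I expect to be the genuine obstacle is keeping that total error small --- i.e., ensuring the mixing lemma is invoked only polynomially many (in $W = 2^{O(S)}$) times across all $k$ levels, not exponentially many. The tempting but fatal move is to ask the mixing lemma to hold for all $2^{|\Sigma|}$ subsets of $\Sigma$ at once and union-bound over them, which forces the block length --- and hence the seed --- to be exponential. The point one must exploit is that $P$ is read-once, so the only thing crossing its midpoint is which of the $W$ states it is in; the decoupling at each level therefore needs the mixing lemma only for the $W^{O(1)}$ events indexed by (midpoint state, final state), and a block length of $\Theta(S+\log R)$ suffices, yielding the claimed $O(S\log R)$ seed and space. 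The remaining care is to keep the induction hypothesis quantified over all start states so that it composes across the split $P = P_1\circ P_2$, and (for the dynamic-streaming statement) to have pinned down the one-way stream permutation before feeding in $G_k$'s blocks.
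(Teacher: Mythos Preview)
The paper does not prove this statement; it is listed in the Preliminaries as a known fact with a citation to \cite{Nisan90} and is used as a black box (to derandomize the strong $\ell_0$-sampler and the SGT spanning-forest algorithm). Your sketch is a faithful outline of Nisan's original argument --- the ROBP abstraction, the recursive doubling generator with pairwise-independent hashes, and the level-by-level hybrid with the mixing lemma --- and your identification of the key quantitative point (that only $W^{O(1)}$ events per level need the mixing lemma, not all $2^{|\Sigma|}$ subsets) is exactly right. So there is nothing to compare against in this paper; you have simply supplied the proof the paper omits by citation.
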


We often refer to the following generic lower bound in the OMA model as well as annotated streaming. 

\begin{fact}[General OMA lower bound,\cite{ChakrabartiCMT14}]\label{fact:genlb}
For any function $f$, an $(h,v)$-OMA-procotol or an $(h,v)$-scheme $\Pi$ solving it must have $$h\cdot v\geq \Omega(C)$$
where $C= \Rone(f)$ in case $\Pi$ is an OMA protocol, and $C=\cS(f)$ (the streaming complexity of $f$) in case it is an annotated streaming scheme. 
\end{fact}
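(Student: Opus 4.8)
The statement is equivalent to showing $\Rone(f) = O(h\cdot v)$ for any $(h,v)$-OMA-protocol $\Pi$ solving $f$ (and analogously $\cS(f) = O(h\cdot v)$ for any $(h,v)$-scheme). The plan is to compile $\Pi$ into an ordinary one-way protocol of cost $O(hv)$ by having Bob simulate Merlin via brute-force enumeration over all $2^h$ possible help strings. The obstacle is that $\Pi$'s soundness guarantee only says that each \emph{fixed} dishonest help string $\cH'$ is rejected with probability $\ge 2/3$, which cannot be union-bounded over the $2^h$ candidates directly. The fix is soundness amplification: run $t = \Theta(h)$ independent copies of $\Pi$'s verification step, all \emph{reusing a single} $h$-bit help string but with Alice drawing fresh randomness $\cR_1,\dots,\cR_t$ in the respective copies, and declare a help string ``accepted'' iff a majority of the $t$ copies accept. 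Reusing one help string across all copies is legitimate precisely because the OMA model is online --- Merlin commits to $\cH$ before Alice's randomness is sampled --- so, conditioned on any adversarial $\cH'$, the $t$ accept/reject indicators are independent. A Chernoff bound (\Cref{prop:chernoff}) then gives completeness error $2^{-\Omega(t)}$ and, for each fixed $\cH'$, soundness error $2^{-\Omega(t)}$; picking $t = \Theta(h)$ drives the latter below $\tfrac13 \cdot 2^{-h}$. This amplified verifier costs Alice $t\cdot v = O(hv)$ bits while still using only an $h$-bit help string.

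Now the one-way protocol: Alice, holding $x$, sends Bob the $t$ amplified-verification messages, i.e.\ $O(hv)$ bits in total, and Bob enumerates all $\cH' \in \{0,1\}^h$, evaluates the majority-accept predicate for each, and outputs $1$ iff some $\cH'$ is accepted (else $0$). If $f(x,y)=1$, the honest $\cH$ is accepted with probability $\ge 1 - \tfrac13 2^{-h} \ge \tfrac23$; if $f(x,y)=0$, a union bound over the $\le 2^h$ candidates shows all of them are rejected with probability $> \tfrac23$. Thus $f$ is solved with error $\le \tfrac13$ at cost $O(hv)$. (The strings $\cR_i$ are shared by Alice and Bob exactly as in the OMA model; to match the private-coin definition of $\Rone$ one pays the standard additive $O(\log N)$ via Newman's theorem, \Cref{fact:Newman}, which is absorbed.) Hence $\Rone(f) = O(hv)$, i.e.\ $h\cdot v = \Omega(\Rone(f))$.

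For the annotated-streaming claim, the same construction applies with ``Alice's message'' replaced by ``Verifier's memory'': on the stream $\sigma$, the Verifier runs $t = \Theta(h)$ independent copies of the scheme's algorithm $\alg$ in $O(hv)$ space, and after the stream loops over all $2^h$ candidate proof strings $\cH'$ (reusing space across candidates), running $\out$ against each stored copy and maintaining a single ``some $\cH'$ accepted'' bit. The same Chernoff-plus-union-bound analysis shows this $O(hv)$-space classical streaming algorithm solves $f$, so $\cS(f) = O(hv)$.

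The crux is the amplification step, and in particular the observation that it is sound \emph{because} Merlin's message precedes the verifier's randomness in the online model, so the $t$ copies are genuinely independent conditioned on any adversarial $\cH'$; it is also this step that multiplies Alice's cost by $\Theta(h)$ and therefore produces the product bound $h\cdot v$ rather than $h+v$. Everything else --- the two Chernoff estimates, the union bound over the $2^h$ help strings, and the coin-type bookkeeping --- is routine.
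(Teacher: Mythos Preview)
The paper does not prove this statement; it is listed in the preliminaries as a known fact from \cite{ChakrabartiCMT14} and cited without proof. Your argument is correct and is precisely the standard proof of this result: amplify soundness by running $\Theta(h)$ independent verifier copies against a single help string, then have Bob (respectively, the post-stream Verifier) brute-force over all $2^h$ candidate help strings, with Chernoff plus a union bound handling the error. Your remark that independence of the copies conditioned on any adversarial $\cH'$ relies on Merlin committing before Alice's randomness is exactly the point, and your treatment of the public/private coin issue via Newman's theorem is appropriate.
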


The standard schemes below are often used as subroutines in our protocols.

\begin{fact}[Subset-check and Intersection-count Scheme, \cite{ChakrabartiCMT14}, \cite{ChakrabartiGT20}]\label{fact:subsetdisjchk}
    Given an insert-delete stream of elements from sets $X, Y\subseteq [N]$ (interleaved arbitrarily), for any $h,v$ with $h\cdot v=N$, there are $[h,v]$-schemes for checking whether $X\subseteq Y$ and for counting $|X\cap Y|$ .
\end{fact}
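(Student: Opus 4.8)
The plan is to handle both schemes together, reducing the subset check to the intersection count. Since the stream is insert-delete with every frequency staying in $\{0,1\}$, the characteristic vectors $\mathbf{x},\mathbf{y}\in\{0,1\}^N$ of $X,Y$ at the end of the stream satisfy $|X|=\sum_{e\in[N]} x_e$ (the net number of $X$-insertions, which the Verifier tracks exactly with an $O(\log N)$-bit counter and no help from the Prover) and $|X\cap Y|=\sum_{e\in[N]} x_e y_e=\langle\mathbf{x},\mathbf{y}\rangle$. Hence $X\subseteq Y$ iff $|X\cap Y|=|X|$, and it suffices to build, for any $h,v$ with $hv=N$, an $[h,v]$-scheme computing the inner product $\langle\mathbf{x},\mathbf{y}\rangle$. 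For this I would use the ``sum-check over a grid'' fingerprinting scheme of \cite{ChakrabartiCMT14}.

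Fix a prime $p$ with $N^3\le p\le 2N^3$ and work over $\mathbb{F}_p$; via a canonical bijection reshape $[N]$ as $[h]\times[v]$, so $\mathbf{x},\mathbf{y}$ become $h\times v$ matrices with low-degree extensions $\tilde x(s,t),\tilde y(s,t)$ of degree $<h$ in $s$ and $<v$ in $t$. Then $\langle\mathbf{x},\mathbf{y}\rangle=\sum_{i\in[h]}\sum_{j\in[v]}\tilde x(i,j)\tilde y(i,j)=\sum_{i\in[h]}G(i)$, where $G(s):=\sum_{j\in[v]}\tilde x(s,j)\,\tilde y(s,j)$ has degree $<2h$. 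During the stream the Verifier picks a private uniformly random $\rho\in\mathbb{F}_p$ and maintains the $2v$ field elements $\{\tilde x(\rho,j),\tilde y(\rho,j):j\in[v]\}$: a token at cell $(i,j)$ just adds $\pm\chi_i(\rho)$ to the relevant one, where $\chi_i$ is the Lagrange basis polynomial on $\{1,\dots,h\}$; this is $\tO(v)$ bits. After the stream the Prover sends the coefficient list of a degree-$<2h$ polynomial $G^*$ claimed to equal $G$ ($\tO(h)$ bits), which the Verifier reads as a stream, maintaining in $O(\log N)$ extra space both $G^*(\rho)$ and $\sum_{i\in[h]}G^*(i)$ (the latter as $\sum_k c_k\cdot\sum_{i\in[h]}i^k$, using power sums computed on the fly). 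The Verifier checks $G^*(\rho)=\sum_{j\in[v]}\tilde x(\rho,j)\tilde y(\rho,j)$; on failure it outputs $\bot$, and otherwise it outputs $\sum_{i\in[h]}G^*(i)$.

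Completeness is immediate: an honest Prover sends $G^*=G$, the identity $G(\rho)=\sum_{j}\tilde x(\rho,j)\tilde y(\rho,j)$ holds, and the output equals $\langle\mathbf{x},\mathbf{y}\rangle<p$ over the integers. For soundness, $G^*$ is committed before $\rho$ is seen, so if $G^*\neq G$ (both of degree $<2h$) they agree at the uniformly random $\rho$ with probability at most $2h/p=O(1/N^2)$ by Schwartz--Zippel, and the Verifier rejects with high probability; when $G^*=G$ the output is exact. The costs are $\hc=O(h\log p)=\tO(h)$ and $\vc=O(v\log p)=\tO(v)$ with $hv=N$, and composing with the exact $|X|$-counter gives the subset-check scheme at the same cost.

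I expect the only delicate point to be the space accounting when $h\gg v$: then the Prover's message exceeds the Verifier's memory, so one must confirm that the Verifier can process it as a stream --- running the consistency check at $\rho$ and extracting $\sum_{i\in[h]}G^*(i)$ --- in just $O(\log N)$ additional bits, which is exactly why $G^*$ is sent as a coefficient list (so that $G^*(\rho)$ and, via precomputable power sums, $\sum_i G^*(i)$ are each computable in one linear pass), and why $p$ is taken to be a large enough $\poly(N)$ so a single random evaluation point still gives $o(1)$ soundness error. The arbitrary interleaving of $X$- and $Y$-tokens is harmless: the Verifier routes each token to the appropriate bank of fingerprints.
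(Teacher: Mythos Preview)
Your proposal is correct and follows exactly the standard polynomial-fingerprinting/sum-check scheme from \cite{ChakrabartiCMT14} that the paper is citing; the paper itself states this result as a preliminary Fact without proof, deferring to those references. One cosmetic remark: rather than sending $G^*$ as a coefficient list (which forces you to compute the power sums $\sum_{i\le h} i^k$ on the fly), it is slightly cleaner to have the Prover send the evaluations $G^*(1),\ldots,G^*(2h)$, so that $\sum_{i\le h}G^*(i)$ is a trivial partial sum and $G^*(\rho)$ is a Lagrange-weighted sum with weights computable incrementally---but your version also works within the stated space bound.
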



\begin{fact}[Duplicate-detection Scheme]\label{fact:dupdet}
    Given an insert-only stream of elements from the universe $[N]$, for any $h,v$ with $h\cdot v=N$, there is an $[h,v]$-scheme for checking whether all elements in the stream are distinct. 
\end{fact}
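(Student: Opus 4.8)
The plan is to reduce duplicate-detection to the intersection-count primitive of the preceding fact. Given the insert-only stream $\sigma$ over $[N]$ of length $m$, I would think of it as presenting a multiset $S$ whose elements all lie in $[N]$. The key observation is that all elements of $\sigma$ are distinct iff $|S| = |\mathrm{supp}(S)|$, i.e., iff the number of stream tokens equals the number of \emph{distinct} values appearing. The Verifier can count $m$ exactly in $O(\log m)$ space on the fly, so the real task is to certify the number of distinct values. For that, I would run the Intersection-count Scheme of \Cref{fact:subsetdisjchk} with $X = Y = \mathrm{supp}(S)$: feed each token $(a_i, +1)$ of the input stream into \emph{both} the $X$-stream and the $Y$-stream of that scheme (so a value appearing $t$ times in $\sigma$ contributes $t$ copies to each side). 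The scheme as stated is for insert-delete streams presenting \emph{sets}, so I need to first reduce the multiset presentation to a set presentation; the standard way is to have the Prover send, as part of the proof, the list of distinct values with their exact multiplicities, and have the Verifier replay cancelling deletions — but this is circular if done naively, so instead I would use the cleaner route below.

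Cleaner route: the Prover sends, in the proof stream, the claimed support set $T \subseteq [N]$ together with a claimed multiplicity $c_v \ge 1$ for each $v \in T$, and a claim of the value $d := |T|$. The Verifier must check (i) $\sum_{v\in T} c_v = m$ (a running sum, $O(\log m)$ space), and (ii) that $T$ with multiplicities $c_v$ really equals the multiset $S$ presented by $\sigma$. Check (ii) is exactly a multiset-equality check, which reduces to intersection-count / subset-check: form the insert-delete stream that inserts each input token once and, while reading the proof, deletes $v$ exactly $c_v$ times for each $v\in T$; the resulting frequency vector is identically zero iff the claimed multiplicities are correct, and a zero-check is a special case of \Cref{fact:subsetdisjchk} (subset-check with the empty target, equivalently $\ell_0 = 0$ verification) with an $[h,v]$ tradeoff for $h\cdot v = N$. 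Finally the Verifier outputs ``all distinct'' iff $d = m$ and all checks pass, and ``not all distinct'' (equivalently $\bot$, or a specific wrong-answer rejection) otherwise; soundness for the output follows because a cheating Prover can only make the zero-check pass with probability $\le 1/3$ unless the claimed multiset is genuinely correct, in which case $d=m$ correctly certifies distinctness.

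The hcost here is $\tO(d) = \tO(N)$ in the worst case for sending $T$ and the $c_v$'s, which is too large for the claimed $h\cdot v = N$ tradeoff with small $h$. So the main obstacle — and the place I'd focus effort — is getting the $[h,v]$ smooth tradeoff rather than a single $(\tO(N),\mathrm{polylog})$ point. The fix is the standard fingerprinting trick underlying \Cref{fact:subsetdisjchk}: rather than having the Prover name the support explicitly, bucket $[N]$ into $h$ groups of size $N/h$ via a hash, maintain a polynomial/fingerprint sketch per group (total space $\tO(h)$ on the Verifier side — wait, we want vcost $v$, so bucket into $v$ groups of size $N/v$), and have the Prover supply, group by group as a proof stream of length $\tO(h)$ with $h = N/v$, the data needed to verify each group's contribution to the distinct-count. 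Concretely I'd invoke \Cref{fact:subsetdisjchk} as a black box: multiset-equality of two streams over $[N]$ is reducible to a constant number of its intersection-count / subset-check instances (check $X \subseteq Y$ and $Y \subseteq X$ with both streams being the token-stream and the proof-stream respectively, plus a cardinality check), and that fact already delivers the $[h,v]$ scheme for any $h\cdot v = N$. So the real content is (a) the reduction ``all-distinct $\iff m = \#\text{distinct}$'', (b) expressing $\#\text{distinct}$ verification as a multiset-equality against a Prover-supplied witness, and (c) citing \Cref{fact:subsetdisjchk} for the witness-verification with the desired tradeoff; I expect step (b)'s soundness bookkeeping (ensuring a lying Prover who inflates the distinct count to $m$ gets caught) to be the one requiring care, handled by the fingerprint-collision union bound already inside \Cref{fact:subsetdisjchk}.
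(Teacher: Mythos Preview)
The paper states this as a preliminary fact without proof, so there is no in-paper argument to compare against directly. The standard route (implicit in \cite{ChakrabartiCMT14}, which the paper cites for the neighboring \Cref{fact:subsetdisjchk}) is to observe that an insert-only stream has all-distinct tokens iff $F_2 = F_1$: since every $f_i \ge 0$, we have $f_i^2 \ge f_i$ with equality iff $f_i \in \{0,1\}$. The Verifier computes $F_1 = m$ exactly, and $F_2$ is certified via the frequency-moments $[h,v]$-scheme of \cite{ChakrabartiCMT14}, which lays the universe out as an $[h]\times[v]$ grid, has the Verifier maintain $v$ evaluations of the low-degree extension at a random point, and has the Prover send a single degree-$O(h)$ univariate polynomial.

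Your proposal has a genuine gap at exactly the spot you flag. In your ``cleaner route,'' the Prover sends the full support $T$ together with all multiplicities $c_v$; this witness alone has length $\tO(|T|)$, which can be $\tO(N)$. Your subsequent invocation of \Cref{fact:subsetdisjchk} verifies this witness against the input stream with an $[h,v]$-scheme, but the total \hc is (witness length) + (subset-check \hc) $= \tO(N) + \tO(h)$, not $\tO(h)$. In the final paragraph you try to repair this by ``bucketing into $v$ groups'' and citing \Cref{fact:subsetdisjchk} as a black box for multiset-equality between ``the token-stream and the proof-stream,'' but this conflates two different costs: the $[h,v]$ tradeoff in \Cref{fact:subsetdisjchk} governs the \emph{additional} annotation the Prover sends to certify a relation between two streams the Verifier has already processed; it does not shrink the proof-stream itself, which in your reduction \emph{is} the witness and must still be transmitted in full. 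The missing idea is that one must avoid materialising the frequency vector as a witness altogether---which the $F_2$-based approach does by having the Prover send only $O(h)$ field elements regardless of how many distinct values actually appear.
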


We now mention an important proposition for edge and vertex connectivity.
The first is Menger's theorem which gives an equivalent definition of 
$k$-vertex-connectivity via vertex-disjoint paths.
\begin{fact}[Menger's Theorem; c.f.~{\cite[Theorem 17]{west2001introduction} 
	}]\label{fact:menger}
    Let $G$ be an undirected graph and $s$ and $t$ be two vertices. 
    Then the size of the minimum edge cut for $s$ and $t$ is equal to the maximum number of pairwise edge-independent paths from $s$ to $t$. 
    If $s,t$ are non-adjacent then the size of the minimum vertex cut for $s$ and $t$ is equal to the maximum number of vertex-disjoint paths between $s$ and $t$. \\
    Moreover, a graph is $k$-edge-connected (resp $k$-vertex-connected) iff every pair of vertices has $k$ edge-disjoint (resp vertex-disjoint) paths in between them.
\end{fact}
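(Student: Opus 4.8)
The statement is the classical Menger theorem, cited here as a known fact, so a proof proposal just recalls how it drops out of the integral max-flow/min-cut theorem. The plan is: (i) prove the two local versions (edge and vertex) by reducing each to a unit-capacity $s$-$t$ flow problem, and (ii) read off the global $k$-connectivity characterizations in the ``moreover'' line as corollaries.

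For the edge version I would replace each undirected edge $\{u,v\}$ of $G$ by the two anti-parallel arcs $(u,v)$ and $(v,u)$, each of capacity $1$, with source $s$ and sink $t$. The easy direction is weak duality: every $s$-$t$ path meets every $s$-$t$ edge cut $F$ in at least one edge, so a family of edge-disjoint paths certifies $|F| \ge (\#\text{paths})$; hence the maximum number of edge-disjoint $s$-$t$ paths is at most the minimum edge-cut size. For the converse, integrality of the capacities gives (via Ford--Fulkerson) an integral maximum flow $f$; the vertices reachable from $s$ in the residual network form a set $S$ with $t \notin S$ whose out-boundary is a saturated cut of capacity $\mathrm{val}(f)$, and a directed $s$-$t$ cut of capacity $c$ corresponds to an undirected edge cut of the same size (for $u \in S$, $v \notin S$ only the arc $(u,v)$ crosses). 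So the minimum edge cut is at most $\mathrm{val}(f)$. After cancelling any flow carried simultaneously on a pair of anti-parallel arcs, $f$ uses each undirected edge at most once, and decomposing $f$ into $s$-$t$ paths together with cycles and discarding the cycles produces $\mathrm{val}(f)$ edge-disjoint $s$-$t$ paths. Chaining the inequalities makes all three quantities equal.

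For the vertex version I would apply the standard vertex-splitting gadget: split each $v \notin \{s,t\}$ into $v_{\mathrm{in}}, v_{\mathrm{out}}$ joined by a capacity-$1$ arc, send the edges of $G$ into $v_{\mathrm{in}}$ and out of $v_{\mathrm{out}}$ with large capacity, and keep $s,t$ unchanged; the hypothesis that $s,t$ are non-adjacent is exactly what ensures a finite $s$-$t$ separator exists. Internally-vertex-disjoint $s$-$t$ paths of $G$ are in bijection with edge-disjoint $s$-$t$ paths of the gadget, and minimum $s$-$t$ vertex separators of $G$ with minimum $s$-$t$ edge cuts of the gadget (an optimal cut never severs a large-capacity arc), so the edge version applied to the gadget gives the vertex version. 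Finally, for the ``moreover'' clause: $G$ is $k$-edge-connected iff every $s$-$t$ edge cut has size $\ge k$, which by the edge version holds iff every pair of vertices has $k$ edge-disjoint paths; the vertex case is identical for non-adjacent pairs, and for an adjacent pair $\{s,t\}$ one applies the vertex version in $G - st$ (still $(k-1)$-vertex-connected) to get $k-1$ internally-disjoint $st$-avoiding paths and then uses the edge $st$ as the $k$-th path. The only step that needs more than bookkeeping is this adjacent-pair argument (together with the easy fact that deleting one edge from a $k$-connected graph on $\ge k+1$ vertices drops connectivity by at most one); everything else is a literal translation of max-flow/min-cut, which is why the paper quotes the theorem rather than reproving it. One could instead give the classical induction on $|E(G)|$, which avoids flows altogether.
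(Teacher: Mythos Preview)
The paper does not prove this statement at all: it is listed in the Preliminaries as a cited fact (with reference to West's textbook) and is used only as a black box. Your sketch via integral max-flow/min-cut and the vertex-splitting gadget is a correct standard proof, and your own closing remark already anticipates this---the paper indeed ``quotes the theorem rather than reproving it.''
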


We often use the following well-known sketch from the streaming literature.

\begin{fact}[\cite{AhnGM12}] \label{fact:spanning-forest}
There is an algorithm that given any $n$-vertex graph $G$ in a dynamic stream computes a spanning forest $T$ of $G$ with probability at least $1-n^{-10}$ in $O(n \log^3 (n))$ bits of space.
\end{fact}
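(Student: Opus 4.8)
\textbf{Proof proposal for \Cref{fact:spanning-forest}.}

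The plan is to establish the classical result that a spanning forest of an $n$-vertex graph can be computed in a dynamic stream using $O(n\log^3 n)$ bits, via the $\ell_0$-sampling / sketching approach of Ahn, Guha, and McGregor. The key idea is to represent each vertex $v$ by its incidence vector: the vector in $\{-1,0,1\}^{\binom n2}$ (or rather $\{-1,0,1\}^{\binom n2}$ indexed by ordered pairs with a sign convention) whose nonzero coordinates are exactly the edges incident to $v$, with a fixed orientation so that the edge $\{u,v\}$ contributes $+1$ to $v$'s vector and $-1$ to $u$'s, say when $u<v$. The crucial linear-algebraic fact is that for any subset $S\subseteq V$, the sum of the incidence vectors over $v\in S$ has support exactly the set of edges in the cut $(S, V\setminus S)$: internal edges cancel because their two endpoints contribute opposite signs. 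This means that if we maintain, for each vertex, an $\ell_0$-sampler sketch of its incidence vector, we can linearly combine sketches over any vertex subset and recover a (uniformly random) edge leaving that subset, provided such an edge exists.

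First I would set up the machinery: maintain $O(\log n)$ independent ``levels'' of sketches, where in level $\ell$ each vertex holds an $\ell_0$-sampler sketch (each using $O(\log^2 n)$ bits, giving the $O(n\log^3 n)$ total after summing over $n$ vertices and $O(\log n)$ levels) of its signed incidence vector. Each sampler succeeds with constant probability and, when it succeeds, returns a uniformly random nonzero coordinate of the queried vector; all sketches are linear in the stream updates, so insertions and deletions of edges are handled by adding/subtracting the corresponding $\pm1$ from both endpoints' sketches. Then I would run the Bor\r{u}vka-style algorithm: in round $\ell$ (for $\ell = 1,\dots,O(\log n)$), maintain the current partition of $V$ into connected components of the forest found so far; for each component $C$, query the level-$\ell$ sketch obtained by summing the incidence-vector sketches over all $v\in C$, which yields (if the sampler succeeds) an edge leaving $C$; add all such recovered edges to the forest and merge components accordingly. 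Since each round at least halves the number of non-isolated components (each component gets matched to at least one other via its sampled outgoing edge), after $O(\log n)$ rounds every component is a maximal connected subgraph, i.e., we have a spanning forest.

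The main steps in order: (i) define the signed incidence vectors and prove the cut-cancellation identity; (ii) recall/instantiate a single-level $\ell_0$-sampler with $O(\log^2 n)$ space and $1-1/\poly(n)$ failure probability (or constant failure probability boosted by parallel repetition), noting linearity; (iii) describe maintenance of $O(\log n)$ independent levels and the total space bound $O(n\log^3 n)$; (iv) describe the Bor\r{u}vka contraction and argue the round count is $O(\log n)$; (v) take a union bound over all $O(n\log n)$ sketch queries to conclude overall failure probability $\le n^{-10}$, adjusting the per-sampler success probability polynomially if needed. The subtle point I would be most careful about is the \emph{adaptivity} issue: the subset of vertices whose sketches we sum in round $\ell$ depends on the edges recovered in rounds $1,\dots,\ell-1$, which depend on the level-$1,\dots,\ell-1$ sketches; to keep the analysis clean we use \emph{fresh independent} sketches at each level so that the queries made at level $\ell$ are, conditioned on the algorithm's state, made to sketches independent of everything used so far — this is exactly why we pay the extra $O(\log n)$ factor for $O(\log n)$ levels rather than reusing one set of sketches. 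This independence-across-levels argument is the heart of the correctness proof; the rest (the cut identity, the space accounting, the union bound) is routine.
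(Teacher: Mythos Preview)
The paper does not prove this statement at all: it is listed in the Preliminaries as a known fact from \cite{AhnGM12} and is used as a black box thereafter. Your proposal is a correct outline of the Ahn--Guha--McGregor algorithm (signed incidence vectors, cut-cancellation identity, linear $\ell_0$-sampler sketches, Bor\r{u}vka contraction over $O(\log n)$ independent levels to avoid adaptivity issues), so there is nothing to compare against --- you have supplied exactly the argument the paper chose to cite rather than reproduce.
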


We give a new way to detect if two multi-sets are identical using $\ell_0$-samplers.
The idea is to insert elements of the first multi-set into an $\ell_0$-sampler and delete elements of the second multi-set from the sampler, and at the end, check if the sampler is empty.
\begin{proposition}[Equality-Detection Scheme]\label{fact:equaldet}
    Given an insert-only stream of elements of two multi-sets $A, B$ in any order (interleaved arbitrarily) from the universe $[N]$, we can check if the two sets are identical with probability $1-1/\poly(N)$ in space $\polylog(N)$. 
\end{proposition}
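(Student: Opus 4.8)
The plan is to reduce the equality-detection task to a standard $\ell_0$-sampler, since the sketch in a strong (or even standard polynomial-frequency) $\ell_0$-sampler returns a uniformly random element of the support of the frequency vector together with a flag reporting whether that support is empty. First I would set up a single $\ell_0$-sampler over the universe $[N]$, initialized to the zero vector. When an element $a \in A$ arrives in the stream, I feed it to the sampler as an insertion (i.e.\ a $+1$ update to coordinate $a$); when an element $b \in B$ arrives, I feed it as a deletion (a $-1$ update to coordinate $b$). Note that even though the combined stream is insert-only in the sense that each of $A$ and $B$ is given by insertions, the updates I feed the sampler are of both signs, so I need a turnstile-robust $\ell_0$-sampler; the samplers we build in \Cref{sec:SGT-streaming-algs} (see \Cref{lem:L0}) handle exactly this, and in fact tolerate large intermediate frequencies, which is all that is needed here.

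The correctness argument is then immediate from the definition of the frequency vector. At the end of the stream, the coordinate corresponding to any element $e \in [N]$ holds $\freq(e) = (\text{multiplicity of } e \text{ in } A) - (\text{multiplicity of } e \text{ in } B)$. Hence $\freq \equiv \mathbf 0$ if and only if every element occurs the same number of times in $A$ as in $B$, i.e.\ if and only if $A = B$ as multi-sets. So I would query the sampler: if it reports empty support, I output ``$A = B$''; otherwise I output ``$A \neq B$''. The $\ell_0$-sampler's support-emptiness test is correct with probability $1 - 1/\poly(N)$, and this is the only source of error, so the scheme succeeds with probability $1 - 1/\poly(N)$. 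The space bound follows since a single (strong) $\ell_0$-sampler over $[N]$ uses $\polylog(N)$ bits.

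The only subtlety — and the step I would be most careful about — is making sure the $\ell_0$-sampler's guarantee is being invoked in the regime where it applies: the intermediate frequencies can grow as large as the length of the stream (potentially $\poly(N)$ or larger depending on the multiplicities in $A,B$), and the updates are two-signed. This is precisely why we invoke the \emph{strong} $\ell_0$-sampler rather than a textbook one; with that tool in hand the argument is routine. One should also observe that we do not actually need the sampled element itself, only the emptiness indicator, so even a stripped-down ``support-nonemptiness detector'' (our decision counter machinery) would suffice, but using the full sampler keeps the statement clean. I would conclude by noting the $\polylog(N)$ space bound and the $1 - 1/\poly(N)$ success probability, matching the statement of the proposition.
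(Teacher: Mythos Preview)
Your proposal is correct and matches the paper's approach exactly: the paper states just before the proposition that the idea is to insert elements of the first multi-set into an $\ell_0$-sampler, delete elements of the second multi-set, and check emptiness at the end. Your write-up is in fact more detailed than the paper's, which offers only that one-sentence sketch; your observations about needing the strong sampler for two-signed updates and large intermediate frequencies, and that only the emptiness indicator is actually needed, are apt elaborations.
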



\section{OMA Lower Bound for Equals-Index and its Implications}\label{sec:eqinx-LB}

\subsection{The Equals-Index Problem}

We formally define the $\eqidx_{p,q}$ communication game between two players Alice and Bob as follows. 

Let $p$ and $q$ be arbitrary integers. Alice gets $p$ strings $x_1,\ldots, x_p$ such that $x_i\in \{0,1\}^q$ for each $i\in [p]$. Bob gets a string $y\in \{0,1\}^q$ and an index $j\in [p]$. The output is $1$ if $y=x_j$, and $0$ otherwise. 

First, let us show tight bounds on its one-way (Alice $\to$ Bob) randomized communication complexity.
\begin{lemma}\label{lem:onewayeqidx}
    $\Rone(\eqidx_{p,q}) = \Theta(p+\log q)$
\end{lemma}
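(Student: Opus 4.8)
The claim is a tight bound $\Rone(\eqidx_{p,q}) = \Theta(p+\log q)$, so I will prove an upper bound and a matching lower bound separately.

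\textbf{Upper bound.} The plan is to combine an equality sketch with standard indexing. Alice wants to help Bob check whether $y = x_j$ without knowing $j$. Using a public-coin equality protocol (\Cref{fact:eqsk}), Alice can produce a constant-size fingerprint for each $x_i$; sending all $p$ fingerprints costs $O(p)$ bits. Then Bob, who knows $y$ and $j$, computes the fingerprint of $y$ under the same public randomness and compares it against the $j$th fingerprint Alice sent. To control the error probability: a single equality sketch has constant error, and since Bob only ever queries \emph{one} of the $p$ sketches (the $j$th one), there is no need for a union bound over all $p$ comparisons — it suffices to amplify that one comparison. Concretely, Alice sends $O(1)$ independent fingerprints per string, i.e.\ $O(p)$ bits total, driving the error of the relevant comparison below $1/3$. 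This gives a public-coin protocol of cost $O(p)$. To get a private-coin protocol, apply Newman's theorem (\Cref{fact:Newman}): the input size here is $N = p\cdot q$ for Alice and $q + \log p$ for Bob, so the conversion adds $O(\log(pq)) = O(\log p + \log q)$ bits. Since $p = \Omega(\log p)$ always, the total is $O(p + \log q)$, as desired. (One subtlety I should double-check: whether $\log p$ can dominate; but $\log p \le p$, so it is absorbed into the $O(p)$ term.)

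\textbf{Lower bound.} I need both $\Rone(\eqidx_{p,q}) = \Omega(p)$ and $\Rone(\eqidx_{p,q}) = \Omega(\log q)$. For the $\Omega(p)$ bound, reduce from $\idx_p$ (\Cref{fact:idxlb} gives $\Rone(\idx_p) = \Omega(p)$): given an Index instance with Alice holding $z \in \{0,1\}^p$ and Bob holding index $j$, Alice sets $x_i$ to be the all-zeros string if $z[i] = 0$ and some fixed nonzero string (say $0^{q-1}1$) if $z[i]=1$; Bob sets $y = 0^q$. Then $\eqidx$ outputs $1$ iff $x_j = 0^q$ iff $z[j] = 0$, so Bob recovers $z[j]$. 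For the $\Omega(\log q)$ bound, reduce from $\eq_{2^q}$, i.e.\ equality on $q$-bit strings (\Cref{fact:eqlb} gives $\Rone(\eq_{2^q}) = \Omega(q) \supseteq \Omega(\log q)$ — actually this gives the stronger $\Omega(q)$, but we only need $\Omega(\log q)$ for tightness since $q$ can be exponential... wait, $\log q$ vs $q$): here I should be careful. We want $\Omega(\log q)$, and equality on $q$-bit strings has complexity $\Omega(q) \gg \log q$, which is more than enough — just take $p = 1$ and the problem \emph{is} equality on $q$-bit strings. So trivially $\Rone(\eqidx_{1,q}) = \Omega(q) = \Omega(\log q)$. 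Hmm, but for general $p$ we want the bound to hold too; embedding a $p=1$ instance into general $p$ by padding with dummy strings is immediate. Combining, $\Rone(\eqidx_{p,q}) = \Omega(\max(p, \log q)) = \Omega(p + \log q)$.

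\textbf{Main obstacle.} The only genuinely delicate point is the Newman's-theorem step in the upper bound: I must make sure the additive overhead is $O(\log q)$ and not something larger like $O(q)$ — this works because Newman's overhead is logarithmic in the \emph{domain sizes} $|\cX| = 2^{pq}$ and $|\cY| = 2^q \cdot p$, giving $O(\log(2^{pq}) + \log(p\,2^q)) = O(pq)$... no wait, $\log(2^{pq}) = pq$, which is way too big. Let me reconsider: Newman's theorem overhead is $O(\log\log|\cX| + \log\log|\cY|)$ in the refined version, or more precisely $O(\log N)$ where $N$ is the \emph{bit-length} of the input, not the domain size — the statement in \Cref{fact:Newman} says communication cost $O(C + \log(N+M))$ where $f:\{0,1\}^N \times \{0,1\}^M \to \{0,1\}$, so $N = pq$ and $M = q + \log p$, giving overhead $O(\log(pq)) = O(\log p + \log q) = O(p + \log q)$. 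Good — so the overhead is fine after all, and this is exactly the step to present carefully. I expect the write-up to be short; the reduction lower bounds are routine, and the upper bound is a one-paragraph argument once the Newman accounting is stated correctly.
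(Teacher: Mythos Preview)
Your proposal is correct and follows essentially the same approach as the paper: public-coin equality sketches for each $x_i$ give an $O(p)$ protocol, Newman's theorem adds $O(\log(pq)) = O(p + \log q)$ overhead, and the lower bounds come from embedding $\idx_p$ (for $\Omega(p)$) and $\eq_q$ (for $\Omega(\log q)$). One small correction to your stream-of-consciousness: private-coin equality on $q$-bit strings has complexity $\Theta(\log q)$, not $\Omega(q)$ --- so the bound you actually need and use, $\Omega(\log q)$, is exactly tight, not a weakening.
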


\begin{proof}
    First consider the following protocol using public randomness. For each $i\in [p]$, Alice sends Bob an $O(1)$-size equality sketch (\Cref{fact:eqsk}) for $x_i$. Bob uses only the $j$th sketch to check if $x_j=y$. Thus, $\Rpub(\eqidx_{p,q})=O(p)$. By Newman's theorem (\Cref{fact:Newman}),   $\Rone(\eqidx_{p,q})=O(p+\log (pq))=O(p+\log q)$. 
    
    The lower bound  $\Rone(\eqidx_{p,q}) = \Omega(p+\log q)$ easily follows from the facts that $\eqidx_{p,1}\equiv\idx_p$ and $\eqidx_{1,q}\equiv\eq_q$. Therefore, we have
    
    $\Rone(\eqidx_{p,q})\geq \Rone(\eqidx_{p,1}) = \Rone(\idx_{p}) = \Omega(p)$ (\Cref{fact:idxlb}). 
    
    Again, $\Rone(\eqidx_{p,q})\geq \Rone(\eqidx_{1,q}) \geq \Omega(\log q)$ (\Cref{fact:eqlb}).  
\end{proof}


Now we prove an OMA lower bound for \eqidx.

\begin{lemma}\label{lem:eqidxmain}
    Any (h,v)-OMA-protocol solving $\eqidx_{p,q}$ must have $$(h+q)\cdot v \geq \Omega(pq)$$  
\end{lemma}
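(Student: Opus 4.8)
The plan is to reduce the classical Index problem $\idx_{pq}$ to $\eqidx_{p,q}$ in the OMA model, and then invoke the general OMA lower bound (\Cref{fact:genlb}) together with the known one-way complexity $\Rone(\idx_{pq}) = \Omega(pq)$ (\Cref{fact:idxlb}). Concretely, I would start from the observation already sketched in the technical overview: an instance of $\idx_{pq}$ consists of a string $x \in \{0,1\}^{pq}$ held by Alice and an index $k \in [pq]$ held by Bob. Partition $x$ into $p$ consecutive blocks $x_1,\dots,x_p$, each of length $q$. The index $k$ determines a block index $j \in [p]$ (which block $k$ falls in) and an in-block offset $r \in [q]$. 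The key point is that $j$ is known to Bob but the blocks $x_1,\dots,x_p$ are known only to Alice; this is exactly the input distribution of $\eqidx_{p,q}$ once we let Bob's ``$y$'' be determined appropriately.

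The reduction itself would go as follows. Suppose we had an $(h,v)$-OMA-protocol $\Prot$ for $\eqidx_{p,q}$. Build an OMA protocol for $\idx_{pq}$ in which Merlin first announces the purported value $b = x[k] \in \{0,1\}$ of the target bit (one extra bit of help). Given Merlin's claimed bit $b$, Bob constructs the string $y \in \{0,1\}^q$ that agrees with the block $x_j$ exactly when $x[k] = b$: he needs a gadget so that ``$y = x_j$'' is equivalent to ``$x[k]=b$''. This requires a bit of care — the naive choice ``$y$ equals $x_j$ with its $r$th coordinate flipped if $b=0$'' does not work because Bob does not know the other coordinates of $x_j$. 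The clean fix, which I would use, is to have Alice apply an extra encoding before feeding blocks into the $\eqidx$ protocol: replace each block $x_i \in \{0,1\}^q$ by an expanded block (say, each coordinate repeated, or an error-correcting-style encoding) so that a single-coordinate query becomes decidable, OR — more simply — observe that we do not actually need equality to a fully-known $y$: we run $\eqidx$ not on $x_j$ vs. $y$ but design $y$ so that Bob can simulate without knowing $x_j$'s other bits. The cleanest route: have Merlin send, in addition, the \emph{entire claimed block} $\tilde{x}_j \in \{0,1\}^q$ (this costs $q$ more bits of help, hence the ``$+q$'' in the statement); Bob then (i) checks $\tilde{x}_j[r] = b$ himself, and (ii) runs $\Prot$ with Bob's string set to $y := \tilde{x}_j$ and index $j$ to verify that Alice's true $j$th block really equals $\tilde{x}_j$. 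If both checks pass, Bob outputs $b$. Completeness: an honest Merlin sends the true $x_j$ and $b=x[k]$, so both checks pass. Soundness: if the $\eqidx$ protocol accepts, then whp $x_j = \tilde{x}_j$, and since $\tilde{x}_j[r]=b$ was checked directly, $b = x_j[r] = x[k]$; so Bob never outputs the wrong answer (up to the soundness error of $\Prot$).

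Accounting for costs: the resulting OMA protocol for $\idx_{pq}$ has help cost $h + q + O(1) = O(h+q)$ (Merlin sends the $\eqidx$ help message, the block $\tilde x_j$, and the bit $b$) and verification cost $v$ (Alice just runs her part of $\Prot$; the extra checks are Bob-side and free). Completeness and soundness errors are those of $\Prot$, hence at most $1/3$ each, which can be made small by the usual constant-factor repetition if needed — or one just absorbs the $O(1)$ into constants. Now apply \Cref{fact:genlb}: any $(h',v')$-OMA-protocol for $\idx_{pq}$ must satisfy $h' \cdot v' = \Omega(\Rone(\idx_{pq})) = \Omega(pq)$. Substituting $h' = O(h+q)$ and $v' = v$ gives $(h+q)\cdot v = \Omega(pq)$, which is exactly the claim.

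The main obstacle I anticipate is the gadget making ``block equality'' encode ``single-bit query,'' i.e., ensuring Bob can construct a valid $\eqidx$-input without knowing Alice's block and without Merlin being able to cheat on the uncheckable coordinates. The resolution above — having Merlin commit to the \emph{whole} block $\tilde x_j$, which Bob can both sanity-check at coordinate $r$ and then verify wholesale via the $\eqidx$ protocol — is what forces the ``$h+q$'' rather than ``$h$'' on the left-hand side, and it is precisely why the lemma is stated with that extra additive $q$. A secondary point to get right is that the soundness of $\Prot$ must be invoked with respect to the \emph{specific} $y = \tilde x_j$ that Bob chose, which is fine since the soundness guarantee of an OMA protocol quantifies over all Merlin strategies for a fixed input pair, and here the input pair $(x_1,\dots,x_p;\, \tilde x_j, j)$ is a legitimate $\eqidx$ instance whenever $x_j \neq \tilde x_j$ (the only case where soundness matters).
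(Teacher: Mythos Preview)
Your proposal is correct and follows essentially the same reduction as the paper: starting from an $(h,v)$-OMA-protocol for $\eqidx_{p,q}$, have Merlin additionally send the purported block $\tilde x_j \in \{0,1\}^q$, let Bob set $y := \tilde x_j$, verify $y = x_j$ via the $\eqidx$ protocol, and read off the desired bit from $\tilde x_j$; this yields an $(h+q,v)$-OMA-protocol for $\idx_{pq}$, to which the general lower bound $h'\cdot v' = \Omega(\Rone(\idx_{pq})) = \Omega(pq)$ is applied. Your extra bit $b$ is harmless but redundant (Bob can simply output $\tilde x_j[r]$ directly once the equality check passes), and your initial detour through the ``Merlin sends only the bit'' idea can be dropped from a clean write-up.
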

    
\begin{proof}
    We show that given any $(h,v)$-OMA-protocol $\Pi$ for $\eqidx_{p,q}$, we can design an $(h+q,v)$-OMA-protocol $\Pi'$ for $\idx_{pq}$. The lower bound of $(h+q)\cdot v\geq \Omega(pq)$ then immediately follows from the OMA lower bound for $\idx_{pq}$ (\Cref{fact:idxlb}).

    Suppose Alice and Bob have inputs $x\in\{0,1\}^{pq}$ and $k\in [pq]$ in the $\idx_{pq}$ problem. Then the protocol $\Pi'$ is as follows. Alice partitions $x$ into $p$ chunks $x_1,\ldots,x_p$, each of size $q$. Bob sets $j=\ceil{k/p}$. Merlin sends $y\in \{0,1\}^q$ to Bob and claims that it equals $x_j$. The players can now interpret $(x_1,\ldots,x_p)$ and $(y,j)$ as inputs to the $\eqidx_{p,q}$ problem, and run the protocol $\Pi$ to verify that $y$ is indeed equal to $x_j$. If the check passes, then Bob knows $x[k]$ since it lies in $x_j=y$. If not, he outputs $\bot$, i.e., rejects the proof. 
    
    We analyze the completeness and soundness errors of $\Pi'$. If Merlin is honest, then $y$ is indeed $x_j$, and the protocol fails only if Bob rejects because the check in $\Pi$ doesn't go through. The probability of this is exactly the completeness error of $\Pi$. Hence, $\Pi'$ has the same completeness error as $\Pi$. Again, if Merlin is dishonest, then the protocol fails only when Bob outputs the incorrect bit-value for $x[k]$. This happens only when $y\neq x_j$, but the check passes in $\Pi$. This has the same probability as the soundness error of $\Pi$. Therefore, $\Pi'$ also has the same soundness error as $\Pi$. Thus, by definition, since $\Pi$ solves $\eqidx_{p,q}$, the protocol $\Pi'$ solves $\idx_p$.
    
    Finally, we analyze the cost of $\Pi'$. Merlin sends Bob $y$ as well as his message due to $\Pi$. Thus, hcost$(\Pi') = O(h+q)$. Alice sends Bob only her message due to $\Pi$, implying that vcost$(\Pi')=v$. Thus, $\Pi'$ is an $(h+q,v)$-OMA-protocol as claimed. This completes the proof.  
\end{proof}

Our main result on the non-trivial OMA complexity $\OMAnt(\eqidx_{p,q})$ follows from the above two lemmas.

\mainthmeqidx*

\begin{proof}
    By \Cref{lem:onewayeqidx}, we have $\Rone(\eqidx)=\Theta(p)$ for this setting of $p$ and $q$. Therefore, any non-trivial OMA protocol $\Pi$ for the problem must have $\vc(\Pi)=o(p)$. But by \Cref{lem:eqidxmain}, if $\hc(\Pi)= O(q)$, then $\vc(\Pi)=\Omega(p)$. Therefore, $\hc(\Pi)$ must be $\omega(q)$, which means $\tc(\Pi)=\omega(q)$. Then, by definition, $\OMAnt(\eqidx) = \omega(q)$.   
\end{proof}

\expgapcor*

\begin{proof}
    Set $f=\eqidx_{p, q}$ with $p=\log n$ and $q=n/\log n$. Indeed, $p\geq \log q$. Hence, by \Cref{thm:eqidx-omant}, we have $\OMAnt(f)=\omega(q) = \exp(\Omega(p))$. 
    
    By \Cref{lem:onewayeqidx}, we have $\Rone(f)=\Theta(p)$. Also, by definition $\OMA(f)\leq \Rone(f)=\Theta(p)$. Therefore, the claimed result holds.
\end{proof}

\omegartncor*

\begin{proof}
    Set $f=\eqidx_{p, q}$ with $p=n/C$ and $q=C$. Since $q=C\leq n/\log n$, we have $$p=n/C \geq \log n > \log q$$  
    By \Cref{thm:eqidx-omant}, $\OMAnt(f)=\omega(C)$.  
\end{proof}

We also give a tight bound on the (standard) OMA complexity of $\eqidx_{p,q}$.

\begin{theorem}\label{thm:eqidx-oma}
     For any $p,q$ with $p\geq \log q$, we have $\OMA(\eqidx_{p,q}) = \Theta(\min\{\sqrt{pq}, p\})$. 
\end{theorem}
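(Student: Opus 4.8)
The plan is to establish the tight bound $\OMA(\eqidx_{p,q}) = \Theta(\min\{\sqrt{pq}, p\})$ by proving matching upper and lower bounds, splitting into the regimes determined by which of $\sqrt{pq}$ and $p$ is smaller (equivalently, whether $q \geq p$ or $q < p$).

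For the \emph{upper bound}, I would give an OMA protocol parameterized by a ``block size'' $b$ that divides the $p$ strings into $p/b$ groups of $b$ strings each. Conceptually, this mimics the Aaronson--Wigderson-style tradeoff used for $\idx$ in \cite{ChakrabartiCMT14}: Merlin sends Bob the purported values of \emph{all} strings in the group containing index $j$, costing $O(bq)$ bits (plus lower-order terms). Alice then needs to help Bob verify that Merlin's claimed block is consistent with her input; she sends, for each of the $p/b$ groups, a short (e.g.\ $O(1)$- or $O(\log q)$-size) fingerprint of the concatenation of the $b$ strings in that group, costing $\tilde O(p/b)$ bits. Bob checks Merlin's claimed block against the relevant fingerprint and, if consistent, reads off whether $y = x_j$. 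Optimizing $bq \approx p/b$ gives $b \approx \sqrt{p/q}$ and total cost $\tilde O(\sqrt{pq})$; but since $b$ must be at least $1$, when $q \geq p$ we instead take $b=1$, recovering the trivial protocol where Alice alone fingerprints each $x_i$ at cost $O(p)$ and Merlin sends nothing (this is exactly $\Rpub(\eqidx)$ from \Cref{lem:onewayeqidx}, and by Newman it is $O(p)$ since $p \geq \log q$). Combining, the protocol achieves $\tilde O(\min\{\sqrt{pq}, p\})$; I would need to double check that polylogarithmic factors can be shaved to get a clean $\Theta$, possibly by using that the condition $p \geq \log q$ controls the additive $\log q$ terms.

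For the \emph{lower bound}, there are two parts. The bound $\OMA(\eqidx_{p,q}) = \Omega(\sqrt{pq})$ follows directly from \Cref{lem:eqidxmain}: any $(h,v)$-protocol has $(h+q)v \geq \Omega(pq)$, so $\tc = h + v \geq \max\{h,v\}$, and since $(h+q)v \geq \Omega(pq)$ forces either $v \geq \Omega(\sqrt{pq})$ or $h + q \geq \Omega(\sqrt{pq})$ — more carefully, if $h \leq \sqrt{pq}$ and $q \leq \sqrt{pq}$ (the latter holds exactly when $q \leq p$) then $h + q \leq 2\sqrt{pq}$ so $v \geq \Omega(\sqrt{pq})$, giving $\tc \geq \Omega(\sqrt{pq})$ in the regime $q \leq p$. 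The bound $\OMA(\eqidx_{p,q}) = \Omega(p)$ when $q$ is large (so that $\min\{\sqrt{pq},p\}=p$) does \emph{not} follow from \Cref{fact:genlb}, which only gives $\Omega(\sqrt{p})$; instead I would use \Cref{lem:eqidxmain} again: $(h+q)v \geq \Omega(pq)$ means that to get $v = o(p)$ one needs $h + q = \omega(p)$; combined with the regime assumption this needs a little care, but in fact whenever $v \leq p/2$ we get $h + q \geq \Omega(pq)/v \geq \Omega(pq/p) = \Omega(q)$, which only recovers $h = \Omega(q) - q$ and is vacuous. So the $\Omega(p)$ lower bound seems to require a genuinely different argument — this is the main obstacle.

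\textbf{The main obstacle} is precisely the lower bound $\OMA(\eqidx_{p,q}) \geq \Omega(p)$ in the regime $q \gtrsim p$. I expect the right approach is a direct information-theoretic / fooling-set argument on $\eqidx$ itself rather than a reduction: intuitively, even with Merlin's help, Alice must convey roughly one bit of ``verified'' information per string $x_i$ because Merlin's single message $\cH$ is fixed before Bob's randomness and Bob's index $j$ ranges over all of $[p]$, so a dishonest Merlin can attack whichever coordinate Alice's message compresses. One clean route: reduce from $\idx_p$ but in a way that does \emph{not} blow up $q$ — set all of Alice's $\eqidx$ strings to be in $\{0,1\}^q$ but use only a single distinguished bit-pattern per string (e.g.\ $x_i \in \{0^q, 0^{q-1}1\}$ encoding the $i$th bit of an $\idx_p$ instance), and have Bob's $y$ be the all-zero string; then $\eqidx_{p,q}$ on these inputs computes (the complement of) $\idx_p$, and the OMA lower bound $\Omega(p)$ for $\idx_p$... but wait, the best OMA lower bound for $\idx_p$ via \Cref{fact:genlb} is only $\Omega(\sqrt p)$, so this does not suffice either. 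The resolution must be that $\eqidx_{p,q}$ with large $q$ is genuinely harder than $\idx_p$, and I would prove $\Omega(p)$ by a careful soundness argument: fix Alice's randomness distribution; since $\hc + \vc < p$ would let Merlin and Alice's combined transcript take fewer than $2^p$ values, a counting/probabilistic argument shows some ``hard'' input $(x_1,\dots,x_p, y, j)$ on which either completeness or soundness fails. Making this rigorous — getting the quantifier order right (Merlin commits first, then Alice's randomness, then Bob decides) and handling the continuum of possible $y$ values — is where the real work lies, and I would model it on the soundness lower-bound technique of \cite{ChakrabartiCGT14} for sparse index, which faced the analogous issue.
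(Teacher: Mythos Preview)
Your upper bound is essentially the paper's: block the concatenated string and have Merlin send the relevant block while Alice fingerprints all blocks. The paper parameterizes at the bit level (split the $pq$-bit string into $\sqrt{pq}$ chunks of $\sqrt{pq}$ bits) rather than at the string level, but this is cosmetic; the polylog worry is also a non-issue once you use $O(1)$-bit public-coin equality sketches and then Newman.

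Your lower bound, however, contains a genuine gap --- and it is self-inflicted. You split the analysis on the relationship between $p$ and $q$ and then, in the regime $q \gtrsim p$, try to extract a lower bound on $h$ alone from $(h+q)v \geq \Omega(pq)$, correctly observe that this is vacuous, and conclude that a brand-new argument is needed. But you are lower-bounding $h+v$, not $h$. The paper's move is to split instead on whether $h > q$ or $h \leq q$, \emph{regardless} of how $p$ and $q$ compare:
\begin{itemize}
\item If $h > q$, then $h + q < 2h$, so the constraint becomes $hv \geq \Omega(pq)$, whence $h + v \geq \Omega(\sqrt{pq})$.
\item If $h \leq q$, then $h + q \leq 2q$, so the constraint becomes $qv \geq \Omega(pq)$, i.e.\ $v \geq \Omega(p)$, whence $h + v \geq \Omega(p)$.
\end{itemize}
Either way $h + v \geq \Omega(\min\{\sqrt{pq},p\})$, and you are done. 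Your ``main obstacle'' does not exist: \Cref{lem:eqidxmain} already gives the full lower bound with no additional information-theoretic work. The elaborate plan involving fooling sets, reductions from $\idx_p$, and the \cite{ChakrabartiCGT14} soundness technique is unnecessary.
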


\begin{proof}
    First, we prove the lower bound. Given an $(h,v)$-OMA-protocol for $\eqidx_{p,q}$, first consider the case that $h>q$. Then, by
    \Cref{lem:eqidxmain}, we have $h\cdot v = \Omega(pq)$, which 
    means $h+v=\Omega(\sqrt{pq})$. Otherwise, i.e., if $h\leq q$, then by the same lemma,     we have $q\cdot v=\Omega(pq)$, which means $v=\Omega(p)$, 
    and hence, $h+v = \Omega(p)$. Therefore, we conclude 
    $$\OMA(\eqidx_{p,q}) = \Omega(\min\{\sqrt{pq}, p\}).$$

    For the upper bound, we first show that if $q\leq p$, then we can design a $(\sqrt{pq}, \sqrt{pq})$-OMA-protocol for $\eqidx_{p,q}$. Alice combines $x_1,\ldots,x_p$ into a single $pq$-bit string $x$, and then again splits $x$ into $\sqrt{pq}$ chunks $z_1,\ldots,z_{\sqrt{pq}}$, each of which has $\sqrt{pq}$ bits. Observe that since $q\leq p$, we have $q\leq \sqrt{pq}$, and hence, $x_j$ lies entirely in (the concatenation of) at most two consecutive chunks $z_k \circ z_{k+1}$. Merlin sends Bob $z_k$ and $z_{k+1}$ that he claims are identical to $z_k'$ and $z_{k+1}'$. This takes $O(\sqrt{pq})$ bits. If they are indeed as claimed, Bob can determine whether $y=x_j$. The problem now reduces to checking whether $z_k'=z_k$ and $z_{k+1}'=z_{k+1}$ without Merlin. To this end, it suffices to show that there exists a one-way randomized private-coin protocol of cost $O(\sqrt{pq})$. The protocol is very similar to the one mentioned in the proof of \Cref{lem:onewayeqidx}. An $O(\sqrt{pq})$ public-coin protocol can be obtained by having Alice send an $O(1)$-size equality sketch (\Cref{fact:eqsk}) for each $z_i$, while Bob uses only $z_k$ and $z_{k+1}$ for the check. Newman's theorem (\Cref{fact:Newman}) now gives a private-coin protocol of cost $O(\sqrt{pq} +\log (pq)) = O(\sqrt{pq})$. Therefore, we obtain an OMA protocol of total cost $O(\sqrt{pq})$ for $q\leq p$. 
    
    Finally, note that $\OMA(\eqidx_{p,q})$ is always trivially upper bounded by $\Rone(\eqidx_{p,q})$, which, by \Cref{lem:onewayeqidx}, is $O(p)$ for this setting of $p\geq \log q$. Hence, we can conclude that whenever $p\geq \log q$, 

    $$\OMA(\eqidx_{p,q}) \leq O(\min\{\sqrt{pq},p\}). \qedhere$$
\end{proof}

\subsection{Implications of the Equals-Index Lower Bounds }

\subsubsection{Sparse Index}

Here, we consider the $\spidx_{m,n}$ problem. This is the $\idx_n$ problem with the promise that Alice's string has hamming weight $m$. First, we prove a tight bound on its randomized one-way communication complexity. This tells us the benchmark for $\vc$ of a non-trivial OMA protocol for the problem. It also improves upon the previously best-known upper bound of $O(m\log m)$ proven by \cite{ChakrabartiCMT14}.  

\begin{lemma}\label{lem:onewayspidx}
$\Rone(\spidx_{m,n})=\Theta(m+\log\log n)$
\end{lemma}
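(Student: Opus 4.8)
The plan is to establish $\Rone(\spidx_{m,n}) = \Theta(m + \log\log n)$ by proving matching upper and lower bounds. For the upper bound, I would follow the hashing strategy of \cite{ChakrabartiCGT14} but tighten it in two ways. First, Alice picks a pairwise independent hash function $h : [n] \to [3m]$ (rather than $[m^3]$) and, viewing her string as having support $S$ with $|S| \le m$, she wants to communicate the image set $h(S) \subseteq [3m]$ so that Bob, on index $j$, can test whether $h(j) \in h(S)$. By pairwise independence, for a fixed $j \notin S$ the probability that $h(j)$ collides with any particular $h(i)$, $i \in S$, is $1/3m$, so a union bound over the at most $m$ elements of $S$ gives error at most $1/3$; completeness is perfect. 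Second, the set $h(S)$ can be sent as a characteristic vector of length $3m$, costing $O(m)$ bits, rather than listing $m$ values at $O(\log m)$ bits each. The remaining cost is describing $h$, which naively is $\Omega(n)$; to avoid this, I would run the protocol with \emph{public} randomness (paying only $O(m)$ for the characteristic vector and nothing for $h$) and then invoke Newman's theorem (\Cref{fact:Newman}). Since the input size of $\spidx_{m,n}$ is $O\!\left(\binom{n}{m}\right)$ on Alice's side and $O(\log n)$ on Bob's side, Newman's theorem adds only $O\!\left(\log\log\binom{n}{m} + \log\log n\right) = O(m + \log\log n)$ bits, yielding a private-coin protocol of cost $O(m + \log\log n)$.

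For the lower bound, I would exhibit two reductions. To get $\Rone(\spidx_{m,n}) = \Omega(m)$, embed an instance of $\idx_m$ inside a sparse index instance: place Alice's $m$-bit string into $m$ designated coordinates of an $n$-bit string (so the hamming weight is at most $m$, as required by the promise) and have Bob query the coordinate corresponding to his index; this is a valid reduction so $\Rone(\spidx_{m,n}) \ge \Rone(\idx_m) = \Omega(m)$ by \Cref{fact:idxlb}. To get $\Rone(\spidx_{m,n}) = \Omega(\log\log n)$, reduce from $\eq_{\log n}$ (or equivalently use the fact that distinguishing which of $n$ positions carries the single $1$ already requires $\Omega(\log\log n)$ by the equality lower bound \Cref{fact:eqlb}): with $m \ge 1$ we can encode a $\log n$-bit string of Alice as the location of a single $1$ among $n$ positions, and Bob's index together with a comparison recovers an equality test on inputs of size $\log n$, so $\Rone(\spidx_{m,n}) \ge \Rone(\eq_{\log n}) = \Omega(\log\log n)$. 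Combining, $\Rone(\spidx_{m,n}) = \Omega(m + \log\log n)$.

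Putting the two directions together gives $\Rone(\spidx_{m,n}) = \Theta(m + \log\log n)$. The main obstacle is the upper bound's handling of the hash-function description: the ``obvious'' private-coin protocol would pay $\Omega(\log n)$ (or worse) to name $h$, and the key insight is that the effective input size of $\spidx_{m,n}$ is $\binom{n}{m}$ rather than $n$, so Newman's theorem collapses the public-to-private conversion cost to $O(\log\log\binom{n}{m}) = O(m\log(n/m)/\log\ldots)$ — more precisely $O(m + \log\log n)$ after bounding $\log\binom{n}{m} \le m\log n$ and taking another logarithm. I would also need to double-check the error-probability bookkeeping: the base public-coin protocol has one-sided error $\le 1/3$, and Newman's theorem adds a further $\delta$; choosing the range of $h$ to be $[Cm]$ for a slightly larger constant $C$ (say $C = 6$) leaves enough room so the total error stays below $1/3$. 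The lower bound reductions are routine embeddings and should present no difficulty.
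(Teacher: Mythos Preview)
Your proposal is correct and follows essentially the same approach as the paper: a public-coin protocol where Alice hashes into $[3m]$ and sends the $3m$-bit characteristic vector of the image (giving $O(m)$ cost), followed by Newman's theorem applied with the key observation that Alice's input domain has size $\binom{n}{m}$ rather than $2^n$; the lower bounds via $\idx_m$ and $\spidx_{1,n}\equiv\eq_{\log n}$ are also identical to the paper's. Your hedging about the error bookkeeping (possibly using $[6m]$ instead of $[3m]$) is unnecessary here since the paper is content with error exactly $1/3$ from the base protocol, but it does no harm.
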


\begin{proof}
First, we describe a public-coin protocol for $\spidx_{m,n}$ with cost $O(m)$. Using public randomness, Alice and Bob sample a function $h: [n] \to [3m]$ from a pairwise independent hash family $\cH$. Alice sends Bob the $3m$-size characteristic vector of the set $S=\{h(i) : x[i] =1, i\in [n]\}$. Bob checks whether $h(j)\in S$. If yes, he announces $x[j]=1$, and otherwise he announces $x[j]=0$. Observe that if $x[j]$ is indeed $1$, then $h(j)\in S$, and Bob is always correct. Otherwise, if $x[j]$ is actually $0$, Bob errs only when $h(j) = h(i)$ for some $i$ with $x[i]=1$. Fix such an $i$. By pairwise independence of $\cH$, we have
$$\Pr_{h\in \cH}[h(j)=h(i)] = \frac{1}{3m}.$$

By union bound over the $m$ values $i\in Y$, we get
$$\Pr_{h\in \cH}[\exists i\in Y: h(j)=h(i)] = \frac13.$$

Thus, the error probability of the protocol is $1/3$. Therefore, we get $\Ronepub(\spidx)=O(m)$.

To get a private coin protocol, we appeal to Newman's theorem (\Cref{fact:Newman}) once again. To obtain a tight bound, we need to have an appropriate interpretation of the problem. Observe that since the number of possible inputs to Alice is $\binom{n}{m}$, we can interpret the domain of $\spidx_{m,n}$ as $\{0,1\}^{\log \binom{n}{m}} \times \{0,1\}^{\log n}$. Thus, by \Cref{fact:Newman}, we get  

$$\Rone(\spidx_{m,n}) = O\left(m + \log \log \binom{n}{m}\right) = O(m + \log m + \log\log n) = O(m + \log\log n)$$

For the lower bound part, an  $\Omega(m)$ lower bound follows trivially by reduction from $\idx_m$. Again, an $\Omega(\log \log n)$ lower bound follows since $\Rone(\spidx_{1,n}) = \Omega(\log \log n)$: this is because $\spidx_{1,n} \equiv \eq_{\log n}$. Hence, we also get 

$$\Rone(\spidx_{m,n}) = \Omega(\max \{m, \log \log n\}) = \Omega(m+ \log \log n)$$ 

This completes the proof.    
\end{proof}

Our next lemma proves a lower bound for any OMA protocol that solves $\spidx$. We remark that the proof is similar to the one by \cite{ChakrabartiCMT14}, but they reduce from $\idx$, while we reduce from $\eqidx$. This lets us improve upon the previous lower bound.  

\begin{lemma}\label{lem:spidxmain}
    Any $(h,v)$-OMA-protocol for $\spidx_{m,n}$ must have $$(h+\log (n/m))\cdot v \geq \Omega(m\log (n/m))$$  
\end{lemma}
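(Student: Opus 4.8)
The plan is to mimic the reduction used for \Cref{lem:eqidxmain}, adapting the chunking idea to respect the sparsity promise. Given an $(h,v)$-OMA-protocol $\Pi$ for $\spidx_{m,n}$, I would build an $(h+\log(n/m), v)$-OMA-protocol $\Pi'$ for $\eqidx_{m,\,\log(n/m)}$; the claimed bound then follows by combining \Cref{lem:eqidxmain} applied with $p=m$ and $q=\log(n/m)$, which gives $(h+\log(n/m))\cdot v = \Omega(m\log(n/m))$. So the real content is the reduction $\eqidx_{m,\log(n/m)}\le \spidx_{m,n}$.

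\textbf{The reduction.} Think of the universe $[n]$ as partitioned into $m$ blocks, each of size $n/m$; index positions within a block by strings in $\{0,1\}^{\log(n/m)}$. In the $\eqidx_{m,q}$ instance with $q=\log(n/m)$, Alice holds $x_1,\dots,x_m\in\{0,1\}^q$ and Bob holds $(y,j)$ with $y\in\{0,1\}^q$, $j\in[m]$. Alice builds her $\spidx$ string $z\in\{0,1\}^n$ by placing, in block $i$, exactly one $1$ at the position named by $x_i$ (and $0$ elsewhere in that block). Then $z$ has Hamming weight exactly $m$, so it is a valid $\spidx_{m,n}$ input. Bob forms his $\spidx$ query index $k$ as the position in block $j$ named by $y$. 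Now $z[k]=1$ iff the unique $1$ in block $j$ sits at position $y$, i.e.\ iff $x_j=y$. Hence the answer to $\spidx_{m,n}$ on $(z,k)$ equals the answer to $\eqidx_{m,q}$ on $(x_1,\dots,x_m,y,j)$. Running $\Pi$ on this instance and copying its output gives $\Pi'$ with the same completeness and soundness errors; Merlin's message is unchanged so $\hc(\Pi')=O(h)$ — wait: here no extra help is needed at all, so in fact $\Pi'$ is just an $(h,v)$-OMA protocol for $\eqidx_{m,q}$, and \Cref{lem:eqidxmain} directly gives $(h+q)\cdot v=\Omega(mq)$, i.e.\ $(h+\log(n/m))\cdot v=\Omega(m\log(n/m))$, as desired.

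\textbf{Main obstacle.} The only subtlety I anticipate is edge cases in the parameter regime: one needs $m$ to divide (or at least be comparable to) $n$ for the block decomposition to make sense, and $\log(n/m)\ge 1$; these are the implicit assumptions under which the statement is interpreted, and handling them cleanly (e.g.\ rounding block sizes, padding Alice's string with extra $0$'s, and noting the $\Theta(\cdot)$ absorbs the rounding) is routine but must be stated. A second point to get right is that \Cref{lem:eqidxmain} requires no side condition on $p,q$, so it applies verbatim with $p=m$, $q=\log(n/m)$; there is no hidden use of $p\ge\log q$ here (that condition only surfaces later when converting to a bound in terms of $n$). So I expect the proof to be short: construct the map, verify correctness of the reduction, and invoke \Cref{lem:eqidxmain}.
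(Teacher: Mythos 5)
Your reduction is essentially identical to the paper's: both map an $\eqidx_{m,\log(n/m)}$ instance into $\spidx_{m,n}$ by partitioning $[n]$ into $m$ blocks and placing one $1$ per block at the position encoded by $x_i$, then invoke \Cref{lem:eqidxmain} with $p=m$, $q=\log(n/m)$. The paper phrases the block structure via the identification $n = 2^q p$ and pairs $(i,j)$, but the construction, correctness argument, and cost accounting are the same.
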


\begin{proof} 
We reduce from $\eqidx_{p,q}$. Given any instance of this problem, we construct an instance of $\spidx_{m,n}$ where $m=p$ and $n=2^qp$ (hence, $q=\log(n/m)$) such that an $(h,v)$-OMA-protocol that solves the $\spidx_{m,n}$ instance also solves the $\eqidx_{p,q}$ instance (without any additional hcost or vcost). The lower bound then follows from \Cref{lem:eqidxmain}. 

Recall that Alice's input to the $\eqidx_{p,q}$ problem is $(x_1,\ldots, x_p)$, where each $x_i\in \{0,1\}^q$.  Alice interprets each $x_i$ as the binary representation of an integer in $[0,2^q-1]$. She then constructs a string $z$ of length $n=2^qp$ indexed by pairs $(i,j)$ where $i\in [p]$ and $j\in [0,2^q-1]$. For each $i\in [p]$, she sets $z[i,x_i]=1$. She sets all the other bits of $z$ to $0$. Since $z$ has exactly $p$ 1's, its hamming weight is $m=p$. Again, Bob, who has inputs $y\in\{0,1\}^q$ and $j\in [p]$ to $\eqidx_{p,q}$, sets $k:= (j,y)$ interpreting $y$ as an integer in $[0,2^q-1]$. Alice and Bob now solve $\spidx_{m,n}$ with the inputs $z$ and $k$ respectively. Indeed, by construction, $z[k]=1$ if and only if $y=x_j$. Therefore, an $(h,v)$-OMA-protocol that solves $\spidx_{m,n}$ can be used to solve any instance of $\eqidx_{p,q}$ with $p=m$ and $q=\log (n/m)$.           
\end{proof}

Note that our bound is strictly better than Theorem 3.9 of \cite{ChakrabartiCMT14}, which (essentially) says that an $(h,v)$-OMA-protocol for $\spidx_{m,n}$ must have $(h+\log n)\cdot v \geq \Omega(m\log (n/m))$; this inequality is trivially implied by \Cref{lem:spidxmain}. Further, our strengthening of the additive factor from $\log n$ to $\log (n/m)$, combined with our tight classical one-way communication bound of $\spidx$, enables us to prove strong lower bounds on its non-trivial OMA complexity as we show below. 

Consider $\spidx_{m,n}$ for $m\geq \log \log n$. By \Cref{lem:onewayspidx}, its one-way randomized communication complexity is $\Theta(m)$. Therefore, any non-trivial OMA protocol for the problem must have vcost $o(m)$. But by \Cref{lem:spidxmain}, for any $(h,v)$-OMA-protocol for $\spidx$, $$h =  O(\log (n/m)) \Rightarrow v = \Omega(m)$$

Hence, any non-trivial OMA-protocol $\Pi$ must have $\hc(\Pi)=\omega(\log (n/m))$, which means $\OMAnt(\spidx_{m,n}) = \Omega(\log (n/m))$ whenever $m\geq \log \log n$. This gives us the following theorem.  

\begin{theorem} $\OMAnt(\spidx_{\log \log n, n})=\Omega(\log n)$. 
\end{theorem}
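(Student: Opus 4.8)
The plan is to combine the two lemmas already established with the observation that, for $m \geq \log\log n$, the one-way randomized communication complexity of $\spidx_{m,n}$ is exactly $\Theta(m)$. First I would invoke \Cref{lem:onewayspidx} with the specialization $m = \log\log n$, which gives $\Rone(\spidx_{\log\log n, n}) = \Theta(\log\log n + \log\log n) = \Theta(\log\log n) = \Theta(m)$. Consequently, by \Cref{def:ntoma}, any \emph{non-trivial} OMA protocol $\Pi$ for $\spidx_{\log\log n,n}$ must have $\vc(\Pi) = o(m) = o(\log\log n)$.

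Next I would feed this into the OMA lower bound of \Cref{lem:spidxmain}. With $m = \log\log n$, we have $\log(n/m) = \log n - \log\log\log n = \Theta(\log n)$, so the lemma reads: any $(h,v)$-OMA-protocol for $\spidx_{\log\log n, n}$ satisfies $(h + \Theta(\log n)) \cdot v \geq \Omega(m \log n) = \Omega(\log\log n \cdot \log n)$. Now suppose, toward a contradiction, that a non-trivial protocol $\Pi$ had $\hc(\Pi) = O(\log n)$. Then $h + \Theta(\log n) = O(\log n)$, and plugging in forces $v = \Omega(\log\log n) = \Omega(m)$, contradicting $\vc(\Pi) = o(m)$ established above. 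Therefore every non-trivial protocol must have $\hc(\Pi) = \omega(\log n)$, hence $\tc(\Pi) \geq \hc(\Pi) = \omega(\log n)$, and by \Cref{def:ntoma} we conclude $\OMAnt(\spidx_{\log\log n, n}) = \omega(\log n)$, which in particular is $\Omega(\log n)$.

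I do not anticipate a genuine obstacle here — the theorem is a direct corollary of the two preceding lemmas, and the only thing to be careful about is the arithmetic of the additive term: one must verify that $\log(n/m) = \Theta(\log n)$ when $m = \log\log n$ (true since $\log m = \log\log\log n = o(\log n)$) so that the strengthened additive factor $\log(n/m)$, rather than the weaker $\log n$ from \cite{ChakrabartiCMT14}, is actually what drives the bound. The one subtlety worth flagging is \emph{why} the improvement over \cite{ChakrabartiCMT14} matters: their bound $(h+\log n)\cdot v \geq \Omega(m\log(n/m))$ would also suffice in this particular regime because $\log n = \Theta(\log(n/m))$ here; the real gain from \Cref{lem:spidxmain} combined with the tight $\Rone$ bound is visible when $m$ is pushed down toward the $\log\log n$ threshold, which is exactly the boundary case we exploit. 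So the proof is short: specialize \Cref{lem:onewayspidx}, then argue by contradiction using \Cref{lem:spidxmain}.
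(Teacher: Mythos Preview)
Your proposal is correct and follows essentially the same route as the paper: specialize \Cref{lem:onewayspidx} at $m=\log\log n$ to get $\Rone=\Theta(m)$, deduce that any non-trivial protocol has $v=o(m)$, then use \Cref{lem:spidxmain} to force $h=\omega(\log(n/m))=\omega(\log n)$. Your side remark that the older $(h+\log n)\cdot v$ bound of \cite{ChakrabartiCMT14} would already suffice in this regime is accurate---the crucial improvement enabling the theorem is the tight $\Rone$ bound of \Cref{lem:onewayspidx}, not the strengthened additive term in \Cref{lem:spidxmain}.
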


\begin{proof}
    Follows from above along with the fact that for $m=\log \log n$, we have $\Omega(\log (n/m))=\Omega(\log n)$.
\end{proof}

Compare this with the fact that $\OMA(\spidx_{\log \log n, n}) = \Rone(\spidx_{\log \log n, n})=  \Theta(\log \log n)$. Hence, we show an exponential separation between $\OMA$ and $\OMAnt$ complexity of $\spidx$ with sparsity $\log \log n$.

\subsubsection{Connectivity and Bipartite Testing}
Now we show that the lower bounds for the only two other functions, which were known to have $\OMAnt$ and $\OMA$ complexity (almost) as large as their $\Rone$ complexity, can be derived by reduction from $\eqidx$ and applying \Cref{lem:eqidxmain}. These are XOR-Connectivity and XOR-Bipartiteness, as studied in \cite{Thaler16}. We formally define the problems below. 

\noindent
$\xorconn_n$: Alice and Bob have graphs $G_A = (V, E_A)$ and $G_B = (V, E_B)$ on the same set of nodes $V$ with $|V|=n$. The graph $G$ is defined as $(V, E_A\oplus E_B)$. The goal is to determine whether $G$ is connected: Bob needs to output $1$ if it is and $0$ otherwise.

\noindent
$\xorbip_n$: Alice and Bob have graphs $G_A = (V, E_A)$ and $G_B = (V, E_B)$ on the same set of nodes $V$ with $|V|=n$. The goal is to determine whether $G:=(V, E_A\oplus E_B)$ is bipartite: Bob needs to output $1$ if it is, and $0$ otherwise.

As noted by \cite{Thaler16}, one can use a modified version of the connectivity and bipartiteness-sketches by \cite{AhnGM12} to obtain classical streaming $\tO(n)$-space algorithms for the streaming versions of \xorconn and \xorbip respectively. These algorithms can, in turn, be used to obtain $\tO(n)$-cost one-way randomized protocols for these problems in the classical communication model. This means $\Rone(\xorconn_n)$ and $\Rone(\xorbip_n)$ are both $\tO(n)$. Well-known communication lower bounds for connectivity and bipartiteness also imply that $\Rone(\xorconn_n)=\tTheta(n)$ and $\Rone(\xorbip_n)=\tTheta(n)$

\cite{Thaler16} showed that $\OMA(\xorconn_n)$ and $\OMA(\xorbip_n)$ are $\Omega(n)$. This follows from their more general result:

\begin{theorem}[\cite{Thaler16}, Theorem 4.1, restated]
    Any $(h,v)$-OMA-protocol solving $\xorconn_n$ or $\xorbip_n$ must have
    $$(h+n)\cdot v \geq \Omega(n^2).$$
\end{theorem}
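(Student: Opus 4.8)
The plan is to mimic the reduction strategy used for \Cref{lem:spidxmain}: I would reduce from $\eqidx_{p,q}$ to $\xorconn_n$ (and separately to $\xorbip_n$) so that an $(h,v)$-OMA-protocol for the graph problem yields an $(h,v)$-OMA-protocol for $\eqidx$ with no extra cost, and then invoke \Cref{lem:eqidxmain}. Since the technical overview already sketches the graph construction, I would formalize it. For $\xorconn_n$, set $p = n$ and $q = n$ (using roughly $n/2$ vertices on each side of a bipartite structure plus a dummy vertex, or simply $n$ vertices per side with the understanding that the input graph has $2n+1$ vertices — I would pick constants so that the vertex count is $\Theta(n)$). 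Alice interprets each $x_i \in \{0,1\}^n$ as the characteristic vector of the neighborhood of left-vertex $\ell_i$ among the right-vertices $r_1,\dots,r_n$, builds that bipartite graph $G_A$, and additionally connects every vertex (all $\ell_i$'s and all $r_i$'s) to a dummy vertex $v$ so that $G_A$ is connected. Bob, holding $y \in \{0,1\}^n$ and $j \in [n]$, builds $G_B$ consisting of exactly the edges $\{\ell_j, r_b\}$ for each $b$ with $y[b] = 1$, together with the single edge $\{\ell_j, v\}$.

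The key correctness claim is that $G = (V, E_A \oplus E_B)$ is disconnected if and only if $y = x_j$. If $y = x_j$, then XOR-ing removes exactly the edges incident to $\ell_j$ among the right-vertices (since $G_A$ restricted to $\ell_j$'s neighborhood in the right side is precisely the set encoded by $x_j$) and also removes the edge $\{\ell_j, v\}$; thus $\ell_j$ becomes isolated and $G$ is disconnected. Conversely, if $y \neq x_j$, then the symmetric difference of the two neighborhoods $\{b : x_j[b]=1\} \triangle \{b : y[b]=1\}$ is nonempty, so $\ell_j$ is adjacent in $G$ to some right-vertex $r_w$; since $r_w$ is still joined to $v$ (Bob never touched $r_w$--$v$ edges) and all other vertices remain joined to $v$, the graph $G$ is connected. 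Hence $\xorconn_n$ with these inputs outputs $1$ (connected) exactly when $y \neq x_j$, i.e., it computes the negation of $\eqidx_{n,n}$; complementing the output (which Bob can do for free) solves $\eqidx_{n,n}$ itself. Since Merlin's message and Alice's message in the simulated $\xorconn$ protocol are passed through unchanged, the reduction preserves $\hc$ and $\vc$, and \Cref{lem:eqidxmain} with $p = q = \Theta(n)$ gives $(h + n)\cdot v = \Omega(n^2)$ for any OMA-protocol solving $\xorconn_n$ (after adjusting the vertex count by a constant factor).

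For $\xorbip_n$ the construction is identical except that Alice joins only the right-vertices $r_1,\dots,r_n$ to the dummy vertex $v$ (not the left-vertices), keeping $G_A$ bipartite with parts $\{\ell_i\} \cup \{v\}$ and $\{r_i\}$; Bob's $G_B$ is the same as before, including the edge $\{\ell_j, v\}$. If $y = x_j$, then $G$ again isolates $\ell_j$ from the right side but gains the edge $\{\ell_j, v\}$, and one checks $G$ is bipartite with parts $\{r_i\}$ and $\{\ell_i\} \cup \{v\}$ (the only edge incident to $v$ that survives the XOR besides $v$--$r_i$ edges is $v$--$\ell_j$, which respects this bipartition). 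If $y \neq x_j$, then $\ell_j$ is adjacent to some $r_w$ in $G$, while $r_w$--$v$ and $\ell_j$--$v$ are both edges of $G$, forming a triangle $\ell_j r_w v$, so $G$ is not bipartite. Thus $\xorbip_n$ computes exactly $\eqidx_{n,n}$, and the same cost accounting plus \Cref{lem:eqidxmain} yields $(h+n)\cdot v = \Omega(n^2)$.

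The main obstacle I anticipate is bookkeeping rather than conceptual: one must be careful that $G_B$'s edges are well-defined (Bob must know which vertices $\ell_j, r_b, v$ are, which is fine since $j$ is his input), that the XOR does not accidentally create or destroy edges other than those intended (this requires the ``dummy vertex'' edges in $G_A$ to be edges that $G_B$ either exactly cancels or leaves alone, never adds), and that the soundness/completeness transfer is stated cleanly — a dishonest Merlin in the $\xorconn$ simulation corresponds exactly to a dishonest Merlin for $\eqidx$, so the error probabilities carry over verbatim as in the proof of \Cref{lem:eqidxmain}. A minor point is reconciling the input size: $\xorconn_n$ has $N = \Theta(n^2)$ bits of input, and we are reducing from $\eqidx_{n,n}$ which has $pq = \Theta(n^2)$ bits, so the parameters match up to constants, and the final bound $(h+n)\cdot v = \Omega(n^2)$ is exactly \cite{Thaler16}'s Theorem 4.1 (restated), as desired; the $\Omega(n)$ bound on $\OMAnt$ and $\OMA$ then follows since $\Rone(\xorconn_n) = \tTheta(n)$ forces $v = \tilde o(n)$ in any non-trivial protocol, whence $h = \Omega(n)$.
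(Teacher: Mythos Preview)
Your approach is essentially the paper's: reduce $\eqidx_{n,n}$ to $\xorconn$ and $\xorbip$ via the bipartite-plus-dummy-vertex construction, then invoke \Cref{lem:eqidxmain}. The paper's formal proof has Bob (rather than Alice) add the dummy-vertex edges, but this is cosmetically different and yields the same XOR graph; your version in fact matches the paper's own technical overview verbatim.

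There is one concrete slip in the $\xorbip$ case. When $y = x_j$, you claim the bipartition is $\{r_i\}$ versus $\{\ell_i\} \cup \{v\}$ and that the surviving edge $\{\ell_j, v\}$ ``respects this bipartition.'' It does not: both $\ell_j$ and $v$ lie in the second part. The fix is to move $\ell_j$ to the other side, i.e., take parts $V_R \cup \{\ell_j\}$ and $(V_L \setminus \{\ell_j\}) \cup \{v\}$; then every edge of $G$ crosses (the $\ell_i$--$r_b$ edges for $i \neq j$, the $v$--$r_i$ edges, and the lone $v$--$\ell_j$ edge). With that correction the argument goes through.
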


We prove that this theorem also follows by reduction from $\eqidx_{n,n}$, thus adding to the usefulness of \eqidx in proving strong $\OMA$ lower bounds. 

\begin{proof}
    First, we prove the result for $\xorconn_{2n}$. The construction for $\xorbip_{2n}$ is very similar, and so we simply mention the modifications for that case.
    
    Given an instance of $\eqidx_{n,n}$, Alice first interprets her input $(x_1,\ldots, x_{n})$ where each $x_i\in \{0,1\}^{n}$ as the biadjacency matrix of a bipartite graph with $n$ nodes on each side. To be precise, Alice constructs the following bipartite graph. Let $V_L=\{u_1,\ldots,u_{n}\}$ and $V_R=\{v_1,\ldots,v_{n}\}$. For each $\ell\in [n]$, Alice joins the edges $\{\{u_\ell, v_r\}: x_{\ell}[r] = 1, r\in [n]\}$. This completes the description of the graph $G_A=(V, E_A)$ where $V= V_L\cup V_R\cup \{s\}$ and $E_A$ is the set of all edges added by Alice. Again, given his inputs $y\in \{0,1\}^{n}$ and $j\in [n]$, Bob joins the edges $\{\{u_j, v_r\}: y[r] = 1, r\in [n]\}$. He then joins vertex $s$ to all nodes in $V_L \cup V_R$ except $u_j$. 
    This completes the description of the graph $G_B=(V, E_B)$ where $E_B$ is the set of all edges added by Bob.

    Observe that if $y$ is indeed equal to $x_j$, then the neighborhoods of the vertex $u_j$ in $G_A$ and $G_B$ are identical by construction. Then, $u_j$ is an isolated vertex in $G=(V, E_A\oplus E_B)$, which means $G$ is not connected. Again, if $y\neq x_j$, then there must exist $r\in [n]$ such that $y[r]\neq x_j[r]$. By construction, exactly one of the sets $E_A$ and $E_B$ has the edge $e:=\{u_j, v_r\}$, and hence $e\in E_A\oplus E_B$. Therefore, in $G$, the node $u_j$ is connected to $v_r$, which in turn is connected to $s$. Observe that every other node $w \in V$ with $w\neq u_j$ is connected to $s$ in $G$: they are connected in $E_B$, while $E_A$ does not contain any such edge $\{s,w\}$. Thus, $s$ is connected to every other node in $G$, implying that $G$ is connected. Hence, $G$ is connected iff $y \neq x_j$. Therefore, given an $(h,v)$-OMA-protocol for $\xorconn_{2n}$, the players can run it for the constructed graph $G$, and Bob can determine the output to $\eqidx_{n,n}$. \Cref{lem:eqidxmain} now gives the desired result for $\xorconn$.

    For $\xorbip$, the construction of the graphs $G_A$ and $G_B$ are the same, \emph{except} Bob joins $s$ to $u_j$ and all vertices in $V_R$ only (instead of $V_L \cup V_R$). If $x_j=y$, then the neighborhoods of $u_j$ in $G_A$ and $G_B$ are the same, except $u_j$ is also adjacent to $s$ in $G_B$. Hence, in $G$, the only edge incident on $u_j$ is $\{u_j,s\}$. Therefore, we can bipartition $G$ into $(\{s\}\cup V_L\setminus \{u_j\}, V_R\cup \{u_j\})$. Observe that this is a valid bipartition since all edges in $G$ are between the left and right sides. Otherwise, if $x_j\neq y$, then by the same logic as in the case of \xorconn, there must exist $r\in [n]$ such that $\{u_j, v_r\} \in G$. Further, $\{s, v_r\}, \{u_j, s\} \in G$ since these edges are in $G_B$ but not $G_A$. Hence, $G$ has a triangle in this case and is not bipartite. Thus, we conclude that $G$ is bipartite iff the answer to $x_j = y$, which means an $(h,v)$-OMA-protocol for $\xorbip_{2n}$ can be used to determine the solution to any instance of $\eqidx_{n,n}$. The result again follows from \Cref{lem:eqidxmain}.

\end{proof}
\begin{figure}[!ht]
    \begin{subfigure}[t]{0.33\textwidth}
	\centering
	{  \vspace{-4.2cm} \resizebox{80pt}{90pt}{ \tikzset{every picture/.style={line width=0.75pt}} 

\begin{tikzpicture}[x=0.75pt,y=0.75pt,yscale=-1,xscale=1]

\draw   (153,65) .. controls (153,61.69) and (155.69,59) .. (159,59) .. controls (162.31,59) and (165,61.69) .. (165,65) .. controls (165,68.31) and (162.31,71) .. (159,71) .. controls (155.69,71) and (153,68.31) .. (153,65) -- cycle ;
\draw   (153,95) .. controls (153,91.69) and (155.69,89) .. (159,89) .. controls (162.31,89) and (165,91.69) .. (165,95) .. controls (165,98.31) and (162.31,101) .. (159,101) .. controls (155.69,101) and (153,98.31) .. (153,95) -- cycle ;
\draw   (153,125) .. controls (153,121.69) and (155.69,119) .. (159,119) .. controls (162.31,119) and (165,121.69) .. (165,125) .. controls (165,128.31) and (162.31,131) .. (159,131) .. controls (155.69,131) and (153,128.31) .. (153,125) -- cycle ;
\draw   (153,166) .. controls (153,162.69) and (155.69,160) .. (159,160) .. controls (162.31,160) and (165,162.69) .. (165,166) .. controls (165,169.31) and (162.31,172) .. (159,172) .. controls (155.69,172) and (153,169.31) .. (153,166) -- cycle ;
\draw   (203,65) .. controls (203,61.69) and (205.69,59) .. (209,59) .. controls (212.31,59) and (215,61.69) .. (215,65) .. controls (215,68.31) and (212.31,71) .. (209,71) .. controls (205.69,71) and (203,68.31) .. (203,65) -- cycle ;
\draw   (203,95) .. controls (203,91.69) and (205.69,89) .. (209,89) .. controls (212.31,89) and (215,91.69) .. (215,95) .. controls (215,98.31) and (212.31,101) .. (209,101) .. controls (205.69,101) and (203,98.31) .. (203,95) -- cycle ;
\draw   (203,125) .. controls (203,121.69) and (205.69,119) .. (209,119) .. controls (212.31,119) and (215,121.69) .. (215,125) .. controls (215,128.31) and (212.31,131) .. (209,131) .. controls (205.69,131) and (203,128.31) .. (203,125) -- cycle ;
\draw   (203,165) .. controls (203,161.69) and (205.69,159) .. (209,159) .. controls (212.31,159) and (215,161.69) .. (215,165) .. controls (215,168.31) and (212.31,171) .. (209,171) .. controls (205.69,171) and (203,168.31) .. (203,165) -- cycle ;
\draw    (165,65) -- (203,65) ;
\draw    (165,65) -- (203,95) ;
\draw    (165,65) -- (203,166) ;
\draw    (165,95) -- (203,95) ;
\draw    (165,125) -- (203,125) ;
\draw  [dashed]  (165,95) -- (203,125) ;
\draw    (165,165) -- (203,125) ;
\draw    (165,165) -- (203,65) ;
\draw    (165,125) -- (203,95) ;

\draw (145,42) node [anchor=north west][inner sep=0.75pt]   [align=left] {$n$};
\draw (210,42) node [anchor=north west][inner sep=0.75pt]   [align=left] {$n$};
\draw (135,90) node [anchor=north west][inner sep=0.75pt]   [align=left] {$u_j$};
\draw (220,120) node [anchor=north west][inner sep=0.75pt]   [align=left] {$v_r$};

\draw (206,130) node [anchor=north west][inner sep=0.75pt]   [align=left] {\small{\vdots}};
\draw (155,130) node [anchor=north west][inner sep=0.75pt]   [align=left] {\small{\vdots}};

\end{tikzpicture}}  }
	\vspace{1cm}
 \caption{Alice uses $x_i$ to construct the neighborhood of vertex $i$.
		\label{fig:connectivity1}}
    \end{subfigure}
    \begin{subfigure}[t]{0.33\textwidth}
	\centering
        {  \vspace{-4.2cm} \resizebox{80pt}{90pt}{ \tikzset{every picture/.style={line width=0.75pt}} 

\begin{tikzpicture}[x=0.75pt,y=0.75pt,yscale=-1,xscale=1]

\draw   (153,65) .. controls (153,61.69) and (155.69,59) .. (159,59) .. controls (162.31,59) and (165,61.69) .. (165,65) .. controls (165,68.31) and (162.31,71) .. (159,71) .. controls (155.69,71) and (153,68.31) .. (153,65) -- cycle ;
\draw   (153,95) .. controls (153,91.69) and (155.69,89) .. (159,89) .. controls (162.31,89) and (165,91.69) .. (165,95) .. controls (165,98.31) and (162.31,101) .. (159,101) .. controls (155.69,101) and (153,98.31) .. (153,95) -- cycle ;
\draw   (153,125) .. controls (153,121.69) and (155.69,119) .. (159,119) .. controls (162.31,119) and (165,121.69) .. (165,125) .. controls (165,128.31) and (162.31,131) .. (159,131) .. controls (155.69,131) and (153,128.31) .. (153,125) -- cycle ;
\draw   (153,166) .. controls (153,162.69) and (155.69,160) .. (159,160) .. controls (162.31,160) and (165,162.69) .. (165,166) .. controls (165,169.31) and (162.31,172) .. (159,172) .. controls (155.69,172) and (153,169.31) .. (153,166) -- cycle ;
\draw   (203,65) .. controls (203,61.69) and (205.69,59) .. (209,59) .. controls (212.31,59) and (215,61.69) .. (215,65) .. controls (215,68.31) and (212.31,71) .. (209,71) .. controls (205.69,71) and (203,68.31) .. (203,65) -- cycle ;
\draw   (203,95) .. controls (203,91.69) and (205.69,89) .. (209,89) .. controls (212.31,89) and (215,91.69) .. (215,95) .. controls (215,98.31) and (212.31,101) .. (209,101) .. controls (205.69,101) and (203,98.31) .. (203,95) -- cycle ;
\draw   (203,125) .. controls (203,121.69) and (205.69,119) .. (209,119) .. controls (212.31,119) and (215,121.69) .. (215,125) .. controls (215,128.31) and (212.31,131) .. (209,131) .. controls (205.69,131) and (203,128.31) .. (203,125) -- cycle ;
\draw   (203,165) .. controls (203,161.69) and (205.69,159) .. (209,159) .. controls (212.31,159) and (215,161.69) .. (215,165) .. controls (215,168.31) and (212.31,171) .. (209,171) .. controls (205.69,171) and (203,168.31) .. (203,165) -- cycle ;
\draw  [draw = red]  (165,95) -- (203,95) ;
\draw   [dashed, draw = red] (165,95) -- (203,125) ;

\draw (145,42) node [anchor=north west][inner sep=0.75pt]   [align=left] {$n$};
\draw (210,42) node [anchor=north west][inner sep=0.75pt]   [align=left] {$n$};
\draw (135,90) node [anchor=north west][inner sep=0.75pt]   [align=left] {$u_j$};
\draw (220,120) node [anchor=north west][inner sep=0.75pt]   [align=left] {$v_r$};

\draw (168,74) node [anchor=north west][inner sep=0.75pt]   [align=left] {$N(u_j)$};

\draw (206,130) node [anchor=north west][inner sep=0.75pt]   [align=left] {\small{\vdots}};
\draw (155,130) node [anchor=north west][inner sep=0.75pt]   [align=left] {\small{\vdots}};

\end{tikzpicture}}  }
\vspace{1cm}	
 \caption{Bob constructs $N(u_j)$ based on $y$.
		\label{fig:connectivity2}}
  \end{subfigure}
    \begin{subfigure}[t]{0.33\textwidth}
	\centering
	{   \resizebox{130pt}{120pt}{ \tikzset{every picture/.style={line width=0.75pt}} 

\begin{tikzpicture}[x=0.75pt,y=0.75pt,yscale=-1,xscale=1]

\draw   (153,65) .. controls (153,61.69) and (155.69,59) .. (159,59) .. controls (162.31,59) and (165,61.69) .. (165,65) .. controls (165,68.31) and (162.31,71) .. (159,71) .. controls (155.69,71) and (153,68.31) .. (153,65) -- cycle ;
\draw   (153,95) .. controls (153,91.69) and (155.69,89) .. (159,89) .. controls (162.31,89) and (165,91.69) .. (165,95) .. controls (165,98.31) and (162.31,101) .. (159,101) .. controls (155.69,101) and (153,98.31) .. (153,95) -- cycle ;
\draw   (153,125) .. controls (153,121.69) and (155.69,119) .. (159,119) .. controls (162.31,119) and (165,121.69) .. (165,125) .. controls (165,128.31) and (162.31,131) .. (159,131) .. controls (155.69,131) and (153,128.31) .. (153,125) -- cycle ;
\draw   (153,166) .. controls (153,162.69) and (155.69,160) .. (159,160) .. controls (162.31,160) and (165,162.69) .. (165,166) .. controls (165,169.31) and (162.31,172) .. (159,172) .. controls (155.69,172) and (153,169.31) .. (153,166) -- cycle ;
\draw   (203,65) .. controls (203,61.69) and (205.69,59) .. (209,59) .. controls (212.31,59) and (215,61.69) .. (215,65) .. controls (215,68.31) and (212.31,71) .. (209,71) .. controls (205.69,71) and (203,68.31) .. (203,65) -- cycle ;
\draw   (203,95) .. controls (203,91.69) and (205.69,89) .. (209,89) .. controls (212.31,89) and (215,91.69) .. (215,95) .. controls (215,98.31) and (212.31,101) .. (209,101) .. controls (205.69,101) and (203,98.31) .. (203,95) -- cycle ;
\draw   (203,125) .. controls (203,121.69) and (205.69,119) .. (209,119) .. controls (212.31,119) and (215,121.69) .. (215,125) .. controls (215,128.31) and (212.31,131) .. (209,131) .. controls (205.69,131) and (203,128.31) .. (203,125) -- cycle ;
\draw   (203,165) .. controls (203,161.69) and (205.69,159) .. (209,159) .. controls (212.31,159) and (215,161.69) .. (215,165) .. controls (215,168.31) and (212.31,171) .. (209,171) .. controls (205.69,171) and (203,168.31) .. (203,165) -- cycle ;
\draw  [dashed]  (165,95) -- (203,125) ;
\draw  [draw = blue] (178,195) .. controls (178,191.69) and (180.69,189) .. (184,189) .. controls (187.31,189) and (190,191.69) .. (190,195) .. controls (190,198.31) and (187.31,201) .. (184,201) .. controls (180.69,201) and (178,198.31) .. (178,195) -- cycle ;
\draw [draw = blue]   (190,195) .. controls (246.38,183.27) and (250.38,124.27) .. (215,95) ;
\draw  [draw = blue]  (159,172) -- (178,195) ;
\draw  [draw = blue]  (165,125) -- (184,189) ;
\draw  [draw = blue]  (190,195) -- (203,165) ;
\draw   [draw = blue] (190,195) .. controls (255.38,221.27) and (268.38,66.27) .. (215,65) ;
\draw  [draw = blue]  (178,195) .. controls (101.38,170.27) and (113,95) .. (153,65) ;
\draw  [draw = blue]  (203,125) -- (184,189) ;

\draw (145,42) node [anchor=north west][inner sep=0.75pt]   [align=left] {$n$};
\draw (210,42) node [anchor=north west][inner sep=0.75pt]   [align=left] {$n$};
\draw (135,90) node [anchor=north west][inner sep=0.75pt]   [align=left] {$u_j$};
\draw (220,120) node [anchor=north west][inner sep=0.75pt]   [align=left] {$v_r$};

\draw (206,130) node [anchor=north west][inner sep=0.75pt]   [align=left] {\small{\vdots}};
\draw (155,130) node [anchor=north west][inner sep=0.75pt]   [align=left] {\small{\vdots}};

\end{tikzpicture}}  }
	\caption{Bob created an additional vertex and connects all vertices except $u_j$ to it.
		\label{fig:connectivity3}}
  \end{subfigure}
\end{figure}

\subsubsection{Distinct Items}

Here, we discuss lower bounds for the well-studied distinct items problem in the turnstile model. We formally define the problem as follows.

\noindent
$\distit_{N,F}$: Given a turnstile stream of elements in $[N]$ such that $\max\{|\freq(j)|:j\in [N]\} \leq F$, output the number of items with non-zero frequency, i.e., $|\{j\in [N]: \freq(j) \neq 0 \}|$

\begin{lemma}\label{lem:distit-main}
    Any $(h,v)$-scheme for $\distit_{N,F}$ must have 
    $$(h+\log F)\cdot v\geq \Omega(N \log F)$$
\end{lemma}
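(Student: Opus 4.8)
The plan is to reduce from $\eqidx$ and invoke \Cref{lem:eqidxmain}; the generic bound \Cref{fact:genlb} is not enough since we want the sharper inequality with the additive $\log F$ term. Concretely, set $q=\floor{\log F}$ (so that $2^q\le F$ and $q=\Theta(\log F)$ for $F\ge 2$) and $p=N$. I will argue that any $(h,v)$-scheme for $\distit_{N,F}$ can be turned into an $(h,v)$-OMA-protocol for $\eqidx_{N,q}$, after which \Cref{lem:eqidxmain} with these parameters gives exactly $(h+\log F)\cdot v=\Omega(N\log F)$.

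For the reduction, take an $\eqidx_{N,q}$ instance in which Alice holds $x_1,\dots,x_N\in\{0,1\}^q$ and Bob holds $y\in\{0,1\}^q$ together with an index $j\in[N]$. The players build a turnstile stream over the universe $[N]$: Alice interprets each $x_i$ as an integer in $[0,2^q-1]$ and produces the stream prefix $\sigma_A$ that, for every $i\in[N]$, inserts element $i$ exactly $x_i+1$ times; Bob interprets $y$ as an integer in $[0,2^q-1]$ and produces the suffix $\sigma_B$ that deletes element $j$ exactly $y+1$ times. In $\sigma:=\sigma_A\sigma_B$, every element $i\ne j$ has frequency $x_i+1\ge 1$, while element $j$ has frequency $(x_j+1)-(y+1)=x_j-y$, which is $0$ iff $x_j=y$. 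Hence the number of distinct items is $N-1$ if $x_j=y$ and $N$ otherwise. Moreover, at every point of the stream the absolute value of every frequency is at most $\max\{2^q,\,2^q-1\}\le F$, so $\sigma$ is a legal input to $\distit_{N,F}$.

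Next I would convert an $(h,v)$-scheme $\cP=(\cH,\alg,\out)$ for $\distit_{N,F}$ into an OMA protocol for $\eqidx_{N,q}$ in the standard way. Using the shared (public) random string $\cR$, Alice runs the verifier's streaming algorithm $\alg$ on her prefix $\sigma_A$ and sends Bob the resulting memory state $\alg_\cR(\sigma_A)$, of length at most $\vc(\cP)=O(v)$; Bob resumes $\alg$ on his suffix $\sigma_B$ to obtain $\alg_\cR(\sigma)$. Merlin, who knows both inputs and hence the whole stream $\sigma$, plays the Prover and sends Bob the proof $\cH(\sigma)$ as the help message, of length at most $\hc(\cP)=O(h)$. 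Bob then runs $\out$ on $(\cH(\sigma),\alg_\cR(\sigma),\cR)$ and outputs $1$ (i.e. ``$x_j=y$'') iff $\out$ returns the value $N-1$, and outputs $0$ otherwise (treating a $\bot$ from $\out$ as $0$ as well). Completeness of $\cP$ implies that when $x_j=y$ the honest Prover makes $\out$ return $N-1$ with probability $\ge 2/3$, so Bob outputs $1$; soundness of $\cP$ implies that when $x_j\ne y$ (true count $N$), no Prover message makes $\out$ return $N-1$ except with probability $\le 1/3$, so Bob outputs $1$ with probability $\le 1/3$. This is precisely an $(h,v)$-OMA-protocol solving $\eqidx_{N,q}$, and \Cref{lem:eqidxmain} now yields $(h+q)\cdot v=\Omega(Nq)$, i.e. $(h+\log F)\cdot v=\Omega(N\log F)$.

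I do not anticipate a real obstacle: the stream construction is a one-line computation and the scheme-to-OMA simulation is routine. The only points needing care are (i) choosing $q=\floor{\log F}$ so that the multiplicities $x_i+1$ and the final frequency $x_j-y$ never exceed $F$ in absolute value, and (ii) the bookkeeping in the simulation, in particular treating the scheme's randomness $\cR$ as the protocol's shared randomness (so that Bob can both resume $\alg$ and run $\out$) and matching $\hc(\cP)$ to the help cost and $\vc(\cP)$ to Alice's message length; the translation of the completeness and soundness conditions is immediate from the definitions in \Cref{sec:models}.
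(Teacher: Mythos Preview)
Your proposal is correct and follows essentially the same approach as the paper: reduce $\eqidx_{N,\log F}$ to $\distit_{N,F}$ via the stream where Alice inserts element $i$ exactly $x_i+1$ times and Bob deletes element $j$ exactly $y+1$ times, then simulate the scheme as an OMA protocol and apply \Cref{lem:eqidxmain}. Your write-up is in fact more careful than the paper's on the bookkeeping (the $\floor{\log F}$ choice, the frequency bound check, and the completeness/soundness translation), but the underlying argument is identical.
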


\begin{proof}
   Given an $(h,v)$-scheme $\alg$ for $\distit_{N,F}$, we design an $(h,v)$-OMA-protocol for $\eqidx_{N, \log F}$. In an instance of this problem, Alice has $x_1,\ldots,x_N$, where each $x_i\in \{0,1\}^{\log F}$. She interprets each $x_i$ as (the binary representation of) an integer in $[0,F-1]$ and feeds $x_i+1$ insertions of the element $i$ to $\alg$. Observe that $1\leq \freq(i)\leq F$ for each $i$ in the stream. Alice then sends Bob the memory state of $\alg$. Bob has inputs $y\in \{0,1\}^{\log F}$ and $j\in [N]$. Again, he interprets $y$ as an integer in $[0,F-1]$ and continues the run of $\alg$ by appending the stream with $y+1$ deletions of the item $j$. Merlin sends Bob the proof that the Prover would send in $\alg$. Bob can now run the verification part of $\alg$ and get an output. If the answer is $n-1$, then he announces that $x_j=y$. Otherwise, he declares $x_j\neq y$. For the correctness, observe that Alice added the item $j$ to the stream $x_j+1$ times and Bob deleted it $y+1$ times. Therefore, $\freq(j)=0$ iff $x_j=y$. All other stream elements have frequency at least $1$. Hence, the number of items with non-zero frequency in the stream is $N-1$ if $x_j=y$, and $N$ otherwise.  
\end{proof}

Again, we need to prove a tight bound on the classical streaming complexity of $\distit$ so that we know what it means to be a non-trivial scheme for the problem. Using the non-zero detectors that we build in \Cref{subsec:special-counters}, we can design an efficient classical streaming algorithm for $\distit_{N,F}$ for large $F$.

\begin{lemma}\label{lem:distit-classtr}
    There exists a classical streaming algorithm using $O(N \log \log F)$ space for $\distit_{N,F}$. Again, any such algorithm needs $\Omega(N)$ space.  
\end{lemma}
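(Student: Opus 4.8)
The plan is to prove the two halves of the statement separately: the $O(N\log\log F)$ upper bound by running one ``non-zero detector'' per universe element and summing, and the $\Omega(N)$ lower bound by a one-way communication reduction from the Index problem.

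\textbf{Upper bound.} First I would invoke the non-zero detector (``decision counter'') built in \Cref{subsec:special-counters}: for a fixed element $j\in[N]$, this gadget processes the substream of updates to $j$ and at the end reports whether $\freq(j)=0$ or $\freq(j)\neq 0$, by maintaining $\freq(j)$ modulo a uniformly random prime $p$ drawn from $[2,T]$ for a suitable threshold $T$, using only $O(\log T)$ bits. The key arithmetic fact is that a nonzero integer of absolute value at most $F$ has at most $\log_2 F$ distinct prime divisors, so $\Pr_p[\,p\mid \freq(j)\,]\le (\log_2 F)/(\#\{\text{primes}\le T\})$. Taking $T=\Theta(N\log F\cdot\log(N\log F))$ makes this at most $1/(3N)$ (in fact this already succeeds with a single prime $p$ shared across all detectors, since there are at most $N\log_2 F$ ``bad'' primes in total), while $\log T=O(\log\log F+\log N)$. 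The algorithm runs $N$ such detectors in parallel, one per universe element, and outputs the number reporting ``nonzero''; by a union bound all $N$ reports are correct with probability $\ge 2/3$, so the count is exact. The space is $N\cdot O(\log\log F+\log N)$ bits plus $O(\log T)$ bits for $p$, i.e.\ $O(N\log\log F)$ in the regime of interest where $F$ is super-polynomial in $N$ (and this is exactly the regime $F=\exp(N^{\polylog N})$ used in \Cref{lem:distit-main}); the $O(\log T)$ random bits needed for $p$ fit within this budget, if necessary via the PRG of \Cref{fact:NisanPRG}. A large individual update $\Delta_i$ is harmless since it is folded into the counter modulo $p$ as it is read (and the reductions we actually use have $|\Delta_i|=1$).

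\textbf{Lower bound.} I would reduce from $\idx_N$, whose one-way randomized complexity is $\Omega(N)$ by \Cref{fact:idxlb}. Given a streaming algorithm $\alg$ for $\distit_{N,F}$ using $\Sp$ bits, form a one-way protocol: Alice, holding $x\in\{0,1\}^N$, feeds $\alg$ one insertion of element $i$ for each $i$ with $x_i=1$, then sends Bob the memory state of $\alg$ together with the Hamming weight $w=\norm{x}_1$ ($O(\log N)$ extra bits). Bob, holding $j\in[N]$, appends one deletion of element $j$ and reads off the output $d'$ of $\alg$. Since $\freq(i)=x_i$ for $i\neq j$ and $\freq(j)=x_j-1\in\{-1,0\}$ (with value $0$ iff $x_j=1$), we get $d'=(w-x_j)+(1-x_j)=w+1-2x_j$, so Bob recovers $x_j=(w+1-d')/2$. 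Hence $\Sp+O(\log N)\ge \Rone(\idx_N)=\Omega(N)$, i.e.\ $\Sp=\Omega(N)$. All frequencies stay in $\{-1,0,1\}$, so this holds for every $F\ge 1$.

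\textbf{Main obstacle.} The only real content is the upper bound's non-zero detector and its parameterization --- choosing the prime range $T$ large enough to survive the union bound over all $N$ elements while keeping $\log T$ as small as $O(\log\log F)$ (up to the benign additive $O(\log N)$), and verifying that the modular counter correctly tracks $\freq(j)$ through an arbitrarily long turnstile substream in $O(\log T)$ bits. Once \Cref{subsec:special-counters} supplies the detector with these guarantees, assembling the full algorithm and the lower-bound reduction is routine.
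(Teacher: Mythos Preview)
Your proposal is correct and matches the paper's approach almost exactly: one non-zero detector per universe element for the upper bound, and a one-way reduction from $\idx_N$ with Alice sending the Hamming weight for the lower bound. The only cosmetic difference is that in the paper Bob \emph{inserts} $j$ and compares the distinct-count to $w$ (so the lower bound holds even for insert-only streams), whereas you have Bob \emph{delete} $j$; both work for the lemma as stated.
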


\begin{proof}
    We can keep one non-zero detector given by \Cref{cor:nonzdetect} for each element in $[N]$, and can figure out the answer. by \Cref{cor:nonzdetect}, we need $\tO(N\log \log F)$ space and this succeeds with probability $1-1/\polylog(F)$ by union bound over the $N$ detectors. The lower bound follows by a simple reduction from $\idx_N$, even for insert-only streams: Alice inserts $\{i: x_i=1\}$ and sends Bob the hamming weight of $x$. Bob inserts $j$ and checks whether or not the number of distinct elements in the stream exceeds the hamming weight of $x$. If yes, $x_j=0$, and if not $x_j$ must be $1$.  
\end{proof}

Given the above lemma, we can prove our lower bound on the non-trivial total cost of $\distit_{n,F}$.

\begin{lemma}\label{lem:disit-nttc}
    Any non-trivial scheme for $\distit_{N,F}$ with $F\leq \exp(N^{\polylog(N)})$ must have total cost $\omega(\log F)$. 
\end{lemma}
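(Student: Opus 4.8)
The plan is to assemble the bound from two ingredients already in hand: the reduction-based inequality of \Cref{lem:distit-main} and the classical streaming complexity pinned down in \Cref{lem:distit-classtr}, exactly along the lines sketched in the technical overview. First I would record that \Cref{lem:distit-classtr} gives $\cS(\distit_{N,F}) = O(N\log\log F)$ together with a matching $\Omega(N)$ lower bound. The hypothesis $F \le \exp(N^{\polylog(N)})$ is precisely what forces $\log\log F = O(\polylog(N))$, so the two bounds coincide up to polylogarithmic factors and $\cS(\distit_{N,F}) = \tTheta(N)$. Consequently, by the definition of triviality (a scheme is trivial iff $\vc = \tOmega(\cS(f))$), any \emph{non-trivial} scheme $\cP$ for $\distit_{N,F}$ must have $\vc(\cP) = o(N)$.

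Next I would feed this into \Cref{lem:distit-main}. Writing $h := \hc(\cP)$ and $v := \vc(\cP)$, that lemma gives $(h + \log F)\cdot v \ge \Omega(N\log F)$. Dividing by $v$ and using $v = o(N)$ yields
\[
h + \log F \;\ge\; \Omega(N\log F)/v \;=\; \omega(\log F),
\]
and since the additive $\log F$ term is only $O(\log F)$, this forces $h = \omega(\log F)$. Therefore $\tc(\cP) = h + v \ge h = \omega(\log F)$, as claimed; in the regime $\log F = N^{\polylog(N)}$ this is the $\Omega(N^{\polylog(N)})$ bound of \Cref{thm:distitgap}.

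The genuine work for this statement lives in the two cited lemmas --- the $\eqidx_{N,\log F}$ reduction behind \Cref{lem:distit-main} and the $\log\log F$-space non-zero detectors behind \Cref{lem:distit-classtr} --- so the deduction itself is short. The point that deserves care, and where I would be explicit, is the role of the hypothesis $F \le \exp(N^{\polylog(N)})$: it is exactly this restriction that keeps the detector overhead polylogarithmic, hence keeps the classical benchmark at $\tTheta(N)$, hence makes ``non-trivial'' equivalent to $\vc = o(N)$ --- which is in turn \emph{exactly} (not merely sufficiently) the hypothesis on $v$ needed to upgrade $(h+\log F)v = \Omega(N\log F)$ to $h = \omega(\log F)$ rather than the weaker $h = \Omega(\log F)$. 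If one relaxed the bound on $F$, the classical complexity would rise to $\tTheta(N\log\log F)$, the non-triviality threshold would move with it, and the conclusion would have to be restated accordingly.
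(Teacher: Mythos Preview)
Your proposal is correct and follows essentially the same route as the paper: use \Cref{lem:distit-classtr} plus the hypothesis on $F$ to get $\cS(\distit_{N,F})=\tTheta(N)$, conclude that any non-trivial scheme has $\vc=o(N)$, and then invoke \Cref{lem:distit-main} to force $\hc=\omega(\log F)$. The paper phrases the last step as a contrapositive (``$\hc=O(\log F)$ would imply $\vc=\Omega(N)$''), while you divide through directly; these are the same argument.
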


\begin{proof}
By \Cref{lem:distit-classtr}, the classical streaming complexity of $\distit_{N,F}$ is $\tTheta(N)$ for $F\leq \exp(N^{\tTheta(1)})$. Hence, any non-trivial scheme $\Pi$ for the problem must have $\vc(\Pi) = o(N)$. By \Cref{lem:distit-main}, $\hc(\Pi)=O(\log F)$ implies $\vc(\Pi) = \Omega(N)$. Hence, $\hc(\Pi)=\omega(\log F)$, which means $\tc(\Pi)$ must be $\omega(\log F)$. 
\end{proof}

This establishes a separation between the classical streaming complexity and the (non-trivial) annotated streaming complexity of $\distit$ and proves \Cref{thm:distitgap}. 

\distitmain*




\subsubsection{Connectivity Problems in Support Graph Turnstile Streams}

We first make the following claim about classical streaming complexities of connectivity and $k$-connectivity.

\begin{claim}\label{clm:connclasstream}
    In the classical streaming model under SGT streams, graph connectivity can be solved in $\tO(n)$ space, and each of the $k$-vertex-connectivity and $k$-edge-connectivity problems can be solved in $\tO(kn)$ space. These bounds match (up to polylogarithmic factors) the space-bound known for each problem under dynamic graph streams. 
\end{claim}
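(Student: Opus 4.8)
The plan is to show that the three known dynamic-stream algorithms carry over to SGT streams once their internal (magnitude-sensitive) $\ell_0$-samplers are replaced by the \emph{strong} $\ell_0$-samplers of \Cref{lem:L0}, which $\ell_0$-sample from a turnstile vector whose coordinates can have absolute value up to $\val$ using only $\polylog(n)$ space for the streams under consideration. Concretely, for global connectivity I would invoke the AGM sketch \cite{AhnGM12} (cf.\ \Cref{fact:spanning-forest}); for $k$-vertex-connectivity the algorithm of \cite{assadi2023tight}; and for $k$-edge-connectivity a \emph{new} algorithm, since the classical $k$-edge-connectivity algorithm of \cite{AhnGM12} does \emph{not} extend --- this last point is the main obstacle, discussed below.

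First I would handle global connectivity. Recall that the AGM algorithm stores, for each vertex $v$, $O(\log n)$ independent $\ell_0$-sketches of a signed edge-incidence vector; for SGT streams we use $a^{(v)}$ with $a^{(v)}_{\{v,w\}}=\freq(\{v,w\})\cdot\mathrm{sign}(w-v)$ and all other coordinates zero. By linearity of the sketch, summing over a vertex set $S$ yields an $\ell_0$-sketch of $\sum_{v\in S}a^{(v)}$, whose \emph{support} is exactly the set of SGT-graph edges crossing the cut $(S,\bar S)$: edges internal to $S$ cancel (opposite signs from their two endpoints), and a crossing edge survives iff its frequency is nonzero, i.e.\ iff it is present in the support graph. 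Hence a Bor{u}vka-style contraction that, in each of $O(\log n)$ rounds, extracts one crossing edge per current component recovers a spanning forest exactly as in the dynamic case; the only place edge magnitudes ever enter is inside the samplers, and \Cref{lem:L0} makes those oblivious to magnitudes up to $\val$. A union bound over $O(\log n)$ rounds and $n$ vertices gives correctness w.h.p., and the space is $O(n)\cdot O(\log n)\cdot\polylog(n)=\tO(n)$.

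For $k$-vertex-connectivity I would run the scheme of \cite{assadi2023tight}: independently subsample the edge set at rate $\approx 1/k$, compute a spanning forest of the subsampled graph via the SGT spanning-forest routine above, repeat $\tO(k)$ times independently, and output the union of the forests as a certificate. Subsampling at rate $1/k$ is realized by a hash $h:\binom{[n]}{2}\to[k]$ and restricting every sketch to the coordinates with $h(\cdot)=1$; this commutes with linear sketching, so the subsampled spanning forest is recoverable from the restricted strong-$\ell_0$-sketches. The correctness argument of \cite{assadi2023tight} --- every pair that is not already $k$-vertex-connected in $G$ keeps a certifying edge, and every $k$-vertex-connected pair stays so in the union --- is purely graph-theoretic and indifferent to how the graph is presented, so the union is a valid $k$-vertex-connectivity certificate, which the algorithm then processes offline. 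The total space is $\tO(k)\cdot\tO(n)=\tO(kn)$.

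The hard part is $k$-edge-connectivity. The natural route, the $k$ edge-disjoint spanning forests of \cite{AhnGM12}, breaks down in SGT streams: there, after recovering forest $T_i$ one forms the residual stream $G-T_i$ by appending one deletion per recovered edge, which is valid only because in a dynamic stream every edge has frequency exactly $1$; in an SGT stream the edges of $T_i$ may have arbitrary (even exponentially large) frequencies that the sketch never reveals, so $G-T_i$ cannot be produced and the $i$-th forest sketch is sequentially dependent on $T_1,\dots,T_{i-1}$. To circumvent this I would instead design a dependency-free certificate mirroring the vertex-connectivity approach (carried out in \Cref{sec:SGT-streaming-algs}): subsample each edge with probability $1/k$, take a spanning forest, repeat $\tO(k)$ times \emph{independently}, and output the union; then, via a probabilistic-method analysis adapted to edge connectivity (every edge whose endpoints are not $k$-edge-connected in $G$ survives in some iteration, and every $k$-edge-connected pair remains $k$-edge-connected in the union), show this preserves the $k$-edge-connectivity answer. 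Since the $\tO(k)$ iterations are now mutually independent, each is just an SGT spanning-forest computation, the strong-$\ell_0$-sampler substitution suffices, and the space is $\tO(kn)$. Finally I would note that the resulting $\tO(n)$, $\tO(kn)$, $\tO(kn)$ bounds match, up to polylogarithmic factors, the best known dynamic-stream bounds of \cite{AhnGM12} and \cite{assadi2023tight}, which establishes the claim.
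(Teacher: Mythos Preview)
Your proposal is correct and mirrors the paper's proof almost exactly: connectivity and $k$-vertex-connectivity are handled by substituting the strong $\ell_0$-samplers of \Cref{lem:L0} into the algorithms of \cite{AhnGM12} and \cite{assadi2023tight}, while $k$-edge-connectivity requires the new independent edge-subsampling certificate you describe (developed in \Cref{sec:edgeconnsgt}), precisely because the disjoint-forests approach of \cite{AhnGM12} cannot form the residual $G-T_i$ without knowing the exact edge frequencies. One small inaccuracy worth fixing: the algorithm of \cite{assadi2023tight} subsamples \emph{vertices} (not edges) at rate $1/k$ and uses $\tO(k^2)$ iterations on the induced subgraphs, though this does not affect your overall plan of invoking it as a black box with the new samplers.
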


\begin{proof}
The upper bounds of connectivity and $k$-vertex-connectivity follow by using the \cite{AhnGM12} and \cite{assadi2023tight} algorithms respectively, and replacing their $\ell_0$ samplers with our strong $\ell_0$ samplers given by \Cref{lem:L0}. For the $k$-edge connectivity problem, it is not clear that the algorithm by \cite{AhnGM12} can be implemented under SGT streams. So we design a new $\tO(kn)$-space $k$-edge-connectivity algorithm under such streams, presented in \Cref{sec:edgeconnsgt}. The claim then follows. 
\end{proof}

We show a lower bound for connectivity in this model. First we prove the following lemma.
\begin{lemma}
    Any $(h,v)$-scheme for connectivity satisfies $(h+ q n) \cdot v \geq n^2 \cdot q$. In particular, when $q= n^{\polylog(n)}$, for $v=o(n)$ we need $h\geq n^{\polylog(n)}$.
\end{lemma}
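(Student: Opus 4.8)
The plan is to obtain this lemma as a direct instantiation of the \eqidx framework: reduce $\eqidx_{n,\,nq}$ to connectivity under SGT streams and then invoke \Cref{lem:eqidxmain}. Concretely, I would show that any $(h,v)$-scheme for connectivity on a graph that is bipartite with $n$ vertices on each side, run on an SGT stream whose maximum absolute frequency is at most $2^q$, yields an $(h,v)$-OMA-protocol for $\eqidx_{n,\,nq}$. Applying \Cref{lem:eqidxmain} with its two parameters set to $n$ and $nq$ respectively then gives $(h+nq)\cdot v=\Omega(n\cdot nq)=\Omega(n^2q)$, which is the asserted inequality. The ``in particular'' clause follows by rearranging: $v=o(n)$ forces $h+nq\geq n^2q/v=\omega(nq)$, and for $q=n^{\polylog(n)}$ this means $h=n^{\polylog(n)}$.

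The reduction is the graph construction already sketched in the technical overview, and the one mildly non-obvious design choice is to encode $q$-bit edge \emph{multiplicities}, so that the hard string fed into \eqidx has length $nq$ rather than $n$; that is exactly what makes the bound scale with $q$. In detail: Alice holds $x_1,\dots,x_n\in\{0,1\}^{nq}$, views each $x_i$ as $n$ blocks $x_{i,1},\dots,x_{i,n}$ of $q$ bits, each read as an integer in $[0,2^q-1]$, and builds a complete bipartite graph on $L=\{l_1,\dots,l_n\}$ and $R=\{r_1,\dots,r_n\}$ by inserting the edge $\{l_i,r_b\}$ exactly $x_{i,b}+1$ times. She runs the scheme's streaming algorithm on this stream prefix and sends Bob its memory state (the $\vc$ of the resulting protocol). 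Bob holds $y\in\{0,1\}^{nq}$ and $j\in[n]$; splitting $y$ into blocks $y_1,\dots,y_n$, he deletes $\{l_j,r_b\}$ exactly $y_b+1$ times for each $b$, receives from Merlin the proof-stream the Prover would have sent (the $\hc$), runs the scheme's verifier, and outputs $1$ (``$y=x_j$'') iff the verifier reports the graph is disconnected. Completeness and soundness of the scheme translate verbatim into those of the OMA-protocol, as in the proof of \Cref{lem:distit-main}.

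For correctness I would argue two cases. If $y=x_j$, then every edge $\{l_j,r_b\}$ has net frequency $x_{j,b}-y_b=0$, so $l_j$ is isolated in the support graph and the graph is disconnected. If $y\neq x_j$, then some block $b$ has $y_b\neq x_{j,b}$, so $\{l_j,r_b\}$ survives with nonzero frequency while all edges $\{l_i,r_b\}$ with $i\neq j$ still have frequency at least $1$; hence the support graph is $K_{n-1,n}$ on $(L\setminus\{l_j\})\cup R$ together with the edge $\{l_j,r_b\}$, which is connected. Thus the support graph is connected iff $y\neq x_j$. I would also check the turnstile budget: throughout the stream every edge frequency lies in $[-(2^q-1),\,2^q]$, so this is a legitimate SGT stream with parameter $2^q$. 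I do not expect a genuine obstacle here: the only things that need care are the choice of \eqidx parameters and the bookkeeping that keeps the stream within the SGT frequency bound (in particular, Bob's deletions may drive an edge frequency temporarily negative, which is permitted as long as its absolute value stays within the parameter); the remainder is the routine ``scheme $\Rightarrow$ OMA-protocol'' translation followed by an appeal to \Cref{lem:eqidxmain}.
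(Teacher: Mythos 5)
Your proposal is correct and follows the same route the paper takes (sketched in the proof and spelled out in the technical overview): reduce from $\eqidx_{n,\,nq}$ by encoding each $q$-bit block as an edge multiplicity in the complete bipartite graph, have Bob cancel the $j$-th row, and invoke \Cref{lem:eqidxmain} with parameters $(n, nq)$. Your correctness and frequency-budget checks are sound; the one cosmetic difference from the paper's one-line pointer to the XOR-Connectivity proof is that you (rightly) drop the auxiliary dummy vertex $s$ used there, since here the base support graph is already complete bipartite and hence connected before Bob's deletions, so $l_j$ being adjacent to some $r_b$ suffices.
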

The proof of this lemma is along the same lines as the proof of XOR-Connectivity except that we use $\eqidx_{n,q n}$.
Alice and Bob construct the same graph, but now for each pair of vertices there could be multiple edge insertions (up to $2^q$).

We now prove lower bounds for $k$-vertex-connectivity and $k$-edge-connectivity in this model. 
\begin{lemma}
    Any $(h,v)$-scheme for $k$-vertex-connectivity or $k$-edge-connectivity satisfies $(h+q) \cdot v \geq kn \cdot q$. In particular, when $q= 2^{\polylog(n)}$, for $v=o(kn)$ we need $h\geq 2^{\polylog(n)}$.
\end{lemma}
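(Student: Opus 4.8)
The plan is to reduce $k$-vertex-connectivity (and, with the same gadget, $k$-edge-connectivity) under SGT streams from $\eqidx_{p,q}$ and then invoke \Cref{lem:eqidxmain}; this is the graph analogue of the distinct-items reduction of \Cref{lem:distit-main}. Fix a complete bipartite ``gadget graph'' $H=K_{k,t}$ whose sides have $k$ and $t$ vertices, with $k\le t=\Theta(n)$, and fix a canonical bijection between $[p]$ and the $p:=kt=\Theta(kn)$ edges of $H$. Given an $(h,v)$-scheme for the problem and an $\eqidx_{p,q}$ instance $x_1,\dots,x_p\in\{0,1\}^q$, Alice reads each $x_i$ as an integer in $[0,2^q-1]$ and runs the scheme's verifier on the SGT stream that inserts the $i$-th edge of $H$ exactly $x_i+1$ times for every $i\in[p]$; she forwards the verifier's memory state to Bob. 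Bob, holding $y\in\{0,1\}^q$ and $j\in[p]$, reads $y$ as an integer in $[0,2^q-1]$, continues the stream by deleting the $j$-th edge $y+1$ times, receives the Prover's proof stream from Merlin, and runs the verifier's output routine. Every edge $i\ne j$ then has frequency $x_i+1\ge 1$ and edge $j$ has frequency $x_j-y$, so the support graph is $H$ if $x_j\ne y$ and $H$ with its $j$-th edge removed if $x_j=y$.

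The heart of the argument is a purely graph-theoretic claim: $H=K_{k,t}$ is \emph{exactly} $k$-connected in both the vertex and edge senses, whereas deleting any single edge destroys this. Indeed $\kappa(K_{a,b})=\lambda(K_{a,b})=\min(a,b)$: the upper bound follows by deleting the smaller side (isolating the larger) or all edges at a minimum-degree vertex, and the lower bound follows from a short direct connectivity check together with the sandwich $\kappa\le\lambda\le\delta$. Hence $\kappa(H)=\lambda(H)=k$, so $H$ is $k$-vertex- and $k$-edge-connected. If $e=\{u,w\}$ is any edge of $H$ with $u$ on the size-$k$ side and $w$ on the size-$t$ side, then in $H-e$ the vertex $w$ has degree $k-1$, so $\lambda(H-e)\le\delta(H-e)\le k-1$ and $\kappa(H-e)\le\lambda(H-e)\le k-1$, i.e.\ $H-e$ is neither $k$-edge- nor $k$-vertex-connected (for the vertex case one can alternatively isolate $u$ by removing the $k-1$ size-$t$--side vertices other than $w$). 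Therefore the verifier's output decides whether $x_j\ne y$; the usual bookkeeping then shows the resulting OMA protocol inherits the scheme's completeness and soundness (an honest Prover yields the correct answer up to completeness error, an arbitrary Prover yields the correct answer or $\bot$ up to soundness error), so it solves $\eqidx_{p,q}$ as an $(h,v)$-OMA protocol, exactly as in \Cref{lem:distit-main}. By \Cref{lem:eqidxmain}, $(h+q)\cdot v\ge\Omega(pq)=\Omega(knq)$. For the ``in particular'' clause, if $q=2^{\polylog(n)}$ and the scheme is non-trivial, i.e.\ $v=o(kn)$, then $h\le q$ would give $2qv\ge\Omega(knq)$ and hence $v=\Omega(kn)$, a contradiction; so $h>q=2^{\polylog(n)}$.

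I expect the main obstacle to be pinning down the gadget. Unlike the global-connectivity reductions, where isolating a single vertex suffices, here one missing edge must push the connectivity \emph{strictly below} $k$, which is possible only when the gadget's connectivity equals its minimum degree and that degree is exactly $k$; this forces one side of the complete bipartite gadget to have precisely $k$ vertices (so the construction should be read as $K_{k,t}$ minus an edge rather than $K_{kn,n}$ minus an edge). A secondary bookkeeping point is that the gadget has $p=kt$ edges but only $k+t$ vertices, so to state the bound for $n$-vertex graphs one takes $t=n-\Theta(k)$, which needs the harmless restriction $k\le t$ (i.e.\ $k=O(n)$, the only interesting regime anyway). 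The remaining pieces—well-definedness of the SGT stream, treating the verifier's randomness in the standard way, and the fact that $\log F=q$ may be taken as large as $2^{\polylog(n)}$ while the benchmark classical space stays $\tO(kn)$ by \Cref{clm:connclasstream} (the decision counters cost only $\polylog(n)$ bits even at these frequencies)—are routine.
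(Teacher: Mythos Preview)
Your proposal is correct and takes essentially the same approach as the paper: both reduce from $\eqidx_{kn,q}$ via a complete bipartite gadget $K_{k,\Theta(n)}$, arguing that the support graph is the full $K_{k,t}$ (which is exactly $k$-connected in both senses) when $x_j\ne y$ and $K_{k,t}$ minus one edge (which has minimum degree $k-1$, hence is not $k$-connected) when $x_j=y$. The only cosmetic difference is that you ensure non-zero support for the non-$j$ edges by having Alice insert $x_i+1$ copies, whereas the paper has Alice insert $x_i$ copies and Bob add one copy of every edge except the $j$-th; your write-up is also a bit more explicit about the graph-theoretic facts and the $k+t$ versus $n$ vertex-count bookkeeping.
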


Say that we are given an $(h,v)$-scheme for $k$-vertex-connectivity or $k$-edge-connectivity in the Support Graph Turnstile streaming model.
Given an instance of $\eqidx_{kn,q}$, Alice constructs the following graph stream. Let $V_L=\{u_1,\ldots,u_{n}\}$ and $V_R=\{v_1,\ldots,v_{k}\}$. For each $\ell\in [kn]$, Alice inserts an edge between the $\ell^{th}$ pair of vertices $x_\ell$ times. 
Given his inputs $y\in \{0,1\}^{q}$ and $j\in [kn]$, Bob creates the following graph stream. Bob deletes an edge between the $j^{th}$ pair of vertices $y$ times. Bob also adds an edge between every pair of vertices except the $j^{th}$ pair.
Alice runs the $(h,v)$-scheme on her graph stream and sends the memory content to Bob, who then continues running the scheme on his graph stream. Iff the graph is $k$-connected, then output $0$ for the $\eqidx_{kn,q}$ instance and output $1$ otherwise. 
Note that the complete bipartite graph is $k$-connected, but even if you remove one edge it is not $k$-connected.
The correctness follows from the following claim.
\begin{claim}\label{clm:k-conn-LB-corr}
    The graph formed by the stream is $k$-connected in the support graph turnstile model iff the $\eqidx_{kn,q}$ instance has value $0$.
\end{claim}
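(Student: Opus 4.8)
The plan is to pin down the support graph $G$ defined by the combined stream as an explicit function of the $\eqidx_{kn,q}$ value, and then apply standard connectivity facts about complete bipartite graphs. Interpreting each $x_\ell$ and $y$ as integers in $[0,2^q-1]$ via the canonical bijection with $\{0,1\}^q$, I would first compute the net frequency of each of the $kn$ bipartite pairs at the end of the stream: the $j$-th pair receives $x_j$ insertions from Alice and $y$ deletions (and no insertion) from Bob, so its frequency is $x_j-y$; every other pair $\ell\ne j$ receives $x_\ell\ge 0$ insertions from Alice and exactly one insertion from Bob, so its frequency is $x_\ell+1\ge 1$. Consequently every pair other than the $j$-th is always in the support, while the $j$-th pair is in the support iff $x_j-y\ne 0$, which---since the integer encoding is injective---is equivalent to $x_j\ne y$, i.e.\ to the $\eqidx_{kn,q}$ instance having value $0$. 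Hence $G$ is the complete bipartite graph $K_{n,k}$ on $V_L\cup V_R$ when the value is $0$, and $K_{n,k}$ with the single $j$-th edge removed when the value is $1$.

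It then remains to check two graph-theoretic facts, in the relevant regime $k\le n$ (so $|V_R|=k\le n=|V_L|$, and $|V(G)|=n+k>k$). First, $K_{n,k}$ is both $k$-vertex-connected and $k$-edge-connected: using \Cref{fact:menger} and the inequality $\kappa\le\lambda$, it suffices to exhibit $k$ internally vertex-disjoint paths between any two vertices, which is immediate (two left vertices are joined by the $k$ length-two paths through $V_R$; two right vertices by $k$ of the $n\ge k$ length-two paths through $V_L$; a left--right pair by the direct edge together with $k-1$ length-three paths, using that $n-1\ge k-1$). Second, $K_{n,k}$ minus one edge $\{u_a,v_b\}$ is neither $k$-vertex- nor $k$-edge-connected, because $u_a$ then has degree $k-1$, so $\kappa\le\lambda\le\delta\le k-1<k$ (equivalently, deleting the $k-1$ surviving neighbors of $u_a$ disconnects it from the rest of the graph).

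Combining the two parts: $G$ is $k$-connected---in either the vertex or the edge sense---iff $G=K_{n,k}$ iff $x_j\ne y$ iff the $\eqidx_{kn,q}$ instance has value $0$, which is exactly the claim. I do not anticipate a genuine obstacle: the only points that need care are ensuring the integer encoding is injective (so that zero net frequency on the $j$-th pair is \emph{equivalent} to $x_j=y$, not merely implied by it) and keeping track of the mild size condition $k\le n$ under which $K_{n,k}$ is truly $k$-connected. Everything else is routine bookkeeping once the support graph has been identified.
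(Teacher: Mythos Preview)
Your proposal is correct and follows essentially the same approach as the paper: compute the net frequency on each bipartite pair, conclude the support graph is $K_{n,k}$ when the \eqidx value is $0$ and $K_{n,k}$ minus the $j$-th edge when the value is $1$, and then invoke the fact that the former is $k$-connected while the latter is not. The paper simply asserts the two graph-theoretic facts without argument, whereas you supply short proofs via \Cref{fact:menger} and the minimum-degree observation; otherwise the reasoning is identical.
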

\begin{proof}
    Consider the case when the $\eqidx_{kn,q}$ instance has value $1$.
    We have $x_j=y$, so the $j^{th}$ pair of vertices has $y$ edge insertions and $y$ edge deletions. This means that there is no edge between the $j^{th}$ pair of vertices, implying that the support graph is not $k$-connected.

    Now consider the case when the $\eqidx_{kn,q}$ instance has value $0$.
    We have $x_j \neq y$, so the $j^{th}$ pair of vertices has a different number of edge insertions and edge deletions. This implies that the $j^{th}$ pair of vertices has a non-zero support implying that the $j^{th}$ edge exists.
    Every other pair of vertices has only edge insertions, including exactly $1$ edge insertion by Bob. This implies that every other pair of vertices has a non-zero support implying that the edge exists.
    Thus, the support graph is a complete bipartite graph implying that the support graph is $k$-connected.
\end{proof}

\Cref{clm:k-conn-LB-corr} shows that this protocol solves $\eqidx_{kn,q}$.
We know by \Cref{lem:eqidxmain} that the following holds for any protocol that solves $\eqidx_{kn,q}$:
\begin{equation*}
    (h+q) \cdot v \geq kn \cdot q.
\end{equation*}
The only help from Merlin is $h$ bits for the $(h,v)$-scheme. The only communication from Alice is sending the memory content that takes $v$ bits of space.
This bound implies that even if $h=O(q)$, $v=\Omega(kn)$. So for $v=o(kn)$ we need $h=\omega(q)$. Setting $q= 2^{\polylog(n)}$ implies that for $v=o(kn)$ we need $h\geq 2^{\polylog(n)}$. However, using the strong $\ell_0$-Samplers gives us $h=0$ and $v=\Ot(kn)$.

\begin{figure}[ht!]
    \begin{subfigure}[t]{0.45\textwidth}
	\centering
	{  \resizebox{80pt}{90pt}{ \tikzset{every picture/.style={line width=0.75pt}} 

\begin{tikzpicture}[x=0.75pt,y=0.75pt,yscale=-1,xscale=1]

\draw   (153,65) .. controls (153,61.69) and (155.69,59) .. (159,59) .. controls (162.31,59) and (165,61.69) .. (165,65) .. controls (165,68.31) and (162.31,71) .. (159,71) .. controls (155.69,71) and (153,68.31) .. (153,65) -- cycle ;
\draw   (153,95) .. controls (153,91.69) and (155.69,89) .. (159,89) .. controls (162.31,89) and (165,91.69) .. (165,95) .. controls (165,98.31) and (162.31,101) .. (159,101) .. controls (155.69,101) and (153,98.31) .. (153,95) -- cycle ;
\draw   (153,125) .. controls (153,121.69) and (155.69,119) .. (159,119) .. controls (162.31,119) and (165,121.69) .. (165,125) .. controls (165,128.31) and (162.31,131) .. (159,131) .. controls (155.69,131) and (153,128.31) .. (153,125) -- cycle ;
\draw   (153,166) .. controls (153,162.69) and (155.69,160) .. (159,160) .. controls (162.31,160) and (165,162.69) .. (165,166) .. controls (165,169.31) and (162.31,172) .. (159,172) .. controls (155.69,172) and (153,169.31) .. (153,166) -- cycle ;
\draw   (203,65) .. controls (203,61.69) and (205.69,59) .. (209,59) .. controls (212.31,59) and (215,61.69) .. (215,65) .. controls (215,68.31) and (212.31,71) .. (209,71) .. controls (205.69,71) and (203,68.31) .. (203,65) -- cycle ;
\draw   (203,95) .. controls (203,91.69) and (205.69,89) .. (209,89) .. controls (212.31,89) and (215,91.69) .. (215,95) .. controls (215,98.31) and (212.31,101) .. (209,101) .. controls (205.69,101) and (203,98.31) .. (203,95) -- cycle ;
\draw   (203,125) .. controls (203,121.69) and (205.69,119) .. (209,119) .. controls (212.31,119) and (215,121.69) .. (215,125) .. controls (215,128.31) and (212.31,131) .. (209,131) .. controls (205.69,131) and (203,128.31) .. (203,125) -- cycle ;
\draw   (203,165) .. controls (203,161.69) and (205.69,159) .. (209,159) .. controls (212.31,159) and (215,161.69) .. (215,165) .. controls (215,168.31) and (212.31,171) .. (209,171) .. controls (205.69,171) and (203,168.31) .. (203,165) -- cycle ;
\draw    (165,65) -- (203,65) ;
\draw    (165,65) -- (203,95) ;
\draw    (165,65) -- (203,166) ;
\draw    (165,95) -- (203,95) ;
\draw    (165,125) -- (203,125) ;
\draw  [dashed]  (165,95) -- (203,125) ;
\draw    (165,165) -- (203,125) ;
\draw    (165,165) -- (203,65) ;
\draw    (165,125) -- (203,95) ;

\draw (145,42) node [anchor=north west][inner sep=0.75pt]   [align=left] {$n$};
\draw (210,42) node [anchor=north west][inner sep=0.75pt]   [align=left] {$k$};
\draw (135,90) node [anchor=north west][inner sep=0.75pt]   [align=left] {$u$};
\draw (220,120) node [anchor=north west][inner sep=0.75pt]   [align=left] {$v$};

\draw (206,130) node [anchor=north west][inner sep=0.75pt]   [align=left] {\small{\vdots}};
\draw (155,130) node [anchor=north west][inner sep=0.75pt]   [align=left] {\small{\vdots}};

\end{tikzpicture}}  }
 \caption{Alice adds $x_i$ edges for pair $i$.
		\label{fig:kconn1}}
    \end{subfigure}
    \begin{subfigure}[t]{0.45\textwidth}
	\centering
	{   \resizebox{80pt}{90pt}{ \tikzset{every picture/.style={line width=0.75pt}} 

\begin{tikzpicture}[x=0.75pt,y=0.75pt,yscale=-1,xscale=1]

\draw   (153,65) .. controls (153,61.69) and (155.69,59) .. (159,59) .. controls (162.31,59) and (165,61.69) .. (165,65) .. controls (165,68.31) and (162.31,71) .. (159,71) .. controls (155.69,71) and (153,68.31) .. (153,65) -- cycle ;
\draw   (153,95) .. controls (153,91.69) and (155.69,89) .. (159,89) .. controls (162.31,89) and (165,91.69) .. (165,95) .. controls (165,98.31) and (162.31,101) .. (159,101) .. controls (155.69,101) and (153,98.31) .. (153,95) -- cycle ;
\draw   (153,125) .. controls (153,121.69) and (155.69,119) .. (159,119) .. controls (162.31,119) and (165,121.69) .. (165,125) .. controls (165,128.31) and (162.31,131) .. (159,131) .. controls (155.69,131) and (153,128.31) .. (153,125) -- cycle ;
\draw   (153,166) .. controls (153,162.69) and (155.69,160) .. (159,160) .. controls (162.31,160) and (165,162.69) .. (165,166) .. controls (165,169.31) and (162.31,172) .. (159,172) .. controls (155.69,172) and (153,169.31) .. (153,166) -- cycle ;
\draw   (203,65) .. controls (203,61.69) and (205.69,59) .. (209,59) .. controls (212.31,59) and (215,61.69) .. (215,65) .. controls (215,68.31) and (212.31,71) .. (209,71) .. controls (205.69,71) and (203,68.31) .. (203,65) -- cycle ;
\draw   (203,95) .. controls (203,91.69) and (205.69,89) .. (209,89) .. controls (212.31,89) and (215,91.69) .. (215,95) .. controls (215,98.31) and (212.31,101) .. (209,101) .. controls (205.69,101) and (203,98.31) .. (203,95) -- cycle ;
\draw   (203,125) .. controls (203,121.69) and (205.69,119) .. (209,119) .. controls (212.31,119) and (215,121.69) .. (215,125) .. controls (215,128.31) and (212.31,131) .. (209,131) .. controls (205.69,131) and (203,128.31) .. (203,125) -- cycle ;
\draw   (203,165) .. controls (203,161.69) and (205.69,159) .. (209,159) .. controls (212.31,159) and (215,161.69) .. (215,165) .. controls (215,168.31) and (212.31,171) .. (209,171) .. controls (205.69,171) and (203,168.31) .. (203,165) -- cycle ;
\draw  [line width= 0.01mm, draw=blue!50]  (165,65) -- (203,65) ;
\draw  [line width= 0.01mm, draw=blue!50]  (165,65) -- (203,95) ;
\draw  [line width= 0.01mm, draw=blue!50]  (165,65) -- (203,166) ;
\draw  [line width= 0.01mm, draw=blue!50]  (165,95) -- (203,95) ;
\draw  [line width= 0.01mm, draw=blue!50]  (165,125) -- (203,125) ;
\draw  [line width= 2pt, draw=red]  (165,95) -- (203,125) ;
\draw   [line width= 0.01mm, draw=blue!50] (165,165) -- (203,125) ;
\draw   [line width= 0.01mm, draw=blue!50] (165,165) -- (203,65) ;
\draw  [line width= 0.01mm, draw=blue!50]  (165,125) -- (203,95) ;

\draw  [line width= 0.01mm, draw=blue!50]  (165,65) -- (203,125) ;

\draw  [line width= 0.01mm, draw=blue!50]  (165,95) -- (203,65) ;
\draw   [line width= 0.01mm, draw=blue!50] (165,95) -- (203,165) ;

\draw [line width= 0.01mm, draw=blue!50]   (165,125) -- (203,65) ;
\draw   [line width= 0.01mm, draw=blue!50] (165,125) -- (203,165) ;

\draw  [line width= 0.01mm, draw=blue!50]  (165,165) -- (203,95) ;
\draw [line width= 0.01mm, draw=blue!50]   (165,165) -- (203,165) ;

\draw (145,42) node [anchor=north west][inner sep=0.75pt]   [align=left] {$n$};
\draw (210,42) node [anchor=north west][inner sep=0.75pt]   [align=left] {$k$};
\draw (135,90) node [anchor=north west][inner sep=0.75pt]   [align=left] {$u$};
\draw (220,120) node [anchor=north west][inner sep=0.75pt]   [align=left] {$v$};

\draw (206,130) node [anchor=north west][inner sep=0.75pt]   [align=left] {\small{\vdots}};
\draw (155,130) node [anchor=north west][inner sep=0.75pt]   [align=left] {\small{\vdots}};

\end{tikzpicture}}  }
	\caption{Bob deletes $(u,v)$ $y$ times and adds every other pair once.
		\label{fig:kconn3}}
  \end{subfigure}
\end{figure}

\section{Classical Streaming Algorithms under SGT streams}\label{sec:SGT-streaming-algs}


\subsection{Basic Tools: Non-Zero Detection and \texorpdfstring{$\ell_0$}{L0}-Sampling for Large Frequencies}\label{subsec:special-counters}

In this section, we introduce an $\ell_0$-sampler that works when the entries in the vector are exponential 
in size.

\subsubsection{Non-zero detection}

Consider the following simple problem.

\begin{problem}\label{prob:special-counter}
Given a stream containing $+$'s and $-$'s output whether the number of $+$'s is exactly equal to 
the number of $-$'s or not.
The promise is that the difference between the number of $+$'s and $-$'s is less than $\val$ at the 
end of 
the stream.	
\end{problem}

A simple solution to \Cref{prob:special-counter} is to store a deterministic counter mod $\val$ which 
takes 
$O(\log 
\val)$ bits of space.
It is easy to show that we cannot do better deterministically.
\begin{claim}\label{clm:det-counter-LB}
	Any deterministic algorithm for \Cref{prob:special-counter} needs $\log \val$ bits of space.
\end{claim}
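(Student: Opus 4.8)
The plan is to argue by a standard information-theoretic / indistinguishability argument: if an algorithm uses fewer than $\log \val$ bits, it has fewer than $\val$ reachable memory states, so two distinct ``prefix counts'' must collide, and then I can extend both prefixes by the same suffix to force a wrong answer on one of them.

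First I would set up the adversary's prefixes. For each integer $t$ with $0 \le t \le \val-1$, let $\sigma_t$ be the stream consisting of exactly $t$ copies of $+$ (and no $-$'s). These are $\val$ distinct prefixes, all consistent with the promise (at this point the imbalance is $t < \val$). Run the deterministic algorithm on each $\sigma_t$ and let $s_t$ denote the memory state it reaches. If the algorithm uses strictly fewer than $\log \val$ bits of space, then the number of distinct memory states is at most $2^{\log \val - 1} < \val$ (more carefully: fewer than $\val$ states), so by pigeonhole there exist $t_1 \neq t_2$ in $\{0,\dots,\val-1\}$ with $s_{t_1} = s_{t_2}$.

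Next I would derive the contradiction by appending a common suffix. Without loss of generality $t_1 < t_2$; append the suffix $\tau$ consisting of $t_2$ copies of $-$. Consider the two full streams $\sigma_{t_1}\tau$ and $\sigma_{t_2}\tau$. For $\sigma_{t_2}\tau$ the number of $+$'s ($t_2$) equals the number of $-$'s ($t_2$), so the correct output is ``equal''. For $\sigma_{t_1}\tau$ there are $t_1$ plusses and $t_2 > t_1$ minuses, so they are not equal; moreover the final imbalance has absolute value $t_2 - t_1 < \val$, so this stream also respects the promise and the correct output is ``not equal''. But since the algorithm is deterministic and was in the same state $s_{t_1} = s_{t_2}$ after the prefix, it processes the identical suffix $\tau$ identically and produces the same output on both streams --- contradicting the fact that the correct answers differ. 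Hence the algorithm must use at least $\log \val$ bits.

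I do not expect a genuine obstacle here; the only point needing a little care is the exact accounting of how many states ``fewer than $\log\val$ bits'' permits and ensuring the extended streams stay within the promised imbalance bound, both of which are handled above by choosing the suffix length to be $t_2 \le \val - 1$ so that $|t_2 - t_1| < \val$. One could also phrase the pigeonhole step as: a space-$s$ algorithm has at most $2^s$ states, so correctness on the $\val$ prefixes $\sigma_0,\dots,\sigma_{\val-1}$ (which must be pairwise distinguishable by some continuation, as just shown) forces $2^s \ge \val$, i.e. $s \ge \log\val$.
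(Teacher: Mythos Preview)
Your proposal is correct and follows essentially the same pigeonhole-plus-suffix argument as the paper: collide two prefix states corresponding to different plus-counts, then append a common block of minuses to force differing correct outputs. The only cosmetic difference is that the paper appends the \emph{smaller} count of minuses (so one stream balances to zero and the other has a positive surplus), whereas you append the larger count; both choices work, and your version is arguably a bit more careful in verifying that the promise $|t_2-t_1|<\val$ is maintained.
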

\begin{proof}
	Assume towards a contradiction that there is a deterministic algorithm that solves 
	\Cref{prob:special-counter} in 
	less than $\log \val$ bits of space.
	Consider the different states of memory when we have exactly $i$ $+$'s in the input stream.
	Since the memory used is less than $\log \val$ bits, there exist $i$ and $j$ (wlog $i<j$) such that 
	the state 
	of memory after inserting $i$ times is the same as the state of memory after inserting $j$ times.
	Now consider $i$ $-$'s appear in the stream after the $+$'s.
	In the case where there were $i$ $+$'s, the answer is $0$, but the answer in the other case is $1$.
	However, the algorithm cannot differentiate between the two cases because the memory state was the same 
	after seeing the $+$'s, giving us a contradiction.
\end{proof}

Now the question is whether randomized algorithms can do better. 
\Cref{clm:det-counter-LB} implies a lower bound of $\Omega(\log \log \val)$ bits for the private 
randomness 
communication complexity of this problem.
There exists a randomized algorithm for this problem that uses $O(\log \log \val)$ bits 
of 
space.

\begin{lemma}
	There exists a randomized algorithm that solves \Cref{prob:special-counter} with probability 
	$1-1/\polylog(\val)$ 
	and uses $O(\log \log \val)$ bits of space.
\end{lemma}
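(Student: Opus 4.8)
The plan is to replace the deterministic counter ``mod $\val$'' by a counter maintained modulo a \emph{random prime} $P$ drawn from a small range, exploiting the fact that a nonzero integer of magnitude less than $\val$ has few prime factors. Concretely, fix $T := (\log \val)^{c}$ for a sufficiently large constant $c$, sample a uniformly random prime $P \in [2, T]$ at the start of the stream, and maintain $s \equiv (a - b) \pmod P$ as the stream is read, where $a$ and $b$ count the $+$'s and $-$'s seen so far (increment $s$ on $+$, decrement on $-$, always reduce mod $P$). At the end, output ``equal'' iff $s \equiv 0 \pmod P$.

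For correctness, let $D := a - b$ be the final difference; by the promise, $|D| < \val$. If $D = 0$ then $s = 0$, so we always answer correctly (zero completeness error). If $D \ne 0$, we answer incorrectly precisely when $P \mid D$. Since $|D| < \val$, the integer $D$ has at most $\log_2 \val$ distinct prime divisors, hence at most $\log_2 \val$ choices of $P$ are ``bad''. By the prime number theorem, the number of primes in $[2, T]$ is $\Theta(T / \log T) = \Theta\big((\log \val)^{c} / (c \log \log \val)\big)$; taking $c$ large enough makes this at least $\log_2 \val \cdot \polylog(\val)$, so $\Pr[P \mid D] \le 1/\polylog(\val)$, as required.

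For the space bound, the counter $s$ always lies in $\{0, 1, \dots, P-1\} \subseteq \{0, \dots, T-1\}$ and thus uses $O(\log T) = O(c \log \log \val) = O(\log \log \val)$ bits, and storing $P$ itself costs the same. It remains to sample $P$ within this budget: since $T = \polylog(\val)$, one can repeatedly draw a uniform integer in $[2, T]$ (using $O(\log T)$ random bits) and test primality by trial division up to $\sqrt{T}$, which runs in $O(\log T)$ space; the prime number theorem guarantees a prime appears within $O(\log T)$ draws in expectation, and capping the number of draws keeps the overall failure probability within $1/\polylog(\val)$. (Alternatively, one may fix in advance any explicit list of $\polylog(\val)$ primes below $T$ and sample uniformly from it, sidestepping primality testing.)

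The main obstacle is the tension in the choice of $T$: it must be small enough that a residue modulo $P \le T$ fits in $O(\log \log \val)$ bits, yet large enough that the prime count below $T$ exceeds the at-most-$\log_2 \val$ prime divisors of $D$ by a $\polylog(\val)$ factor. The prime number theorem resolves this directly---there are roughly $T^{1 - o(1)}$ primes below $T$, so any $T$ polynomial in $\log \val$ of sufficiently large degree works---and the remaining care is purely bookkeeping the constants and confirming that sampling a random small prime is itself doable in $O(\log \log \val)$ space.
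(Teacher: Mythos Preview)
Your proposal is correct and follows essentially the same approach as the paper: sample a random prime $P$ from among $\polylog(\val)$ candidates, maintain the counter modulo $P$, and argue that a nonzero difference $|D|<\val$ has at most $\log \val$ prime factors so the failure probability is $1/\polylog(\val)$. Your writeup is in fact more careful than the paper's about how to sample $P$ in the required space, but the core idea is identical.
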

\begin{proof}
	Let $p$ be a random prime within the first $\polylog(\val)$ primes.
	Storing $p$ takes $O(\log \log \val)$ bits of space by the prime density theorem.
	We store a counter mod $p$ to solve \Cref{prob:special-counter}.
	
	Let the counter value be $\val'<\val$.
	We make a mistake if our counter is $0$, which happens when $p$ is a factor of $\val'$.
	Observe that $\val'$ has at most $\log \val$ prime factors because the product of all prime factors 
	is 
	at most 
	$\val'<\val$ and all prime factors are at least $2$.
	
	So the probability that $p$ divides $\val'$ is at most $\log \val/\polylog(\val)$ implying that the 
	probability of 
	success is at least $1-1/\polylog(\val)$.
\end{proof}

If we just slightly change the proof and let $p$ be a random prime within the first $\polylog(\val) \cdot \poly(n)$ primes, then we get the following:

\begin{corollary}\label{cor:nonzdetect}
	There exists a randomized algorithm that, given a parameter $n$, solves 
	\Cref{prob:special-counter} with probability $1-1/\polylog(\val) \cdot \poly(n)$ 
	and uses $O(\log \log \val + \log n)$ bits of space.
\end{corollary}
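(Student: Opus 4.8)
The plan is to reuse the proof of the preceding lemma almost verbatim, changing only the size of the pool from which the random prime is drawn in the algorithm for \Cref{prob:special-counter}. Instead of letting $p$ be a uniformly random prime among the first $\polylog(\val)$ primes, I would let $p$ be a uniformly random prime among the first $\polylog(\val)\cdot\poly(n)$ primes, with the hidden polynomial and polylogarithmic factors chosen generously enough to absorb the $\log\val$ term that shows up in the error analysis below. As before, the algorithm keeps a single counter modulo $p$, incrementing it on every $+$ and decrementing it on every $-$, and at the end of the stream reports ``equal'' iff the counter is $0$.

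First I would record the space usage. By the prime density theorem the $t$-th prime is $O(t\log t)$, so every prime in the pool is at most $\poly(\polylog(\val)\cdot\poly(n))$, and hence both $p$ itself and the counter modulo $p$ fit in $O(\log p) = O(\log\log\val + \log n)$ bits, which is the claimed space bound.

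Next I would bound the error probability. Let $\val'$ be the true signed difference between the number of $+$'s and the number of $-$'s; by the promise $|\val'| < \val$. If $\val' = 0$ the counter is $0$ and the algorithm is always correct. If $\val' \neq 0$, the algorithm errs only when $p \mid \val'$; since $|\val'| < \val$ and every prime factor is at least $2$, the integer $\val'$ has at most $\log\val$ distinct prime divisors, so a union bound over them gives $\Pr[p \mid \val'] \le \log\val / (\polylog(\val)\cdot\poly(n))$, which we rewrite as $1/(\polylog(\val)\cdot\poly(n))$ by letting the polylogarithmic factor swallow the $\log\val$. Hence the algorithm succeeds with probability $1 - 1/(\polylog(\val)\cdot\poly(n))$, as desired.

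There is essentially no obstacle here: the corollary is a purely quantitative strengthening of the preceding lemma, obtained by enlarging the prime pool. The only point needing minor care is keeping the $\polylog$/$\poly$ bookkeeping consistent, so that the $\log\val$ factor arising from the union bound over the prime divisors of $\val'$ is genuinely absorbed into the stated failure probability.
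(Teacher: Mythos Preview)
Your proposal is correct and matches the paper's own approach essentially verbatim: the paper also just remarks that letting $p$ be a random prime among the first $\polylog(\val)\cdot\poly(n)$ primes (instead of the first $\polylog(\val)$) immediately yields the corollary, with the space and error analysis carrying over exactly as you describe.
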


\subsubsection{Strong \texorpdfstring{$\ell_0$}{L0}-samplers}\label{subsec:Strong-L0}


In this subsection, we design an $\ell_0$-sampler that can handle large values.
Let the universe of elements be $[n]$.
Consider the following problem:
\begin{problem}\label{prob:special-L0sampling}
	Given a stream containing insertions and deletions of elements in $[n]$, output an element whose 
	frequency is non-zero. 
	The promise is that the value of each coordinate is between $-\val$ and $\val$ at the end of the 
	stream.
\end{problem}

The input can be thought of as a vector, so we will go back and forth between thinking of the input as a vector and a stream of elements.
We know how to solve this problem using $\polylog(n)$ space when $\alpha=\poly(n)$.
We now use the decision counters from \Cref{subsec:special-counters} to solve 
\Cref{prob:special-L0sampling}.

\begin{lemma}\label{lem:L0}
	There is a randomized algorithm that solves 
	\Cref{prob:special-L0sampling} with probability $1-1/\polylog(\val) \cdot \poly(n)$ 
	and uses $\poly(\log \log \val + \log n)$ bits of space.
\end{lemma}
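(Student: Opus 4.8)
The plan is to build the sampler in three layers, following the standard $\ell_0$-sampling blueprint but replacing every exact integer counter by a small randomized gadget that tolerates frequencies up to $\val$. First I would solve the \emph{support-$1$ recovery} problem: given the promise that the frequency vector $f\in[-\val,\val]^n$ has exactly one non-zero coordinate $i^\star$, recover $i^\star$. The usual trick of maintaining $\sum_j j\cdot f(j)$ and $\sum_j f(j)$ and dividing fails because these sums have magnitude up to $n\val$, costing $\Omega(\log(n\val))$ bits. Instead I would run a \emph{non-adaptive binary search}: for each bit position $b\in\{0,\dots,\lceil\log n\rceil-1\}$, keep a decision counter (\Cref{cor:nonzdetect}) that receives the insertions/deletions of exactly those elements $j$ whose $b$-th bit is $1$. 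At the end of the stream, counter $b$ reports ``non-zero'' iff the $b$-th bit of $i^\star$ is $1$ (since $f$ is supported on the single coordinate $i^\star$, this sub-stream has zero net frequency iff that bit is $0$). Reading off all $\lceil\log n\rceil$ bits recovers $i^\star$; the total space is $O(\log n\cdot(\log\log\val+\log n))=\poly(\log\log\val+\log n)$, and by a union bound over the $O(\log n)$ counters the failure probability is $1/(\polylog(\val)\cdot\poly(n))$.

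Next I would handle the \emph{no-promise} case by the standard subsampling reduction. Guess the support size $s$ in powers of two, $s\in\{1,2,4,\dots,n\}$; for each guess, hash $[n]$ into $[s]$ using a pairwise-independent hash function (\Cref{fact:hashfam}) and, for each bucket, run an independent copy of the support-$1$ recovery gadget restricted to the elements falling in that bucket. When $s$ is within a factor of two of the true support size, a fixed bucket contains exactly one non-zero coordinate with constant probability; repeating each level $O(\log n)$ times in parallel drives the success probability to $1-1/\poly(n)$. The subtlety here, and the step I expect to be the main obstacle, is the \emph{verification} sub-gadget: a bucket-level recovery gadget will happily output \emph{some} element even when the bucket's true support is $0$ or $\ge 2$, so we need a separate test that certifies ``this bucket has support exactly $1$'' before trusting its output. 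For this I would attach to each bucket a second-level sketch that randomly partitions the bucket's elements into $\Theta(\log n)$ sub-parts and runs a decision counter on each sub-part's summed frequency; if the support is $0$ all sub-counters report zero, if it is $1$ exactly one reports non-zero, and if it is $\ge 2$ then with constant probability at least two report non-zero. Again $O(\log n)$ independent repetitions boost this to high probability. The delicate point is that a decision counter on a \emph{sum of several coordinates} can still read zero even when the coordinates are individually non-zero (their frequencies cancel), so the analysis must argue that under the random sub-partition, for a support-$\ge 2$ bucket, two distinct non-zero coordinates land in different sub-parts \emph{and} neither sub-part's total happens to vanish --- the latter handled by the fact that the total of a sub-part containing a single non-zero coordinate is that coordinate's (non-zero) frequency, which the decision counter catches except with probability $1/(\polylog(\val)\cdot\poly(n))$.

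Finally I would assemble the pieces: over all $O(\log n)$ size-guesses and $O(\log n)$ repetitions, take any bucket whose verification gadget certifies support $1$ and output the element its recovery gadget returns. Correctness follows by a union bound over the $\poly(\log n)$ gadget instances, each failing with probability $1/(\polylog(\val)\cdot\poly(n))$, so the overall failure probability is $1/(\polylog(\val)\cdot\poly(n))$ as claimed. The space is the number of gadget instances, $\poly(\log n)$, times the per-gadget cost $O((\log n)\cdot(\log\log\val+\log n))$ for the inner binary search plus $O(\log n)$ decision counters, i.e. $\poly(\log\log\val+\log n)$ bits in total. I would also note that the random bits needed across all hash functions and all $O(\log n)$ decision-counter primes can be reduced to fit within this space budget via Nisan's PRG (\Cref{fact:NisanPRG}), since the whole sketch is a small-space read-once computation over the stream; this is the routine bookkeeping step I would relegate to the end of the proof.
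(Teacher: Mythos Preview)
Your three-layer plan---support-$1$ recovery via bitwise decision counters, subsampling to reduce to support $1$, a partition-based support-$1$ detector, and Nisan's PRG at the end---is exactly the paper's construction. Two places need tightening.

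First, ``for each bucket, run an independent copy'' cannot be meant literally: tracking all $s$ buckets at level $s$ costs $\sum_{i} 2^i = \Theta(n)$ gadgets. What you actually analyze (``a fixed bucket contains exactly one non-zero coordinate with constant probability'') is a \emph{single} bucket per guess, which is equivalent to the paper's independent subsampling with probability $1/2^i$ followed by one recovery gadget on the sample.

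Second, your cancellation argument for the verification gadget is incomplete. You argue that two chosen non-zero coordinates $e,e'$ land in different sub-parts and that ``the total of a sub-part containing a single non-zero coordinate is that coordinate's frequency''---but when the support is $\ge 3$, the sub-parts receiving $e$ or $e'$ may also receive other non-zero coordinates that cancel them, so neither part need report non-zero. The paper resolves this cleanly by using only \emph{four} parts and a two-case analysis on the four partial sums \emph{before} $e,e'$ are placed: either at least two of the four sums are zero (then with probability $\ge 1/16$ send $e,e'$ to two distinct zero parts, making both non-zero), or at least three are non-zero (then with probability $\ge 1/4$ send both $e,e'$ to the same part, leaving at least two other non-zero parts untouched). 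Since with four parts exactly one of the two cases must hold, this yields constant success probability per repetition regardless of the support size or the frequency values, and then $O(\log n + \log\log\val)$ repetitions suffice. Your $\Theta(\log n)$-part variant can be salvaged with the same case analysis, but the per-repetition success probability then degrades to $\Theta(1/\log^2 n)$ in the worst case, so $O(\log n)$ repetitions would not give high probability as you claim; sticking with four parts is both simpler and tighter.
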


We first solve the problem when the support size is $1$.
The idea is to compute inner products with vectors that reveal the bits of the position of the 
non-zero element. This idea is called non-adaptive binary search.

Consider the set $A$ of $\log n$ vectors of size $n$ where the $i^{th}$ vector $a_i$ ($i \in 
\set{0,1,\ldots, \log n-1}$) is defined as the vector which has a $1$ at an index iff the $i^{th}$ bit of the index is $0$.
We can also describe it as follows:
The vector $a_i$ is made of $n/2^i$ blocks of size $2^i$ each. Alternately fill the blocks with all $1$'s and then all $0$'s.
So $a_0$ will be the vector $10$ repeating, $a_1$ will be the vector $1100$ repeating, and $a_{\log 
n -1}$ will be the vector with $1$'s in the first half and $0$'s in the second half.
We use the set $A$ to get the following lemma:
\begin{lemma}\label{lem:L0-supp1}
	There is a randomized algorithm that solves 
	\Cref{prob:special-L0sampling} with the promise that the support size is $1$ with probability $1-1/\polylog(\val) \cdot \poly(n)$ 
	and uses $O(\log n \cdot \log \log \val + \log^2 n)$ bits of space.
 \end{lemma}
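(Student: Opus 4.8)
The plan is to solve \Cref{prob:special-L0sampling} under the support-size-$1$ promise by combining \emph{non-adaptive binary search} (via the vectors in the set $A$) with the \emph{non-zero detectors} from \Cref{cor:nonzdetect}. The high-level idea: if exactly one coordinate $i^\star\in[n]$ has nonzero frequency $f^\star$ with $|f^\star|\le \val$, then for each $i\in\{0,1,\ldots,\log n-1\}$, the inner product $\langle a_i, \ff\rangle$ equals $f^\star$ if the $i$th bit of $i^\star$ is $0$, and equals $0$ otherwise. So if we could test, for each $i$, whether $\langle a_i,\ff\rangle$ is zero or not, we would read off all $\log n$ bits of $i^\star$ and recover it. Testing ``is this running linear combination zero'' is exactly what the non-zero detector does: we treat the stream of updates, restricted and signed according to $a_i$, as a $\pm$-stream fed into one instance of \Cref{cor:nonzdetect}.

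Concretely, first I would fix the $\log n$ vectors $a_0,\ldots,a_{\log n-1}$ as described in the statement, and maintain $\log n$ independent non-zero detectors $D_0,\ldots,D_{\log n-1}$, each instantiated via \Cref{cor:nonzdetect} with parameter $n$. When an update $(e,\Delta)$ to coordinate $e\in[n]$ arrives (with $\Delta\in\{+1,-1\}$, or decomposed into that many unit updates), I feed $\Delta$ copies of a $+$ into $D_i$ if $a_i[e]=1$ and nothing otherwise (a deletion feeds $-$'s analogously); note $a_i[e]$ is computable from $e$ and $i$ in $O(\log n)$ bits of workspace without storing $a_i$. At the end of the stream, $D_i$ reports whether $\langle a_i,\ff\rangle=0$. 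Since each $|\langle a_i,\ff\rangle|\le |f^\star|\le \val$, the promise of \Cref{prob:special-counter} (the net count is bounded by $\val$) is satisfied, so \Cref{cor:nonzdetect} applies. I then set the $i$th bit of my guess for $i^\star$ to $0$ if $D_i$ reports ``zero'' and to $1$ if it reports ``nonzero'', and output the resulting index. The space is $\log n$ detectors, each of size $O(\log\log\val + \log n)$, for a total of $O(\log n\cdot\log\log\val + \log^2 n)$ bits, as claimed; the success probability is $1-1/(\polylog(\val)\cdot\poly(n))$ by a union bound over the $\log n$ detectors, absorbing the $\log n$ factor into the $\poly(n)$ slack (and choosing the polynomial in \Cref{cor:nonzdetect}'s parameter large enough).

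The main subtlety — and the step I would be most careful about — is the correctness of the bit-extraction when the true frequency $f^\star$ is itself zero, i.e.\ verifying that the support-size-$1$ promise really does mean ``exactly one coordinate is truly nonzero'' and that the detectors cannot all spuriously report ``zero.'' Under the stated promise there is a coordinate with $f^\star\neq 0$, so at least the detectors corresponding to the $0$-bits of $i^\star$ see a genuinely nonzero linear combination and report correctly with high probability; the detectors for the $1$-bits see an exactly-zero combination and also report correctly. Hence every bit is read off correctly with high probability. A second minor point is handling a single stream token $(e,\Delta)$ with $|\Delta|$ large: rather than literally feeding $|\Delta|$ signs one at a time, I would note that \Cref{cor:nonzdetect} maintains a counter mod a small prime $p$, so I can simply add $\Delta\cdot a_i[e] \bmod p$ in one step, which is consistent with the detector's semantics and keeps the per-update time bounded. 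With these points addressed, \Cref{lem:L0-supp1} follows, and it will serve (together with a support-size-guessing and a support-size-distinguishing gadget, handled separately) as the core building block for the general \Cref{lem:L0}.
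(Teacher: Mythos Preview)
Your approach is essentially identical to the paper's: maintain $\log n$ decision counters (non-zero detectors) from \Cref{cor:nonzdetect}, one for each vector $a_i\in A$, and read off the bits of $i^\star$ from which counters report zero. The space and error analyses match exactly.

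One small slip: your bit-assignment rule is inverted. You correctly observe that $\langle a_i,\ff\rangle = f^\star \neq 0$ precisely when the $i$th bit of $i^\star$ is $0$ (since $a_i$ has a $1$ at index $e$ iff the $i$th bit of $e$ is $0$), but then you write ``set the $i$th bit \ldots\ to $0$ if $D_i$ reports `zero'\,'' --- it should be the other way around (zero report $\Rightarrow$ bit is $1$; nonzero report $\Rightarrow$ bit is $0$), exactly as the paper's \Cref{alg:L0-supp1} does. With that flip, your argument is complete and matches the paper's.
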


Consider the following algorithm for \Cref{lem:L0-supp1}:

\begin{Algorithm}\label{alg:L0-supp1}
	A dynamic streaming algorithm for \Cref{prob:special-L0sampling} with support size $1$.
	
	\medskip
	
	\textbf{Input:} A vector $v$ specified in a dynamic stream.
	
	\medskip
	
	\textbf{Output:} The element with non-zero frequency.
	
	\medskip
		
	\textbf{During the stream:}

	\begin{enumerate}
		\item Store inner products of the frequency vector with all vectors in set $A$ using the decision 
		counters.
	\end{enumerate}
	
	\textbf{Post-Processing:}	
	\begin{enumerate}
		\item Let $s$ be the $\log n$ bit solution.
		\item If the inner product with $a_i$ is $0$ then set $s_i=1$, otherwise set $s_i=0$.
		\item Output $s$ as the non-zero index (where $i=0$ is the least significant bit).
	\end{enumerate}
\end{Algorithm}

We first show that this algorithm can be implemented in small space.
\begin{claim}\label{clm:L0-supp1-space}
	The space taken by \Cref{alg:L0-supp1} is $O(\log n \cdot \log \log \val + \log^2 n)$ bits.
\end{claim}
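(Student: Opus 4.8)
The plan is to bound the space used by \Cref{alg:L0-supp1} by accounting for each piece of state the algorithm maintains during the stream, namely one decision counter per vector in the set $A$, plus the (shared) randomness those counters need.

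First I would recall that $|A| = \log n$, since $A$ consists of the vectors $a_0, \ldots, a_{\log n - 1}$. For each $a_i$, the algorithm maintains the inner product $\langle \bv, a_i\rangle$ under the stream updates. Because every coordinate of $\bv$ has absolute value at most $\val$ and $a_i$ is a $0/1$ vector of length $n$, the true value of this inner product is an integer bounded in absolute value by $n\val$; in particular it is zero iff we want to record the bit $s_i = 1$. Detecting whether this running signed sum is exactly zero is precisely \Cref{prob:special-counter} with the role of ``$\val$'' played by $n\val$ (up to the trivial polynomial factor $n$, which does not affect the asymptotics of $\log\log$). So I invoke \Cref{cor:nonzdetect}: each such decision counter, set to succeed with probability $1 - 1/(\polylog(\val)\cdot\poly(n))$, uses $O(\log\log(n\val) + \log n) = O(\log\log\val + \log n)$ bits. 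Summing over the $\log n$ vectors in $A$ gives $O(\log n \cdot (\log\log\val + \log n)) = O(\log n\cdot \log\log\val + \log^2 n)$ bits for all the counters together.

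Next I would address the randomness: each decision counter of \Cref{cor:nonzdetect} needs to store a random prime, which itself costs $O(\log\log\val + \log n)$ bits and is already included in the per-counter bound above; alternatively all $\log n$ counters can share a single prime drawn once, which only helps. Either way the total is dominated by $O(\log n\cdot \log\log\val + \log^2 n)$. The post-processing step merely reads off the $\log n$ bits $s_0,\ldots,s_{\log n - 1}$ from the counters' zero/non-zero status and assembles the $\log n$-bit index $s$, using only $O(\log n)$ additional space, which is absorbed into the bound. Combining these accounts yields the claimed space bound of $O(\log n \cdot \log\log\val + \log^2 n)$ bits.

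I do not anticipate a serious obstacle here: the statement is essentially a bookkeeping consequence of \Cref{cor:nonzdetect} applied $\log n$ times in parallel. The only point requiring a modicum of care is confirming that the quantity each decision counter must track (the partial inner product $\langle \bv, a_i\rangle$) stays bounded by $\poly(n)\cdot\val$ throughout the stream, so that \Cref{cor:nonzdetect} applies with the stated parameter and its $\log\log(\cdot)$ term collapses to $O(\log\log\val + \log n)$; this follows immediately from $\|\bv\|_\infty \le \val$ and $a_i \in \{0,1\}^n$, together with the promise in \Cref{prob:special-L0sampling} restricting coordinate magnitudes, so the verification is routine.
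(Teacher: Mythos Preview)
Your proposal is correct and follows essentially the same approach as the paper: count $\log n$ decision counters at $O(\log\log\val + \log n)$ bits each and multiply. The paper's proof is terser and adds one remark you leave implicit, namely that the vectors $a_i$ are never stored since for any incoming element $e$ one can compute on the fly which $a_i$'s have a $1$ in position $e$; your accounting already assumes this, so there is no gap.
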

\begin{proof}
	We store $\log n$ decision counters, each taking $O(\log n + \log \log \alpha)$ bits.
	We do not store the vectors $a_i$ explicitly.
	When we get element $e$ in the stream, we know which $a_i$'s have a $1$ in position $e$, so we 
	can update those counters. 
\end{proof}

We now show that we can recover the correct non-zero index with high probability.
\begin{claim}\label{clm:L0-supp1-corr}
	\Cref{alg:L0-supp1} outputs the correct answer with probability $1-1/\polylog(\val) \cdot \poly(n)$.
\end{claim}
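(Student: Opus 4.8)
The plan is to exploit the structure forced by the support-size-one promise. Let $e^\star\in[n]$ be the unique coordinate with nonzero frequency and write $f^\star=\freq(e^\star)\neq 0$, so $|f^\star|\le\val$. Because every other coordinate vanishes at the end of the stream, the counter maintained by \Cref{alg:L0-supp1} for the $i$-th test vector holds exactly $\langle v,a_i\rangle=f^\star\cdot (a_i)_{e^\star}$. By the definition of $a_i$ (its entries are determined by the $i$-th bit of the index), $(a_i)_{e^\star}=0$ precisely when the $i$-th bit of $e^\star$ equals $1$; hence $\langle v,a_i\rangle=0$ iff bit $i$ of $e^\star$ is $1$, and $\langle v,a_i\rangle=f^\star\neq 0$ otherwise. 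Thus a correct zero-versus-nonzero verdict on the $i$-th counter reveals the $i$-th bit of $e^\star$, and reading all $\log n$ verdicts reconstructs $e^\star$ — which is exactly what the post-processing step does.

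Having established this equivalence, I would next check that each counter is run within the promise of \Cref{prob:special-counter}: since $|\langle v,a_i\rangle|\le|f^\star|\le\val$, \Cref{cor:nonzdetect} applies (instantiated with parameter $n$), so each of the $\log n$ decision counters correctly decides whether its value is zero except with probability $1/(\polylog(\val)\cdot\poly(n))$. A union bound over the $\log n$ counters then shows that, with probability $1-1/(\polylog(\val)\cdot\poly(n))$ (the extra $\log n$ factor absorbed into $\poly(n)$), all verdicts are simultaneously correct; on this event the output equals $e^\star$, which proves the claim.

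I do not expect a genuine obstacle here — the argument is short — but the step that needs care is the alignment of conventions: verifying that the post-processing rule "counter $i$ is zero $\Rightarrow s_i:=1$" matches the definition of $a_i$ (which bit value corresponds to a $1$-entry, and that $i=0$ is the least significant bit), and that routing each stream token $(e,\Delta)$ as a single $+$ or $-$ into exactly those counters $i$ with $(a_i)_e=1$ indeed makes counter $i$ accumulate $\langle v,a_i\rangle$ by the end of the stream. A minor boundary point — the strict inequality ``less than $\val$'' in \Cref{prob:special-counter} versus $|\langle v,a_i\rangle|$ possibly equal to $\val$ — is handled by running the counters with bound $\val+1$ (or noting the promise is stated as $|\freq(\cdot)|\le\val$), which affects nothing asymptotically.
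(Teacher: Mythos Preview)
Your proposal is correct and follows essentially the same approach as the paper: condition on all $\log n$ decision counters succeeding (via \Cref{cor:nonzdetect} and a union bound), then observe that under the support-one promise the $i$-th inner product is either $0$ or $f^\star$ depending on the $i$-th bit of the nonzero index, so the post-processing deterministically recovers $e^\star$. Your write-up is more explicit about the bit conventions and the counter's input bound than the paper's terse version, but the underlying argument is identical.
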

\begin{proof}
    We first condition on none of the decision counters failing.
    $a_i$ is a vector which has a $1$ at some index iff the $i^{th}$ bit of the index is $0$.
    If the inner product of the frequency vector and $a_i$ is $1$ then $a_i$ is $1$ on the non-zero index. 
    This implies that the $i^{th}$ bit of the non-zero index is $0$.
    Thus, the protocol is deterministically correct after conditioning on none of the decision counters failing.
    A union bound over the failure probabilities of the counters concludes the proof.
\end{proof}

\Cref{clm:L0-supp1-space} and \Cref{clm:L0-supp1-corr} together prove \Cref{lem:L0-supp1}.

We also need an algorithm that can tell us whether we are in the support $1$ case or not.
The idea is to partition the coordinates into $4$ parts randomly and check the sums of each part.
If there are many non-zero coordinates then the hope is that they are spread across multiple 
partitions, making the sums in multiple partitions non-zero.
We use this idea to get the following lemma:
\begin{lemma}\label{lem:L0-detect-supp1}
	There is a randomized algorithm that can detect whether or not the support size is $1$ with probability $1-1/\polylog(\val) \cdot \poly(n)$ 
	and uses $O((\log \log \val + \log n)^2)$ bits of space.
 \end{lemma}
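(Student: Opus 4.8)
The plan is to build a basic sketch with constant asymmetric error and then amplify it. For the basic sketch, draw a pairwise‑independent hash $h:[n]\to[4]$ (storable in $O(\log n)$ bits via \Cref{fact:hashfam}), a uniform random integer $\beta\in\{1,\dots,M\}$ for $M=\poly(n)\cdot\polylog(\val)$, and a random prime $p$ as in \Cref{cor:nonzdetect}. During the stream, for each part $k\in[4]$ maintain a decision counter tracking $V_k:=\sum_{i\,:\,h(i)=k}\beta^{\,i}\freq(i)$; concretely, on an update to coordinate $i$ we add the corresponding signed multiple of $\beta^{\,i}\bmod p$ (computed by fast exponentiation) to counter $h(i)$, so that each counter detects whether $V_k=0$ --- this is exactly the guarantee of \Cref{cor:nonzdetect}, with $\val$ there replaced by the bound $\val':=nM^{\,n}\val$ on $|V_k|$. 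Let $Z$ be the number of parts whose counter reports a non‑zero value, and output ``the support is $1$'' iff $Z=1$.

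I would then analyze the three regimes. If the true support is $0$, every $V_k=0$ and a zero counter is never misreported, so $Z=0$ with certainty. If the true support is $1$, say at coordinate $j$, then exactly one part has $V_k=\beta^{\,j}\freq(j)\neq0$ and the other three are zero, so $Z=1$ unless that single counter is misreported, which by \Cref{cor:nonzdetect} happens with probability at most $1/(\poly(n)\polylog(\val))$. If the support is at least $2$, fix two non‑zero coordinates $a,b$; by pairwise independence $h(a)\neq h(b)$ with probability $3/4$, and conditioned on that, $V_{h(a)}$ --- viewed as a polynomial in $\beta$ of degree at most $n$ --- is not identically zero, since the coefficient of $\beta^{\,a}$ is $\freq(a)\neq0$, so it vanishes for at most $n$ of the $M$ values of $\beta$; likewise for $V_{h(b)}$. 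Hence with probability at least $3/4-O(n/M)-1/(\poly(n)\polylog(\val))\ge 2/3$ both these counters report non‑zero, giving $Z\ge2$, and the test correctly says ``not $1$''.

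Finally I would amplify by running $t=\Theta(\log n+\log\log\val)$ independent copies and outputting ``the support is $1$'' iff every copy reports $Z=1$. When the true support is $0$ or $1$ this errs only if some copy errs, i.e.\ with probability at most $t/(\poly(n)\polylog(\val))$; when the support is $\ge2$ it errs only if all $t$ copies report $Z=1$, which has probability at most $3^{-t}$, and a suitable constant in $t$ drives both bounds below $1/(\poly(n)\polylog(\val))$. Each copy stores $h$, $\beta$, $p$ and four decision counters; since $\log\log\val'=\log\log(nM^{\,n}\val)=O(\log n+\log\log\val)$, each counter fits in $O(\log n+\log\log\val)$ bits, so a copy uses $O(\log n+\log\log\val)$ bits and the whole sketch uses $O((\log n+\log\log\val)^2)$ bits, as claimed. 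I expect the main subtlety to be the cancellation phenomenon --- a single part can sum to zero even when it contains several non‑zero coordinates --- which is exactly why we scale by the Vandermonde weights $\beta^{\,i}$ instead of taking plain sums, and why the probabilistic argument must be threaded carefully through the independently chosen $h$, $\beta$, $p$ and through the (asymmetric) failure mode of the decision counters.
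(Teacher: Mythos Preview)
Your argument is correct, and the space accounting checks out. It is, however, a genuinely different route from the paper's. The paper partitions $[n]$ into four parts using a \emph{fully random} assignment (later derandomized via Nisan's PRG) and maintains \emph{plain} sums $\sum_{i\in V^i_j}\freq(i)$ with decision counters; to handle the cancellation phenomenon it argues by a case analysis on the state of the four counters \emph{after removing} two fixed non-zero elements $e,e'$, showing that whichever case holds, with probability at least $1/16$ the placement of $e,e'$ forces at least two parts to be non-zero. You instead keep the hash pairwise independent (so it is storable in $O(\log n)$ bits without a PRG) and kill cancellation by the Vandermonde weighting $\beta^{\,i}$ plus polynomial identity testing over $\beta$. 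Your approach is closer to standard $k$-sparse fingerprinting and avoids the implicit reliance on full independence that the paper patches with Nisan; the paper's approach, on the other hand, avoids the extra random scalar $\beta$ and the enlarged value bound $\val'=nM^{\,n}\val$ inside the counters. Both lead to the same $O((\log n+\log\log\val)^2)$ space and the same amplification rule (declare ``support $=1$'' iff every copy sees exactly one non-zero part).
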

 Consider the following algorithm for \Cref{lem:L0-detect-supp1}:

\begin{Algorithm}\label{alg:L0-detect-supp1}
	A dynamic streaming algorithm for detecting a vector with support size $1$.
	
	\medskip
	
	\textbf{Input:} A vector $v$ specified in a dynamic stream.
	
	\medskip
	
	\textbf{Output:} Whether the support of non-zero elements in $v$ is exactly $1$ or not.

	\medskip

	\textbf{Pre-processing:}
	\begin{enumerate}
		\item Let $r= O(\log n + \log \log \val)$.
		\item Create $r$ independent random partitions $V^i_1,V^i_2,V^i_3,V^i_4$ ($i \in [r]$) where 
		every element independently belongs to $V^i_j$ with probability $1/4$.
	\end{enumerate}
	
	\medskip
	
	\textbf{During the stream:}
	
	\begin{enumerate}
		\item Store inner products $z^i_j$ of the frequency vector with the characteristic vectors of $V^i_j$ 
		using the decision counters for $i \in r$ and $j \in [4]$.
	\end{enumerate}
	
	\textbf{Post-Processing:}	
	\begin{enumerate}
		\item If there exists $i$ such that more than one $z^i_j$ is non-zero or all $z^i_j$'s are zero 
		then output non-zero support is not $1$.
		Otherwise, output non-zero support is $1$.
	\end{enumerate}
\end{Algorithm}

\begin{claim}\label{clm:L0-detect-supp1-corr}
	\Cref{alg:L0-detect-supp1} can identify whether the support of the 
	non-zero elements is $1$ or not with probability $1-1/\poly(n) \cdot \polylog(\val)$. 
\end{claim}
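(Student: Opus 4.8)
The plan is to first condition on the event $\mathcal{G}$ that none of the $4r$ decision counters $z^i_j$ fails. By \Cref{cor:nonzdetect}, invoked with the supplied parameter $n$, each counter fails with probability at most $1/(\poly(n)\cdot\polylog(\val))$, so a union bound over the $4r = O(\log n + \log\log\val)$ counters gives $\Pr[\lnot\mathcal{G}]\le 1/(\poly(n)\cdot\polylog(\val))$. Conditioned on $\mathcal{G}$, each reported $z^i_j$ is nonzero exactly when the true partial sum $\sum_{k\in V^i_j}\freq(k)$ is nonzero, so the only remaining randomness is the choice of the $r$ partitions. It then suffices to show that, conditioned on $\mathcal{G}$, the output is correct with probability $\ge 1-c^{\,r}$ for an absolute constant $c<1$, since the algorithm sets $r=\Theta(\log n+\log\log\val)$ with a sufficiently large hidden constant.

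Next I would split into three cases by the true support size. If the support is $0$, every true partial sum is $0$, hence every $z^i_j$ reads $0$ and the algorithm correctly reports ``not $1$''. If the support is exactly $1$, say at coordinate $k^*$, then for every partition $i$ exactly one part contains $k^*$, so conditioned on $\mathcal{G}$ exactly one $z^i_j$ is nonzero and the algorithm correctly reports ``$1$''. The substantive case is support $\ge 2$: here, conditioned on $\mathcal{G}$, the algorithm errs only if \emph{every} partition $i$ yields exactly one nonzero part, and I claim a single partition avoids this with constant probability. Fix two support coordinates $k_1,k_2$ and expose the partition's assignment of all other coordinates, producing fixed partial sums $U_1,\dots,U_4$; then $z^i_j = U_j + v_{k_1}\mathbbm{1}[k_1\in V^i_j] + v_{k_2}\mathbbm{1}[k_2\in V^i_j]$, with $k_1,k_2$ still placed uniformly and independently among the four parts. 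I would show that for any $U_1,\dots,U_4$ and any nonzero $v_{k_1},v_{k_2}$, among the $16$ placements of $(k_1,k_2)$ at most $10$ leave exactly one $z^i_j$ nonzero, so the per-partition failure probability is at most $5/8$. This is a short case check on $m:=|\{j:U_j\neq 0\}|$: for $m\in\{0,2,4\}$, placing $k_1,k_2$ into two distinct parts disjoint from the nonzero $U_j$'s keeps at least two coordinates nonzero and one verifies at least $6$ placements are good; for $m=3$ at least $10$ placements are good; and the tight case $m=1$ has exactly $6$ good placements (those sending both $k_1,k_2$ away from the unique nonzero part). By independence of the $r$ partitions, the algorithm then errs in the support-$\ge 2$ case with probability at most $(5/8)^r$, and adding $\Pr[\lnot\mathcal{G}]$ finishes the bound.

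The main obstacle is exactly the support-$\ge 2$ case: because partial sums can cancel, having two support elements land in different parts does \emph{not} by itself witness support larger than $1$, so one cannot simply use the $3/4$ ``separation'' probability of a pair. The two-coordinate exposure argument together with the case analysis on $m$ is what guarantees a constant-probability ``good'' partition no matter how adversarially the nonzero values are arranged; the rest (the two easy cases and the union bounds) is routine.
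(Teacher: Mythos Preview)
Your proof is correct and follows essentially the same approach as the paper: condition on the decision counters succeeding, dispose of support $0$ and $1$ deterministically, and for support $\ge 2$ expose the placement of all but two support elements and case-analyze on the number $m$ of nonzero partial sums. The paper's version is coarser---it splits only into ``at least two $\tilde z^i_j$ are zero'' versus ``at least three are nonzero'' and extracts a $1/16$ per-iteration success bound rather than your $3/8$---which is cleaner to write; note that your stated reason for $m\in\{2,4\}$ (``distinct parts disjoint from the nonzero $U_j$'s'') does not literally produce six placements, though the claimed lower bound of six good placements is easily verified by other choices.
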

\begin{proof}
	We first condition on the event that none of the counters fail.
	In the case the support is $1$, exactly one $z^i_j$ will be non-zero for all $i \in [r]$.
	In the case the support is $0$, all $z^i_j$'s will be zero for all $i \in [r]$.
	
	Now consider the case where there are $t>1$ non-zero elements.
	Let $e$ and $e'$ be two different non-zero elements.
	Consider the state of the inner products $\tilde{z}^i_j$ without elements $e$ and $e'$.
	\begin{itemize}
		\item Case 1: If at least $2$ counters are zero then with probability at least $1/16$, $e$ and $e'$ are 
		assigned to the counters that are zero, making them non-zero. Thus, there are at least two non-zero 
		counters.
		\item Case 2: If at least $3$ counters are non-zero then with probability at least $1/16$, $e$ and 
		$e'$ are assigned to the same non-zero counter. Thus, there are at least two non-zero counters 
		since two out of the three remain unchanged.
	\end{itemize}
	In both cases, we are successful with probability at least $1/16$.
	Thus, the probability we fail over all iteration is $(1-1/16)^r \leq \exp(-r/16) \leq 1/ \poly(n) \cdot \polylog(\val)$.
     A union bound over the failure probabilities of the counters concludes the proof.
\end{proof}

\begin{claim}\label{clm:L0-detect-supp1-space}
	\Cref{alg:L0-detect-supp1} uses $O((\log n + \log \log \val)^2)$ bits of space. 
\end{claim}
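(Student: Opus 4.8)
The plan is a routine accounting of the three contributions to \Cref{alg:L0-detect-supp1}'s memory: the $r$ random partitions, the $4r$ decision counters, and the working space used during stream updates and post-processing.

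First I would address how the partitions are stored, since a naive encoding---recording for each of the $n$ universe elements and each of the $r$ rounds which of the four parts it lands in---would cost $\Theta(rn)$ bits and blow the budget. Instead the partitions are specified implicitly by $r$ hash functions $h_1,\dots,h_r\colon[n]\to[4]$ drawn from a bounded-independence family (the degree of independence being the one dictated by the correctness analysis in \Cref{clm:L0-detect-supp1-corr}, which only ever inspects a bounded number of coordinates at once); by \Cref{fact:hashfam} each such function is stored in $O(\log n)$ bits, so all $r$ of them occupy $O(r\log n)$ bits. As in the proof of \Cref{clm:L0-supp1-space}, the characteristic vectors of the parts $V^i_j$ are never materialized: when a stream token for coordinate $e$ arrives, for each $i\in[r]$ we evaluate $h_i(e)$ and pass the $+$ or $-$ to the single decision counter $z^i_{h_i(e)}$. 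This on-the-fly updating needs only $O(\log n)$ reusable scratch bits, independent of $r$.

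Next I would bound the counters, which is where the dominant term comes from. There are exactly $4r$ decision counters $z^i_j$, one for each $(i,j)\in[r]\times[4]$, and by \Cref{cor:nonzdetect} each one runs in $O(\log\log\val+\log n)$ bits; hence the counters together take $O(r(\log\log\val+\log n))$ bits. (One may share a single random prime across all $4r$ counters to save constant factors, but this is immaterial.) The only remaining state is the value of $r$, the loop indices, and the constant-size tally used in post-processing to test, for each round $i$, whether more than one of $z^i_1,\dots,z^i_4$ is nonzero; all of this is $O(\log r)$ bits and is dominated.

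Finally I would sum the three terms and substitute $r=O(\log n+\log\log\val)$, the value fixed in the preprocessing step of \Cref{alg:L0-detect-supp1}. The partition term becomes $O\big((\log n+\log\log\val)\log n\big)$ and the counter term becomes $O\big((\log n+\log\log\val)^2\big)$; since $\log n\le\log n+\log\log\val$, the former is absorbed into the latter, giving the claimed $O\big((\log n+\log\log\val)^2\big)$ bound. There is no genuine obstacle here; the only point needing care is the one already highlighted---representing the partitions compactly (by hash functions rather than explicit lists) and processing the $V^i_j$ online---so that no object of size $\Theta(n)$ is ever held in memory.
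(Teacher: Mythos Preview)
Your accounting is correct and in fact more careful than the paper's one-line proof, which simply says: there are $4r = O(\log n + \log\log\val)$ decision counters, each taking $O(\log n + \log\log\val)$ bits. The paper does not address storage of the random partitions in this claim at all; instead it notes at the end of the $\ell_0$-sampler construction that the raw random bits exceed the space budget and fixes this globally with Nisan's PRG (\Cref{fact:NisanPRG}), absorbing an extra $O(\log n)$ factor there. You take the alternative route of encoding the partitions locally via bounded-independence hash functions. That is a perfectly standard option, but one caveat: the correctness argument in \Cref{clm:L0-detect-supp1-corr}, as written, conditions on the placement of \emph{all} elements except two and then reasons about those two, which formally uses full independence; if you commit to bounded independence you would need a small rewrite of that analysis (or just fall back on the paper's PRG approach).
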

\begin{proof}
	There are $4 r= O(\log n + \log \log \val)$ decision counters each taking $O(\log n + \log \log 
	\val)$ bits of space.
\end{proof}

\Cref{clm:L0-detect-supp1-corr} and \Cref{clm:L0-detect-supp1-space} together prove \Cref{lem:L0-detect-supp1}.

We now combine the ideas above to give the final algorithm to solve \Cref{prob:special-L0sampling} (for an arbitrary support size).
The idea is to reduce to the support size $1$ case.
Let $k$ be the number of non-zero elements in the frequency vector. 
If we sample $1/k$ fraction of the coordinates then with constant probability only one of the 
non-zero elements lies in the support. 
By repeating this many times we can reduce to the support $1$ case with high probability.
We do not know $k$ but we can guess it to within a factor of $2$ by trying all powers of $2$.
Also, we can detect when the support size is exactly $1$.
Consider the final algorithm:

\begin{Algorithm}\label{alg:L0-sampling}
	A dynamic streaming algorithm for \Cref{prob:special-L0sampling}.
	
	\medskip
	
	\textbf{Input:} A vector $v$ specified in a dynamic stream.
	
	\medskip
	
	\textbf{Output:} An element with non-zero frequency.
	
	\medskip
	
	\textbf{During the stream:}

        \smallskip
 
	For $i=0$ to $\log n$:

        \smallskip
	
	Repeat $r=O(\log n + \log \log \val)$ times:	
	\begin{enumerate}
		\item Sample every element independently with probability $1/2^i$.
		\item Run \Cref{alg:L0-supp1} and \Cref{alg:L0-detect-supp1} on the sampled subset.
	\end{enumerate}
	
	\textbf{Post-Processing:}	
	\begin{enumerate}
		\item Find an iteration where \Cref{alg:L0-detect-supp1} returns non-zero support $1$. If no such 
		iteration exists output ``FAIL''.
		\item Use the corresponding copy of \Cref{alg:L0-supp1} to output the non-zero index.
	\end{enumerate}
\end{Algorithm}

The space used by \Cref{alg:L0-sampling} is $\poly(\log \log \val + \log n)$.
\begin{claim}\label{clm:L0-sampling-space}
    \Cref{alg:L0-sampling} uses $O( \log n \cdot (\log \log \val + \log n)^3)$ bits of space.
\end{claim}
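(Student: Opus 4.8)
The plan is to bound the space as the product of two quantities: the number of independent ``copies'' of the support-$1$ subroutines that Algorithm~\ref{alg:L0-sampling} runs in parallel, and the space of a single such copy; and then to account separately for the randomness consumed by the outer sampling step. The counting part is essentially bookkeeping, so the bulk of the work is in the randomness.

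First I would count the copies. The outer loop ranges over $i=0,\ldots,\log n$, giving $O(\log n)$ values of $i$; for each $i$ the inner loop repeats $r=O(\log n+\log\log\val)$ times, and each repetition maintains one instance of Algorithm~\ref{alg:L0-supp1} and one instance of Algorithm~\ref{alg:L0-detect-supp1} run on the sampled substream (the sampled substream itself is never stored — tokens are simply fed through to the counters, as in the proof of Claim~\ref{clm:L0-supp1-space}). Hence there are $O\bigl(\log n\cdot(\log n+\log\log\val)\bigr)$ such instances. By Claim~\ref{clm:L0-supp1-space} each copy of Algorithm~\ref{alg:L0-supp1} uses $O(\log n\cdot\log\log\val+\log^2 n)$ bits, and by Claim~\ref{clm:L0-detect-supp1-space} each copy of Algorithm~\ref{alg:L0-detect-supp1} uses $O((\log n+\log\log\val)^2)$ bits; both are $O\bigl((\log n+\log\log\val)^2\bigr)$. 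Multiplying the number of instances by the per-instance cost gives
\[
O\bigl(\log n\cdot(\log n+\log\log\val)\bigr)\cdot O\bigl((\log n+\log\log\val)^2\bigr)
= O\bigl(\log n\cdot(\log n+\log\log\val)^3\bigr),
\]
which is the claimed bound.

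The one point that genuinely needs care — and the main obstacle — is storing the random choices behind the step ``sample every element independently with probability $1/2^i$'': done with full independence this costs $\Theta(n)$ bits per instance and swamps everything else. The fix is standard: for each instance draw a hash function $h:[n]\to[2^i]$ from a limited-independence family (Fact~\ref{fact:hashfam}) and keep element $e$ iff $h(e)=0$, so that $h$ occupies only $O(\log n)$ bits, comfortably inside the per-instance budget; one then checks that the sampling-based arguments in Claims~\ref{clm:L0-supp1-corr} and~\ref{clm:L0-detect-supp1-corr}, which only reason about one or two surviving coordinates at a time, remain valid under $O(1)$-wise independence. The random partitions inside Algorithm~\ref{alg:L0-detect-supp1} and the random primes inside the decision counters are likewise describable in $O(\log n+\log\log\val)$ bits each and are already subsumed by the per-instance counts. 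Since every source of randomness is thus succinctly stored within instances already accounted for, the total space remains $O\bigl(\log n\cdot(\log n+\log\log\val)^3\bigr)$. (Alternatively one could analyze the algorithm with unrestricted randomness and invoke Nisan's generator, Fact~\ref{fact:NisanPRG}, but that forces one to argue read-once access to the random tape under some permutation of the stream, so the limited-independence route is the cleaner one here.)
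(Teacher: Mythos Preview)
Your counting argument---$r\log n = O(\log n\cdot(\log n+\log\log\val))$ parallel copies, each of size $O((\log n+\log\log\val)^2)$---is correct and is exactly the paper's proof. The paper stops there and handles randomness in a separate remark \emph{after} the claim by applying Nisan's generator (Fact~\ref{fact:NisanPRG}) at the cost of one more $\log n$ factor. Your preferred limited-independence route is viable for the outer subsampling step, but your justification points to the wrong places: Claims~\ref{clm:L0-supp1-corr} and~\ref{clm:L0-detect-supp1-corr} analyze the inner subroutines on whatever substream they are fed and are untouched by how that substream was sampled; the argument that actually depends on the subsampling is the ``exactly one non-zero element survives'' calculation in the correctness proof of Algorithm~\ref{alg:L0-sampling}, which does go through under pairwise independence via a second-moment bound. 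More delicately, you wave away the random partitions inside Algorithm~\ref{alg:L0-detect-supp1}: the proof of Claim~\ref{clm:L0-detect-supp1-corr} conditions on the bucket assignments of \emph{all} other non-zero elements before reasoning about $e,e'$, and that conditioning is not innocuous under $O(1)$-wise independence. None of this threatens the space bound asserted in the present claim, but if you want to avoid the PRG entirely, the correctness of Lemma~\ref{lem:L0} under limited independence needs a tighter argument than ``one or two surviving coordinates at a time.''
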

\begin{proof}
    We use $r \log n$ copies of \Cref{alg:L0-supp1} and \Cref{alg:L0-detect-supp1}.
    A copy of \Cref{alg:L0-supp1} takes space $O(\log n \cdot \log \log \val + \log^2 n)$ bits of space.
    A copy of \Cref{alg:L0-detect-supp1} takes space $O((\log \log \val + \log n)^2)$ bits of space.
    Thus the total space taken is $O( \log n \cdot (\log \log \val + \log n)^3)$ bits.
\end{proof}

We now show the correctness of \Cref{alg:L0-sampling}.
We start by conditioning on the high probability events of all copies of \Cref{alg:L0-supp1} and \Cref{alg:L0-detect-supp1}.
Let $k$ be the size of the non-zero support of the input vector $v$.
Consider the iterations when $i=i^*$ where $k \leq 1/p:=2^{i^*} \leq 2k$.
We need to show that with high probability, in some iteration, exactly one element is sampled.
\begin{claim}
    There is an iteration where exactly one element from the non-zero support is sampled with probability $1-1/\polylog(\val) \cdot \poly(n)$.
\end{claim}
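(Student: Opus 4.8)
The plan is a direct calculation of the success probability in a single ``good'' repetition, followed by amplification over the $r$ repetitions. If $k=0$ there is nothing to prove (that case is handled by the support-detection step of \Cref{alg:L0-detect-supp1}, which reports that the support is $0$), so assume $k\ge 1$ and fix the index $i^\ast$ with $k\le 2^{i^\ast}\le 2k$; such an $i^\ast$ lies in the range scanned by the loop since $k\le n$ (up to an additive constant in the loop's upper limit). Write $p:=2^{-i^\ast}$, so that $1/(2k)\le p\le 1/k$.

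First I would analyze one repetition with $i=i^\ast$. Since each of the $k$ non-zero coordinates is included independently with probability $p$, the probability that exactly one of them is sampled equals $k\,p\,(1-p)^{k-1}$. Using $p\ge 1/(2k)$ gives $kp\ge 1/2$, and using $p\le 1/k$ together with the standard bound $(1-1/k)^{k-1}\ge 1/e$ (valid for every $k\ge 1$) gives $(1-p)^{k-1}\ge 1/e$. Hence
\[
\Pr[\text{exactly one non-zero coordinate is sampled in this repetition}]\ \ge\ \frac{1}{2e}\ >\ \frac16 .
\]

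Next, the $r$ repetitions at $i=i^\ast$ use fresh independent randomness, so the probability that none of them samples exactly one non-zero coordinate is at most $(1-\tfrac1{2e})^{r}\le \exp(-r/(2e))$. Taking the hidden constant in $r=O(\log n+\log\log\val)$ large enough makes this at most $1/(\poly(n)\cdot\polylog(\val))$, which is the claimed bound; combined with the high-probability events of all copies of \Cref{alg:L0-supp1} and \Cref{alg:L0-detect-supp1} that we have already conditioned on, this finishes the proof.

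There is no genuine obstacle here; the only points requiring a little care are (i) checking that $i^\ast$ really falls in the loop's range, which is immediate from $k\le n$ once the loop runs up to $\lceil\log n\rceil$ (or $\log n+1$), and (ii) fixing the constant in $r=O(\log n+\log\log\val)$ so that $\exp(-r/(2e))$ simultaneously beats $1/\poly(n)$ and $1/\polylog(\val)$, exactly as in the proof of \Cref{clm:L0-detect-supp1-corr}.
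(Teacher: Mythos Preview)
Your proof is correct and follows essentially the same approach as the paper: bound the single-iteration success probability for the level $i^\ast$ with $k\le 2^{i^\ast}\le 2k$ via $k\,p\,(1-p)^{k-1}$, then amplify over the $r$ independent repetitions. Your constants are in fact slightly tighter (you use $(1-1/k)^{k-1}\ge 1/e$ to get $\ge 1/(2e)$, whereas the paper uses $(1-1/k)^{k}\ge 1/8$ for $k\ge 2$ to get $\ge 1/16$), but the structure is identical.
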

\begin{proof}
    The claim is trivial when $k=1$, so we consider $k\geq 2$.
    The probability that exactly one element is sampled is:
    \begin{align*}
    \prob{\text{exactly one element is sampled}} &=
    k \cdot p \cdot (1-p)^{k-1} \\
    &\geq  k \cdot \frac{1}{2k} \cdot (1-1/k)^{k} \\
    &\geq 1/16. \tag{$k \geq 2 \text{ implies } (1-1/k)^{k} \geq 1/8$}  
    \end{align*}

    We fail when this does not happen in any iteration. The probability of failure is: 
    \begin{align*}
        \prob{\text{failure}} \leq (1-1/16)^r \leq 1/\polylog(\val) \cdot \poly(n).
    \end{align*}
Thus, there is an iteration where exactly one element from the non-zero support is sampled with probability $1-1/\polylog(\val) \cdot \poly(n)$.
\end{proof}

This means that with high probability, there is an iteration where exactly one element from the non-zero support is sampled. For this iteration with high probability, the corresponding copy of \Cref{alg:L0-detect-supp1} will output non-zero support $1$, and the corresponding copy of \Cref{alg:L0-supp1} will output the non-zero index.
Also, for all other iterations where the support size is not $1$, the corresponding copy of \Cref{alg:L0-detect-supp1} will output non-zero support is not $1$ with high probability.
A union bound over the failure probabilities of all copies of \Cref{alg:L0-detect-supp1} and \Cref{alg:L0-supp1} proves the correctness of \Cref{alg:L0-sampling} in \Cref{lem:L0}.
\Cref{clm:L0-sampling-space} proves the space bound of \Cref{alg:L0-sampling} in \Cref{lem:L0}.
Therefore, we have proved \Cref{lem:L0}.

Note that the number of random bits used is more than the space bound. But we can fix this by using Nisan's pseudorandom generator (\Cref{fact:NisanPRG}). Using this increases the space by a factor of $O(\log n)$ and makes the true randomness needed fit in the space bound.


\newcommand{\reps}{r}

\subsection{Edge Connectivity}\label{sec:edgeconnsgt}

\paragraph{A Certificate of Edge Connectivity}\label{sec:Certificate}

We present a certificate of $k$-edge-connectivity in this section. 
Our algorithm and analysis are very similar to the $k$-vertex connectivity algorithm of \cite{assadi2023tight}.

\begin{Algorithm}\label{alg:cert}
	An algorithm for computing a certificate of $k$-edge-connectivity.
	
	\medskip
	
	\textbf{Input:} A graph $G = (V,E)$ and an integer $k$.
	
	\medskip
	
	\textbf{Output:} A certificate $H$ for $k$-edge-connectivity of $G$.

	\smallskip
		
	\begin{enumerate}
		\item For $i = 1, 2, \ldots, \reps := \paren{200 k \ln n}$ do the following:
		\begin{itemize}
			\item Let $E_i$ be a subset of $E$ where each edge is sampled independently with 
			probability $1/k$.
			\item Let $G_i=(V,E_i)$ be the subgraph of $G$ containing edges of $E_i$.
			\item Compute a spanning forest $T_i$ of $G_i$. 
		\end{itemize}
	\item Output $H:= T_1 \cup T_2 \cup \ldots \cup T_{\reps}$ as a certificate for $k$-edge-connectivity 
	of $G$.
\end{enumerate}
	
\end{Algorithm}

The algorithm of \cite{assadi2023tight} samples vertices with probability $1/k$ and stores spanning forests over the sampled induced subgraph. The number of iteration in their algorithm is also larger by a factor of $k$ because in each iteration they are working with a smaller graph on roughly $n/k$ vertices.

The following theorem proves the main guarantee of this algorithm. 

\begin{theorem}\label{thm:k-con-cert}
	Given any graph $G=(V,E)$ and any integer $k \geq 1$, \Cref{alg:cert} outputs a 
	certificate $H$ of $k$-edge-connectivity of $G$ with $O(k n \cdot \log n)$ edges with high 
	probability.
\end{theorem}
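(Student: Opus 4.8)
\textbf{Proof proposal for \Cref{thm:k-con-cert}.}

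The plan is to mirror the analysis of \cite{assadi2023tight} for vertex connectivity, adapting it to the edge-connectivity setting where instead of sampling vertices we sample edges with probability $1/k$ and take spanning forests of the sampled subgraphs. There are two things to establish: (i) the size bound, and (ii) correctness, i.e.\ that $H$ preserves the answer to $k$-edge-connectivity. The size bound is the easy part: each spanning forest $T_i$ has at most $n-1$ edges, and there are $\reps = 200k\ln n$ iterations, so $|E(H)| = O(kn\log n)$ deterministically. So the real work is correctness.

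For correctness, recall that a subgraph $H\subseteq G$ is a certificate of $k$-edge-connectivity if $H$ is $k$-edge-connected exactly when $G$ is. By Menger's theorem (\Cref{fact:menger}), it suffices to show that for every pair $u,v$, the number of edge-disjoint $u$-$v$ paths in $H$ is $\min\{k, \lambda_G(u,v)\}$ where $\lambda_G(u,v)$ is the $u$-$v$ edge-connectivity in $G$; equivalently, every cut of size $<k$ in $G$ survives (stays the same size) in $H$, and every cut of size $\ge k$ in $G$ has size $\ge k$ in $H$. I would split the argument into two cases based on a threshold. First, consider a cut $(S,\bar S)$ with $\delta_G(S) = c < k$ edges crossing. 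Sampling each crossing edge with probability $1/k$ keeps a crossing edge with probability $1/k$; the key observation (as in \cite{assadi2023tight}) is that whenever at least one sampled edge crosses the cut, every spanning forest $T_i$ must include at least one edge crossing $(S,\bar S)$ (a spanning forest connects the components of $G_i$, so if $S$ and $\bar S$ are joined in $G_i$ then $T_i$ crosses). Then one argues that across $\reps$ iterations, with high probability \emph{all} $c$ crossing edges get individually ``isolated'' into a forest — more carefully, one shows each specific crossing edge $e$ appears in some $T_i$ with high probability: condition on $e\in E_i$ (probability $1/k$) and on no \emph{other} crossing edge being in $E_i$ (probability $\ge (1-1/k)^{c-1} \ge (1-1/k)^{k} = \Omega(1)$ since $c<k$), in which case $e$ is the unique crossing edge of $G_i$ and hence necessarily in $T_i$. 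So $e\in E(H)$ with probability $\ge 1 - (1-\Omega(1/k))^{\reps} \ge 1 - n^{-10}$ for the right constant. Union-bounding over all $<k$ crossing edges of all small cuts (there are $\poly(n)$ cuts of size $<k$ by the cut-counting bound, or one can union bound over the $\binom{n}{2}$ pairs and the $\le k$ edges realizing each min-cut) gives that all small cuts are fully preserved. Second, for pairs $u,v$ with $\lambda_G(u,v)\ge k$: here I would show that $H$ still has $k$ edge-disjoint $u$-$v$ paths. This is the direction where \cite{assadi2023tight}'s analysis is more delicate — one argues that each iteration's spanning forest contributes a $u$-$v$ path in $H$, and since the edges used in iteration $i$ are disjoint from those of iteration $j$ only probabilistically, one instead uses a counting/flow argument: by Menger, failure means some cut $(S,\bar S)$ separating $u,v$ has $<k$ edges in $H$; but such a cut has $\ge k$ edges in $G$, and the first-case argument shows that a cut with $\ge k$ crossing edges in $G$ retains $\ge k$ in $H$ with high probability (the same isolation argument works as long as we only need $k$ of the crossing edges to survive, and among $\ge k$ edges sampled at rate $1/k$ enough distinct ones land in distinct iterations' forests). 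Combining, $H$ is $k$-edge-connected iff $G$ is.

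The main obstacle I expect is the second case — ensuring that cuts of size $\ge k$ in $G$ retain at least $k$ edges in $H$, since once $c\ge k$ the probability that a fixed crossing edge is the unique sampled crossing edge drops below a constant, so the naive per-edge argument breaks. The fix, following \cite{assadi2023tight}, is to not demand that a specific edge survives but rather to directly lower-bound the number of surviving crossing edges: with $c\ge k$ edges crossing and sampling rate $1/k$, in expectation $c/k \ge 1$ cross per iteration, and over $\reps = \Theta(k\log n)$ iterations a Chernoff bound (\Cref{prop:chernoff}) shows that with high probability many iterations have at least one crossing edge in the forest, and moreover these can be made to involve $\ge k$ \emph{distinct} crossing edges — or, cleaner, one observes that it suffices to handle ``near-minimum'' cuts of size in $[k, 2k)$ (larger cuts are automatically fine since even a constant fraction surviving exceeds $k$) and apply the isolation argument with the refined bound $(1-1/k)^{2k} = \Omega(1)$. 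One must also be careful that the number of distinct cuts of size $< 2k$ that need a union bound is only $n^{O(1)}$, which follows from the standard small-cut counting bound; I would cite or reprove that. The rest (conditioning on spanning-forest subroutine success via \Cref{fact:spanning-forest}, union bounds over $\reps$ iterations) is routine.
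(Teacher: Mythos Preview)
Your size bound and your treatment of the ``poorly connected'' side match the paper: its \Cref{clm:edges-in-H} is exactly your isolation argument, phrased as ``every edge $(s,t)$ whose endpoints have fewer than $2k$ edge-disjoint paths in $G$ is forced into some $T_i$.'' The gap is in your second case, specifically for cuts of size $\geq 2k$ in $G$. Neither of your two proposed fixes works. The claim that ``larger cuts are automatically fine since even a constant fraction surviving exceeds $k$'' is false: a spanning forest $T_i$ need not retain any fixed fraction of the edges crossing a given cut---it may cross that cut with a single edge, regardless of how many crossing edges $G_i$ has. And your Chernoff route only shows that many iterations contain at least one crossing edge in their forest; it does not produce $k$ \emph{distinct} crossing edges in $H$, which is what you need. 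So your cut-based reduction stalls precisely at cuts with $\delta_G(S)\geq 2k$, which you cannot union-bound away (there can be exponentially many).

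The paper's argument for this case (\Cref{clm:connectivity-in-H}) abandons cuts altogether. It fixes any pair $s,t$ with at least $2k$ edge-disjoint paths in $G$ and any set $X$ of $k-1$ edges, and shows $s$ and $t$ stay connected in $H-X$. Let $I(X)$ be the iterations $i$ with $E_i\cap X=\emptyset$; since each iteration avoids $X$ with probability $(1-1/k)^{k-1}\geq 1/4$, Chernoff gives $|I(X)|\geq r/8$ with probability $1-n^{-\Omega(k)}$. Now $G-X$ still has $k$ edge-disjoint $s$-$t$ paths $P_1,\ldots,P_k$; each edge of each $P_j$ lies in $G_X:=\bigcup_{i\in I(X)}G_i$ with probability at least $1-(1-1/k)^{r/8}$, so $P_j\subseteq G_X$ except with probability at most $n\cdot(1-1/k)^{r/8}=n^{-\Omega(1)}$. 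Since the $P_j$ are edge-disjoint these events are independent, and the probability that all $k$ fail is $n^{-\Omega(k)}$. When some $P_j\subseteq G_X$, each of its edges sits in a forest $T_i$ with $i\in I(X)$, so stitching the corresponding tree paths gives an $s$-$t$ walk in $H-X$. A union bound over the at most $n^{2(k-1)}$ choices of $X$ and $n^2$ pairs finishes. This ``preserve a whole path in the $X$-avoiding iterations'' step is the idea your proposal is missing; the crucial point is that the union bound is over \emph{all} $(k-1)$-edge sets, not over cuts, and the $n^{-\Omega(k)}$ failure probability is engineered to beat that.
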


The analysis in the proof of~\Cref{thm:k-con-cert} is twofold.
We first show that pairs of vertices that are at least $2k$-edge-connected in $G$ stay 
$k$-edge-connected in 
$H$. 
Secondly, we show that edges whose endpoints are not $2k$-edge-connected in $G$ will be 
preserved in $H$.
Putting these together, we then show that $H$ is a certificate for $k$-edge-connectivity of $G$ 
and 
has at most $\Ot(kn)$ edges.

We start by bounding the number of edges of the certificate $H$.

\begin{lemma}\label{lem:cert-space}
	The certificate $H$ in \Cref{alg:cert} has $O(k n \cdot \log n)$ edges.
\end{lemma}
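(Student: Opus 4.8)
The plan is to bound the expected number of edges in $H$ and then invoke concentration. Each $T_i$ is a spanning forest, so it has at most $n-1$ edges deterministically; summing over the $\reps = 200k\ln n$ iterations immediately gives a deterministic bound of $(n-1)\cdot 200 k \ln n = O(kn\log n)$ edges. So in fact no probabilistic argument is even needed for the edge count — the bound follows just from the fact that a spanning forest on $n$ vertices has fewer than $n$ edges, and we take the union of $\reps$ of them.

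First I would note that $|T_i| \le n-1$ for every $i$, since $T_i$ is a forest on the vertex set $V$ with $|V| = n$. Then $|H| = |T_1 \cup \dots \cup T_{\reps}| \le \sum_{i=1}^{\reps} |T_i| \le \reps \cdot (n-1) < 200 k n \ln n = O(kn\log n)$. This holds with probability $1$ (conditioned only on each spanning-forest computation succeeding, which happens with high probability by \Cref{fact:spanning-forest} if we are in the streaming implementation; in the abstract algorithm it is deterministic). That is the entire argument.

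If one instead wants a bound that reflects the sampling (e.g. to later argue $H$ is sparse \emph{because} few edges survive, rather than because forests are sparse), one could alternatively bound $\Exp[|E_i|] = |E|/k$ and use a Chernoff bound (\Cref{prop:chernoff}) to say $|E_i| = O(|E|/k + \log n)$ w.h.p., but since $|E|$ can be as large as $\binom{n}{2}$ this gives a worse bound of $O(n^2/k)$ per iteration, which is not what we want. Hence the forest bound is the right one, and it is the approach I would take.

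There is essentially no obstacle here: the only subtlety is remembering that we should state the bound as holding deterministically (or with high probability, if we fold in the failure probability of the streaming spanning-forest subroutine of \Cref{fact:spanning-forest} via a union bound over the $\reps = O(k\log n)$ invocations). The real work of \Cref{thm:k-con-cert} — showing $H$ actually \emph{certifies} $k$-edge-connectivity — is deferred to the subsequent lemmas about $2k$-edge-connected pairs being preserved; \Cref{lem:cert-space} only handles the easy half, the size bound.
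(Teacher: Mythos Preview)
Your proposal is correct and matches the paper's proof exactly: each spanning forest $T_i$ has at most $n-1$ edges, and summing over the $\reps = 200k\ln n$ iterations gives at most $200kn\ln n = O(kn\log n)$ edges. The paper's proof is just these two sentences; your additional discussion of the alternative sampling-based bound and the streaming failure probability is extraneous here but not wrong.
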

\begin{proof}
	Each spanning forest $T_i$ has at most $n-1$ edges.
	Thus, the total number of edges in $H$ is at most $200 k n \ln n$.
\end{proof}

We now prove the correctness of this algorithm in the following lemma. 
\begin{lemma}\label{lem:cert-corr}
	Subgraph $H$ of~\Cref{alg:cert} is a certificate of $k$-edge-connectivity for $G$ with high 
	probability.
\end{lemma}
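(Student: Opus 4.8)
The plan is to follow the two-part strategy already outlined before the lemma statement: show that (i) every pair of vertices that is at least $2k$-edge-connected in $G$ remains $k$-edge-connected in $H$, and (ii) every edge $e=\{u,v\}$ whose endpoints are \emph{not} $2k$-edge-connected in $G$ is retained in $H$. Combining these two facts will let us argue that $H$ preserves all cuts of size less than $k$, which is exactly what is needed for $H$ to be a certificate of $k$-edge-connectivity.

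\textbf{Part (i): well-connected pairs stay connected.} Fix a pair $s,t$ with edge-connectivity $\lambda_G(s,t)\ge 2k$. By Menger's theorem (\Cref{fact:menger}) there are $2k$ edge-disjoint $s$-$t$ paths $P_1,\dots,P_{2k}$ in $G$. In a single iteration $i$, each edge survives independently with probability $1/k$; a fixed path $P_\ell$ survives entirely with probability $k^{-|P_\ell|}$, which can be tiny, so we cannot argue path-by-path directly. Instead I would mimic the analysis of \cite{assadi2023tight}: condition on the sampled edge set $E_i$ and note that $T_i$ is a spanning forest of $G_i=(V,E_i)$, hence $s$ and $t$ are connected in $T_i$ iff they are connected in $G_i$. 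So it suffices to show that in $\Omega(k\log n)$ independent iterations, $s$ and $t$ land in the same component of $G_i$ at least $k$ times, and moreover that the witnessing connections can be chosen edge-disjoint across a selected sub-collection of iterations. The cleaner route is the standard one: since $\lambda_G(s,t)\ge 2k$, a result on random subsampling of edges (Karger-style sampling, or the elementary fact that sampling each edge of a graph with min $s$–$t$ cut $\ge 2k$ at rate $1/k$ keeps $s,t$ connected with probability $\ge 1-e^{-\Omega(1)}$, via a union bound over the at most... — actually over cuts, using the cut-counting bound) guarantees that in each iteration $G_i$ keeps $s,t$ connected with constant probability $\ge c$. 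Across $\reps = 200k\ln n$ iterations, by a Chernoff bound (\Cref{prop:chernoff}) the number of "good" iterations is $\ge k$ with probability $\ge 1 - n^{-10}$. Finally, the spanning forests $T_1,\dots,T_{\reps}$ are built from \emph{independently} sampled edge sets, but they need not be edge-disjoint; the key observation (again from \cite{assadi2023tight}) is that we only need $k$ edge-disjoint $s$-$t$ paths in $H$, and a counting/flow argument shows that if $s,t$ are connected in $k$ different $T_i$'s then $\lambda_H(s,t)\ge k$ — this is where I expect to need a careful argument, since the $T_i$ may share edges. I would handle this by instead sampling \emph{disjoint} edge batches or by arguing that a shared edge is used in at most... — more precisely, I'd re-examine the \cite{assadi2023tight} proof and adapt their lemma that bounds, for each edge, the number of iterations in which it appears, then use a fractional-flow averaging argument to extract $k$ edge-disjoint paths. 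A union bound over all $\binom{n}{2}$ pairs $(s,t)$ gives the statement with high probability.

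\textbf{Part (ii): poorly-connected edges are retained.} Let $e=\{u,v\}\in E$ with $\lambda_G(u,v)<2k$, i.e.\ there is a $u$-$v$ cut $(S,\bar S)$ with $|E(S,\bar S)|< 2k$ and $e$ crossing it. In iteration $i$, consider the event that $e\in E_i$ (probability $1/k$) and no \emph{other} edge of this cut is sampled. Since there are fewer than $2k$ other cut edges, the latter has probability $\ge (1-1/k)^{2k-1}\ge e^{-2}$ roughly, so with probability $\ge \Omega(1/k)$ the edge $e$ is the unique sampled edge across $(S,\bar S)$, which forces $e$ into the spanning forest $T_i$ (it is a bridge in $G_i$ between the two sides, or at least $u,v$ are separated in $G_i - e$, so any spanning forest of $G_i$ must include $e$). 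Wait — I should be careful: $e$ being the only sampled cut-edge means $u,v$ are in different components of $G_i-e$, hence every spanning forest of $G_i$ contains $e$. Over $\reps=200k\ln n$ iterations, the probability that this never happens is $\le (1-\Omega(1/k))^{200k\ln n}\le n^{-\Omega(1)}$. Union-bounding over all $m\le\binom{n}{2}$ such edges keeps the total failure probability polynomially small.

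\textbf{Putting it together.} Suppose $H$ fails to be a certificate: then $G$ and $H$ disagree on $k$-edge-connectivity, so there is a cut of size $<k$ in one but not the other. Since $H\subseteq G$, any cut in $G$ has size at least as large in... no — $H\subseteq G$ means cuts are \emph{smaller} in $H$. So the only failure mode is: some cut $(S,\bar S)$ has $|E_G(S,\bar S)|\ge k$ but $|E_H(S,\bar S)|<k$. Take any edge $e=\{u,v\}\in E_G(S,\bar S)\setminus E_H(S,\bar S)$. If $\lambda_G(u,v)\ge 2k$, Part (i) says $\lambda_H(u,v)\ge k$, so there are $k$ edge-disjoint $u$-$v$ paths in $H$, each crossing $(S,\bar S)$, forcing $|E_H(S,\bar S)|\ge k$, a contradiction. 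If $\lambda_G(u,v)<2k$, Part (ii) says $e\in E_H$, contradicting $e\notin E_H(S,\bar S)$. Hence no such cut exists, and conditioning on the (high-probability) success of Parts (i) and (ii), $H$ is a certificate of $k$-edge-connectivity. The main obstacle is the edge-disjointness accounting in Part (i) — turning "$s,t$ connected in many $T_i$" into "$k$ edge-disjoint $s$-$t$ paths in $H$" — which I would resolve by importing and lightly adapting the corresponding lemma from \cite{assadi2023tight}, since the overall sampling scheme here is deliberately parallel to theirs.
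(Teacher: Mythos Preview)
Your Part (ii) and the final combining step are essentially correct and match the paper's argument. The genuine gap is in Part (i), and you correctly flag it yourself: ``connected in $k$ different $T_i$'s'' does \emph{not} yield $k$ edge-disjoint $s$-$t$ paths in $H$, because the $T_i$ share edges freely (a single edge $e$ is sampled in $\approx r/k = \Theta(\log n)$ of the $E_i$'s, and can appear in all the corresponding $T_i$'s). Your suggested fixes --- a Karger-style cut-counting union bound, or bounding how many $T_i$ any edge sits in and running a fractional-flow averaging --- do not work either: the first fails because the number of $s$-$t$ cuts (as opposed to global near-min cuts) can be exponential, so there is no polynomial union bound; the second gives no useful lower bound on the integral $s$-$t$ flow in $H$.

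The paper (following \cite{assadi2023tight}) sidesteps this entirely by never trying to exhibit $k$ edge-disjoint paths. Instead it proves $\lambda_H(s,t)\ge k$ via Menger's theorem in contrapositive form: fix an arbitrary set $X$ of $k-1$ edges and show that $s$ and $t$ remain connected in $H-X$, with failure probability $n^{-\Theta(k)}$ small enough to union bound over all $\binom{n^2}{k-1}\le n^{2k}$ choices of $X$ (and all pairs $s,t$). The mechanics are: let $I(X)=\{i: E_i\cap X=\emptyset\}$; since $|X|=k-1$, each $i$ lands in $I(X)$ with probability $(1-1/k)^{k-1}\ge 1/4$, so $|I(X)|\ge r/8$ with very high probability. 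In $G-X$ there are still $k$ edge-disjoint $s$-$t$ paths; for each such path $P$ and each edge $e\in P$, the probability that $e$ misses every $G_i$ with $i\in I(X)$ is $(1-1/k)^{|I(X)|}\le n^{-\Omega(1)}$, so (union bound over the $\le n$ edges of $P$) with probability $1-n^{-\Omega(1)}$ every edge of $P$ lies in some $G_i$, $i\in I(X)$. When this holds, each edge $(u,v)\in P$ has its endpoints connected inside $T_i\subseteq H$ for some $i\in I(X)$, and since $T_i$ avoids $X$, concatenating gives an $s$-$t$ walk in $H-X$. Because the $k$ paths are edge-disjoint, these events are independent across paths, driving the failure probability down to $n^{-\Omega(k)}$, which is what the union bound over $X$ needs. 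This ``fix the cut $X$ first, then union bound over $X$'' is the key idea you are missing.
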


\Cref{lem:cert-corr} will be proven in two steps. We first show that every pair of vertices that have at least 
$2k$ edge-disjoint paths 
between them in $G$ have at least $k$ edge-disjoint paths in $H$ with high 
probability.
\begin{lemma}\label{clm:connectivity-in-H}
	Every pair of vertices $s,t$ in $G$ that have at least $2k$ edge-disjoint paths between them in $G$
	have at least $k$ edge-disjoint paths in $H$ with high probability.
\end{lemma}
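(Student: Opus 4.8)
The plan is to translate everything into edge connectivity via Menger's theorem (Fact~\ref{fact:menger}): a pair $s,t$ has at least $\ell$ edge-disjoint paths in a graph iff no set of at most $\ell-1$ edges disconnects them, i.e. iff the $s$--$t$ edge connectivity is at least $\ell$. So it suffices to prove that, with high probability, \emph{every} pair $s,t$ with $\lambda_G(s,t)\ge 2k$ satisfies $\lambda_H(s,t)\ge k$; a union bound over the at most $\binom n2$ pairs then reduces matters to one fixed pair. The key move is to characterize failure as follows: $\lambda_H(s,t)<k$ happens iff there is an edge set $F$ with $|F|\le k-1$ whose deletion disconnects $s$ from $t$ in $H$. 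I will union bound over these $F$ — there are at most $\sum_{\ell\le k-1}\binom{|E(G)|}{\ell}\le n^{O(k)}$ of them. This detour through Menger is exactly what lets me avoid a union bound over the cuts separating $s$ and $t$, of which there may be exponentially many, whereas with only $r=200k\ln n=\Theta(k\log n)$ sampling rounds the per-cut failure probability is only $n^{-\Theta(k)}$ — hopeless for small $k$. That mismatch is the main obstacle, and the reformulation is the way around it.

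Now fix $F$ with $|F|\le k-1$ and bound $\Pr[\,s,t\text{ disconnected in }H-F\,]$. Since $\lambda_G(s,t)\ge 2k$, deleting the $\le k-1$ edges of $F$ leaves $\lambda_{G-F}(s,t)\ge 2k-(k-1)=k+1$, so every $s$--$t$ cut of $G-F$ has at least $k+1$ edges. Call iteration $i$ \emph{clean} if $E_i\cap F=\varnothing$; since each of the $\le k-1$ edges of $F$ is sampled with probability $1/k$, an iteration is clean with probability $(1-1/k)^{|F|}\ge e^{-1}$, so by a Chernoff bound (Fact~\ref{prop:chernoff}) the number of clean iterations exceeds $36k\ln n$ except with probability $n^{-\Omega(k)}$. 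On a clean iteration $T_i$ contains no edge of $F$, so $\bigcup_{i\text{ clean}}T_i$ is a subgraph of $H-F$; and because a spanning forest has the same connected components as its host graph, $\bigcup_{i\text{ clean}}T_i$ has exactly the same components as $\bigcup_{i\text{ clean}}G_i$ (Algorithm~\ref{alg:cert} computes the $T_i$ as genuine spanning forests). Hence it is enough to show that $s$ and $t$ are connected in $\bigcup_{i\text{ clean}}G_i$ with high probability.

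Conditioned on which iterations are clean — an event depending only on how the edges of $F$ are sampled, hence independent of how the edges of $G-F$ are sampled — each edge of $G-F$ is \emph{absent} from $\bigcup_{i\text{ clean}}G_i$ independently with probability at most $(1-1/k)^{36k\ln n}\le n^{-36}$. Let $D$ be the random set of these absent edges; a union/Chernoff bound gives $\Pr[\,|D|\ge k+1\,]\le\binom{|E(G)|}{k+1}n^{-36(k+1)}\le n^{-\Omega(k)}$. But if $|D|\le k$, then $D$ cannot contain any $s$--$t$ cut of $G-F$, since every such cut has at least $k+1$ edges, so $s$ and $t$ stay connected in $\bigcup_{i\text{ clean}}G_i$ (whose edge set is precisely $E(G-F)\setminus D$). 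Putting the three estimates together yields $\Pr[\,F\text{ disconnects }s,t\text{ in }H\,]\le n^{-\Omega(k)}$; summing over the $n^{O(k)}$ choices of $F$ (the constant $200$ in $r$ being large enough to absorb this polynomial-in-$n^{k}$ factor) and then over the $\le\binom n2$ pairs $(s,t)$ gives total failure probability $n^{-\Omega(1)}$, which is the claim. The one remaining technical nuisance is keeping the ``clean iteration'' conditioning honest, and this works precisely because cleanness is measurable with respect to the sampling of $F$'s edges alone, so the edges of $G-F$ remain mutually independent after conditioning.
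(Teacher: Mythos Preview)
Your proof is correct and follows essentially the same architecture as the paper: reformulate via Menger, union-bound over all $(k{-}1)$-edge sets $F$, isolate the ``clean'' iterations $i$ with $E_i\cap F=\varnothing$, lower-bound their count by Chernoff, and then argue connectivity of $s,t$ in $\bigcup_{i\text{ clean}}G_i$ (hence in $\bigcup_{i\text{ clean}}T_i\subseteq H-F$). The paper does exactly this, including the same observation about conditioning on cleanness being independent of how edges outside $F$ are sampled.

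The one genuine difference is in the endgame. The paper fixes $k$ edge-disjoint $s$--$t$ paths $P_1,\ldots,P_k$ in $G-F$ and shows that, with probability $1-n^{-\Omega(k)}$, at least one $P_j$ is \emph{entirely} sampled in $\bigcup_{i\text{ clean}}G_i$; it then exploits edge-disjointness of the $P_j$'s to get independence across paths. You instead bound the \emph{total} number of edges of $G-F$ that are missing from $\bigcup_{i\text{ clean}}G_i$: since each edge is missing independently with probability at most $n^{-36}$, you get $|D|\le k$ with probability $1-n^{-\Omega(k)}$, and then use that every $s$--$t$ cut in $G-F$ has at least $k+1$ edges. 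This is a slightly slicker finish---it avoids fixing specific paths and the independence-across-paths step---at the cost of a marginally cruder tail bound; either way the constants in $r=200k\ln n$ are comfortably sufficient.
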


We then show that every edge whose endpoints have less than $2k$ edge-disjoint 
paths between them in $G$ will belong to $H$ as well. 

\begin{lemma}\label{clm:edges-in-H}
	Every edge $(s,t) \in G$ that has less than $2k$ edge-disjoint paths between its endpoints in 
	$G$ belongs to $H$ also with high probability.
\end{lemma}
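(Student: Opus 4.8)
The plan is to combine Menger's theorem with the elementary fact that a bridge of a graph belongs to every spanning forest of that graph. Fix an edge $e=(s,t)\in E$ that has fewer than $2k$ edge-disjoint $s$-$t$ paths in $G$. By \Cref{fact:menger}, there is a cut $(S,V\setminus S)$ with $s\in S$ and $t\in V\setminus S$ whose edge boundary $C$ satisfies $|C|\le 2k-1$; note that $e\in C$. First I would show that, for each iteration $i$, conditioned on the event $\mathcal{E}_i$ that $e\in E_i$ but no other edge of $C$ is sampled into $E_i$, we have $e\in T_i$. Indeed, under $\mathcal{E}_i$ the only edge of $G_i=(V,E_i)$ crossing the cut $(S,V\setminus S)$ is $e$, so any $s$-$t$ walk in $G_i$ that avoids $e$ would have to cross the cut using some edge of $C\setminus\{e\}$, which is not present in $G_i$; hence $e$ is a bridge of $G_i$ and therefore lies in every spanning forest of $G_i$, in particular in $T_i\subseteq H$.

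Next I would bound $\Pr[\mathcal{E}_i]$. Since each edge is included in $E_i$ independently with probability $1/k$,
\[
\Pr[\mathcal{E}_i]\;=\;\frac1k\Bigl(1-\frac1k\Bigr)^{|C|-1}\;\ge\;\frac1k\Bigl(1-\frac1k\Bigr)^{2k-2}\;\ge\;\frac{c}{k}
\]
for an absolute constant $c>0$ (one checks $c=1/8$ works for all $k\ge2$; the case $k=1$ is trivial since then $E_i=E$ and $e$ is already a bridge of $G$). The events $\mathcal{E}_1,\dots,\mathcal{E}_{\reps}$ are independent across the $\reps=200k\ln n$ iterations, so
\[
\Pr[e\notin H]\;\le\;\Pr\Bigl[\,\bigcap_{i=1}^{\reps}\overline{\mathcal{E}_i}\,\Bigr]\;\le\;\Bigl(1-\frac{c}{k}\Bigr)^{200k\ln n}\;\le\;\exp\bigl(-200c\ln n\bigr)\;\le\;n^{-10}.
\]
A union bound over the at most $\binom{n}{2}$ edges of $G$ then shows that every edge whose endpoints have fewer than $2k$ edge-disjoint paths in $G$ belongs to $H$ with high probability, which is exactly the claim.

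I expect the only genuinely delicate point to be the argument of the first paragraph: making precise that "$e$ is the unique $C$-edge sampled into $E_i$" forces $e\in T_i$ no matter which spanning-forest subroutine produces $T_i$. The two facts doing the work there are that a cut edge of a subgraph is a bridge of that subgraph, and that bridges lie in every spanning forest; once these are stated cleanly, the rest is the routine sampling-and-union-bound computation above, and the constant $200$ in $\reps$ is chosen precisely so that the per-edge failure probability comfortably beats the $\binom{n}{2}$ loss in the union bound. (This lemma, together with \Cref{clm:connectivity-in-H} and \Cref{lem:cert-space}, then yields \Cref{lem:cert-corr} and hence \Cref{thm:k-con-cert}.)
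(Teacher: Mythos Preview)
Your proposal is correct and follows essentially the same approach as the paper: use Menger's theorem to find a small $s$-$t$ cut containing $e$, show that with probability $\Omega(1/k)$ a given iteration samples $e$ but no other cut edge (making $e$ a bridge and hence forcing it into the spanning forest), and then exploit independence across the $\reps=200k\ln n$ iterations together with a union bound over edges. The paper phrases the bridge argument as ``the spanning forest has to contain $(s,t)$ as there is no other $s$-$t$ path,'' and is slightly looser with the exponent (writing $(1-1/k)^{2k-1}$ rather than your $(1-1/k)^{|C|-1}$), but the argument and constants are otherwise identical.
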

The proofs of these lemmas appear in the next two subsections. We first use these lemmas to prove 
\Cref{lem:cert-corr} and conclude the proof of~\Cref{thm:k-con-cert}.

\begin{proof}[Proof of \Cref{lem:cert-corr}]
	We first condition on the events in \Cref{clm:connectivity-in-H} and 
	\Cref{clm:edges-in-H} both of which happen with high probability.
	We need to show that $H$ is $k$-edge-connected iff $G$ is $k$-edge-connected.
	If $H$ is $k$-edge-connected then $G$ is also $k$-edge-connected simply because $H$ is a 
	subgraph of $G$. 
	
	We now assume towards a contradiction that $G$ is $k$-edge-connected, but $H$ is not.
	This means that there is a cut $S$ of size at most $k-1$, i.e.\ removing the set of edges $X$ going out 
	of $S$ disconnects $H$.
	Since $G$ is $k$-edge-connected, the cut $S$ has at least $k$ edges going out of it so deleting the 
	set of edges $X$ cannot disconnect $G$ and thus $G$ has an edge 
	$e=(s,t)$ between $S$ and $T:=V-S$ (see~\Cref{fig:partition}). 
	
	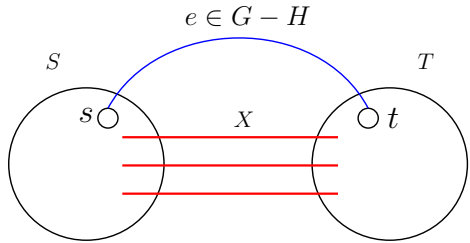
\begin{figure}[ht!]
	\centering
	{   \resizebox{180pt}{90pt}{ \tikzset{every picture/.style={line width=0.75pt}} 

\begin{tikzpicture}[x=0.75pt,y=0.75pt,yscale=-1,xscale=1]
	
	\draw   (121,179) .. controls (121,149.18) and (145.18,125) .. (175,125) .. controls (204.82,125) and 
	(229,149.18) .. (229,179) .. controls (229,208.82) and (204.82,233) .. (175,233) .. controls 
	(145.18,233) and (121,208.82) .. (121,179) -- cycle ;
	\draw   (332,179) .. controls (332,149.18) and (356.18,125) .. (386,125) .. controls (415.82,125) 
	and (440,149.18) .. (440,179) .. controls (440,208.82) and (415.82,233) .. (386,233) .. controls 
	(356.18,233) and (332,208.82) .. (332,179) -- cycle ;
	\draw   (183,146.5) .. controls (183,142.63) and (186.13,139.5) .. (190,139.5) .. controls 
	(193.87,139.5) and (197,142.63) .. (197,146.5) .. controls (197,150.37) and (193.87,153.5) .. 
	(190,153.5) .. controls (186.13,153.5) and (183,150.37) .. (183,146.5) -- cycle ;
	\draw   (364,146.5) .. controls (364,142.63) and (367.13,139.5) .. (371,139.5) .. controls 
	(374.87,139.5) and (378,142.63) .. (378,146.5) .. controls (378,150.37) and (374.87,153.5) .. 
	(371,153.5) .. controls (367.13,153.5) and (364,150.37) .. (364,146.5) -- cycle ;
	\draw [draw=blue]   (190,139.5) .. controls (225,74) and (336,72) .. (371,139.5) ;
	
	\draw [draw=red] [line width=1.2pt]  
	(200,160) --  (350,160);

	\draw [draw=red] [line width=1.2pt]  
	(200,180) --  (350,180);

	\draw [draw=red] [line width=1.2pt]  
	(200,200) --  (350,200);
		
	\draw (145,99) node [anchor=north west][inner sep=0.75pt]   [align=left] {\large{$S$}};
	\draw (404,99) node [anchor=north west][inner sep=0.75pt]   [align=left] {\large{$T$}};
	\draw (383,138) node [anchor=north west][inner sep=0.75pt]   [align=left] {\LARGE{$t$}};
	\draw (168,138) node [anchor=north west][inner sep=0.75pt]   [align=left] {\LARGE{$s$}};
	\draw (276,140) node [anchor=north west][inner sep=0.75pt]   [align=left] {\large{$X$}};
	\draw (242,65.5) node [anchor=north west][inner sep=0.75pt]   [align=left] {\Large{$e \in G-H$}};

\end{tikzpicture}}  }
	\caption{An illustration of the cut $S$ in $G$ and $H$. After $X$ is deleted, there are no edges  between 
	$S$ and $T$ in $H$, while $G$ has at least one edge $e=(s,t)$ between $S$ and $T$, to ensure its 
	$k$-edge-connectivity as $\card{X} < k$. 
		\label{fig:partition}}
\end{figure}
	\noindent
	We now consider two cases.
	\begin{itemize}
		\item Case 1: $s$ and $t$ have at least $2k$ edge-disjoint paths between them in $G$. \\
		We have conditioned on the event in \Cref{clm:connectivity-in-H}, so we can say that $s$ and $t$ 
		have at least $k$ edge-disjoint paths in $H$.
		Deleting $X$ can destroy at most $\card{X} \leq k-1$ of these paths in $H$.
		This implies that there is still an $s$-$t$ path in $H-X$ and thus there is an edge between $S$ 
		and $T$ in $H-X$, a contradiction. 
		
		\item Case 2: $s$ and $t$ have less than $2k$ edge-disjoint paths between them in $G$. \\
		Since there are fewer than $2k$ edge-disjoint paths between $s$ and $t$ in 
		$G$, by conditioning on the event of \Cref{clm:edges-in-H}, $e$ would be preserved in $H$, a 
		contradiction with $H$ having no edge between $S$ and $T$. 
	\end{itemize}	
	In conclusion, we get that $H$ is a certificate of $k$-edge-connectivity for $G$ with high probability.
\end{proof}

\Cref{thm:k-con-cert} now follows immediately from~\Cref{lem:cert-space} and~\Cref{lem:cert-corr}.

	Before moving on from this section, we present the following corollary 
	of~\Cref{alg:cert} that allows for using this algorithm
	for some other related problems in dynamic streams as well. 
	
	\begin{corollary}\label{cor:k-con-extension}
		The subgraph $H$ output by~\Cref{alg:cert} with high probability satisfies the following guarantees: 
		\begin{enumerate}[label=$(\roman*)$]
			\item For any pair of vertices $s,t$ in $G$, there are at least $k$ edge-disjoint $s$-$t$ paths in 
			$G$ iff there at least $k$ edge-disjoint $s$-$t$ paths in $H$ (this holds even if $G$ is not 
			$k$-edge-connected). 
			\item Every cut of $H$ with size less than $k$ is a cut in $G$ with size less than $k$ and vice 
			versa (this means all cuts of $G$ are preserved in $H$ and no new ones are created as long as 
			their size is less than $k$). 
		\end{enumerate}
	\end{corollary}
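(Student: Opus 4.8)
\textbf{Proof proposal for \Cref{cor:k-con-extension}.}

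The plan is to extract both claims directly from the two lemmas that already underpin \Cref{lem:cert-corr}, namely \Cref{clm:connectivity-in-H} and \Cref{clm:edges-in-H}, without re-running the sampling analysis. First I would condition, once and for all, on the high-probability events in those two lemmas; everything below is then deterministic. For part $(i)$, the direction ``$k$ edge-disjoint $s$-$t$ paths in $H$ $\Rightarrow$ $k$ edge-disjoint $s$-$t$ paths in $G$'' is immediate since $H\subseteq G$. For the converse, suppose $s,t$ have at least $k$ edge-disjoint paths in $G$, equivalently (by \Cref{fact:menger}) the minimum $s$-$t$ edge cut in $G$ has size at least $k$. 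Split into the two cases used in the proof of \Cref{lem:cert-corr}: if $s,t$ are in fact $2k$-edge-connected in $G$, then \Cref{clm:connectivity-in-H} hands us $k$ edge-disjoint $s$-$t$ paths in $H$ directly. Otherwise the minimum $s$-$t$ cut in $G$ has some size $\kappa$ with $k\le \kappa < 2k$; here I would argue that \emph{every} edge on the ``boundary'' of this situation is retained. More carefully, take a minimum $s$-$t$ cut $(S,T)$ in $G$ with $|E(S,T)|=\kappa<2k$. Each edge $e=(u,v)\in E(S,T)$ has the property that its own endpoints $u,v$ are connected in $G$ by at most $\kappa<2k$ edge-disjoint paths (any $u$-$v$ path must cross the cut, and the cut has only $\kappa$ edges), so by \Cref{clm:edges-in-H} every such $e$ lies in $H$. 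Hence $E(S,T)\subseteq H$, so in $H$ the cut separating $s$ from $t$ across $(S,T)$ still has all $\kappa\ge k$ edges; since $(S,T)$ need not be minimum in $H$, the minimum $s$-$t$ cut in $H$ has size $\ge$ well, at least $k$ — wait, that is not quite forced by a single cut. I would instead invoke \Cref{clm:connectivity-in-H}/\Cref{clm:edges-in-H} the way \Cref{lem:cert-corr} does: for a \emph{global} conclusion one fixes the minimum $s$-$t$ cut $(S,T)$ in $H$ and derives a contradiction exactly as in that proof, replacing ``$k$-edge-connected'' by ``$\ge k$ edge-disjoint $s$-$t$ paths''. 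That is, if the min $s$-$t$ cut in $H$ had size $<k$ while in $G$ it is $\ge k$, then $G$ has an edge $e=(x,y)$ across that cut; if $x,y$ are $2k$-edge-connected in $G$, \Cref{clm:connectivity-in-H} contradicts $H$ having a small $s$-$t$ cut through that same partition; otherwise $x,y$ have $<2k$ edge-disjoint paths in $G$, so $e\in H$ by \Cref{clm:edges-in-H}, contradicting that $e$ crosses a cut absent in $H$.

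For part $(ii)$, the statement is about \emph{all} cuts of size $<k$, not just $s$-$t$ cuts for a fixed pair. One direction, ``a cut of $H$ of size $<k$ is a cut of $G$ of size $<k$'', needs thought because $H\subseteq G$ could in principle add edges across a partition — but it cannot, since $H$ has \emph{fewer} edges; so for any vertex partition $(S,\bar S)$ we have $|E_H(S,\bar S)|\le |E_G(S,\bar S)|$, and if $H$ is disconnected across $(S,\bar S)$ then ``cut of $H$ of size $<k$'' just says $|E_H(S,\bar S)|<k$ — this does \emph{not} by itself bound $|E_G(S,\bar S)|$. So the real content is: if $|E_G(S,\bar S)|\ge k$ then $|E_H(S,\bar S)|\ge k$ too, i.e.\ no genuine cut of $G$ of size $\ge k$ shrinks below $k$ in $H$. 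I would prove this by picking $s\in S$, $t\in\bar S$ and applying part $(i)$: $|E_G(S,\bar S)|\ge k$ means, well, it doesn't directly give $k$ edge-disjoint $s$-$t$ paths (the min $s$-$t$ cut might be smaller than this particular cut). So instead I would run the contradiction argument from \Cref{lem:cert-corr} verbatim with ``$k$-edge-connected'' weakened to ``this particular cut has $\ge k$ edges'': assume $|E_G(S,\bar S)|\ge k$ but the edges $X$ of some $H$-cut $(S',\bar{S'})$ with $|X|<k$ disconnect $H$; since $|X|<k\le$ any $G$-cut that $X$ is supposed to kill, deleting $X$ from $G$ leaves an edge $e=(s,t)$ across $(S',\bar{S'})$, and the two-case analysis (using \Cref{clm:connectivity-in-H} when $s,t$ are $2k$-edge-connected, \Cref{clm:edges-in-H} otherwise) yields a contradiction. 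The converse direction of $(ii)$ — a cut of $G$ with size $<k$ remains a cut of $H$ with size $<k$ — is again immediate from $H\subseteq G$: if $E_G(S,\bar S)$ has size $<k$ then $E_H(S,\bar S)\subseteq E_G(S,\bar S)$ has size $<k$, and since these are all the edges across the partition in $G$ and $H$ is connected-across-$(S,\bar S)$ only through them, $H$ is also disconnected across $(S,\bar S)$, i.e.\ it is genuinely a cut of $H$.

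The main obstacle I anticipate is being careful about the quantifier structure in part $(ii)$: ``cut of size $<k$'' must be read as a \emph{bad} cut (a cut certifying non-$k$-edge-connectivity of that partition), and the nontrivial content is entirely in showing that \emph{good} cuts ($\ge k$ edges) stay good, which — because $H$ is a subgraph — is not automatic and genuinely requires \Cref{clm:connectivity-in-H} together with \Cref{clm:edges-in-H}. The right move is to notice that the proof of \Cref{lem:cert-corr} never really used ``$G$ is $k$-edge-connected'' as a global hypothesis; it only used, for the one partition $(S,T)$ at hand, that $G$ has $\ge k$ edges across it, and then branched on the local connectivity of the endpoints of a surviving edge. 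Thus both parts of the corollary are obtained by replaying that argument with the global hypothesis replaced by the relevant local one, and I would present the proof as exactly this observation plus the two short subgraph-monotonicity remarks for the easy directions.
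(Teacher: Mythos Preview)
Your proposal is correct and takes essentially the same approach as the paper: the paper in fact omits the proof entirely, stating only that it is identical to that of \Cref{lem:cert-corr}. Your final argument---conditioning on \Cref{clm:connectivity-in-H} and \Cref{clm:edges-in-H}, then for each nontrivial direction assuming a bad cut $X$ in $H$ with $|X|<k$, finding an edge $e=(s,t)\in G\setminus H$ across it, and branching on whether $s,t$ are $2k$-edge-connected in $G$---is exactly that replay, with the global hypothesis ``$G$ is $k$-edge-connected'' replaced by the relevant local one (as you correctly observe in your last paragraph).
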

\noindent
The proof of this corollary is identical to that of~\Cref{lem:cert-corr} and is thus omitted.

\begin{proof}[Proof of \texorpdfstring{\Cref{clm:connectivity-in-H}}{Lemma}]
We prove \Cref{clm:connectivity-in-H} in this part following the same approach as in \cite{assadi2023tight}. 
For this proof, without loss of generality, we can assume that $k > 1$: for 
$k=1$, each graph $G_i$ is the same as $G$ and thus the spanning forest computes an $s$-$t$ path 
which will be added to $H$,  trivially implying the proof.

Fix any pair of vertices $s,t$ with at least $2k$ edge-disjoint paths between them.
We choose an arbitrary set $X$ of $k-1$ edges, and the goal is to show that $s$ and $t$ remain 
connected in the graph $H-X$ with very high probability. 
We do so by showing that out of the at least $k$ edge-disjoint paths between $s$ and $t$ in $G-X$, 
with probability $1-n^{-\Theta(k)}$, at least one of them is 
entirely sampled as part of the \emph{subset} of $G_i$'s for $i \in [r]$ that do not contain any edges 
from $X$.  
This will be sufficient to prove the existence of an $s$-$t$ path in $H-X$.
A union bound over the $\binom{n^2}{k-1}$ choices of $X$ and $\binom{n}{2}$ pairs $s,t$ 
concludes the proof.

Fix $X$ as a set of $k-1$ edges. Define:
\begin{align}
I(X):=\set{i \in [\reps]: E_i \cap X =\emptyset}; \label{eq:index-set}
\end{align}
that is, the indices of sampled graphs in $G_1,\ldots,G_r$ that contain no edges from $X$.
We first argue that $\card{I(X)}$ is large with very high probability. 

\begin{claim}\label{clm:I(X)}
	$\Pr\paren{\card{I(X)} \leq \reps/8} \leq n^{-5k}$.
\end{claim}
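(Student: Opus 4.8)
The plan is a one-line Chernoff bound on the independent indicators of the events $i\in I(X)$. Fix the set $X$ of $k-1$ edges (the case $k=1$ is degenerate, since then $X=\emptyset$ and $I(X)=[\reps]$, so assume $k\ge 2$). For a single index $i$, the edges of $G$ are placed in $E_i$ independently, each with probability $1/k$, so
\[
\Pr\bigl[E_i \cap X = \emptyset\bigr] = \Bigl(1-\tfrac1k\Bigr)^{\,k-1} \ge \tfrac1e,
\]
using the standard inequality $(1-1/k)^{k-1}\ge 1/e$ valid for all integers $k\ge 1$.

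Next I would use that the subgraphs $G_1,\dots,G_{\reps}$ are sampled independently, so the indicators $Y_i := \mathbbm{1}[\,i\in I(X)\,]$ are mutually independent Bernoulli variables, each with success probability at least $1/e$. Hence $\card{I(X)} = \sum_{i\in[\reps]} Y_i$ satisfies $\Exp[\card{I(X)}] \ge \reps/e =: \mu_L$. Since $\reps/8 < \tfrac12\cdot(\reps/e) = (1-\tfrac12)\mu_L$ (because $\reps/e > \reps/4$), the lower-tail Chernoff bound (\Cref{prop:chernoff}) applied with range parameter $b=1$ and $\eps=\tfrac12$ gives
\[
\Pr\bigl[\card{I(X)} \le \reps/8\bigr] \;\le\; \Pr\bigl[\card{I(X)} < (1-\tfrac12)\mu_L\bigr] \;\le\; \exp\!\Bigl(-\frac{\mu_L}{10}\Bigr) \;=\; \exp\!\Bigl(-\frac{\reps}{10e}\Bigr).
\]

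Finally I would substitute $\reps = 200k\ln n$, so that $\reps/(10e) \ge 7k\ln n$, which yields $\Pr[\card{I(X)} \le \reps/8] \le n^{-7k} \le n^{-5k}$, as claimed (with constant slack, which is what makes the later union bound over the $\binom{n^2}{k-1}$ choices of $X$ and $\binom{n}{2}$ pairs $s,t$ go through). There is no genuine obstacle here: the only points requiring care are checking that the constant $200$ in $\reps$ is large enough to survive the loss from the $1/e$ success probability and the Chernoff exponent, and handling the trivial $k=1$ boundary case; everything else is a routine concentration estimate.
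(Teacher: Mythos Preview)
Your proof is correct and follows essentially the same approach as the paper: both bound $\Pr[E_i\cap X=\emptyset]=(1-1/k)^{k-1}$ by a constant (the paper uses $1/4$, you use the slightly sharper $1/e$), then apply the lower-tail Chernoff bound with $\eps=1/2$ to the sum of independent indicators, and substitute $\reps=200k\ln n$ to get the desired exponent.
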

\begin{proof}
	Fix any index $i \in [r]$ and an edge $e \in X$. The probability that $e$ is not sampled in $E_i$ is 
	$(1-1/k)$ by definition and thus, 
	\[
		\Pr\paren{E_i \cap X = \emptyset} = (1-1/k)^{k-1} \geq 1/4,
	\]
	given that $k > 1$ (as argued earlier) and the sampling of edges being independent in $E_i$. 
	Therefore, we have, 
	\[
	\Exp\card{I(X)} = \reps \cdot (1-1/k)^{k-1} \geq \reps/4.
	\]
	By an application of the Chernoff bound (\Cref{prop:chernoff}) with $\mu_L = \reps/4$ and $\eps=1/2$, we have, 
	\[
		\prob{\card{I(X)} \leq \reps/8} \leq \exp(-\reps/4 \cdot 1/10) < n^{-5k}. \qedhere
	\]	
\end{proof}

	In the rest of the proof we condition on the event that $\card{I(X)} \geq \reps/8$. To continue, we need some definitions. 
	There are more than $k$ edge-disjoint paths between $s$ and $t$ in $G-X$ since there 
	were $2k$ of them in $G$ and only $k-1$ edges (set $X$) are deleted. 
	Choose $k$ of them arbitrarily denoted by $P_1(X), \ldots , P_k(X)$ (see \Cref{fig:aBc} for an 
	illustration).

	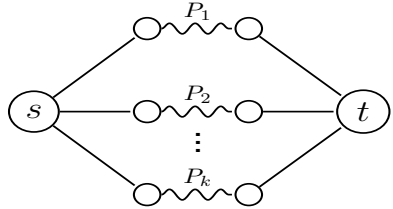
\begin{figure}[ht!]
	\centering
	{   \resizebox{150pt}{80pt}{ 
%
%
%
%

\begin{tikzpicture}[
	shorten >=1pt, auto, thick,
	node distance=1cm,
	state/.style={circle,draw,fill=white,font=\sffamily\Large\bfseries},
	labelstate/.style={circle,draw,fill=white,  draw,white, text=black, font=\sffamily\Large\bfseries}
	]
\node[state] (S) {$s$};
\node[state] (V1) [right =of S] {};
\node[state] (V3) [right =of V1] {};
\node[state] (T) [right=of V3] {$t$};

\node[state] (U1) [above =of V1] {};
\node[state] (U3) [above =of V3] {};

\node[state] (W1) [below =of V1] {};
\node[state] (W3) [below =of V3] {};

\path[every node/.style={font=\sffamily\small}]
(S) edge [bend right = 0] node {} (V1)
(V1) edge [bend  right = 0, decorate,decoration=snake] node {$P_2$} node [below] 
{\textbf{$\vdots$}} (V3)
(V3) edge [bend  right = 0] node {} (T)

(S) edge [bend right = 0] node {} (U1)
(U1) edge [bend  right = 0, decorate,decoration=snake] node {$P_1$} (U3)
(U3) edge [bend  right = 0] node {} (T)

(S) edge [bend right = 0] node {} (W1)
(W1) edge [bend  right = 0, decorate,decoration=snake] node {$P_k$} (W3)
(W3) edge [bend  right = 0] node {} (T)
;
\end{tikzpicture}}  }
	\caption{An illustration of the $k$ edge-disjoint $s$-$t$ paths $P_1(X),P_2(X),\ldots,P_k(X)$.
		Note that the paths can share vertices but are edge-disjoint.
		\label{fig:aBc}}
	\end{figure}

	We define the notion of ``preserving'' a path. 

		\begin{definition}
		Let $G_X := \cup_{i \in I(X)} G_i$ be the union of graphs indexed in~\Cref{eq:index-set}. We say that 
		a path $P$ in $G-X$ is \textnormal{\textbf{preserved}} in $G_X$ iff for every edge $e \in P$, 
		there exists at least one $i \in I(X)$ such that $e \in G_i$; in other words, the entire path $P$ belongs to $G_X$. 
	\end{definition}
		
	We are going to show that with very high probability, at least one path $P_j(X)$ for $j \in [k]$ is 
	preserved by $G_X$. Before that, we have the following claim that allows 
	us to use this property to conclude the proof. 
	\begin{claim}\label{clm:st-path}
		If any $s$-$t$ path $P_j(X)$ for $j \in [k]$ is preserved in $G_X$ then $s$ and $t$ are connected in 
		$H-X$. 
	\end{claim}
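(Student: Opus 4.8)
The plan is to combine two simple facts: each $T_i$ is a spanning forest of $G_i$, hence preserves all connectivity relations inside $G_i$; and for every index $i \in I(X)$ the subgraph $G_i$ --- and therefore its spanning forest $T_i$ --- contains no edge of $X$.

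First I would record the containment $T_i \subseteq H - X$ for each $i \in I(X)$: indeed $T_i \subseteq H$ by the construction of $H$ in \Cref{alg:cert}, while $T_i \subseteq G_i$ and $E_i \cap X = \emptyset$ by the definition of $I(X)$ in~\Cref{eq:index-set}, so $T_i$ contains no edge of $X$. Next, since $T_i$ is a spanning forest of $G_i$, any two vertices lying in the same connected component of $G_i$ lie in the same connected component of $T_i$.

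Now assume $P_j(X)$ is preserved in $G_X$ and write it as $s = w_0, w_1, \dots, w_\ell = t$. For each edge $e_a = w_{a-1}w_a$ of the path, preservation supplies an index $i_a \in I(X)$ with $e_a \in G_{i_a}$; in particular $w_{a-1}$ and $w_a$ are connected in $G_{i_a}$, hence in $T_{i_a} \subseteq H - X$. Concatenating these connections along $a = 1, \dots, \ell$ yields an $s$-$t$ walk in $H - X$, so $s$ and $t$ are connected in $H - X$, as claimed. The argument is entirely routine; the only point to watch is that although $H$ is the union of all the forests $T_i$ (including those with $i \notin I(X)$, which may well contain edges of $X$), the preserved path is witnessed purely by forests indexed in $I(X)$, each of which survives the deletion of $X$ intact --- so no ``bad'' forest is ever needed.
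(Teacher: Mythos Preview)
Your proof is correct and follows essentially the same argument as the paper: for each edge of the preserved path, use that it lies in some $G_i$ with $i \in I(X)$, deduce that its endpoints are connected in the spanning forest $T_i \subseteq H - X$, and concatenate these connections into an $s$-$t$ walk in $H - X$. The paper's proof is slightly terser but the logic is identical.
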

	\begin{proof}
		Given that $P := P_j(X)$ is preserved, we have that for any edge $e=(u,v) \in P$, there is some graph $G_i$ for $i \in I(X)$ that contains $e$. 
		This means that $u,v$ are connected in $G_i$ which in turn implies that the spanning forest $T_i$ of 
		$G_i$ contains a path between $u$ and $v$. 
		Moreover, since $i \in I(X)$, we know that $G_i$ and hence $T_i$ contain no edges of $X$ and thus 
		$u$ and $v$ are connected in $T_i - X$ as well. 
		Stitching together these $u$-$v$ paths for every edge $(u,v) \in P$ then gives us a walk between $s$ and $t$ in $H-X$,  implying that $s$ and $t$ are connected in $H-X$. 
	\end{proof}

	We will now prove that some path $P_j(X)$ for $j \in [k]$ is preserved with very high probability.
	\begin{claim}\label{clm:path-prob}
		Conditioned on $\card{I(X)} \geq r/8$, we have the following:
  \[
		\prob{P_j 
		\not\subseteq G_X \text{ for at least $k/2$ values of } j \in [k]} \leq n^{-11k}.
\]
 \end{claim}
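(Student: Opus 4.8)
The plan is to condition on the realized value of $I(X)$ and then exploit the fact that $P_1(X),\ldots,P_k(X)$ are edge-disjoint to reduce the statement to a tail bound for a sum of independent indicators. First I would fix an arbitrary set $S\subseteq[\reps]$ with $\card{S}\geq\reps/8$ and condition on the event $I(X)=S$; since the bound I aim for will be uniform over all such $S$, it then also holds conditioned on $\card{I(X)}\geq\reps/8$. The point that makes this conditioning harmless is that every $P_j(X)$ lies in $G-X$, so its edges are disjoint from $X$; conditioning on ``$E_i\cap X=\emptyset$'' therefore leaves the sampling of the path edges untouched, i.e.\ for each $i\in S$ and each edge $e$ on some $P_j$ we still have $e\in E_i$ independently with probability $1/k$, and these choices are mutually independent across the pairs $(e,i)$.

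Next I would bound, for a single path $P_j := P_j(X)$, the probability that it is not preserved in $G_X$. A fixed edge $e\in P_j$ is absent from $G_X$ exactly when $e\notin E_i$ for all $i\in S$, which has probability $(1-1/k)^{\card{S}}\leq(1-1/k)^{\reps/8}\leq\exp(-\reps/(8k))=n^{-25}$ by the choice $\reps=200k\ln n$. Since $P_j$ is a simple path and thus has at most $n-1$ edges, a union bound gives $\prob{P_j\not\subseteq G_X}\leq n\cdot n^{-25}=n^{-24}=:p$.

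Then I would combine the $k$ paths. Let $Z$ be the number of indices $j\in[k]$ for which $P_j\not\subseteq G_X$. The event ``$P_j\not\subseteq G_X$'' is determined solely by the samplings of the edges of $P_j$ across $i\in S$; because the $P_j$ are pairwise edge-disjoint, these $k$ events depend on disjoint collections of the independent sampling variables and are hence mutually independent (given $I(X)=S$). So $Z$ is a sum of $k$ independent Bernoulli random variables each with success probability at most $p$, and a union bound over which $\lceil k/2\rceil$-subset of $[k]$ is ``bad'' yields
\[
\prob{Z\geq k/2}\;\leq\;\binom{k}{\lceil k/2\rceil}\,p^{\lceil k/2\rceil}\;\leq\;2^k\,n^{-12k}\;\leq\;n^{-11k}
\]
for $n\geq 2$; alternatively one may invoke the Chernoff bound of \Cref{prop:chernoff} with $\mu_H = k\,n^{-24}$ and $b=1$. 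This is precisely the claimed inequality.

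The main obstacle is not depth but care: the whole argument rests on cleanly justifying that conditioning on $I(X)$ neither disturbs nor correlates the samplings of the path edges, and that edge-disjointness of the $P_j$ upgrades the per-path estimate to genuine independence across paths. Once that is nailed down, the remaining steps are a one-line geometric estimate for a single edge, a union bound over the $\leq n$ edges of a path, and a standard tail bound for a sum of independent indicators.
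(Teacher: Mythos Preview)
Your proposal is correct and follows essentially the same approach as the paper: condition on $I(X)$, use disjointness of the path edges from $X$ to argue the conditioning is harmless, bound $\prob{P_j\not\subseteq G_X}\leq n\cdot(1-1/k)^{r/8}\leq n^{-24}$ via a union bound over edges, and then use edge-disjointness of the $P_j$'s to get independence and conclude with $\binom{k}{\lceil k/2\rceil}(n^{-24})^{\lceil k/2\rceil}\leq 2^k n^{-12k}\leq n^{-11k}$. If anything, you are slightly more explicit than the paper about why conditioning on $I(X)=S$ leaves the path-edge samplings independent and unbiased.
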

	\begin{proof}	 
		To start the proof, note that even conditioned on a choice of $I(X)$, the edges in each path $P_j(X)$ 
		appear independently in each graph $G_i$ for $i \in I(X)$. This is because these paths do not 
		intersect with $X$ and by the independence in sampling of each graph $G_i$ for $i \in [r]$.  
		Moreover, given that these paths are edge-disjoint, the choices of their edges across each graph 
		$G_i$ for $i \in [r]$, are independent. We crucially use these properties in this proof. 
	
		Each edge in $P_j$ is not present in $G_i$ with probability 
		$(1-1/k)$ and hence is not present 
		in $G_X$ with probability $(1-1/k)^{\card{I(X)}}$. 
		Hence, by the union bound, 
		\[
		\prob{P_j \not\subseteq  G_X} \leq \card{P_j} \cdot \left(1-1/k \right)^{\card{I(X)}} \leq n \cdot 
		\left(1-1/k\right)^{r/8} \leq n \cdot \exp\paren{-200k\ln{n}/8k} = n^{-24}.
		\]
		Finally, note that since the paths $P_j$ for $j \in [k]$ are edge-disjoint, the probability of the above 
		event is independent for each one. Thus, 
		\[
			\prob{P_j \not\subseteq G_X \text{ for at least $k/2$ values of } j \in [k]} \leq 
			\binom{k}{k/2} \paren{n^{-24}}^{k/2} \leq2^{k} \cdot  n^{-12k} \leq n^{-11k}.
		\]
		As such, the entire path $P_j$ will lie inside $G_X$ for at least $k/2$ 
		values of $j \in [k]$ with very high probability.
	\end{proof}
	
	By union bound over the events of~\Cref{clm:I(X),clm:path-prob}, we have that there exists an index $j \in [k]$ such that the path $P_j(X)$ is preserved in $G_X$. 
	Thus, by~\Cref{clm:st-path}, for a fixed choice of $X$, and $s,t$, the probability that $s$ and $t$ are 
	not connected in $H-X$
	is at most $n^{-11k}$. A union bound over the $n^{2k}$ choices of $X$ and $n^2$ choices of $s,t$, 
	then implies that the probability that even one such choice of $X$ and $s,t$ exists is at most 
	$n^{-7k}$.
	This completes the 
	proof of~\Cref{clm:connectivity-in-H}. 
\end{proof}

\begin{proof}[Proof of \texorpdfstring{\Cref{clm:edges-in-H}}{Lemma}]
	We now prove \Cref{clm:edges-in-H}.
	For this proof also, without loss of generality, we can assume that $k > 1$: for 
	$k=1$, each graph $G_i$ is the same as $G$ and thus the spanning forest uses the only $s$-$t$ path, 
	namely, the edge $(s,t)$ (as $s$ and $t$ can only be $1$-connected through the edge $(s,t)$)
	which will be added to $H$, thus trivially implying the proof.  We now consider the main case. 
	
	Fix any pair of vertices $s,t \in G$ which have less than $2k$ edge-disjoint paths between them. 
	We know that deleting some set $X$ of at most $2k-1$ edges should disconnect $s$ and $t$.
	For any $i \in [r]$, we call the graph $G_i$ \textbf{good} if it samples the edge $(s,t)$ and does not 
	sample any edge from $X$. See~\Cref{fig:partition_LowCon} for an illustration.

	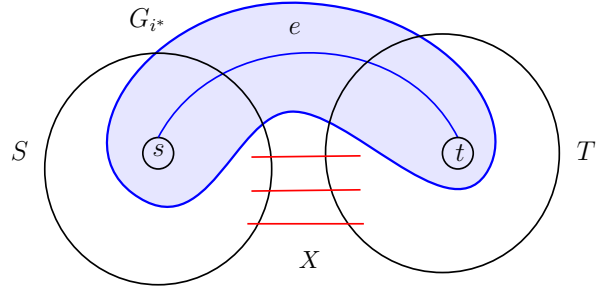
\begin{figure}[ht!]
	\centering
	{   \resizebox{230pt}{110pt}{ \tikzset{every picture/.style={line width=0.75pt}} 

\begin{tikzpicture}[x=0.75pt,y=0.75pt,yscale=-1,xscale=1]
	
	\draw   (180.75,148.75) .. controls (180.75,143.64) and (184.89,139.5) .. (190,139.5) .. controls 
	(195.11,139.5) and (199.25,143.64) .. (199.25,148.75) .. controls (199.25,153.86) and (195.11,158) .. 
	(190,158) .. controls (184.89,158) and (180.75,153.86) .. (180.75,148.75) -- cycle ;
	\draw  [draw=blue]   (190,139.5) .. controls (225,74) and (335,72) .. (370,139.5) ;

	\begin{scope}[on background layer]
	\draw  [draw=blue, fill=blue!10, line width=1pt] (276.55,60.27) .. controls (363.55,64.27) and (410,128.5) .. 
	(386.55,161.27) .. controls 
	(363.1,194.05) and (303.55,122.27) .. (269.55,124.27) .. controls (235.55,126.27) and (219.55,205.27) .. 
	(174.55,172.27) .. controls (129.55,139.27) and (189.55,56.27) .. (276.55,60.27) -- cycle ;
	\end{scope}

	\draw   (360.75,148.75) .. controls (360.75,143.64) and (364.89,139.5) .. (370,139.5) .. controls 
	(375.11,139.5) and (379.25,143.64) .. (379.25,148.75) .. controls (379.25,153.86) and (375.11,158) .. 
	(370,158) .. controls (364.89,158) and (360.75,153.86) .. (360.75,148.75) -- cycle ;
	\draw   (121.83,158) .. controls (121.83,120.35) and (152.35,89.83) .. (190,89.83) .. controls (227.65,89.83) 
	and (258.17,120.35) .. (258.17,158) .. controls (258.17,195.65) and (227.65,226.17) .. (190,226.17) .. 
	controls (152.35,226.17) and (121.83,195.65) .. (121.83,158) -- cycle ;
	\draw   (290.53,148.75) .. controls (290.53,109.97) and (321.97,78.53) .. (360.75,78.53) .. controls 
	(399.53,78.53) and (430.97,109.97) .. (430.97,148.75) .. controls (430.97,187.53) and (399.53,218.97) .. 
	(360.75,218.97) .. controls (321.97,218.97) and (290.53,187.53) .. (290.53,148.75) -- cycle ;
	\draw [draw=red]   (246,151) -- (311.55,150.27) ;
	\draw [draw=red]    (246,171) -- (311.55,170.27) ;
	\draw  [draw=red]   (243.55,190.27) -- (313.55,190.27) ;
	
	\draw (171,62) node [anchor=north west][inner sep=0.75pt]   [align=left] {\large{$G_{i^*}$}};
	\draw (367,142) node [anchor=north west][inner sep=0.75pt]   [align=left] {\large{$t$}};
	\draw (185,142) node [anchor=north west][inner sep=0.75pt]   [align=left] {\large{$s$}};
	\draw (273,205) node [anchor=north west][inner sep=0.75pt]   [align=left] {$X$};
	\draw (267,70) node [anchor=north west][inner sep=0.75pt]   [align=left] {\large{$e$}};
	
	\draw (100,140) node [anchor=north west][inner sep=0.75pt]   [align=left] {\large{$S$}};
	
	\draw (440,140) node [anchor=north west][inner sep=0.75pt]   [align=left] {\large{$T$}};

\end{tikzpicture}}  }
	\caption{An illustration of a good graph $G_{i^*}$ wherein the edge $e=(s,t)$ is sampled 
	and all the edges in set $X$ are not. Thus, none of  the $s$-$t$ paths, except for the edge $e$, 
	exist in $G_{i^*}$. Therefore, the spanning forest $T_{i^*}$ 
	necessarily contains the edge $e=(s,t)$. 
		\label{fig:partition_LowCon}}
\end{figure}

	We have, 
	\[
		\prob{G_i \text{ is good}} \geq 1/k \cdot (1-1/k)^{2k-1} \geq 1/8k \tag{as $k\geq 2$ so 
		$(1-1/k)^{2k-1} 
		\geq (1/2)^{3}$}.
	\]
	Given the independence of choices of $G_i$ for $i \in [r]$, we have, 
	\[
		\prob{\text{No $G_i$ is good}} \leq (1-1/8k)^{r} \leq \exp\paren{-200 k \ln n/8k} = n^{-25}.
	\]
	
	Therefore, there is a graph $G_{i^*}$ for $i^* \in [r]$ where $(s,t)$ is sampled but $X$ is 
	not (see \Cref{fig:partition_LowCon}).
	This means that the spanning forest $T_{i^*}$ has to contain the edge $(s,t)$ as 
	there is no other 
	path between $s$ and $t$ (we have effectively ``deleted'' $X$ by not sampling it).
	Thus, the edge $(s,t)$ belongs to $H$ with probability at least $1-n^{-25}$. 
	A union bound over all possible pairs $s,t \in G$ concludes the proof. 
\end{proof}


\paragraph{The SGT Streaming Algorithm}\label{sec:Algorithm}

We present our single pass SGT streaming algorithm for $k$-edge-connectivity in this section.
The parameter for SGT streaming is $\val$ i.e.\ the frequencies are bounded between $-\val$ and $\val$.
We note that the $\Ot(\cdot)$ notation will hide $\log n$ and $\log \log \val$ factors.
The algorithm outputs a certificate of $k$-edge-connectivity for the input graph at the end of the stream. 
Thus, by the definition of the certificate, to know whether or not the input graph is 
$k$-edge-connected, it suffices to test if the certificate is $k$-edge-connected, which
can be done at the end of the stream using any offline algorithm. 
Consider the following theorem:
\begin{theorem}\label{thm:one-pass-con-alg}
There is a randomized SGT streaming algorithm that given an integer $k\geq 1$ 
before the stream and a graph $G=(V,E)$ in the stream, outputs a certificate $H$ of 
$k$-edge-connectivity of $G$ with high probability using $\Ot(kn)$ bits of space.
\end{theorem}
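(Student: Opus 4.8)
The plan is to implement \Cref{alg:cert} in a single streaming pass. Recall it outputs $H = T_1 \cup \cdots \cup T_{\reps}$ with $\reps = 200 k \ln n$, where each $T_i$ is a spanning forest of the subgraph $G_i = (V, E_i)$ obtained by keeping every present edge of $G$ independently with probability $1/k$. By \Cref{thm:k-con-cert}, as soon as every $T_i$ is indeed a spanning forest of the corresponding $G_i$, the resulting $H$ is a certificate of $k$-edge-connectivity of $G$ with $O(kn\log n)$ edges, with high probability; so it will suffice to recover each $T_i$ from the stream in $\poly(\log n + \log\log\val)\cdot n$ space and take a union bound over the $\reps$ iterations and all internal failures. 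The final yes/no answer for $k$-edge-connectivity is then obtained by running any offline algorithm on the stored certificate $H$, whose size fits in the space budget.

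First I would fix, for each iteration $i$, a hash value $h_i(e)\in\{0,1\}$ for each edge slot $e\in\binom{[n]}{2}$ that equals $1$ with probability $1/k$, and declare $e\in E_i$ iff $h_i(e)=1$ and $\freq(e)\neq 0$. To compute a spanning forest of $G_i$ I would run the standard Boruvka/AGM sketch (\Cref{fact:spanning-forest}) but replace its ordinary $\ell_0$-samplers by the \emph{strong} $\ell_0$-samplers of \Cref{lem:L0}. Concretely, for iteration $i$ and vertex $v$ I maintain $O(\log n)$ independent copies of the strong $\ell_0$-sampler sketch of the \emph{signed frequency} vector $y^{(i)}_v$ given by $y^{(i)}_v[e] = \freq(e)$ if $e=\{v,w\}$ with $v<w$ and $h_i(e)=1$, by $-\freq(e)$ if $e=\{w,v\}$ with $w<v$ and $h_i(e)=1$, and by $0$ otherwise; each such sketch is a linear function of the stream tokens. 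Since the strong sampler of \Cref{lem:L0} is built entirely from linear primitives (counters modulo a random prime, inner products with the fixed vectors $a_i$, and sketches of random partitions) and since the sampling/partition randomness is drawn once and shared across all vertices (indexed by the edge slot), summing the per-vertex sketches over a set $S\subseteq V$ yields exactly the strong $\ell_0$-sampler sketch of $\sum_{v\in S} y^{(i)}_v$. That combined vector vanishes on every edge not crossing the cut $(S, V\setminus S)$ and equals $\pm\freq(e)$, hence has absolute value at most $\val$, on every crossing edge $e$, so the promise of \Cref{lem:L0} is met and its support is precisely the set of present crossing edges of $G_i$; the sampler then returns one such edge (or certifies that there is none). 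Running $\log n$ Boruvka merging rounds and using the $j$-th sampler copy in round $j$ --- so that each copy is queried only on a vector determined by earlier rounds, not by its own coins --- recovers a spanning forest $T_i$ of $G_i$ with high probability.

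For space, each strong $\ell_0$-sampler copy costs $\poly(\log n + \log\log\val)$ bits (\Cref{lem:L0}); with $O(\log n)$ copies per vertex, $n$ vertices, and $\reps = O(k\log n)$ iterations this totals $\Ot(kn)$ bits, and the stored certificate $H$ of $O(kn\log n)$ edges is within the same bound. The algorithm uses more random bits than space (the families $h_i$ together with the samplers' coins), but I would fix this via Nisan's PRG (\Cref{fact:NisanPRG}): grouping all randomness by edge slot $e$ makes the computation read-once under the permutation of the stream that brings all tokens touching a fixed edge together, so the PRG applies at the cost of an extra $O(\log n)$ factor, still $\Ot(kn)$.

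The hard part will be the middle step: arguing rigorously that the strong $\ell_0$-samplers can be dropped into the AGM spanning-forest sketch without breaking correctness in the SGT setting. This needs the signed-incidence encoding to keep every sketched coordinate bounded by $\val$ even after summing over an arbitrary cut (so \Cref{lem:L0}'s promise holds throughout), the linearity of the strong sampler to be compatible with sharing its sampling and partition randomness across all $n$ vertices, and the per-round independence of the $O(\log n)$ sampler copies to handle the adaptivity inherent in Boruvka's contractions. Once this interface is established, the combinatorial core --- that $\bigcup_i T_i$ is a $k$-edge-connectivity certificate of size $\Ot(kn)$ --- is already delivered by \Cref{thm:k-con-cert}.
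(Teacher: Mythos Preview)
Your proposal is correct and follows essentially the same approach as the paper: implement \Cref{alg:cert} in the stream by fixing the edge-sampling hashes $h_i$ up front, running the AGM spanning-forest sketch (\Cref{fact:spanning-forest}) for each $G_i$ with the strong $\ell_0$-samplers of \Cref{lem:L0} in place of the ordinary ones, and then appealing to \Cref{thm:k-con-cert} for correctness and to Nisan's PRG for the random-bit accounting. You spell out more of the interface details (the signed-incidence encoding keeping entries bounded by $\val$, sampler linearity under summation over a cut, and per-round independence for Boruvka adaptivity) than the paper does, but these are exactly the points the paper's proof gestures at when it says the strong samplers ``can handle large values,'' that ``internal edges cancel out,'' and that edges are never subtracted from samplers.
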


This algorithm is just an implementation of \Cref{alg:cert} in SGT streams.
We fix the edge sets $E_i$ in \Cref{alg:cert} before the stream so the only thing we need to 
specify is how we compute the spanning forests during the stream.
We compute a spanning forest $T_i$ of $G_i$ in the stream using the dynamic streaming 
algorithm in \Cref{fact:spanning-forest} with strong $\ell_0$-samplers (\Cref{subsec:Strong-L0}).
After the stream, we output the certificate $H$.
This completes the description of the streaming algorithm.


We start by bounding the space of this algorithm.

\begin{lemma}\label{lem:one-pass-space}
	This algorithm uses $\Ot(kn)$ bits of space.
\end{lemma}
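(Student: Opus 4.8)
The plan is to account for the space used by the algorithm component by component, following the structure of \Cref{alg:cert} as implemented in the SGT streaming setting. First I would recall that the algorithm runs $\reps = \Theta(k \log n)$ independent iterations, and in each iteration $i$ it maintains a dynamic-streaming spanning-forest sketch of the subgraph $G_i$. By \Cref{fact:spanning-forest}, a single spanning-forest sketch for an $n$-vertex graph in a \emph{dynamic} stream uses $O(n \log^3 n)$ bits; since we run $\reps = O(k\log n)$ of them, the naive total is $O(kn \log^4 n) = \tilde{O}(kn)$ bits. The only subtlety is that we are in an SGT stream rather than a dynamic stream, so we cannot use the off-the-shelf $\ell_0$-samplers inside \Cref{fact:spanning-forest}; instead each $\ell_0$-sampler is replaced by a strong $\ell_0$-sampler from \Cref{lem:L0}, which uses $\poly(\log\log\val + \log n)$ bits rather than $\polylog(n)$ bits. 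Since the spanning-forest sketch of \Cref{fact:spanning-forest} is built from $O(\log n)$ levels each using $O(n)$ parallel $\ell_0$-samplers on $\binom{n}{2}$-dimensional vectors, swapping in the strong samplers multiplies the per-sketch cost by at most a $\poly(\log\log\val)$ factor, giving $\tilde{O}(n)$ bits per iteration where $\tilde{O}$ now hides $\log n$ and $\log\log\val$ factors as stated.

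Next I would handle the auxiliary bookkeeping. The edge sets $E_i$ are fixed \emph{before} the stream, so they need not be stored explicitly: we only need enough randomness to decide, for each incoming token on edge $e$, which of the $\reps$ subgraphs $G_i$ contain $e$. A $\Theta(\log n)$-wise independent hash family (or, to be safe, truly independent bits regenerated via Nisan's PRG as in \Cref{fact:NisanPRG}) suffices for this, and storing its seed costs only $\poly(\log n)$ bits. Similarly, each strong $\ell_0$-sampler and each decision counter uses more random bits than its working space, so as noted at the end of \Cref{subsec:Strong-L0} we apply Nisan's PRG (\Cref{fact:NisanPRG}) to bring the true randomness down to within the claimed space bound, at the cost of only an extra $O(\log n)$ factor, which is absorbed into $\tilde{O}$.

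Summing up: $\reps = O(k\log n)$ iterations, each maintaining a spanning-forest sketch of size $\tilde{O}(n)$ bits (with strong $\ell_0$-samplers), plus $\poly(\log n)$ bits for hashing/PRG seeds, yields a total of $O(k\log n)\cdot \tilde{O}(n) + \poly(\log n) = \tilde{O}(kn)$ bits. The only mild obstacle is verifying that replacing the internal $\ell_0$-samplers of \Cref{fact:spanning-forest} by the strong samplers of \Cref{lem:L0} is a legitimate black-box substitution that preserves correctness on SGT streams — i.e., that the AGM spanning-forest construction only ever queries $\ell_0$-samplers on linear sketches of the edge-frequency vector, which remain valid when entries can be as large as $\val$ in absolute value. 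This is exactly the point already established (used in \Cref{clm:connclasstream} and \Cref{thm:one-pass-con-alg}), so here it suffices to invoke it; the remaining arithmetic is the routine multiplication of factors above.
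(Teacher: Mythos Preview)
Your proposal is correct and follows essentially the same approach as the paper: multiply the number of iterations $\reps = O(k\log n)$ by the $\tO(n)$-bit cost of a single spanning-forest sketch (with strong $\ell_0$-samplers substituted in), getting $\tO(kn)$ total. The paper's own proof is a terse two-line version of exactly this computation; your additional remarks about edge-sampling randomness and Nisan's PRG are not needed for this particular lemma (the paper defers the PRG discussion to after the theorem statement), but they are accurate and do no harm.
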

\begin{proof}
	During the stream, we run a spanning forest algorithm for each graph $G_i$ for $i \in [r]$. 
	The algorithm of~\Cref{fact:spanning-forest} with strong $\ell_0$-samplers takes at most $\Ot(n)$ bits of space.
	So the total space taken is $\Ot(\reps \cdot n)=\Ot(kn)$ bits.
\end{proof}
	
We are now ready to prove \Cref{thm:one-pass-con-alg}.
\begin{proof}[Proof of \Cref{thm:one-pass-con-alg}]
    By~\Cref{lem:one-pass-space}, this algorithm uses $\Ot(kn)$ bits of space. 
    The algorithm is also correct with high probability by \Cref{thm:k-con-cert}. 
    Moreover, none of the spanning forest algorithms of \Cref{fact:spanning-forest} fail with 
    high probability (by a union bound over the failure probabilities of 
    the $r$ spanning forest algorithms).
    
    The strong $\ell_0$-samplers work in the SGT model because they can handle large values and they do not distinguish between different non-zero values (even negative values). Also, adding the samplers for the neighborhood of vertices in a component gives a sampler for the edges going out of the component because the internal edges cancel out (similar to traditional $\ell_0$-samplers). Finally, we do not subtract edges from samplers to delete edges because to delete an edge, we need to know the exact frequency of the edge (this was the problem with the original dynamic streaming algorithm for edge connectivity \cite{AhnGM12}).
    Thus, the streaming algorithm uses $\Ot(kn)$ bits of space and, by union bound, with 
    high probability outputs a certificate 
    of $k$-edge-connectivity. 
\end{proof}

Note that the number of random bits used is more than the space bound. But we can fix this by using Nisan's pseudorandom generator (\Cref{fact:NisanPRG}). Using this increases the space by a factor of $O(\log n)$ and makes the true randomness needed fit in the space bound.
 
We also note that using the strong $\ell_0$-samplers we get a $\Ot(n)$ space algorithm for connectivity and a $\Ot(kn)$ space algorithm for $k$-vertex-connectivity in the SGT streaming model.

\section{Annotated Streaming Schemes for Dynamic Graph Streams}\label{sec:Annotated-dynamic}

\subsection{The Layering Lemma}
In this subsection, we prove some simple properties of $k$-connected graphs and prove important lemmas called layering lemmas for vertex and edge connectivity.

\subsubsection{Properties of k-connected graphs}

\begin{claim}\label{clm:adding-vertex}
	Let $G$ be $k$-vertex-connected. Consider a new vertex $v$ and attach it to $\geq k$ arbitrary 
	vertices of $G$ and call this new graph $G'$. Then $G'$ is $k$-vertex-connected.
\end{claim}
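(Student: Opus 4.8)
The plan is to verify Menger's theorem (\Cref{fact:menger}) directly: I will show that every pair of vertices in $G'$ has at least $k$ vertex-disjoint paths between them. There are three cases depending on whether the endpoints are the new vertex $v$ or old vertices of $G$.

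First, consider two vertices $s,t$ both in $G$. Since $G$ is $k$-vertex-connected, $G$ already has $k$ vertex-disjoint $s$-$t$ paths, and these remain in $G'$ since $G$ is a subgraph of $G'$. So this case is immediate. Second, consider the pair $v$ and some $t \in G$. Let $S = \{w_1,\ldots,w_\ell\}$ with $\ell \ge k$ be the neighbors of $v$ in $G'$. If $t \in S$, then one path is the edge $vt$; for the remaining $k-1$ paths, I would pick $k-1$ of the other neighbors $w_{i_1},\ldots,w_{i_{k-1}}$ and use the fact that $G$ is $k$-vertex-connected (hence also $(k-1)$-connected away from $t$, via Menger) to route vertex-disjoint paths from these to $t$; a cleaner route is to contract $v$'s neighbors or to invoke the fan version of Menger's theorem: in a $k$-connected graph, for any vertex $t$ and any set $S$ of $k$ vertices, there is a $k$-fan from $t$ to $S$ (i.e.\ $k$ paths from $t$ ending in distinct vertices of $S$, pairwise disjoint except at $t$). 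Prepending the edge from $v$ to the appropriate endpoint in $S$ then gives $k$ vertex-disjoint $v$-$t$ paths. If $t \notin S$, apply the $k$-fan directly between $t$ and the $k$-subset $\{w_1,\ldots,w_k\}$ of $S$, then extend each path by the edge to $v$. The last case, where both endpoints would be $v$, is vacuous since there is only one new vertex.

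The main obstacle — and the step I would be most careful about — is the $v$-to-$t$ case, specifically ensuring the $k$ paths are genuinely vertex-disjoint once we prepend the edges $v w_i$. The subtlety is that a fan path from $t$ to $w_i$ might internally pass through another neighbor $w_j$ of $v$; this is fine for vertex-disjointness of the fan (the $w_j$ would be an internal vertex of only one path), but when we prepend $v w_i$ we must make sure $v$ itself is not already an internal vertex — which it is not, since $v \notin G$ and the fan lives in $G$. The genuinely delicate point is when $t \in S$: then $t$ is itself a neighbor of $v$, so I should apply the fan between $t$ and a $k$-subset of $S \setminus \{t\}$ (which exists since $|S| \ge k$ would need $|S| \ge k+1$, or else handle $|S| = k$ with $t \in S$ separately by using the direct edge $vt$ plus a $(k-1)$-fan from $t$ to $S \setminus \{t\}$ in the $(k-1)$-connected graph $G$). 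I would state and use the standard fan lemma as a consequence of Menger, then do this careful casework; everything else is routine.

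Alternatively — and perhaps the slickest route — I would argue by the cut definition directly: suppose for contradiction $G'$ has a vertex cut $C$ with $|C| \le k-1$ separating $a$ from $b$ in $G' - C$. If $v \notin C$, then $v$ lies in some component of $G' - C$; since $v$ has $\ge k > |C|$ neighbors, at least one neighbor $w$ avoids $C$ and lies in $v$'s component, and in fact all of $v$'s surviving neighbors lie in one component, so removing $v$ doesn't help: $C$ must already disconnect $G - (C \cap V(G))$, but $|C \cap V(G)| \le k-1$ contradicts $k$-connectivity of $G$ (this needs the mild check that $a,b$ can be taken in $V(G)$, handled by replacing $v$ with any of its neighbors). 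If $v \in C$, then $C' := C \setminus \{v\}$ has size $\le k-2 < k-1$ and $C'$ disconnects $G - C'$ — again contradicting $k$-connectivity of $G$. This contradiction-based argument sidesteps the fan lemma entirely, so I would likely present this one; the main obstacle there is the bookkeeping that reducing to cuts in $G$ is legitimate (i.e.\ the separated pair can be chosen inside $V(G)$), which follows because $v$'s neighborhood of size $\ge k$ cannot be swallowed by a cut of size $\le k-1$.
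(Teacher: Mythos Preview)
Your proposal is correct, and the cut-based argument you settle on at the end is exactly the paper's approach: assume a cut $X$ of size at most $k-1$; it cannot separate any pair inside $G$ by $k$-connectivity, and $v$ cannot be isolated since it has $\ge k > |X|$ neighbors. The paper's proof is in fact terser than yours (it does not explicitly split into the $v \in X$ versus $v \notin X$ cases), so your version is if anything more careful; the Menger/fan-lemma route you sketch first is also valid but, as you note, involves more casework than necessary.
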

\begin{proof}
    Consider a vertex cut $X$ of size at most $k-1$. $X$ cannot disconnect any pair of vertices in $G$ since $G$ is $k$-vertex-connected. If $v$ is isolated on one side of the cut then it must have a neighbor on the other side since $X$ has at most $k-1$ vertices and $v$ has $\geq k$ neighbors.
\end{proof}

\begin{claim}\label{clm:k-paths-to-set}
	Let $G=(V,E)$ be a graph and let $T \subseteq V$ be a set of vertices. We know that there 
	are at 
	least $k$ vertex-disjoint (edge-disjoint) paths from every vertex of $T$ to a special vertex $t \not\in 
	T$. 
	If a vertex $v \not\in T \cup \set{t}$ has $k$ vertex-disjoint (edge-disjoint) paths to $T$ 
	then there are $k$ 
	vertex-disjoint (edge-disjoint) paths between $v$ and $t$.
\end{claim}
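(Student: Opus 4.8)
The plan is to prove the contrapositive using Menger's theorem (\Cref{fact:menger}): assuming that $v$ and $t$ can be separated by deleting at most $k-1$ vertices (resp.\ edges), I will derive a contradiction, so Menger then gives the desired $k$ vertex-disjoint (resp.\ edge-disjoint) $v$-$t$ paths. I will run the vertex and edge versions in parallel, since the argument is structurally identical. First, set up notation: let $Q_1,\dots,Q_k$ be the $k$ vertex-disjoint (resp.\ edge-disjoint) paths from $v$ to $T$, with $Q_i$ ending at $w_i \in T$ (in the vertex case the $w_i$ are automatically distinct, since the $Q_i$ share only $v$; in the edge case we allow repeats), and for each $w \in T$ fix the promised family of $k$ vertex-disjoint (resp.\ edge-disjoint) $w$-$t$ paths.

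The core step is the following counting argument. Suppose toward a contradiction that there is a separator $S$ with $|S| \le k-1$ such that $v$ and $t$ lie in different components of $G - S$: in the vertex case $S \subseteq V \setminus \{v,t\}$, and in the edge case $S \subseteq E$. Since the paths $Q_1,\dots,Q_k$ meet only at $v$ (resp.\ are pairwise edge-disjoint), each element of $S$ lies on at most one $Q_i$, so at most $k-1$ of them meet $S$; hence some $Q_{i_0}$ is disjoint from $S$, which shows that $v$ and $w_{i_0}$ lie in the same component of $G - S$, and in particular (vertex case) $w_{i_0} \notin S$. Note $w_{i_0} \in T$ and $w_{i_0} \neq v, t$, so the hypothesis applies to it: it has $k$ vertex-disjoint (resp.\ edge-disjoint) paths to $t$, and by the same pigeonhole (using $|S| \le k-1$ and $t \notin S$) at least one of them avoids $S$, placing $w_{i_0}$ and $t$ in the same component of $G - S$. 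By transitivity $v$ and $t$ lie in the same component of $G - S$, contradicting the choice of $S$. Hence no such $S$ exists, and Menger's theorem yields $k$ vertex-disjoint (resp.\ edge-disjoint) $v$-$t$ paths.

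The one technical point that needs care — and the only real obstacle — is that \Cref{fact:menger} states the \emph{vertex} version of Menger's theorem only for non-adjacent $v$ and $t$. If $v$ and $t$ are adjacent, I will first subdivide the edge $\{v,t\}$ by inserting a fresh vertex $z$; this does not change the collections of vertex-disjoint paths among the original vertices, it preserves all hypotheses (with $z \notin T \cup \{t\}$, and each original path using the edge $\{v,t\}$ simply rerouted through $z$), and any $k$ internally vertex-disjoint $v$-$t$ paths in the subdivided graph pull back to $G$, because the only way a simple $v$-$t$ path can use $z$ is as the single internal vertex $v\text{-}z\text{-}t$, i.e.\ the original edge $\{v,t\}$. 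Thus the adjacent case reduces to the non-adjacent one. The edge version of Menger has no adjacency restriction, so no such adjustment is needed there, and the remaining verifications (that the separators in Menger's statement are exactly what the counting argument produces, and that $w_{i_0}$ is a legitimate vertex distinct from $v, t$) are routine.
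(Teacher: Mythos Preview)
Your argument is correct and follows essentially the same approach as the paper: both show that any putative $(k-1)$-separator $S$ between $v$ and $t$ must miss one of the $k$ $v$-to-$T$ paths (yielding a surviving path to some $t'\in T$) and then miss one of the $k$ $t'$-to-$t$ paths, so $v$ and $t$ stay connected in $G-S$, contradicting Menger. You are actually more careful than the paper, which glosses over the non-adjacency hypothesis in the vertex form of \Cref{fact:menger}; your subdivision fix handles that cleanly.
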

\begin{proof}
	Consider any vertex cut $X$ of size at most $k-1$.
	We will show that $v$ and $t$ are connected after deleting $X$.
	$X$ has at most $k-1$ vertices so there is a path $P$ from $v$ to $T$ that has no vertices 
	of $X$.
	Let $t' \in T$ be the vertex path $P$ ends on.
	We know that there are $k$ vertex-disjoint paths between $t'$ and $t$ (by assumption) thus after deleting $X$ there is at least one path that has no vertices of $X$.
	This means that after deleting $X$ there is a path from $v$ to $t$ via $t'$ implying that there 
	are $k$ vertex-disjoint paths between $v$ and $t$.
\end{proof}

\begin{claim}\label{clm:k-vconn-proof-simple}
    Let $G=(V,E)$ be a graph and let $X$ be a fixed minimum vertex cut of $G$.
    If there is a vertex $t\in V-X$, such that there are $k$ vertex-disjoint paths from $t$ to $v$ for all $v \in V-\set{t}$ then $G$ is 
	$k$-vertex-connected (in other words $\card{X}\geq k$).
\end{claim}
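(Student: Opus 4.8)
The plan is to argue by contradiction using the definition of vertex connectivity together with Claim~\ref{clm:k-paths-to-set}. Suppose $G$ is not $k$-vertex-connected, so the minimum vertex cut $X$ has size $\card{X}\leq k-1$. The hypothesis gives a vertex $t$ with $k$ vertex-disjoint paths from $t$ to every other vertex $v$. First I would observe that $t\notin X$ (this is given), so after deleting $X$, the vertex $t$ survives. Since removing $X$ disconnects $G$, there is some vertex $w\neq t$ that becomes unreachable from $t$ in $G-X$. But by hypothesis there are $k$ vertex-disjoint $t$-$w$ paths in $G$, and a cut of size $\card{X}\leq k-1$ can intersect at most $k-1$ of these pairwise vertex-disjoint paths; hence at least one $t$-$w$ path survives in $G-X$, contradicting that $w$ is unreachable from $t$. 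Therefore $\card{X}\geq k$, i.e.\ $G$ is $k$-vertex-connected.

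The only subtlety to handle carefully is the standard fact that $k$ internally vertex-disjoint $s$-$t$ paths cannot all be destroyed by deleting fewer than $k$ vertices (other than $s,t$): each deleted internal vertex lies on at most one of the paths because the paths share no internal vertices. Since $t\notin X$ by assumption and $w\notin X$ (as $w$ is a vertex that exists in $G-X$), none of the endpoints of these paths is removed, so only internal vertices of the paths can be hit, and at most $k-1<k$ of the $k$ paths are affected. This elementary counting is really all that is needed; it is essentially a restatement of the easy direction of Menger's theorem (Fact~\ref{fact:menger}), and indeed one could alternatively phrase the whole argument as: $t$ has $k$ vertex-disjoint paths to every vertex, so by Menger the minimum vertex cut separating $t$ from any $w$ is at least $k$, and since every minimum vertex cut of $G$ separates $t$ from \emph{something}, $\card{X}\geq k$.

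I do not anticipate a genuine obstacle here — the statement is a clean corollary of Claim~\ref{clm:k-paths-to-set} (or directly of Menger). The one point worth stating explicitly in the write-up is why a minimum vertex cut $X$ of $G$ must actually separate $t$ from some other vertex: by definition a vertex cut disconnects $G-X$ into at least two components, and since $t$ lies in exactly one of them, any vertex in another component is separated from $t$ by $X$. With that observation in place, the contradiction is immediate from the disjoint-paths bound. I would keep the proof to a few sentences, citing Claim~\ref{clm:k-paths-to-set} (with the singleton set $T=\{w\}$, trivially giving the needed $k$ paths from $w$ to $t$) or Fact~\ref{fact:menger} for the counting step.
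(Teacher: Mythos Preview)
Your proposal is correct and follows essentially the same contradiction argument as the paper: assume $\card{X}\le k-1$, pick a vertex separated from $t$ by $X$, and observe that the $k$ vertex-disjoint $t$-$w$ paths cannot all be destroyed by removing fewer than $k$ internal vertices. The paper's write-up is terser (it simply names the component decomposition $S,T$ and states the contradiction), but the logic is identical; your side-remark citing Claim~\ref{clm:k-paths-to-set} with a singleton $T$ is unnecessary here, as the elementary counting you give already suffices.
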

\begin{proof}
    Assume towards a contradiction that the vertex cut $X$ has size at most $k-1$. 	
    Let $T$ be a connected component containing $t$ after $X$ is deleted and let $S:=V-X-T$. 
    Consider an arbitrary vertex $s$ in $S$.
    We know that there are $k$ vertex disjoint paths between $s$ and $t$.
    So deleting $k-1$ vertices cannot disconnect $v$ and $t_i$, giving us a contradiction.
\end{proof}

\begin{corollary}\label{clm:paths-to-vertex-proof}
	Let $G=(V,E)$ be a graph and let $t_1,t_2,\ldots,t_k \in V$ be fixed vertices. If for all $i$, 
	$G$ has $k$ vertex-disjoint paths from $t_i$ to $v$ for all $v \in V-\set{t_i}$ then $G$ is 
	$k$-vertex-connected.   
\end{corollary}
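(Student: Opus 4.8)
The plan is to derive Corollary~\ref{clm:paths-to-vertex-proof} as an almost immediate consequence of Claim~\ref{clm:k-vconn-proof-simple}, which already handles the case of a \emph{single} special vertex $t$ from which $k$ vertex-disjoint paths reach every other vertex. The only new ingredient needed is to leverage the existence of $k$ such special vertices $t_1,\dots,t_k$ to cover the degenerate situation where the minimum vertex cut $X$ happens to contain one of the $t_i$'s; as long as some $t_i$ avoids $X$, Claim~\ref{clm:k-vconn-proof-simple} applies directly to that $t_i$ and we are done.

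Concretely, I would argue by contradiction: suppose $G$ is not $k$-vertex-connected, so it has a minimum vertex cut $X$ with $|X| \le k-1$. Since $t_1,\dots,t_k$ are $k$ distinct vertices and $|X|\le k-1$, by pigeonhole at least one $t_i$ lies in $V - X$. By hypothesis, this $t_i$ has $k$ vertex-disjoint paths to every $v \in V - \{t_i\}$. Now apply Claim~\ref{clm:k-vconn-proof-simple} with $t := t_i$ and this same cut $X$: the claim's hypothesis is exactly satisfied, so it yields $|X| \ge k$, contradicting $|X| \le k-1$. Hence no such cut exists and $G$ is $k$-vertex-connected.

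One small subtlety I would double-check is whether the $t_i$ are required to be distinct — the statement says ``fixed vertices $t_1,\dots,t_k \in V$'' without explicitly asserting distinctness. If they are not distinct, the pigeonhole step could fail (e.g., all $t_i$ equal and lying in $X$). However, if $t_i = t_{i'}$ for some $i \ne i'$, then we effectively have fewer than $k$ special vertices, and the natural fix is to observe that in the intended application (and in Claim~\ref{clm:k-paths-to-set}/Claim~\ref{clm:k-vconn-proof-simple}), the relevant vertices are indeed distinct; alternatively one can simply add distinctness as an implicit assumption. I expect the main (minor) obstacle to be nothing more than making this pigeonhole argument airtight and citing Claim~\ref{clm:k-vconn-proof-simple} correctly — there is no real combinatorial difficulty here, since the heavy lifting (relating vertex cuts to the existence of $k$ disjoint paths from a single source) has already been done in the preceding claim. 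I would keep the proof to three or four sentences.

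\begin{proof}
Suppose, for contradiction, that $G$ is not $k$-vertex-connected. Then there is a (minimum) vertex cut $X$ with $\card{X} \le k-1$. Since $t_1,\ldots,t_k$ are $k$ distinct vertices and $\card{X}\le k-1$, at least one of them, say $t_i$, satisfies $t_i \in V - X$. By hypothesis, $G$ has $k$ vertex-disjoint paths from $t_i$ to $v$ for every $v \in V - \set{t_i}$. Applying \Cref{clm:k-vconn-proof-simple} with the special vertex $t := t_i$ and the minimum vertex cut $X$ gives $\card{X} \ge k$, contradicting $\card{X} \le k-1$. Hence $G$ is $k$-vertex-connected.
\end{proof}
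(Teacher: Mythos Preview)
Your proof is correct and follows essentially the same approach as the paper: assume a vertex cut $X$ of size at most $k-1$, use pigeonhole to find some $t_i \notin X$, and apply \Cref{clm:k-vconn-proof-simple} with that $t_i$ and $X$ to derive a contradiction. Your discussion of the distinctness subtlety is apt; the paper's proof likewise implicitly assumes the $t_i$ are distinct when it says ``there are $k$ special vertices.''
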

\begin{proof}
    Assume towards a contradiction that there is a vertex cut $X$ of size at most $k-1$ that when deleted disconnects the graph.  
    There is a $t_i$ that is not in $X$ since there are $k$ special vertices and $\card{X}\leq k-1$.
    Applying \Cref{clm:k-vconn-proof-simple} with $X$ and $t_i$ implies that $\card{X}\geq k$, giving us a contradiction.
    Thus, $G$ is $k$-vertex-connected.
\end{proof}


\subsubsection{The Layering Lemma}

In this subsection, we prove our layering lemmas for vertex and edge connectivity that will be useful in our algorithms.
We will show complete proofs for vertex-connectivity and then only mention the differences for edge-connectivity because the proofs are very similar.
Consider the following lemmas that show short proofs for k disjoint paths from a fixed vertex to all other vertices.
\begin{lemma}\label{lem:layering}
	Let $G=(V,E)$ be a $k$-vertex-connected graph, and let $t \in V$ be an arbitrary 
	vertex. There is a $\Ot(kn)$ size proof which shows that there are $k$ 
	vertex-disjoint paths from $t$ to $v$ for all $v \in V-\set{t}$.
\end{lemma}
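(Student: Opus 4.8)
The plan is to prove Lemma~\ref{lem:layering} via the probabilistic method, by constructing a random partition of $V$ into $O(\log n)$ layers and showing that a short certificate exists between consecutive layers. First I would fix the special vertex $t$ and build the layering as follows: place $t$ in layer $L_0$, and then for $\ell = 1, 2, \ldots$, form $L_\ell$ by having each remaining vertex join $L_\ell$ independently with probability $1/2$ (so that in expectation, after $O(\log n)$ rounds all vertices have been assigned, and with high probability every vertex lands in some layer $L_\ell$ with $\ell = O(\log n)$). The target structure we want to certify is: every vertex in $L_\ell$ has $k$ vertex-disjoint paths to the union $L_0 \cup L_1 \cup \cdots \cup L_{\ell-1}$ of earlier layers, using only edges of $G$. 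By Menger's theorem (\Cref{fact:menger}) and the $k$-vertex-connectivity of $G$, such paths exist in $G$; the point is to show that a \emph{small} collection of them can be exhibited simultaneously for all vertices of a given layer.

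The key steps, in order, would be: (1) Formalize the layering and argue that with high probability it terminates in $O(\log n)$ layers. (2) For a fixed layer $L_\ell$, show that there is a set of edges $F_\ell \subseteq E$ of size $\tO(kn)$ such that within the subgraph $(V, F_\ell)$, each $v \in L_\ell$ still has $k$ vertex-disjoint paths to $L_{<\ell} := L_0 \cup \cdots \cup L_{\ell-1}$. The natural way to get this is a sampling argument in the spirit of \Cref{alg:cert} and \Cref{thm:k-con-cert}: sample edges (or, following \cite{assadi2023tight}, sample vertices) with probability $\approx 1/k$, take spanning forests over $\tO(k)$ independent iterations, and argue that this preserves $k$-connectivity between $L_\ell$ and $L_{<\ell}$; since each spanning forest has $\le n-1$ edges, the union has $\tO(kn)$ edges. (3) Take $F := \bigcup_\ell F_\ell$; since there are $O(\log n)$ layers, $|F| = \tO(kn)$. (4) Chain the guarantees across layers using \Cref{clm:k-paths-to-set}: inductively, if every vertex in $L_{<\ell}$ has $k$ vertex-disjoint paths to $t$ in $(V,F)$, and each $v \in L_\ell$ has $k$ vertex-disjoint paths to $L_{<\ell}$ in $(V,F)$, then each $v \in L_\ell$ has $k$ vertex-disjoint paths to $t$ in $(V,F)$ as well; the base case $L_0 = \{t\}$ is trivial. (5) Conclude that the edge set $F$, which can be written down using $\tO(kn)$ bits (each edge takes $O(\log n)$ bits), is the desired proof.

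The main obstacle I expect is step (2): showing that a single $\tO(kn)$-edge subgraph simultaneously preserves $k$ vertex-disjoint paths from \emph{every} vertex of $L_\ell$ to the \emph{set} $L_{<\ell}$. One subtlety is that vertex-disjointness (as opposed to edge-disjointness) interacts awkwardly with vertex sampling, since deleting a vertex can destroy many paths at once; this is exactly the difficulty that \cite{assadi2023tight} had to overcome for $k$-vertex-connectivity, and I would need to adapt their union-bound-over-cuts argument to the "paths to a set" setting. Concretely, one can reduce "paths from $v$ to the set $L_{<\ell}$" to a single-terminal problem by contracting $L_{<\ell}$ into one super-terminal, and then invoke the connectivity-preservation analysis on the contracted graph; care is needed to ensure the contraction does not inflate the number of edges we charge and that a small vertex cut in the original graph still corresponds to a small cut after contraction. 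A further point to handle carefully is that the layering is random and the sampling is random, so the high-probability events must be combined by a union bound over the (at most $O(\log n)$) layers and over the relevant cuts; but since we only need \emph{existence} of the proof (probabilistic method), we just need the total failure probability to be below $1$, which the $n^{-\Theta(k)}$-type bounds from \Cref{clm:I(X),clm:path-prob} comfortably give. Finally, I would note that the verification side (how the verifier checks this proof in $\tO(n/k)$ space) is deferred to a separate lemma and is not needed here.
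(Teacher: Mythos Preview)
Your route is genuinely different from the paper's: you propose a \emph{subgraph} certificate $F$ (a union of spanning-forest samples in the style of \Cref{alg:cert} and \cite{assadi2023tight}), whereas the paper's proof consists of \emph{explicit paths}. With a subgraph approach your layering is in fact redundant---a single $k$-vertex-connectivity certificate $H$ of $G$ already has $\tO(kn)$ edges and, being $k$-vertex-connected, contains $k$ vertex-disjoint $v$--$t$ paths for every $v$; steps (1)--(4) collapse to ``output $H$,'' and the per-layer sampling you worry about in step~(2) never gets invoked. The reason the paper does not take this shortcut is that the proof must be verifiable in small space (\Cref{clm:layering-verification-vconn}): the verifier checks that the sent edges lie in $G$ (subset-check), that they form paths, and that the paths for each vertex are vertex-disjoint (duplicate-detection). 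All three checks presuppose explicit paths. If the prover only sends a subgraph $F$, the verifier would have to test that $F$ is itself $k$-vertex-connected---the original problem again---with no obvious way to do so in the target $\tO(v)$ space.

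The paper's actual mechanism is a path-\emph{truncation} trick, and the layers grow in the opposite direction from yours. Set $\ell=\log(n/k)$ and let $T_i=T_i^L\cup T_i^R$ where every vertex is sampled independently with probability $p_i=(k/n)\cdot 2^i$, so $\Exp|T_i|=\Theta(2^i k)$ and $T_\ell=V$. For each $v\in T_{i+1}$ the proof writes out $k$ vertex-disjoint paths from $v$ to $T_i^L$ (these exist by $k$-connectivity), truncating each path at the first vertex of $T_i^R$ it meets; since each vertex lies in $T_i^R$ independently with probability $p_i$, the expected truncated length per path is at most $1/p_i$, so the $k$ paths together cost $n/2^i$ in expectation, and over the $\Theta(2^{i+1}k)$ vertices of $T_{i+1}$ this is $O(kn)$ per layer. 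The base layer $T_0$ has only $\Theta(k)$ vertices, and for each of them the $k$ vertex-disjoint paths to $t$ have total length $O(n)$ (any vertex lies on at most one such path), again $O(kn)$. Summing over $\ell$ layers and applying the probabilistic method gives a fixed proof of size $O(kn\log(n/k))$. Thus the randomness of the layers is used to \emph{shorten explicit paths}, not to sparsify the graph---this is the idea your plan is missing.
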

\begin{lemma}\label{lem:layering:econn}
	Let $G=(V,E)$ be a $k$-edge-connected graph, and let $t \in V$ be an arbitrary vertex. There is a $\Ot(k^2n)$ size proof which shows that there are $k$ edge-disjoint paths from $t$ to $v$ for all $v \in V-\set{t}$.
\end{lemma}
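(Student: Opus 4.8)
The plan is to mimic the proof of the vertex-connectivity layering lemma (Lemma~\ref{lem:layering}), which will presumably be proven just before this by a probabilistic-method construction of $\log n$ ``layers'' $L_0 = \{t\}, L_1, \dots, L_{\log n}$ partitioning $V$, together with, for each layer $L_i$ with $i \geq 1$ and each vertex $v \in L_i$, an explicit bundle of $k$ edge-disjoint $v$-to-$L_{i-1}$ paths, and finally an appeal to \Cref{clm:k-paths-to-set} (the edge-disjoint version) to inductively conclude that every $v$ has $k$ edge-disjoint paths to $t$. So first I would restate the layering construction from the vertex case, replacing ``vertex-disjoint'' by ``edge-disjoint'' throughout: randomly sample each non-special vertex into a layer, show via a Chernoff/union-bound argument that with positive probability the bundle of $k$ edge-disjoint paths from each $v \in L_i$ to $L_{i-1}$ can be chosen so that the total number of \emph{edges} used between consecutive layers $L_{i-1}$ and $L_i$ is $\Ot(kn)$, and then observe that there are only $\log n = \Ot(1)$ pairs of consecutive layers. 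The correctness (that the proof actually certifies $k$ edge-disjoint $t$-$v$ paths for all $v$) is immediate from the edge-disjoint case of \Cref{clm:k-paths-to-set} applied layer by layer, since the paths from $v$ to $L_{i-1}$ together with the inductively guaranteed paths from $L_{i-1}$ to $t$ stitch into $k$ edge-disjoint $v$-$t$ paths.

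The one genuinely new point — and the reason the bound degrades from $\Ot(kn)$ to $\Ot(k^2 n)$ — is the \emph{size} of the proof. In the vertex-disjoint case, the $k$ vertex-disjoint $v$-to-$L_{i-1}$ paths, taken over all $v$ in a layer, live on disjoint vertex sets (they can share only the layer endpoints), so the total is $\Ot(n)$ vertices and hence $\Ot(n)$ edges per layer-gap — giving $\Ot(kn)$ overall only through the choice of which $k$ paths to name. For edge-disjoint paths this vertex-disjointness is lost: a single vertex may be traversed by many different path-bundles, so I cannot bound the edge count of one layer-gap by $\Ot(n)$. The fix is the standard one: use the Menger/flow structure to route the $k$ edge-disjoint $v$-$L_{i-1}$ paths as an integral flow of value $k$, so the paths from a single $v$ use at most $kn$ edges, but crucially bound the \emph{union} over all $v$ in the layer by a fractional/counting argument — each edge of $G$ is used by at most $O(k)$ of the bundles when the layers are chosen randomly with the right probabilities, giving $\Ot(k m') $ where $m'$ is controlled, and ultimately $\Ot(k^2 n)$ per layer-gap and hence $\Ot(k^2 n)$ total. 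I expect the careful accounting here — showing that the random layering keeps each edge's ``load'' across all bundles in a layer-gap at $O(k \cdot \mathrm{polylog})$ — to be the main obstacle; it is the edge-connectivity analogue of the vertex-disjointness ``for free'' that the vertex case enjoys.

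I would then assemble the proof object explicitly: it consists of the layer assignment (a map $V \to \{0,\dots,\log n\}$, $\Ot(n)$ bits) plus, for each layer-gap, the list of edges forming the path bundles together with, for each $v$, a pointer decomposition of its $k$ paths; by the counting bound this is $\Ot(k^2 n)$ bits in total. Finally I would note that existence follows by the probabilistic method exactly as in Lemma~\ref{lem:layering}: the bad events (some layer-gap has too-large load, or some $v$ fails to have $k$ edge-disjoint paths to the previous layer) each occur with probability $o(1)$ by Chernoff and a union bound over the $\le n$ vertices and $\log n$ layers, using $k$-edge-connectivity of $G$ to guarantee the $k$ edge-disjoint paths exist in the first place. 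The verification side (how a space-$\Ot(n/k^2)$ or $\Ot(n)$ verifier checks this proof) is deferred to the scheme that uses this lemma, so here I only need the existence and size claim.
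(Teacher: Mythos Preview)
Your high-level skeleton (random layers, $k$ disjoint paths from each vertex to the previous layer, induction via \Cref{clm:k-paths-to-set}, existence by the probabilistic method) matches the paper. But your size analysis misidentifies both where the vertex-case bound comes from and where the extra factor of $k$ enters in the edge case, and the ``edge-load'' argument you sketch is not what the paper does and would be hard to carry out.

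First, your account of the vertex case is wrong: bundles from \emph{different} source vertices $v,v'$ in the same layer are \emph{not} vertex-disjoint from each other, so you cannot get $\Ot(n)$ per layer-gap that way. The paper's actual mechanism is a \emph{geometric truncation} trick. The layer $T_i$ is sampled at rate $p_i = (k/n)\cdot 2^i$ (so $|T_i|\approx 2^{i+1}k$), split as $T_i = T_i^L \cup T_i^R$. For each $v$ one takes $k$ disjoint paths to $T_i^L$ and truncates each at the first vertex of $T_i^R$; by independence the expected truncated length of one path is $1/p_i$, so the whole bundle from $v$ has expected size $k/p_i = n/2^i$. Multiplying by $|T_{i+1}|\approx 2^{i+2}k$ gives $O(kn)$ per layer-gap, and this holds verbatim in both the vertex- and edge-disjoint settings. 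No cross-bundle disjointness or per-edge load bound is used.

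Second, the extra $k$ in the edge case comes from exactly one place: the base layer $T_0$, whose $O(k)$ vertices need $k$ disjoint paths all the way to $t$ (no truncation available). In the vertex case those $k$ paths from a single vertex touch $O(n)$ vertices total; in the edge case they can use $O(kn)$ edges. Hence $T_0$ contributes $O(k)\cdot O(kn) = O(k^2 n)$, while all other layers still contribute $O(kn\log(n/k))$ as above. Your proposed fix---bounding the number of bundles that use any fixed edge by $O(k)$---is neither needed nor easy to prove, and your earlier claim of $\Ot(kn)$ per layer-gap in the first paragraph would in fact give $\Ot(kn)$ overall, better than the lemma states. Replace the load argument with the truncation argument and isolate the $k^2$ blow-up to the base layer.
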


We will show that a short proof exists using the probabilistic method.
We first set up the structure of the proof.
Note that we will write disjoint paths without specifying edge-disjoint or vertex-disjoint to mean either of those because the layering structure for both is the same.
We want a proof that shows that the special vertex $t$ has $k$ disjoint 
paths to all vertices in $V-\set{t}$. 
The idea is to first build a set of vertices $T_0$ and show that every vertex in $T_0$ 
has $k$ disjoint paths to $t$. 
The next step is to inductively build sets $T_1,T_2,\ldots T_{\ell}$ and show that every 
vertex in 
$T_i$ has $k$ disjoint paths to the set $T_{i-1}$. Using \Cref{clm:k-paths-to-set}, this shows that every vertex in $T_i$ has $k$ disjoint paths to $t$ (since every vertex in $T_{i-1}$ has $k$ disjoint paths to $t$).
We keep doing this till we cover all the vertices using the sets $T_i$ and thus, everyone has $k$ disjoint paths to $t$.
This proves the correctness of \Cref{lem:layering} and \Cref{lem:layering:econn}.

We now show how these sets are constructed and bound the proof size.
Let $\ell := \log (n/k)$.
For $i \in [0,\ell]$, let $T_i^L, T_i^R$ be the sets where every vertex is independently sampled with probability $p_i:= \frac{k}{n} \cdot 2^i$. $T_i=T_i^L \cup T_i^R$.
Note that we sample all vertices in $T_\ell$, so we cover all the vertices.
Also, note that it is okay if a vertex is sampled in multiple $T_i$'s.
The expected size of $T_i$ is $2^{i+1} k$ for all $i \in [0,\ell]$.
We will now show that for any vertex, the proof for $k$ disjoint paths to $T_i$ is 
small in expectation.
\begin{claim}\label{clm:paths-to-Ti}
	For any vertex $v$ the proof for $k$ disjoint paths to $T_i$ takes size at 
	most $\frac{n}{2^i}$ words in expectation.
\end{claim}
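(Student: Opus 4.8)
The plan is to let the proof for a vertex $v$ consist of $k$ internally vertex-disjoint paths from $v$ to $T_i$ chosen to have minimum total length, and to bound this total length by $O(n/2^i)$ in expectation; the precise leading constant in the claim is then recovered by tuning the sampling constant, which is partly what the $L/R$ split of each $T_i$ is for. For existence, note that whenever $\card{T_i}\ge k$ --- and since $\Exp[\card{T_i}]\ge 2^i k$ this fails only with probability $e^{-\Omega(k)}$ by \Cref{prop:chernoff} --- the fan version of Menger's theorem (a standard consequence of \Cref{fact:menger} for the $k$-vertex-connected graph $G$) yields $k$ internally disjoint $v$--$T_i$ paths ending at distinct vertices of $T_i$; on the complementary event we charge the trivial bound of $n$ words, contributing only $n\cdot e^{-\Omega(k)}$ to the expectation.

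To control the length, run BFS from $v$: write $L_0=\set{v},L_1,L_2,\dots$ for the distance layers, $B_r(v)=L_0\cup\cdots\cup L_r$, $b_r=\card{B_r(v)}$, and let $\rho$ be the least radius with $\card{B_\rho(v)\cap T_i}\ge k$. Since the distance-from-$v$ label changes by at most one along every edge, any path from $L_0$ to the non-empty layer $L_\rho$ must meet each intermediate layer; hence for $1\le j<\rho$ the set $L_j$ is a vertex cut avoiding $v$ and (some vertex of) $L_\rho$, so $\card{L_j}\ge k$ by $k$-connectivity and therefore $b_{\rho-1}\ge(\rho-1)k$. A layered-flow / Menger argument on the same labelling shows that the minimum-length $k$-fan from $v$ to $T_i$ has every path of length at most $\rho$ --- this confinement is the one genuinely technical step --- so its total length is at most $k\rho\le b_{\rho-1}+k$, and the proof occupies at most $b_{\rho-1}+O(k)$ words.

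It remains to bound $\Exp[b_{\rho-1}]$. The numbers $b_r$ depend only on $G$, so set $t:=2k/p_i$ and let $r^\star$ be the largest radius with $b_{r^\star}\le t$. If $b_{\rho-1}>t$ then $\rho\ge r^\star+2$, i.e.\ $\card{B_{r^\star+1}(v)\cap T_i}<k$ while $b_{r^\star+1}>t=2k/p_i$; but $\card{B_{r^\star+1}(v)\cap T_i}$ is a sum of more than $2k/p_i$ independent indicators each of mean at least $p_i$, hence has mean exceeding $2k$, so it is below $k$ with probability at most $e^{-k/4}$ by \Cref{prop:chernoff}. Thus $\Pr[b_{\rho-1}>2k/p_i]\le e^{-k/4}$, whence
\[
\Exp[\text{proof size}]\ \le\ \Exp[b_{\rho-1}]+O(k)\ \le\ \frac{2k}{p_i}+n\,e^{-k/4}+O(k)\ =\ O\!\Big(\frac{k}{p_i}\Big)\ =\ O\!\Big(\frac{n}{2^i}\Big),
\]
using $p_i=(k/n)\,2^i$ and $k\le n/2^i$; tightening the constant gives the stated $n/2^i$ words. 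The main obstacle is precisely the confinement step --- showing that a shortest $k$-fan does not need paths longer than $\rho$ --- while everything else is routine concentration and layer counting.
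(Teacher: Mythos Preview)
Your argument has a genuine gap at the confinement step, and it is not merely ``technical'': the assertion that a minimum-length $k$-fan from $v$ to $T_i$ can be taken with every path of length at most $\rho$ is false in general. Knowing that $\card{B_\rho(v)\cap T_i}\ge k$ and that each BFS layer has size $\ge k$ says nothing about $k$-connectivity of the induced subgraph $G[B_\rho(v)]$. For a concrete failure with $k=2$: take $v$ adjacent to $a_1,a_2$; $a_1$ adjacent to $b_1,b_2$; $a_2$ adjacent to $c_1$; and $c_1$ adjacent to $b_1,b_2$. This graph is $2$-connected, the BFS layers from $v$ are $\{a_1,a_2\}$ and $\{b_1,b_2,c_1\}$, and if $T_i\cap B_2(v)=\{b_1,b_2\}$ then $\rho=2$, yet every length-$2$ path from $v$ to $\{b_1,b_2\}$ passes through $a_1$, so no pair of disjoint paths of length $\le\rho$ exists. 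The shortest $2$-fan has total length $5>k\rho=4$. The rest of your argument (the Chernoff bound on $b_{\rho-1}$) is fine, but it controls the wrong quantity.

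You also misread the purpose of the $T_i^L/T_i^R$ split: it is not for ``tuning the sampling constant'' but is the entire mechanism that makes the proof work. The paper first fixes $k$ disjoint paths $P_1,\ldots,P_k$ from $v$ to $T_i^L$ using only the randomness of $T_i^L$ (these exist by the fan lemma whenever $\card{T_i^L}\ge k$). It then truncates each $P_j$ at the first vertex lying in $T_i^R$. Because $T_i^R$ is sampled \emph{independently} of $T_i^L$ and hence of the already-fixed paths, each vertex along $P_j$ lies in $T_i^R$ independently with probability $p_i$, so the truncated length is at most $1/p_i$ in expectation. Summing over the $k$ paths gives $k/p_i=n/2^i$ exactly, with no confinement issue and no need for BFS or Chernoff. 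This decoupling is the idea you are missing.
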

\begin{proof}
	Consider the vertex $v$ and its $k$ disjoint paths $P_1, P_2,\ldots P_k$ to 
	$T_i^L$ (this exists because the graph is $k$-connected).
	For any path $P_j$, we truncate it at the first occurrence of a vertex from $T_i^R$ 
	(since we want paths to $T_i= T_i^L \cup T_i^R$).
	We now use the randomness of $T_i^R$.
	Every vertex on $P_j$ is independently sampled in $T_i^R$ with probability 
	$p_i=\frac{k}{n} \cdot 2^i$.
	Thus, in expectation, the truncated length of $P_j$ is at most $\frac{1}{p_i}=\frac{n}{k \cdot 2^i}$.
	Therefore, in expectation, the $k$ disjoint paths to $T_i$ together have size at most
	$\frac{n}{2^i}$ (by linearity of expectation). 
\end{proof}

This means that in expectation, the total proof size for $k$-vertex-connectivity is $O(k n \log (n/k))$ words.
\begin{claim}\label{clm:layer-pf-size-vconn}
	The expected total proof size for $k$-vertex-connectivity is $O(k n \log(n/k))$ words.
\end{claim}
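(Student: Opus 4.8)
The plan is to bound the expected total proof size by summing the per-layer contributions from \Cref{clm:paths-to-Ti} over all vertices and all layers, and then handling the base layer $T_0$ separately. First I would recall the structure: for each $i \in [0,\ell]$ where $\ell = \log(n/k)$, the proof consists, for every vertex $v \notin T_i$ (or more precisely every $v \in T_{i+1}$ in the layered scheme, but a clean upper bound is to union over all $n$ vertices), of $k$ vertex-disjoint paths from $v$ to the set $T_{i}$. By \Cref{clm:paths-to-Ti}, the expected size of this bundle of paths for a single vertex $v$ is at most $n/2^i$ words. Summing over all $n$ vertices in layer $i$ (a gross but sufficient overcount) gives expected size at most $n \cdot (n/2^i)$ words for layer $i$ — but this is too lossy. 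The correct accounting uses that only the vertices in $T_{i+1}$ need such a proof at level $i$, and $\Exp|T_{i+1}| = 2^{i+2}k$, so the expected contribution of layer $i$ is $\Exp|T_{i+1}| \cdot (n/2^i) = 2^{i+2}k \cdot n/2^i = 4kn$ words, independent of $i$. (Here I would invoke linearity of expectation together with the independence of the sampling that defines $T_{i+1}$ and $T_i$ across layers, so that the expectation of the product factors — or, even more simply, condition on $T_{i+1}$, apply \Cref{clm:paths-to-Ti} to each of its vertices, then take expectation over $T_{i+1}$.)

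Next I would handle the base case: every vertex in $T_0$ needs a proof of $k$ vertex-disjoint paths directly to the special vertex $t$. Since $G$ is $k$-vertex-connected, such paths exist by Menger's theorem (\Cref{fact:menger}), and each path has length at most $n$, so the bundle for one vertex of $T_0$ is at most $kn$ words; with $\Exp|T_0| = 2k$, the expected base-case contribution is $O(k^2n)$ words — wait, that would dominate. To avoid this I would instead truncate the $T_0$-paths at their first vertex of... no, there is no earlier layer. The right fix is to bound $|T_0|$ paths more carefully: actually $\Exp|T_0| = 2k$ and naive length $n$ gives $O(k \cdot 2k \cdot n) = O(k^2 n)$, which exceeds the claimed $O(kn\log(n/k))$ unless $k = O(\log(n/k))$. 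So I expect the intended argument routes the $T_0$ proof through the fact that $|T_0|$ is small \emph{and} absorbs it into the $\log(n/k)$ factor: more likely the layers are indexed so that $T_0$ itself consists of $\Theta(k)$ vertices and the direct-to-$t$ paths for them cost $O(k \cdot k \cdot n)$ words only if we are careless; instead one observes these $\Theta(k)$ paths-bundles can be taken to be edge/vertex structured so the total is $O(kn)$. I would reconcile this by noting the paper's convention that ``size in words'' counts $O(\log n)$-bit units and that $T_0$-to-$t$ paths, being $k$ disjoint paths in an $n$-vertex graph, have total length $O(n)$ per... no — $k$ disjoint paths have total length at most $n$ only if vertex-disjoint and covering distinct vertices, which is exactly the case for vertex-disjoint paths (they share only $v$ and $t$), so the \emph{bundle} of $k$ vertex-disjoint $v$-$t$ paths has at most $n$ vertices total! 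That resolves it: the base-case bundle for one $v \in T_0$ is $O(n)$ words, so the base contribution is $\Exp|T_0| \cdot O(n) = O(kn)$ words.

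With that observation, the summation becomes: base layer $O(kn)$ words, plus $\sum_{i=0}^{\ell} O(kn)$ words $= O(kn \cdot \ell) = O(kn \log(n/k))$ words, which is exactly \Cref{clm:layer-pf-size-vconn}. I would write the proof as: (1) by linearity of expectation the total expected proof size is $\Exp|T_0|$ times the expected base-case bundle size plus $\sum_{i=0}^{\ell-1} \Exp[\,|T_{i+1}| \cdot (\text{bundle size per vertex})\,]$; (2) for the base case use that $k$ vertex-disjoint $v$-$t$ paths together span at most $n$ vertices, giving $O(n)$ words, times $\Exp|T_0| = O(k)$; (3) for layer $i$, condition on $T_{i+1}$, apply \Cref{clm:paths-to-Ti} to get expected bundle size $\le n/2^i$ per vertex, take expectation over $T_{i+1}$ using $\Exp|T_{i+1}| = 2^{i+2}k$, yielding $O(kn)$; (4) sum over $i \in [0,\ell]$ with $\ell = \log(n/k)$. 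The main obstacle is the subtlety in step (2)/(3) — making sure the per-vertex bundle cost is charged against the \emph{expected} size of the correct layer set (so the $2^i$'s cancel) rather than against all $n$ vertices, and correctly handling the conditioning so that independence of the layer-$i$ sampling from the path structure (which depends only on $G$ and on $T_i^L$, itself independent across layers) lets the expectation factor. Once the cancellation $2^{i+2}k \cdot n/2^i = 4kn$ is in place, the rest is a geometric-free sum of $\ell+1$ equal terms.
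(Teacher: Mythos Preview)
Your proposal is correct and follows essentially the same argument as the paper: the base layer $T_0$ costs $O(kn)$ because $k$ vertex-disjoint paths from any single vertex to $t$ together touch at most $n$ vertices (your eventual observation), each subsequent layer contributes $\Exp|T_{i+1}| \cdot (n/2^i) = 2^{i+2}k \cdot n/2^i = 4kn$ by \Cref{clm:paths-to-Ti} and linearity of expectation, and summing $\ell+1 = O(\log(n/k))$ such terms yields $O(kn\log(n/k))$. The paper's proof is exactly this computation, just stated more tersely and without the false starts.
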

\begin{proof}
    The proof of $k$ vertex-disjoint paths for a fixed vertex in $T_0$ takes size at most $O(n)$ words because the summed length of all paths is $O(n)$ (any vertex belongs to at most one vertex-disjoint path).
    This gives a total expected size of $O(kn)$ words for all vertices in $T_0$.
    The proof for vertices in $T_{i+1}$ takes size at most $\frac{n}{2^i}$ (by 
    \Cref{clm:paths-to-Ti}) which gives a total expected size of $2^{i+2}k \cdot 
    \frac{n}{2^i} = 4 k n$ words for all vertices in $T_{i+1}$.
    We have $\ell = \log (n/k)$ sets $T_i$ implying the claim.
\end{proof}
\begin{claim}\label{clm:layer-pf-size-econn}
	The expected total proof size for $k$-edge-connectivity is $O(k^2 n \log(n/k))$ words.
\end{claim}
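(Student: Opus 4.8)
The plan is to follow the proof of \Cref{clm:layer-pf-size-vconn} essentially verbatim, isolating the single place where switching from vertex-disjoint to edge-disjoint paths costs an extra factor of $k$, namely the base layer $T_0$. Throughout I would use the elementary fact that a collection of $k$ edge-disjoint $s$-$t$ paths may always be chosen to consist of \emph{simple} paths (excising any cycle from a path only deletes edges and hence preserves edge-disjointness), so each path in such a collection uses at most $n-1$ edges.

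First I would bound the cost of the base layer. For a fixed $v \in T_0$, the proof exhibits $k$ edge-disjoint $v$-$t$ paths, which we may take to be simple; listing their edges therefore costs $O(kn)$ words — in contrast with the vertex-disjoint case, where the $k$ paths jointly visit each vertex at most once and so have total length only $O(n)$. Since $\E[|T_0|] \le \E[|T_0^L|] + \E[|T_0^R|] = 2k$, the expected total cost of all the $T_0$-proofs is $O(k^2 n)$ words in expectation.

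Next I would bound the cost of the inductive layers $T_1,\dots,T_\ell$, where I claim \Cref{clm:paths-to-Ti} applies unchanged. Indeed, its proof only uses that each of the $k$ edge-disjoint paths $P_1,\dots,P_k$ from $v$ to $T_i^L$ is simple — so that the distinct vertices of $P_j$ land in $T_i^R$ independently with probability $p_i$ — and truncating each $P_j$ at its first vertex in $T_i^R$ preserves edge-disjointness and leaves each path ending inside $T_i = T_i^L \cup T_i^R$. Hence for each $v \in T_{i+1}$ the $k$ edge-disjoint paths to $T_i$ have total expected size at most $n/2^i$ words, and multiplying by $\E[|T_{i+1}|] = 2^{i+2}k$ gives $O(kn)$ words per layer, exactly as in \Cref{clm:layer-pf-size-vconn}.

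Finally I would combine the estimates: there are $\ell+1 = O(\log(n/k))$ layers, the layer $T_0$ contributing $O(k^2 n)$ words and each of the remaining ones contributing $O(kn) \le O(k^2 n)$ words, so by linearity of expectation the total expected proof size is $O(k^2 n \log(n/k))$ words (more precisely $O(k^2 n + kn\log(n/k))$), which proves the claim. I expect the only step requiring any care is the base layer: one must notice that an edge-disjoint path, unlike a vertex-disjoint one, can by itself have length $\Theta(n)$, which is precisely what inflates the $T_0$ bound from $O(kn)$ to $O(k^2 n)$ and hence inflates the final bound by the extra factor of $k$ relative to the vertex-connectivity case.
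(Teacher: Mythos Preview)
Your proposal is correct and follows essentially the same approach as the paper: the paper's proof simply observes that the only change from \Cref{clm:layer-pf-size-vconn} is in the $T_0$ layer, where each vertex now needs $O(kn)$ words (since edge-disjoint paths can share vertices and each be of length up to $n$), yielding $O(k^2 n)$ for $T_0$ and hence $O(k^2 n \log(n/k))$ overall. Your argument is more careful in explicitly justifying that \Cref{clm:paths-to-Ti} applies unchanged to (simple) edge-disjoint paths, and your sharper intermediate bound $O(k^2 n + kn\log(n/k))$ is in fact tighter than what the paper writes, but both establish the stated claim.
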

\begin{proof}
    The only difference from the proof of \Cref{clm:layer-pf-size-vconn} is that the proof for vertices in $T_0$ takes size at most $O(kn)$ since each path can be of size at most $n$. This happens because edge-disjoint paths can use the same vertices. 
    This gives us an overall proof size of $O(k^2 n \log(n/k))$ words.
\end{proof}
If we can show that vertices $T_0$ have $k$ edge-disjoint paths to $t$ in smaller space then we can improve the overall proof size (since this is the main bottleneck).
For instance, if the graph has a subgraph on $O(k)$ vertices that is $k$-edge-connected then that set could be $T_0$ and the overall proof size would be $O(k n \log(n/k))$ (this would increase the verification space by an additive $O(k^2)$ words).

\begin{proof}[Proof of \Cref{lem:layering,lem:layering:econn}]
We have proved the correctness and the size bounds in expectation.
There must be some random string that achieves a proof size of at most the expected size (by definition) implying 
that there exists a 
proof of size $O(k n \log(n/k))$ words for $k$-vertex-connectivity and a proof of size $O(k^2 n \log(n/k))$ words for $k$-edge-connectivity (which an all powerful prover can find).
Thus, the prover sends this proof to the verifier.
\end{proof}

We now show that the prover can send some auxiliary information so that the verifier can verify the proof.
\begin{claim}\label{clm:layering-verification-vconn}
    The layering proof of \Cref{lem:layering} for $k$-vertex-connectivity can be verified using a $[kn + h,v]$-scheme for any $h,v$ such that $h \cdot v=n^2$.
\end{claim}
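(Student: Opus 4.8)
The plan is to exhibit an explicit $[kn+h,v]$-scheme that the prover uses to send the layering proof from \Cref{lem:layering} together with auxiliary data, and that the verifier checks during and after the stream. Recall that the layering proof consists of a collection of vertex sets $T_0 \subseteq T_1 \subseteq \cdots \subseteq T_\ell = V$ (with $\ell = \log(n/k)$), plus, for each vertex $v \in T_{i}$, a bundle of $k$ vertex-disjoint paths from $v$ into $T_{i-1}$ (and for $v \in T_0$, $k$ vertex-disjoint paths to the special vertex $t$). The total bit-length of all these paths is $\tO(kn)$ by \Cref{clm:layer-pf-size-vconn}. The prover streams all of these path edges as the proof; the verifier's job is to confirm two things: (i) \textbf{membership:} every edge appearing in any path actually belongs to the input graph $G$, and (ii) \textbf{disjointness / well-formedness:} each bundle really consists of $k$ internally-vertex-disjoint paths, each starting at the claimed vertex and ending in the claimed set $T_{i-1}$ (or at $t$). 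Given (i) and (ii), \Cref{clm:k-paths-to-set} applied inductively over the layers, together with \Cref{clm:k-vconn-proof-simple} (or rather its use inside the proof of \Cref{lem:layering}), certifies that $t$ has $k$ vertex-disjoint paths to every vertex, as desired.

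For step (i), I would invoke the subset-check scheme of \Cref{fact:subsetdisjchk}: the verifier, during the stream, maintains the $[h,v]$-sketch for the set $E$ of input edges (for any $h,v$ with $hv = n^2 = |\binom{[n]}{2}|$), and after the stream the prover re-sends the multiset of edges used in the paths; the scheme verifies that this multiset is contained in $E$. The hcost here is $\tO(h)$ for the sketch reconciliation plus the $\tO(kn)$ bits to actually name the path edges, and the vcost is $\tO(v)$. For step (ii), I would use a duplicate-detection scheme (\Cref{fact:dupdet}): within each bundle of $k$ paths for a fixed vertex $v$, the internal vertices across all $k$ paths must be pairwise distinct, so the prover streams, for each bundle, the list of internal vertices, and the verifier runs the $[h,v]$-duplicate-detection scheme to confirm distinctness. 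The verifier also does local bookkeeping that costs only $\tO(1)$ extra space per bundle: it checks that consecutive edges in a path share an endpoint (i.e.\ that it really is a path), that the first vertex is $v$, and that the last vertex lies in $T_{i-1}$ — the latter requiring the verifier to have a compact representation of the sets $T_i$, which the prover sends up front using $\tO(n)$ bits total and which the verifier stores or streams through a membership sketch. Counting $k$ disjoint paths per vertex is just an integer counter. Summing across all $\tO(kn)$-size proof content and the $\tO(h)$ sketch material gives hcost $\tO(kn + h)$ and vcost $\tO(v)$, yielding the claimed $[kn+h,v]$-scheme.

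The main obstacle I anticipate is making the verification of disjointness and layer-membership work in the claimed $\tO(v)$ space when the proof streams by in an adversarial order. In particular, the verifier cannot afford to buffer a whole bundle; it must interleave the local structural checks (path-continuity, endpoints) with the streaming subset-check and duplicate-detection sketches. The clean way to handle this is to insist that the prover present the proof in a canonical order — bundle by bundle, and within a bundle path by path, with a clear delimiter and the claimed value of $v$ and the target layer index announced before the path edges — so that the verifier needs only $\tO(1)$ working memory beyond the two $[h,v]$-sketches, which it can run in parallel on the appropriate sub-streams. Soundness against a cheating prover who reorders or mislabels is then handled by the soundness of the subset-check and duplicate-detection schemes (any forged edge is caught with high probability, any repeated internal vertex is caught with high probability) combined with the fact that the local checks are deterministic; a cheating prover gains nothing by lying about a layer index since \Cref{clm:k-paths-to-set} is applied only to the sets the prover committed to at the start, whose total description the verifier has fingerprinted. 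One subtlety worth flagging: the $O(n^2)$ budget in the subset-check matches the universe of possible edges, and the duplicate-detection universe is $[n]$ per bundle, so in both cases $hv = n^2$ (resp.\ $hv \ge n$) is consistent with the stated tradeoff; I would double-check that the polylog factors hidden in $\tO(\cdot)$ are absorbed into the bracket notation $[kn+h,v]$ as the paper's conventions allow.
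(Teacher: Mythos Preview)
Your outline matches the paper's structure closely: you identify the same three verification tasks (edge membership in $G$, path well-formedness, and per-bundle vertex-disjointness), and you use the same subset-check primitive from \Cref{fact:subsetdisjchk} for edge membership. The one substantive difference is how you handle the disjointness check. You invoke the duplicate-detection scheme of \Cref{fact:dupdet} separately on each bundle, whereas the paper has the prover first list the vertices of each bundle in sorted order (so distinctness is checked locally in $\tO(1)$ space by monotonicity) and then runs the equality-detection trick of \Cref{fact:equaldet} via an $\ell_0$-sampler to confirm that these sorted vertices coincide with the vertices actually appearing in the paths. The paper's variant costs only $\tO(1)$ extra verification space per bundle and at most doubles the proof length.

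Your variant is also workable, but there is a genuine accounting gap as written. You say you run ``the $[h,v]$-duplicate-detection scheme'' once per bundle. There are $\Theta(n)$ bundles (summing $|T_i|$ over the $\log(n/k)$ layers), so if each instance really incurs $\tO(h)$ hcost, the total help becomes $\tO(nh)$, not $\tO(h)$, and you no longer have a $[kn+h,v]$-scheme. The fix is to run each per-bundle duplicate-detection with parameters $(h',v)$ where $h'v = n$ (the universe is $[n]$, not $\binom{[n]}{2}$), so $h' = n/v$; summing over $\Theta(n)$ bundles then gives total hcost $\tO(n\cdot n/v)=\tO(h)$ as desired. You gesture at this when you note ``the duplicate-detection universe is $[n]$ per bundle, so \ldots $hv\ge n$,'' but you never carry the arithmetic through, and your earlier sentence explicitly names the wrong parameters. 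A second minor point: the paper also uses the $\ell_0$-sampler equality check (\Cref{fact:equaldet}) to reconcile the multiset of proof edges with the distinct-edge list on which the subset check is run, since the same edge can appear in many bundles; your proposal glosses over this multiplicity issue.
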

\begin{proof}
The prover sends $k$ vertex-disjoint paths for each vertex using the layering idea mentioned above (\Cref{lem:layering}).
There are three things that need to be verified.
First, all the edges that the prover sends should be a part of the input graph.
Second, the edges sent by the prover should form paths. 
Finally, the paths for a vertex should be vertex disjoint.

The prover sends a list of all edges used in the proof along with their multiplicities in $\Ot(kn)$ space. It is easy to do a subset check for the edges (ignoring multiplicities) using an $[h,v]$-scheme since there are at most $n^2$ edges (\Cref{fact:subsetdisjchk}).
The problem now is to check if the prover was truthful about the multiplicities, which can be done using the $\ell_0$-sampler trick using $\Ot(1)$ space (\Cref{fact:equaldet}). The verifier can maintain an $\ell_0$-sampler where he inserts all edges along with their multiplicities when they are provided upfront. He then deletes all edges used in the proof of $k$ vertex-disjoint paths for each vertex. Finally, he checks whether the sampler is empty. If it is not then the two sets of edges are not the same and thus, the verifier catches the lying prover. The probability of failure is at most $1-1/\poly(n)$.

The prover will send the edges of the paths one path at a time, so it is easy to verify that the edges form a path.
Finally, we need to verify that the paths for every vertex are vertex disjoint.
For this, the prover sends the vertices he is going to use in the proof upfront in increasing order. The verifier can verify this in $\Ot(1)$ space by just checking the increasing order.
The problem now is to check if the prover lied so we can run an equality check on the stream of vertices sent upfront and the vertices used in the proof. This can be done using the $\ell_0$-sampler trick in $\Ot(1)$ verification space (\Cref{fact:equaldet}).
We have to do this for all vertices so the worry is that the proof size or verification space might blow up.
Since the auxiliary information is just repeating the vertices used in the proof in sorted order, adding this auxiliary information can at most double the space used.
Also, the space on the verifier side is $\Ot(1)$ words for each vertex, but this space can be reused implying $\Ot(1)$ words of verification space for the entire disjointness check.

Therefore, the proof along with the auxiliary information takes size $\Ot(kn) + \Ot(h)$ bits and can be verified in $\Ot(v) + \Ot(1)$ bits of space, proving the $[kn+h,v]$-scheme.
\end{proof}

\begin{claim}\label{clm:layering-verification-econn}
    The layering proof of \Cref{lem:layering:econn} for edge-connectivity can be verified using a $[k^2n + h,v]$-scheme for any $h,v$ such that $h \cdot v=n^2$.
\end{claim}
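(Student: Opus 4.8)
\textbf{Proof proposal for \Cref{clm:layering-verification-econn}.}
The plan is to mimic the verification scheme for the vertex-connectivity case (\Cref{clm:layering-verification-vconn}) almost verbatim, adjusting only the bookkeeping that changes when we replace ``vertex-disjoint'' by ``edge-disjoint''. First I would recall what the prover must send: for each vertex $v$, a collection of $k$ edge-disjoint $t$-$v$ paths arranged via the layering structure of \Cref{lem:layering:econn}, which altogether has size $\Ot(k^2 n)$ (this is where the extra factor of $k$ over the vertex case enters, since edge-disjoint paths may reuse vertices and hence each path can have length up to $n$). The verifier must check three things, exactly as before: (i) every edge appearing in the proof, counted with multiplicity, is an edge of the input graph; (ii) the edges the prover streams in fact decompose into paths; and (iii) for each fixed vertex $v$, the $k$ claimed paths are pairwise edge-disjoint.

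For (i), I would have the prover send the list of all distinct edges used in the proof (there are at most $n^2$ of them) together with their multiplicities in $\Ot(k^2n)$ space, run a subset check via \Cref{fact:subsetdisjchk} for an $[h,v]$-scheme with $h\cdot v=n^2$ to confirm set membership ignoring multiplicities, and then verify the multiplicity claim with the equality-detection trick of \Cref{fact:equaldet}: the verifier inserts each input edge with its stated multiplicity into an $\ell_0$-sampler during the stream, deletes each edge occurrence used across all the path-proofs, and at the end checks the sampler is empty, catching a dishonest prover with probability $1-1/\poly(n)$ in $\Ot(1)$ space. Step (ii) is immediate: the prover streams the edges one path at a time, so the verifier just checks consecutively-listed edges share an endpoint. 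The essential difference from the vertex case is step (iii): instead of checking that the \emph{vertices} used in a vertex's $k$ paths are distinct, we must check that the \emph{edges} used in a vertex's $k$ paths are distinct. I would handle this with the same approach — the prover sends, upfront for each vertex, the sorted list of edges it will use in that vertex's proof; the verifier checks the sorted order in $\Ot(1)$ space and then runs an equality-detection check (\Cref{fact:equaldet}) between this upfront edge-stream and the edges actually used in that vertex's path-proofs. Because this auxiliary edge-list merely repeats, in sorted order, edges already counted in the proof, it at most doubles the hcost, keeping it $\Ot(k^2n)+\Ot(h)$; and the verification space is $\Ot(1)$ per vertex, reusable across vertices, so the total vcost stays $\Ot(v)+\Ot(1)$.

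The one point that needs a little care — and the main obstacle — is confirming that the edge-disjointness check really does compose correctly across all $n$ vertices without the auxiliary information blowing up either cost. In the vertex case the upfront list was ``vertices used in the proof'', of size comparable to the proof; here it is ``edges used in the proof'', again of size $\Ot(k^2n)$, so doubling is fine, but I would want to state explicitly that the verifier processes the proof vertex-by-vertex, resetting the relevant $\ell_0$-sampler state between vertices, so that the disjointness check for vertex $v$ is self-contained and uses only $\Ot(1)$ words. Once that is laid out, the accounting gives exactly a $[k^2n+h,v]$-scheme for any $h,v$ with $h\cdot v=n^2$, and combining with the correctness already established in \Cref{lem:layering:econn} completes the proof. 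I would close by noting that, exactly as in the vertex-connectivity case, when one wants to apply this for $k$ distinct special vertices $t$ (to invoke \Cref{clm:paths-to-vertex-proof} and conclude $k$-edge-connectivity), the verification space can be reused across the $k$ runs, so the total scheme is $[k\cdot k^2 n+h,\, v]$ rather than blowing up the verification cost by a factor of $k$.
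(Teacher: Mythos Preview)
Your proposal is correct and follows essentially the same approach as the paper; the paper's own proof is a single sentence saying the argument is ``almost identical'' to the vertex case with the only change being the $\Ot(k^2n)$ proof size, and you have simply unpacked that sentence, correctly making explicit the one substantive adaptation (checking \emph{edge}-disjointness rather than vertex-disjointness in step (iii)). One small remark: your closing comment about repeating the scheme for $k$ special vertices is unnecessary for edge-connectivity---unlike the vertex case, a single special vertex $t$ suffices (since any edge cut of size $<k$ has $t$ on one side and some $s$ on the other, and $k$ edge-disjoint $s$-$t$ paths already contradict the cut), which is exactly how the paper uses this claim in the proof of \Cref{thm:e-conn-dyn}.
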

The proof is almost identical with the only difference being that the proof for $k$-edge-connectivity is of size $\Ot(k^2n)$ opposed to $\Ot(kn)$ for $k$-vertex-connectivity.

\begin{corollary}\label{corr:layering-multiple}
    Let $G=(V,E)$ be a $k$-vertex-connected graph and let $t_1,t_2,\ldots,t_r \in V$ be arbitrary distinct vertices. There is a $[r \cdot (kn +h),v]$-scheme for all $h,v$ such that $h \cdot v=n^2$, which shows that for all $i \in [r]$ there are $k$ vertex-disjoint paths from $t_i$ to $v$ for all $v \in V-\set{t_i}$.
\end{corollary}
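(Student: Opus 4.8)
The plan is to derive \Cref{corr:layering-multiple} as a direct consequence of \Cref{clm:layering-verification-vconn}, observing only that the verification space of a single layering proof can be reused across the $r$ special vertices, so that it is the \emph{help cost} that multiplies by $r$ while the \emph{verification cost} stays at $v$. First I would invoke \Cref{lem:layering}: for each fixed $t_i$ there exists a proof of size $\tO(kn)$ exhibiting $k$ vertex-disjoint paths from $t_i$ to every other vertex, structured via the layer sets $T_0^L,T_0^R,\dots,T_\ell^L,T_\ell^R$ built by the probabilistic method. The prover simply concatenates these $r$ proofs (one for each $t_i$), together with the auxiliary information described in the proof of \Cref{clm:layering-verification-vconn} — the sorted multiset of edges used (with multiplicities) and, for each $t_i$ separately, the sorted list of vertices appearing on its paths. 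Since each individual proof plus its auxiliary data has size $\tO(kn)+\tO(h)$ and can be verified with an $[h,v]$-scheme where $h\cdot v=n^2$, the total help cost across all $r$ blocks is $\tO(r\cdot(kn+h))$, giving the claimed $[r\cdot(kn+h),v]$-scheme once we justify that $v$ does not blow up.

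The key point — and the one worth spelling out — is space reuse on the verifier's side. The verifier processes the $r$ blocks sequentially: after finishing the checks for $t_i$ (the subset check via \Cref{fact:subsetdisjchk}, the multiplicity check via the $\ell_0$-sampler trick of \Cref{fact:equaldet}, the path-structure check, and the vertex-disjointness check), he discards all working memory and reinitializes for $t_{i+1}$. Each block's verification uses $\tO(v)+\tO(1)$ bits exactly as in \Cref{clm:layering-verification-vconn}, and because these runs are non-overlapping in time, the peak verification space remains $\tO(v)$ rather than $\tO(rv)$. One subtlety to address: the $\ell_0$-sampler for the global edge-multiplicity check must run over the entire stream (it ingests the input edges once, upfront, and then has edges deleted as each block's proof is read); but a single $\ell_0$-sampler is only $\tO(1)$ space, so maintaining one such sampler for the whole duration costs only $\tO(1)$, and the per-vertex disjointness samplers can indeed be reset between blocks. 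I would state this explicitly so the reader sees that the "global" part of the verification is negligible and the "per-block" part is what gets reused.

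For completeness of the argument (though not soundness/completeness error, which carries over verbatim from \Cref{clm:layering-verification-vconn} by a union bound over the $r$ blocks, each contributing at most $1/\poly(n)$ failure probability), I would note that \Cref{clm:k-paths-to-set} already guarantees correctness of each individual layering proof, so no new graph-theoretic input is needed here. The proof of \Cref{corr:layering-multiple} is therefore essentially a bookkeeping statement: run $r$ independent copies of the scheme of \Cref{clm:layering-verification-vconn} in series, charging $\tO(r\cdot(kn+h))$ to hcost and $\tO(v)$ to vcost. I do not anticipate a genuine obstacle; the only thing requiring care is making the space-reuse claim airtight — in particular being precise that the global $\ell_0$-sampler maintained throughout the stream is a constant-space object and hence does not interfere with the reuse of the per-block verification memory. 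A short paragraph making that explicit, followed by the arithmetic $h\cdot v=n^2$ and the summation over blocks, completes the proof.
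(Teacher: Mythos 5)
Your approach matches the paper's in spirit — run the scheme of \Cref{clm:layering-verification-vconn} $r$ times in series, reusing the per-block verification memory — and your treatment of the global $\ell_0$-sampler is a reasonable way to make the subset/multiplicity check work across blocks. But there is a concrete gap: you never address how the verifier certifies that the $r$ blocks really cover $r$ \emph{distinct} special vertices. Because memory is wiped between blocks, the verifier cannot simply ``remember which $t_i$'s have been handled so far'' — storing $t_1,\dots,t_r$ explicitly costs $\Theta(r\log n)$ bits, which does not fit in the $\tO(v)$ budget for the parameter ranges where the corollary is actually used (e.g.\ $r=k$ and $v=n/k$ in \Cref{thm:ver-conn-dyn}, so $k\log n$ may dwarf $n/k$). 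Without this check, a cheating prover could present the proof for a single vertex $r$ times, and the scheme would be unsound relative to the application via \Cref{clm:paths-to-vertex-proof}.

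The paper closes this by having the verifier keep an $O(\log r)$-bit counter that increments once per successfully verified block (so he knows $r$ blocks were presented without storing them) and, in parallel, feeding the declared special vertex of each block into a duplicate-detection scheme (\Cref{fact:dupdet}) with parameters $h'\cdot v'\geq n$, which rejects if the prover repeats a vertex. This adds only $\tO(1)+\tO(v)$ to the verification space and $\tO(h)$ to the help cost, so the claimed $[r\cdot(kn+h),v]$ bound survives. You should add this ingredient explicitly; the rest of your argument — the probabilistic-method proof size from \Cref{lem:layering}, the sequential space reuse, and the union bound over $\poly(n)$ many failure events — is fine as written.
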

\begin{proof}
    If the prover repeats \Cref{clm:layering-verification-vconn} $r$ times and the verifier reuses his space then the proof size blows up by a factor of $r$ but we only get an additional $r$ words in the verification space. 
    Once the proof for a vertex has been verified, the memory can be emptied and reused because the verifier only has to remember that the proof has been verified for that particular vertex.
    So the verifier just remembers the vertices $t_1,t_2,\ldots,t_r$ and checks if they are distinct. But the distinctness check can be done more efficiently.
    
    The verifier keeps a counter which he increments after he has verified the proof for a vertex. After the entire proof is sent the verifier checks if the counter reached $r$. The only thing left to do is check the distinctness. In parallel, the verifier runs an $[h,v]$-scheme ($h \cdot v \geq n$) for duplicate check (\Cref{fact:dupdet}).
    Thus the additional space usage is only $\Ot(1) + \Ot(v)$ bits.
\end{proof}

We now show that \Cref{lem:layering} and \Cref{clm:layering-verification-vconn} work even when $t$ is a set of vertices (of size at least $k$) instead of a single vertex.
\begin{corollary}\label{corr:layering-sets}
	Let $G=(V,E)$ be a $k$-vertex-connected graph and let $T \subseteq V$ be an arbitrary set of 
	vertices with $\card{T} \geq k$. 
 There is a $[kn+h,v]$-scheme for any $h,v$ such that $h \cdot v=n^2$, which shows that there are $k$ 
	vertex-disjoint paths from the set $T$ to $v$ for all $v \in V-T$. 
\end{corollary}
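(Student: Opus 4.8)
The plan is to reduce this to the single-vertex layering lemma (\Cref{lem:layering} and its verification in \Cref{clm:layering-verification-vconn}) via a standard augmentation gadget. Form the graph $G' = (V \cup \set{t^*},\, E \cup \set{\set{t^*,u} : u \in T})$ obtained from $G$ by adding one fresh vertex $t^*$ joined to every vertex of $T$. Since $\card{T} \geq k$, the vertex $t^*$ has at least $k$ neighbors, so by \Cref{clm:adding-vertex} the graph $G'$ is $k$-vertex-connected. Now apply \Cref{lem:layering} to $G'$ with the special vertex taken to be $t^*$: there is a proof of size $\Ot(kn)$ certifying that $G'$ has $k$ vertex-disjoint paths from $t^*$ to every vertex of $V' - \set{t^*} = V$, and by \Cref{clm:layering-verification-vconn} this proof is checkable by a $[kn+h,v]$-scheme for any $h,v$ with $h\cdot v = n^2$ (the graph $G'$ has $n+1$ vertices and still $O(n^2)$ edges, so the parameters are unchanged up to the constants hidden in the $[\cdot,\cdot]$ notation).

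The only change on the Verifier's side is that he must run the scheme of \Cref{clm:layering-verification-vconn} \emph{as if the input graph were} $G'$. This is immediate: he feeds the underlying subset-check / duplicate-detection subroutines (\Cref{fact:subsetdisjchk}, \Cref{fact:equaldet}) the input stream for $G$, and then, once the stream ends, appends the at most $n$ edges $\set{t^*,u}$ for $u \in T$ (these are determined by the known set $T$). Appending $O(n)$ extra tokens does not affect the $\Ot(v)$ space bound of those subroutines, nor the $\Ot(kn)+\Ot(h)$ proof size, so the resulting scheme is still a $[kn+h,v]$-scheme.

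For correctness, fix any $v \in V - T$ and let $Q_1,\dots,Q_k$ be the $k$ vertex-disjoint $t^*$-to-$v$ paths in $G'$ certified by the (verified) proof. Each $Q_i$ leaves $t^*$ through a neighbor of $t^*$, which by construction lies in $T$; and since the $Q_i$ are pairwise vertex-disjoint apart from their common endpoints $t^*$ and $v$, these $k$ ``second vertices'' are distinct. Deleting $t^*$ together with its single incident edge from each $Q_i$ therefore yields a path $Q_i'$ in $G$ from a vertex of $T$ to $v$; the $Q_i'$ are pairwise vertex-disjoint except at $v$ and start at $k$ distinct vertices of $T$, i.e.\ they are exactly a family of $k$ vertex-disjoint paths from $T$ to $v$. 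Thus verifying the proof for $G'$ establishes the claim for $G$.

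I expect the only subtlety --- a minor one --- to be matching the definition of ``$k$ vertex-disjoint paths from a set $T$ to $v$'' with what the gadget produces, namely the observation in the previous paragraph that the $t^*$-paths automatically pass through $k$ \emph{distinct} vertices of $T$ because every neighbor of $t^*$ lies in $T$ and the paths are vertex-disjoint. Everything else --- the $k$-vertex-connectivity of $G'$, the size and space bounds, and the stream simulation --- is routine given the tools already in hand.
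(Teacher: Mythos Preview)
Your proposal is correct and follows essentially the same approach as the paper: add a dummy vertex $t^*$ joined to all of $T$, invoke \Cref{clm:adding-vertex} to get $k$-vertex-connectivity of $G'$, apply \Cref{lem:layering} and \Cref{clm:layering-verification-vconn} at $t^*$, and then strip off $t^*$ from each certified path to obtain $k$ vertex-disjoint paths from $T$ to $v$. You even spell out the stream-simulation detail (appending the $\card{T}$ dummy edges after the input stream) that the paper leaves implicit.
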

\begin{proof}
	Create a new graph $G'$ by introducing a new vertex $t$ and connecting it to all vertices in $T$.
	We know by \Cref{clm:adding-vertex} that $G'$ is $k$-vertex-connected. 
	We can now apply \Cref{lem:layering} and \Cref{clm:layering-verification-vconn} for vertex $t$ and this gives us a proof of $k$ vertex-disjoint 
	paths from $t$ to all vertices in $V-\set{t}$.
	
	Let $v$ be a vertex in $V-T$. $v$ has $k$ vertex-disjoint paths to $t$. The last edges of all these 
	paths are from $N(t)=T$ to $t$ implying that $v$ has $k$ vertex-disjoint paths to the set $T$.
\end{proof} 

We get a similar corollary for $k$-edge-connectivity.
\begin{corollary}\label{corr:layering-sets-econn}
	Let $G=(V,E)$ be a $k$-edge-connected graph and let $T \subseteq V$ be an arbitrary set of 
	vertices with $\card{T} \geq k$. 
 There is a $[k^2n+h,v]$-scheme for any $h,v$ such that $h \cdot v=n^2$, which shows that there are $k$ 
	edge-disjoint paths from the set $T$ to $v$ for all $v \in V-T$. 
\end{corollary}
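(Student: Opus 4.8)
The plan is to mirror the proof of \Cref{corr:layering-sets}, replacing the vertex-connectivity ingredients by their edge-connectivity counterparts. First I would form an auxiliary graph $G'$ from $G=(V,E)$ by adding one new vertex $t$ and joining it by a single edge to every vertex of $T$; call these the \emph{fake} edges. Since $\card{T}\ge k$, the vertex $t$ has degree at least $k$ in $G'$. The first thing to verify is that $G'$ is $k$-edge-connected, which is the edge-analogue of \Cref{clm:adding-vertex}: take any edge cut of $G'$ of size at most $k-1$, separating a part $A$ with $t\in A$ from $B=V(G')-A$. If $A=\set{t}$ the cut consists of all $\ge k$ fake edges at $t$, a contradiction; otherwise $A\setminus\set{t}$ and $B$ form a nonempty bipartition of $V$, so $G$ already has at least $k$ edges across it because $G$ is $k$-edge-connected, again a contradiction. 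Hence $G'$ is $k$-edge-connected.

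Next I would apply \Cref{lem:layering:econn} and \Cref{clm:layering-verification-econn} to $G'$ with special vertex $t$. As $\card{V(G')}=n+1$, this produces a proof of size $\Ot(k^2n)$ together with auxiliary information yielding a $[k^2n+h,v]$-scheme, for any $h,v$ with $h\cdot v=n^2$, that certifies $k$ edge-disjoint paths from $t$ to every vertex of $V(G')-\set{t}=V$. The one point needing care is the subset check of \Cref{fact:subsetdisjchk}: the prover's paths may use fake edges, which never appear in the input stream. But $t$ and $T$ are fixed by the instance, so the verifier can feed the $O(n)$ fake edges into the $Y$-side of the subset-check stream; since the space of that scheme depends only on the edge-universe size and not on the stream length, this incurs no verification space beyond the budget already in place.

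Finally I would convert ``paths to $t$'' into ``paths to the set $T$'', exactly as in \Cref{corr:layering-sets}. Fix $v\in V-T$ and take the $k$ edge-disjoint $t$--$v$ paths certified above. On each such path the unique edge incident to $t$ is a fake edge $(w,t)$ with $w\in T$, and since $G'$ is simple and the paths are edge-disjoint these $k$ fake edges are pairwise distinct, hence reach $k$ distinct vertices of $T$. Deleting the fake edge from each path leaves $k$ paths lying entirely in $G$, still pairwise edge-disjoint, running from $v$ to $k$ distinct vertices of $T$ --- that is, $k$ edge-disjoint paths from the set $T$ to $v$, which is what the scheme certifies.

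The step I expect to be the main obstacle is really a bookkeeping one: arguing that the verification with the extra vertex $t$ (detecting any misuse of edges incident to $t$ other than the legitimate fake edges, and doing so without blowing the $[k^2n+h,v]$ budget) goes through unchanged. The rest is a direct transcription of the vertex-connectivity argument.
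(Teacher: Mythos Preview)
Your proposal is correct and follows exactly the approach the paper intends: the paper states this corollary immediately after \Cref{corr:layering-sets} with only the remark ``We get a similar corollary for $k$-edge-connectivity,'' and your write-up is precisely that similar argument (add a dummy $t$ joined to $T$, check $G'$ is $k$-edge-connected, invoke \Cref{lem:layering:econn} and \Cref{clm:layering-verification-econn}, then strip the final fake edge from each path). Your handling of the fake edges in the subset check and the observation that edge-disjointness forces the $k$ final fake edges to be distinct are both more explicit than anything the paper writes down, but they are the right details to fill in.
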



\subsection{Vertex Connectivity Schemes}
In this section, we discuss schemes for $k$-vertex-connectivity.

\verconndyn*

We start by showing a very simple $[n^2 k,n]$-scheme.

\subsubsection{Simple Algorithm}
Assume that the graph $G=(V,E)$ is $k$-vertex-connected.
The simple algorithm is to fix $k$ special vertices $t_1,t_2,\ldots t_k$ and show that all other vertices 
have $k$ vertex-disjoint paths to each $t_i$ for $i \in [k]$.
This is enough to show that the graph is $k$-vertex-connected (by 
\Cref{clm:paths-to-vertex-proof})

We know that showing $k$ vertex-disjoint paths between two 
vertices takes $O(n)$ proof size (any vertex can belong to at most one vertex-disjoint path).
We can check if the edges used in the proof belong to the graph using an $[n,n]$-scheme (\Cref{fact:subsetdisjchk}).
Thus showing that all vertices have $k$ vertex-disjoint paths to each $t_i$ has a $[k n^2,n]$-scheme.

If the graph is not $k$-vertex-connected then the prover can send the vertex cut $X$ 
along with a connected component $S$ that is formed after deleting $X$ in $\Ot(n)$ space.
The verifier then has to check whether there are any edges between $S$ and $T:=V-X-S$. So the only allowed edges are within $S$, within $T$ or incident on $X$. The verifier constructs a graph $G'$ in a streaming fashion with all such edges and then checks if the edges in input graph are a subset of $G'$.
This can be done using an $[h,v]$-scheme for subset check for any $h,v$ such that $h \cdot v=n^2$ (\Cref{fact:subsetdisjchk}).
We use a $[kn,n/k]$-scheme to conclude the simple scheme.


We now improve this algorithm using the layering lemma (\Cref{lem:layering}) and give a $[k \cdot (h+kn), v]$-scheme.

\subsubsection{Layering Algorithm}
If the graph $G=(V,E)$ is not $k$-vertex-connected, then we can use the proof of the simple algorithm which has an $[n+ h,v]$-scheme.
In the other case, the issue with the simple algorithm was that the proofs for all the vertices were of 
size $O(n)$.
The goal would be to reduce the proof sizes for some vertices.

\begin{proof}[Proof of \Cref{thm:ver-conn-dyn}]
Fix $k$ arbitrary vertices $t_1,t_2,\ldots,t_k$.
Use \Cref{corr:layering-multiple} to get a proof of $k$ vertex-disjoint paths from $t_i$ to every vertex in $V-\set{t_i}$ for all $i \in [k]$.
Thus, the proof size for all $t_i$ together is $\Ot(k\cdot (h+kn))$ 
words and can be verified using $\Ot(v)$ space for any $h,v$ such that $h\cdot v=n^2$.
This is a valid proof by \Cref{clm:paths-to-vertex-proof}  which shows that $G$ is $k$-vertex-connected. 
Thus, we get a $[k \cdot (h+kn),v]$-scheme for $k$-vertex-connectivity for any $h,v$ such that $h\cdot v=n^2$, proving \Cref{thm:ver-conn-dyn}.
\end{proof}

Notice that the product of the proof size and the verification space is $kn^2$ (when $h\geq kn$) which seems suboptimal.
There are two relaxations that can drop this extra factor of $k$.
We show them in the next two subsections.

\subsubsection{Distinguishing between \texorpdfstring{$k$}{k} and 
\texorpdfstring{$2k$}{2k} vertex-connectivity}

In this subsection, we give a $[k n + h,v]$-scheme for any $h,v$ such that $h\cdot v=n^2$, for distinguishing between vertex-connectivity 
$<k$ and $\geq 2k$. 
This is the promised-gap $k$-vertex-connectivity problem with $\eps=1$ \cite{assadi2023tight,guha2015vertex}.
\begin{lemma}\label{lem:layering-2k-vconn}
    Under dynamic graph streams on $n$-node graphs, there exists an $[kn+h,v]$-scheme for distinguishing between vertex-connectivity $<k$ and $\geq 2k$, for any $h,v$ such that $h\cdot v=n^2$.
\end{lemma}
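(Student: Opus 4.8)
The plan is to combine the layering machinery already developed with the standard observation that $2k$-vertex-connectivity is exactly what is needed to run the layering argument with only a \emph{single} anchor, rather than $k$ anchors, thereby saving the factor of $k$ in the proof size.

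\textbf{The completeness case ($G$ has vertex-connectivity $\geq 2k$).} Pick an arbitrary vertex $t \in V$. Since $G$ is $2k$-vertex-connected, it is in particular $k$-vertex-connected, so \Cref{lem:layering} together with \Cref{clm:layering-verification-vconn} provides a $[kn+h,v]$-scheme showing that $t$ has $k$ vertex-disjoint paths to every other vertex $v \in V \setminus \{t\}$. I would then invoke \Cref{clm:k-vconn-proof-simple}: if $t$ has $k$ vertex-disjoint paths to all vertices, then every minimum vertex cut $X$ satisfies $\card{X}\geq k$, i.e.\ $G$ is $k$-vertex-connected. This is weaker than what we want to certify (we want ``$\geq k$'') but it is exactly the right side of the promise gap, so it suffices: in the YES case we output ``$\geq 2k$'' (equivalently, ``not $<k$'') after the verifier accepts. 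The key point is that we only run the layering proof for \emph{one} anchor vertex $t$, so the proof size is $\Ot(kn+h)$ and the verification space is $\Ot(v)$, with $h\cdot v = n^2$.

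\textbf{The soundness case ($G$ has vertex-connectivity $<k$).} Here the prover should convince the verifier of a small vertex cut. As in the simple scheme earlier in the section, the prover sends a vertex cut $X$ with $\card{X}\leq k-1$ together with one connected component $S$ of $G-X$ (this is $\Ot(n)$ bits). The verifier must check there are no edges between $S$ and $T := V\setminus X\setminus S$; as in the simple scheme, he constructs in a streaming fashion the graph $G'$ on $V$ consisting of all edges that lie inside $S$, inside $T$, or are incident to $X$, and runs an $[h,v]$-subset-check (\Cref{fact:subsetdisjchk}) to verify $E(G)\subseteq E(G')$, which certifies $\card{X}<k$. This costs $[n+h,v]$ with $h\cdot v = n^2$, dominated by the layering branch. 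Soundness of each sub-scheme (the subset check, the layering verification of \Cref{clm:layering-verification-vconn}, and the duplicate/equality checks inside it) holds with probability $1-1/\poly(n)$, and a union bound over the $\Ot(1)$ many invoked gadgets keeps the total error below $1/3$. I would also note that the verifier must decide \emph{which} branch of proof he is being given; the prover prepends one bit declaring ``connectivity $<k$'' or ``$\geq 2k$'', and the verifier runs the corresponding check — there is no inconsistency because in the promised-gap problem exactly one branch is truthful.

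\textbf{Main obstacle.} The delicate point is that the layering lemma as stated only certifies ``$G$ is $k$-vertex-connected'' (via \Cref{clm:k-vconn-proof-simple}), not ``$\geq 2k$''. So I must be careful that this is adequate for the promise problem: in the NO instances connectivity is $<k$, so the verifier would reject the YES-branch proof (no vertex $t$ can have $k$ vertex-disjoint paths to all others when a cut of size $<k$ exists, by Menger, \Cref{fact:menger}), and the prover is instead forced into the cut-exhibiting branch, which only succeeds when a genuine small cut exists. The one thing to verify carefully is that the layering verification from \Cref{clm:layering-verification-vconn} indeed rejects whenever the claimed paths are not genuinely present in $G$ or not genuinely vertex-disjoint — but that soundness is exactly what that claim already establishes, so the argument goes through. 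The whole proof is therefore short: assemble the single-anchor layering scheme for the YES-branch, the subset-check cut-exhibition for the NO-branch, and combine their costs, observing that both branches fit in $[kn+h,v]$ with $h\cdot v=n^2$.
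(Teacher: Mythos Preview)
Your soundness argument for the YES-branch is incorrect. You claim that ``no vertex $t$ can have $k$ vertex-disjoint paths to all others when a cut of size $<k$ exists,'' but this is false when $t$ lies \emph{inside} the small cut. Concretely: glue two copies of $K_{k+1}$ at a single vertex $t$. The graph has vertex-connectivity $1<k$, yet $t$ has $k$ vertex-disjoint paths to every other vertex (each clique is $k$-connected and $t$ sits in both). A cheating prover on this NO-instance simply declares the YES-branch, picks this $t$, and sends the genuine layering proof; the verifier accepts. Note that \Cref{clm:k-vconn-proof-simple}, which you invoke, explicitly requires $t\in V\setminus X$ for the fixed minimum cut $X$ --- a hypothesis you have no way to enforce with a single anchor. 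This is precisely why the full (non-gap) result \Cref{thm:ver-conn-dyn} uses $k$ anchors: one of them must avoid any cut of size $\leq k-1$.

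The paper exploits the $2k$ promise differently. It fixes two disjoint sets $T_L,T_R$ of $2k$ vertices each and, via \Cref{corr:layering-sets}, certifies $2k$ vertex-disjoint paths from every vertex to $T_L$ and to $T_R$. Soundness then comes from a pigeonhole argument (\Cref{clm:layering-2k-vconn}): after deleting any $X$ with $|X|<k$, each vertex still reaches at least $k+1$ vertices of $T_R$, so any two vertices of $T_L$ share a common surviving target in $T_R$ and hence remain connected; since every vertex also still reaches $T_L$, the whole graph stays connected. This two-set, $2k$-path structure is what makes the single-shot $[kn+h,v]$ cost achievable without the anchor-in-cut loophole that breaks your approach.
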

\begin{proof}
If the graph $G=(V,E)$ is not $k$-vertex-connected we can use the $[n+h,v]$-scheme of the simple algorithm where $h\cdot v=n^2$.

If the graph $G=(V,E)$ is $2k$-vertex-connected then we fix two disjoint sets $T_L$ and $T_R$ 
containing $2k$ arbitrary vertices each.
We use \Cref{corr:layering-sets} to show $2k$ vertex-disjoint paths from every vertex in $V-T_L$ to the set 
$T_L$.
This can be done using a $[kn+h,v]$-scheme for any $h,v$ such that $h \cdot v=n^2$.
We do the same for $T_R$.
This is enough to show that $G$ is at least $k$-vertex-connected (\Cref{clm:layering-2k-vconn}) implying that $G$ is $2k$-vertex-connected (promise).
\end{proof}

\begin{claim}\label{clm:layering-2k-vconn}
	The proof shows that $G$ is at least $k$-vertex-connected.
\end{claim}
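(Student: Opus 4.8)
\textbf{Proof proposal for \Cref{clm:layering-2k-vconn}.}

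The plan is to invoke the property-of-$k$-connected-graphs machinery already developed in the ``Properties of $k$-connected graphs'' subsection, in particular \Cref{corr:layering-sets} (which provides, for a set $T$ with $\card T \ge k$, $k$ vertex-disjoint paths from every vertex in $V \setminus T$ to $T$) together with \Cref{clm:k-paths-to-set} and \Cref{clm:k-vconn-proof-simple}. The setup: we have two disjoint sets $T_L, T_R$, each of size $2k$, and the prover has exhibited $k$ vertex-disjoint paths from every vertex $v \in V \setminus T_L$ to $T_L$, and likewise $k$ vertex-disjoint paths from every $v \in V \setminus T_R$ to $T_R$. We want to conclude that $G$ is at least $k$-vertex-connected, i.e.\ that no vertex cut of size $\le k-1$ exists.

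First I would fix an arbitrary vertex cut $X$ with $\card X \le k-1$ and suppose for contradiction that deleting $X$ disconnects $G$ into (at least) two parts; pick two vertices $a, b$ lying in different components of $G - X$. Since $\card X \le k-1 < k \le \card{T_L}/2$, at least $k+1$ vertices of $T_L$ survive in $G-X$, so $T_L \setminus X \ne \emptyset$ and in fact $T_L \setminus X$ lies entirely within a single component of $G - X$ --- call it the ``$T_L$-component'' --- provided we first argue that $T_L \setminus X$ is connected in $G - X$. That last point is exactly where \Cref{clm:k-paths-to-set}-style reasoning enters: by treating any single vertex $t^\star \in T_L \setminus X$ as a ``special vertex'' and using that every other vertex of $T_L$ has $k$ vertex-disjoint paths to $T_L$ (hence, after removing $t^\star$ from $T_L$, to the set $T_L \setminus \{t^\star\}$ which still has size $\ge k$), one gets that removing $X$ cannot separate any two vertices of $T_L$ from each other. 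The cleanest way to package this is: introduce the auxiliary vertex $t$ joined to all of $T_L$ as in \Cref{corr:layering-sets}; then $G' = G + t$ has $k$ vertex-disjoint paths from $t$ to every vertex, so by \Cref{clm:k-vconn-proof-simple} applied in $G'$ (with $t$ as the special vertex), $G'$ is $k$-vertex-connected, hence so is $G' - (\text{anything of size} \le k-1)$ in the sense of staying connected. Concretely: in $G' - X$, the vertex $t$ is still present (it is not in $X \subseteq V$) and still has $k$ vertex-disjoint paths to every surviving vertex, so every surviving vertex is connected to $t$ in $G' - X$; but $t$'s only neighbours are in $T_L$, so every surviving vertex is connected, through $t$ and through $T_L \setminus X$, to every other surviving vertex --- contradicting that $a$ and $b$ are in different components.

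Spelling that out: since $a \in V \setminus X$ survives in $G' - X$ and has $k$ vertex-disjoint paths to $t$ there, $a$ is connected to $t$; likewise $b$ is connected to $t$; hence $a$ and $b$ are connected in $G' - X$, and since $t$ is only adjacent to $T_L$, any $a$–$b$ walk in $G' - X$ passing through $t$ can be rerouted to stay within $G - X$ (replace the length-2 detour $u - t - w$ through $t$, with $u, w \in T_L \setminus X$, by the fact that $u, w$ are themselves connected in $G - X$ --- which again follows from applying the same argument to the pair $u, w$, or more directly by noting $G'-X$ being connected already forces $G-X$ connected between $a$ and $b$ after suppressing the degree-$\le\card{T_L\setminus X}$ vertex $t$). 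This contradicts the assumption that $X$ was a separating cut, so $\card X \ge k$ and $G$ is $k$-vertex-connected.

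The main obstacle is the ``rerouting through the auxiliary vertex'' bookkeeping: one must be careful that connectivity of $G' - X$ between two original vertices really does imply connectivity of $G - X$ between them, rather than merely connectivity that genuinely uses $t$. The honest fix is to observe that $t$ has at least $k+1 > \card X$ neighbours in $T_L$, at least two of which, say $u$ and $w$, survive in $G - X$, and that the $k$ vertex-disjoint $t$-to-$v$ paths guaranteed for each surviving $v$ actually certify $k$ vertex-disjoint $T_L$-to-$v$ paths in $G$ (just drop the final edge into $t$); so every surviving $v$ reaches $T_L \setminus X$ in $G - X$, and since all such $v$ reach the \emph{same} surviving set $T_L \setminus X$ which is itself connected in $G-X$ (by the identical argument applied with $t$ playing the role of special vertex and the elements of $T_L$ as the $v$'s), the whole of $V \setminus X$ is connected in $G - X$. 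I would present the connectedness of $T_L \setminus X$ as a short standalone observation (it is an instance of \Cref{clm:k-paths-to-set}/\Cref{clm:k-vconn-proof-simple} with the special set $T_L$) so that the rest of the argument reads cleanly, and the whole proof should then be only a few lines. The symmetric role of $T_R$ is not actually needed for this particular claim --- one of the two sets suffices to force $k$-connectivity --- but I would remark that $T_R$ is used for the promise step (upgrading ``$\ge k$'' to ``$= 2k$'') rather than here.
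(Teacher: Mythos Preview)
There is a genuine gap. Your final remark—that $T_R$ is ``not actually needed for this particular claim''—is false, and the gap shows up exactly where you try to argue that $T_L \setminus X$ is connected in $G - X$. The only thing the verifier has been shown (at the level of \Cref{corr:layering-sets}, which is also the level at which you state your setup) is that every $v \in V \setminus T_L$ has $2k$ vertex-disjoint paths to $T_L$ (note: $2k$, not $k$). This says nothing whatsoever about connectivity \emph{among} the vertices of $T_L$. Your justification that the connectedness of $T_L\setminus X$ ``is an instance of \Cref{clm:k-paths-to-set}/\Cref{clm:k-vconn-proof-simple}'' is circular: those claims would require certified vertex-disjoint paths \emph{from} vertices of $T_L$, which you do not have. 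The auxiliary-vertex argument has the same hole: from $2k$ paths $T_L \to v$ for $v\in V\setminus T_L$ you cannot conclude $2k$ paths $t\to u$ in $G'$ for $u\in T_L$; you only get the single direct edge $t$–$u$.

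A concrete counterexample: take $k=2$, $T_L=\{u_1,u_2,u_3,u_4\}$, $V\setminus T_L=\{c,w_1,w_2,w_3\}$; let $u_1$'s sole neighbor be $c$, let $c$ be adjacent to all four $u_i$, and let each $w_j$ be adjacent to $c$ and to $u_2,u_3,u_4$. Then every vertex of $V\setminus T_L$ has four vertex-disjoint paths to $T_L$ (for $w_j$: the three direct edges plus $w_j$–$c$–$u_1$), yet removing the single vertex $c$ isolates $u_1$, so $G$ is not $2$-vertex-connected. Thus the $T_L$ certification alone is insufficient.

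The paper's proof uses $T_R$ precisely to close this gap. Since $T_L \subseteq V\setminus T_R$, every $u\in T_L$ has $2k$ vertex-disjoint paths to $T_R$; after deleting any $X$ with $\card X\le k-1$, each of any two $u,v\in T_L\setminus X$ still reaches at least $k+1$ distinct vertices of $T_R$, and since $\card{T_R}=2k<2(k+1)$ pigeonhole gives a common reachable vertex of $T_R$, hence a $u$–$v$ path in $G-X$. The rest of the argument (every $s\in V\setminus T_L$ reaches $T_L$ in $G-X$) is the easy part you already have.
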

\begin{proof}
	Assume towards a contradiction that there is a vertex cut $X$ of size $<k$.
	We will show that all the vertices are connected after deleting $X$ implying that $G$ is at least 
	$k$-vertex-connected.
	
	Consider two arbitrary vertices $u,v \in T_L$.
	The proof for $T_R$ shows that each of $u,v$ have $2k$ vertex-disjoint paths to $T_R$.
	$X$ has at most $k-1$ vertices so each of $u,v$ reach at least $k+1$ vertices in $T_R$ after deleting 
	$X$.
	By the pigeonhole principle there is a vertex $t \in T_R$ which both $u$ and $v$ can reach implying 
	that there is a path between $u$ and $v$ after deleting $X$.
	Thus, any pair of vertices in $T_L$ are connected after $X$ is deleted.
	
	Consider any arbitrary vertex $s \in V-T_L$. 
	$s$ has $2k$ vertex-disjoint paths to $T_L$ so after deleting $X$ there is a vertex $u\in T_L$ that it 
	can reach.
	Thus, that after $X$ is deleted, all vertices can reach $T_L$ and $T_L$ is connected implying that $G-X$ is connected giving a 
	contradiction. 
\end{proof}


\subsubsection{A scheme in the AM model}

In this subsection, we give a $[k n+h,v]$-scheme for $k$-vertex-connectivity using public randomness and achieve a smooth tradeoff too (similar to the promised-gap $k$-vertex-connectivity problem).
This is the AM model in stream verification.
\begin{lemma}
    Under dynamic graph streams on $n$-node graphs, there exists an $[kn+h,v]$-AM-scheme for $k$-vertex-connectivity for any $h,v$ such that $h\cdot v=n^2$.
\end{lemma}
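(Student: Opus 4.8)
The plan is to combine the layering lemma (\Cref{lem:layering}) with the power of public randomness to eliminate the factor-$k$ blow-up that appears in the private-coin scheme of \Cref{thm:ver-conn-dyn}. First I would handle the ``No'' case exactly as in the simple scheme: if $G$ is not $k$-vertex-connected, the prover sends a vertex cut $X$ with $|X|<k$ together with a connected component $S$ of $G-X$, and the verifier runs the $[h,v]$-subset-check of \Cref{fact:subsetdisjchk} to confirm that all edges lie within $S$, within $T:=V-X-S$, or are incident to $X$; this costs $[n+h,v]$. The interesting case is when $G$ \emph{is} $k$-vertex-connected, and here is where public randomness helps.

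The key idea is to use public coins to pick the random layering structure \emph{in advance}, so that the prover is committed to a single random string rather than being allowed to choose the best one. Recall from the proof of \Cref{lem:layering} that a random choice of the sampled sets $T_0,T_1,\ldots,T_\ell$ (with $p_i = (k/n)2^i$) yields a layered proof of expected size $O(kn\log(n/k))$ words showing that a fixed vertex $t$ has $k$ vertex-disjoint paths to every other vertex. Crucially, I only need to do the layering for \emph{one} special vertex $t$, not $k$ of them, provided I can also certify that $G$ is $k$-vertex-connected from this. But by \Cref{clm:k-vconn-proof-simple}, showing that a single vertex $t$ has $k$ vertex-disjoint paths to all other vertices is \emph{not} by itself enough — we genuinely need $k$ special vertices for \Cref{clm:paths-to-vertex-proof}. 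So the public randomness must buy us something else: instead of $k$ independent layerings, I would have Arthur (verifier) publicly sample the layering sets $T_i^L, T_i^R$ once, and then the prover reuses the \emph{same} sampled sets across all $k$ special vertices $t_1,\ldots,t_k$. With shared public randomness, the verifier knows the $T_i$'s for free, and Merlin's proof for each $t_i$ is the layered path-bundle; by linearity of expectation over the public coins, the \emph{total} expected proof size across all $k$ vertices is still dominated by the $T_0$-level, which costs $O(kn)$ words per special vertex — but the reuse of sampled sets means the total is $O(kn + kn\log(n/k)) = \tilde O(kn)$ rather than $\tilde O(k^2 n)$. More precisely: the verifier publicly fixes one layering; with constant probability over those coins it is ``good'' for all $k$ special vertices simultaneously (a union bound over $k$ choices still leaves the proof size $\tilde O(kn)$ in expectation and hence for some coin outcome), and by a standard Markov + repetition argument the AM protocol succeeds with high probability. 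The verification of each layered bundle (edges belong to $G$ via subset-check, edges form paths, paths are vertex-disjoint via the duplicate-detection / $\ell_0$-sampler tricks of \Cref{fact:dupdet,fact:equaldet}) is exactly as in \Cref{clm:layering-verification-vconn}, costing $[kn+h,v]$ with $h\cdot v=n^2$, and space is reused across the $k$ vertices as in \Cref{corr:layering-multiple}.

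The smooth tradeoff comes from the $h\cdot v=n^2$ freedom in the subset-check subroutine: increasing $h$ lets the verifier shrink $v$, and since the layering-proof portion is a fixed $\tilde O(kn)$ stream, the overall scheme is $[kn+h, v]$ for any $h,v$ with $h\cdot v=n^2$, matching the promised-gap bound of \Cref{lem:layering-2k-vconn} but now for exact $k$-vertex-connectivity. The main obstacle I anticipate is the subtlety in the soundness/completeness analysis once the layering randomness is \emph{public}: in the private-coin version the prover was free to pick the best random string, but with public coins Arthur's fixed sample must work against an adversarial prover in the ``No'' case and must admit a short honest proof in the ``Yes'' case. The fix is the usual AM structure — Arthur sends his random coins first, then Merlin responds; completeness holds because for a $k$-vertex-connected $G$ a random layering is short in expectation (so short for a constant fraction of coin outcomes, and we can amplify), and soundness holds because the path-bundle checks are deterministically verifying genuine vertex-disjoint paths in $G$, independent of which layering was sampled. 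I would also need to double-check that the ``$T_0$ has $k$ disjoint paths to $t$'' base case still goes through with a publicly-sampled $T_0$ of size $O(k)$ — but this is immediate since $G$ being $k$-vertex-connected guarantees the paths exist regardless of how $T_0$ was chosen, and the proof is the path bundle itself.
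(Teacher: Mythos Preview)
Your handling of the ``No'' case matches the paper, but the ``Yes'' case has a genuine gap in the size analysis. You propose to make the \emph{layering sets} $T_i$ public-random and reuse them across $k$ deterministically fixed special vertices $t_1,\ldots,t_k$. The part of the layered proof that can be shared is the chain $T_{i+1}\to T_i$ for $i\ge 0$, which is indeed independent of the special vertex. But the base layer --- showing that every vertex of $T_0$ has $k$ vertex-disjoint paths to the \emph{specific} vertex $t_j$ --- must be sent afresh for each $j\in[k]$, at cost $\Theta(|T_0|\cdot n)=\Theta(kn)$ per special vertex. That gives $\Theta(k^2n)$ just for the base layers, so you have not improved on the private-coin bound of \Cref{thm:ver-conn-dyn}. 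Your sentence ``the reuse of sampled sets means the total is $O(kn+kn\log(n/k))$'' silently drops this repetition. (Replacing the base layer by ``every vertex has $k$ disjoint paths to the \emph{set} $\{t_1,\ldots,t_k\}$'' is cheaper but does not certify $k$-vertex-connectivity, as you yourself note via \Cref{clm:k-vconn-proof-simple}.)

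The paper's idea is orthogonal: use public randomness to choose the \emph{special vertices themselves}, and only $\ell=O(\log n)$ of them. Sample $t_1,\ldots,t_\ell$ uniformly from $V$ and run the full layering proof of \Cref{corr:layering-multiple} for each; the total help is $\tilde O(\ell\cdot kn)=\tilde O(kn)$, plus one $[h,v]$ subset-check. Completeness is immediate since in a $k$-connected graph every pair has $k$ disjoint paths. For soundness, if $G$ is not $k$-connected then any minimum vertex cut $X$ has $|X|<k$, so $|V\setminus X|=\Omega(n)$ and with high probability some sampled $t_{i^*}$ lands in $V\setminus X$; by \Cref{clm:k-vconn-proof-simple} no valid layered proof for $t_{i^*}$ can then exist, since some vertex on the far side of $X$ has fewer than $k$ disjoint paths to $t_{i^*}$. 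The insight you are missing is that a \emph{single} special vertex outside the min cut already suffices --- the $O(\log n)$ samples are there only to guarantee one of them avoids $X$ --- so the factor $k$ in front of $kn$ disappears entirely.
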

\begin{proof}
If the graph $G=(V,E)$ is not $k$-vertex-connected we can use the $[n+h,v]$-scheme of the simple algorithm where $h\cdot v=n^2$.

In the other case, we fix $X$ as the lexicographically first minimum vertex cut (we do not know $X$ but it 
is fixed).
The prover now samples $\ell = 2 \log n$ vertices $t_1,t_2,\ldots,t_\ell$ uniformly at random using the 
public randomness.
Use \Cref{corr:layering-multiple} to get a proof of $k$ vertex-disjoint paths from $t_i$ to every vertex in $V-\set{t_i}$ for all $i \in [\ell]$.
This gives a $[kn+h,v]$-scheme.

We now prove the correctness of this scheme.
If $\card{X}\geq 2k$, then we can use the proof of \Cref{lem:layering-2k-vconn}.
Thus, $\card{X}< 2k$ implying $V-X$ has $\Omega(n)$ vertices (assuming $k< 0.49n$).
This means that with high probability, there is an $i^* \in[\ell]$ such that $t_{i^*} \in V-X$.
\Cref{clm:k-vconn-proof-simple} shows that the graph is $k$-vertex-connected.
This concludes the proof, and we have shown that the graph is $k$-vertex-connected using a $[kn+h,v]$-scheme for any $h,v$ such that $h\cdot v=n^2$.
\end{proof}




\subsection{Edge Connectivity Schemes}

In this section, we discuss algorithms for $k$-edge-connectivity.

\econndyn*
\edgeconndyn*

\subsubsection{Showing the graph is not k-edge-connected}

Consider the case where the edge connectivity is $<k$.
The prover sends the $k-1$ edges whose deletion disconnects the graph and also sends one side $S$ of this cut.
The verifier first checks if the edges the prover sent were indeed part of the stream by running an $[h,v]$-scheme for subset check (\Cref{fact:subsetdisjchk}) for any $h,v$ such that $h,v=n^2$.
The verifier also has to check that there are no other edges between $S$ and $T:=V-S$.
So the only allowed edges are within $S$, within $T$ or the $k-1$ edges sent by the prover. The verifier constructs a graph $G'$ in a streaming fashion with all such edges and then checks if the edges in input graph are a subset of $G'$.
This can be done using an $[h,v]$-scheme for subset check for any $h,v$ such that $h \cdot v=n^2$ (\Cref{fact:subsetdisjchk}).
This gives an $[n+h,v]$-scheme.

\subsubsection{\texorpdfstring{$[n,n]$}{[n,n]}-scheme for k-edge-connectivity}
We now give the algorithm for when the edge connectivity is $\geq k$.
We give a scheme independent of $k$ that works for any $k$ (this is basically solving minimum-cut).
The idea is to simulate the 2-pass streaming algorithm for minimum-cut implied by \cite{rsw18}.

\begin{proof}[Proof of \Cref{thm:edge-conn-dyn}]
The verifier when looking at the stream computes a $1+\eps$ cut sparsifier of the graph for $\eps=0.01$ and stores all vertex degrees.
This can be done in space $\Ot(n/\eps^2) = \Ot(n)$ \cite{AhnGM12}.

The verifier then finds all the $1+\eps$ approximate mincuts using the cut sparsifier.
Compress all vertices that are on the same side of all of these cuts into supernodes.
This graph of supernodes preserves all small cuts.
The prover then sends the edges that go between the supernodes.
\cite{rsw18} proves that there are only $O(n)$ edges between the supernodes.
So the verifier can store all of these edges and compute the minimum cut out of all these cuts. 
The verifier compares the size of this minimum-cut to the minimum degree and outputs the smaller of the two thus giving the exact mincut.

The verifier also needs to run a subset check (\Cref{fact:subsetdisjchk}) to check whether the edges the prover sent belong to the graph. This can be done using an $[n,n]$-scheme.
Note that, the prover does not have to send all edges; just enough to make all cuts of size $\geq k$ so it is okay if the prover does not send all edges (thus a subset check is enough).
\end{proof}

\subsubsection{Layering Algorithm}
If the graph $G=(V,E)$ is not $k$-edge-connected, then we can use the proof of the simple algorithm which has an $[n+ h,v]$-scheme.

\begin{proof}[Proof of \Cref{thm:e-conn-dyn}]
Fix an arbitrary vertex $t$.
Use \Cref{clm:layering-verification-econn} to get a proof of $k$ edge-disjoint paths from $t$ to every vertex in $V-\set{t}$.
Thus, the proof size is $\Ot(h+k^2n)$ words and can be verified using $\Ot(v)$ space for any $h,v$ such that $h\cdot v=n^2$.
This is a valid proof which shows that $G$ is $k$-edge-connected. 
Say $G$ has a cut $S$ of size at most $k-1$. Let $T:=V-S$ be the other side of the cut containing $t$ (wlog). There is a vertex $s\in S$ that has $k$ edge-disjoint paths to $t$ giving a contradiction.
Thus, we get a $[h+k^2n,v]$-scheme for $k$-edge-connectivity for any $h,v$ such that $h\cdot v=n^2$.
\end{proof}

\section{Annotated Streaming Schemes for SGT Streams}\label{sec:Annotated-SGT}

The $k$-connectivity algorithms for dynamic streams do not work for SGT streams because the verification fails when the input graph can have large frequencies for the edges. Also, the frequencies can be negative. So the proof using the layering lemma remains the same but the auxiliary information and verification changes.
We prove the following algorithm for $k$-connectivity in SGT streams with parameter $\val$:
\begin{lemma}
    Under SGT streams with parameter $\val$, there exists a $[n^2\log \val + k^2 n, 1]$-scheme for $k$-vertex-connectivity and $k$-edge-connectivity.  
\end{lemma}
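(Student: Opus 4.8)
The plan is to mirror the dynamic-stream layering scheme from \Cref{thm:ver-conn-dyn} and \Cref{thm:e-conn-dyn}, keeping the \emph{proof} (the layered collection of $k$ disjoint paths to a fixed vertex or set of fixed vertices) exactly as in \Cref{lem:layering} and \Cref{lem:layering:econn}, and only re-engineering the verification so that it tolerates large and possibly negative edge frequencies. Recall that in the dynamic case the verifier's job decomposes into three checks: (i) every edge the prover uses lies in the input graph; (ii) the prover's edge list forms genuine paths; (iii) the paths claimed for a given source are pairwise disjoint. Checks (ii) and (iii) only look at the prover's own stream (paths are sent one at a time, vertices are sent upfront in sorted order, and an equality-detection scheme via an $\ell_0$-sampler catches cheating), so these carry over verbatim with $\tO(1)$ verification space. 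The genuinely new part is the subset check (i): in SGT streams the support graph is $\{(u,v):\freq(u,v)\neq 0\}$, and an edge ``belongs to the graph'' means its net frequency is nonzero, which the dynamic-stream subset-check scheme (\Cref{fact:subsetdisjchk}) is not built to handle.

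First I would have the verifier, during the stream, maintain for \emph{every} potential edge $(u,v)\in\binom{[n]}{2}$ a non-zero detector from \Cref{cor:nonzdetect} (or equivalently the decision counter of \Cref{subsec:special-counters}); this costs $O(n^2(\log\log\val+\log n))=\tO(n^2\log\val)$ bits and hence forces $\vc=\tO(1)$ words per edge — but that is exactly the budget allowed by the $[n^2\log\val+k^2n,1]$ bound, where the ``$1$'' is measured in words (or, more precisely, where the $n^2\log\val$ term absorbs the per-edge counter space while the genuine ``working'' space beyond the counters is $\tO(1)$). After the stream, the prover's proof tells the verifier exactly which edges it will use; for each such edge the verifier queries its stored non-zero detector and rejects if the detector reports zero frequency. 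Soundness follows because \Cref{cor:nonzdetect} errs with probability $1/(\polylog(\val)\cdot\poly(n))$ per edge, so a union bound over the $O(n^2)$ detectors (we only need correctness on the polynomially many edges the prover touches, but bounding over all of them is harmless) keeps total error below $1/3$. Completeness is immediate: an honest prover only uses edges of the true support graph, all of which have nonzero frequency. The auxiliary information needed to make the subset check work is effectively the per-edge counter state itself, which is why the hcost picks up the additive $n^2\log\val$ term — conceptually, one can think of the prover as having to ``name'' edges from a universe whose description requires $\log\val$-sized frequency annotations, matching the $n^2\log\val$ overhead described in the technical overview. The idea mentioned there of separating positively-signed from negatively-signed frequency contributions is precisely what the non-zero detector does internally (it only cares whether insertions equal deletions), so no extra bookkeeping is required.

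With the three checks in place, the rest of the argument is identical to the dynamic case. For $k$-vertex-connectivity: fix $k$ arbitrary vertices $t_1,\dots,t_k$, apply the layering construction of \Cref{lem:layering} to each $t_i$ (reusing verification space across the $k$ sources, as in \Cref{corr:layering-multiple}), obtaining a proof of total size $\tO(k^2n)$ words on top of the $\tO(n^2\log\val)$ counter space, verifiable in $\tO(1)$ additional space; \Cref{clm:paths-to-vertex-proof} then certifies $k$-vertex-connectivity. For $k$-edge-connectivity: fix one vertex $t$, apply \Cref{lem:layering:econn} and \Cref{clm:layering-verification-econn}, whose proof size is $\tO(k^2n)$ words, and conclude via the cut argument (a side-$\leq k-1$ cut $S$ would leave some $s\in S$ with $k$ edge-disjoint paths to $t$, a contradiction). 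In the not-$k$-connected case the prover sends the cut $X$ (of size $<k$) and one side $S$; the verifier must check the input support graph has no edge crossing between $S$ and $T=V\setminus X\setminus S$ — this is again a non-membership question about edge frequencies, handled by the same per-edge non-zero detectors (reject if any forbidden pair has a nonzero detector). Summing, $\hc=\tO(n^2\log\val+k^2n)$ and $\vc=\tO(1)$ beyond the counters, giving the claimed $[n^2\log\val+k^2n,1]$-scheme for both problems.

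I expect the main obstacle to be pinning down the subset/non-membership check precisely: one has to argue that storing a non-zero detector per edge genuinely suffices despite interleaving of positive and negative updates (the detector from \Cref{cor:nonzdetect} is exactly designed for this, but the reduction assumes the net imbalance stays below $\val$, which is the SGT parameter, so the promise is met), and one must be careful that the prover cannot exploit multiplicities — here, unlike the dynamic case, the prover never gets to ``add'' or ``delete'' from the verifier's sketches; the verifier's sketches are frozen after the input stream and merely queried, so the soundness reduces cleanly to the union bound over detector failures. A secondary subtlety is bookkeeping the word-vs-bit accounting in the $[h,v]$ notation so that the $n^2\log\val$ overhead lands in $\hc$ and the residual working memory genuinely fits in $\vc=\tO(1)$; this is routine but must be stated carefully since it is the whole point of the theorem.
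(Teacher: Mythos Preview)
Your proposal has a genuine gap in the accounting: you place $\binom{n}{2}$ non-zero detectors in the \emph{verifier's} memory during the input stream, which makes $\vc=\Omega(n^2)$, not $\tO(1)$. The $[h,v]$ notation is $\hc=\tO(h)$ for the prover's message length and $\vc=\tO(v)$ for the verifier's space; these are separate quantities and the verifier's streaming memory cannot ``land in $\hc$'' by any bookkeeping trick. The prover's message arrives \emph{after} the input stream, so whatever the verifier stored while reading the stream is charged to $\vc$, period. Your scheme is a valid $[k^2n,\,n^2]$-scheme (roughly), which is not the statement of the lemma. You even flag this as ``the whole point of the theorem'' and then call it routine --- it is neither routine nor possible.

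The paper's approach is essentially the opposite: the $n^2\log\val$ term is genuinely in the \emph{prover's} message. The prover sends upfront the entire list of input edges together with their exact (signed) frequencies, scaled by $n^2$; this costs $\tO(n^2\log\val)$ bits of $\hc$. The verifier checks that this list matches the input stream via a single multi-set equality check (the $\ell_0$-sampler trick of \Cref{fact:equaldet}) using $\tO(1)$ space. The subtle part --- and where ``separating positive from negative frequencies'' actually enters --- is that the verifier maintains two separate samplers, one for positive-frequency edges and one for negative-frequency edges, populated from the prover's upfront list. During the layering proof the prover annotates each used edge with its sign and the verifier subtracts from the corresponding sampler; after the proof the prover sends the residual (unused) edges with residual multiplicities, and two more equality checks close the loop. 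A sign lie makes a coordinate in the wrong sampler go negative and stay negative, so it is caught. All of this lives in $\tO(1)$ verifier space, and the $\tO(k^2n)$ layering proof sits on top of the $\tO(n^2\log\val)$ edge list in $\hc$.
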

\begin{proof}
For $k$-vertex-connectivity fix $k$ arbitrary vertices $t_1,t_2,\ldots,t_k$.
Use \Cref{corr:layering-multiple} to get a proof of $k$ vertex-disjoint paths from $t_i$ to every vertex in $V-\set{t_i}$ for all $i \in [k]$.
For $k$-edge-connectivity fix a vertex $t$ and use \Cref{lem:layering:econn} to et a proof for $k$ edge-disjoint paths from $t$ to every vertex in $V-\set{t}$.
Thus, the proof sizes are $\Ot(k^2 n)$ 
words.

We now describe how the verification works.
There are three things that need to be verified.
First, all the edges that the prover sends should be a part of the input graph.
Second, the edges sent by the prover should form paths. 
Third, the paths for a vertex should be vertex (edge) disjoint.

We can verify the second and third case like we did for dynamic streams.
The prover will send the edges of the paths one path at a time, so it is easy to verify that the edges form a path.
To verify the disjoint paths the prover sends the vertices (edges) he is going to use in the proof in sorted order and then sends the proof.
This only increases the proof size by a constant factor.
It is easy to verify that the vertices (edges) arriving in sorted order are distinct in $\Ot(1)$ space.
The only thing left to do is check if the vertices (edges) sent upfront are the same as the ones used in the proof of disjoint paths.
We do this by running an equality check using the $\ell_0$-sampler trick using $\Ot(1)$ space (\Cref{fact:equaldet}).
When we get the vertices upfront we add them to an $\ell_0$-sampler and when we see them in the proof we subtract them from the sampler. If there is nothing left in the sampler in the end, then the two multi-sets of vertices are identical. Since the first one had distinct elements, so did the second one. The blow-up in proof size is at most a factor of $2$ because the auxiliary information just contains the vertices used in the proof.
Also, the space on the verifier side is $\Ot(1)$ words for each vertex, but this space can be reused and thus, it fits in our budget of $\Ot(1)$ words.

We now come to the hardest part which is verifying that the edges used in the proof indeed belong to the input graph.
The prover initially sends a list of all edges of the input graph along with their multiplicities sorted by sign (first sends all the positive frequency edges and then the negative frequency edges).
For technical reasons, the prover scales the multiplicities by $n^2$.
It is easy to do a multi-set equality check for these edges with multiplicities sent upfront and the edges from the input stream using the $\ell_0$-sampler trick in $\Ot(1)$ verification space (\Cref{fact:equaldet}).

In parallel, we create a sampler for the positive edges and another for the negative edges and insert the edges the prover sent upfront into the appropriate sampler according to the sign.
During the proof, we ask the prover to also mention the sign of each edge and we subtract it from the appropriate sampler. This increases the proof size by a constant factor.
After the proof, the prover sends a list of all the remaining edges i.e. edges not used in the proof along with their residual multiplicities (the residual multiplicities take the scaling by $n^2$ into account). 
We use $\ell_0$-sampler trick (\Cref{fact:equaldet}) to check the equality of the multi-sets, the set of positive edges send upfront and the positive edges used in the proof plus the residual positive edges using $\Ot(1)$ space.
We do the same for the negative edges.

The correctness is as follows. The prover cannot lie about the signs or multiplicities when the edges are sent upfront because we run an equality check with the input stream.
If the prover ever lies about a sign during the proof (or in while sending the residual edges) then the edge gets subtracted from the wrong sampler. This sampler has a negative frequency on the coordinate corresponding to that edge which can only decrease later (because we are only subtracting edges) implying that the sampler check will fail. 

If the prover lies about the multiplicities in the residual edges then the equality check will fail.
Thus, we just have two separate equality check problems. The scaling by $n^2$ ensures that an edge can be used multiple times during the proof (it can be used at most $kn \leq n^2$ times). Ignoring the residual multiplicities we are just performing a subset check.

The space used by the verifier is just $\Ot(1)$ words. The proof size, ignoring the edges sent upfront and the residual edges, is $\Ot(k^2 n)$. The edges sent upfront and the residual edges take size $O(n^2 \log (n^2 \val))=\Ot(n^2 \log \val)$.
The probability of failure is at most $1/\poly(n)$ (by union bounding over the polynomially many failure probabilities of $1/\poly(n)$).
Therefore, the proof along with the auxiliary information takes size $\Ot(n^2 \log \val + k^2 n)$ and can be verified in $\Ot(1)$ space, proving the lemma.
\end{proof}


\subsection*{Acknowledgements}
We are extremely grateful to Sepehr Assadi for many helpful conversations 
regarding the project. Prantar Ghosh would also like to thank Amit Chakrabarti and Justin Thaler for insightful discussions. Finally, we thank the anonymous reviewers of ITCS 2024 for their many detailed comments and suggestions that helped with improving the presentation of the paper.


\clearpage

\bibliographystyle{alpha}
\bibliography{new, refs}

\clearpage

\appendix

\end{document}